\title{Tight algorithms for connectivity problems parameterized by clique-width}
\titlerunning{Tight algorithms for connectivity problems parameterized by clique-width}
\author{Falko Hegerfeld}{Humboldt-Universit\"at zu Berlin, Germany}{hegerfeld@informatik.hu-berlin.de}{https://orcid.org/0000-0003-2125-5048}{Partially supported by DFG Emmy Noether-grant (KR 4286/1).}
\author{Stefan Kratsch}{Humboldt-Universit\"at zu Berlin, Germany}{kratsch@informatik.hu-berlin.de}{https://orcid.org/0000-0002-0193-7239}{}
\authorrunning{F. Hegerfeld and S. Kratsch}
\keywords{Parameterized Complexity, Connectivity, Clique-width, Cut\&Count, Lower Bound}
\theoremstyle{definition}
\newtheorem{dfn}{Definition}[section]
\theoremstyle{plain}
\newtheorem{thm}[dfn]{Theorem}
\newtheorem{cor}[dfn]{Corollary}
\newtheorem{lem}[dfn]{Lemma}
\newtheorem{cnj}[dfn]{Conjecture}
\crefname{dfn}{Definition}{Definitions}
\crefname{thm}{Theorem}{Theorems}
\crefname{cor}{Corollary}{Corollaries}
\crefname{lem}{Lemma}{Lemmata}
\crefname{prop}{Proposition}{Propositions}
\crefname{rem}{Remark}{Remarks}
\crefname{algorithm}{Algorithm}{Algorithms}
\crefname{algocf}{Algorithm}{Algorithms}
\crefname{obs}{Observation}{Observations}
\crefname{cnj}{Conjecture}{Conjectures}
\crefname{table}{Table}{Tables}
\newcommand{\Oh}{\mathcal{O}} 
\newcommand{\SETH}{SETH\xspace}
\newcommand{\CNFSETH}{CNF-SETH\xspace}
\newcommand{\FF}{\mathbb{F}}
\newcommand{\ZZ}{\mathbb{Z}}
\newcommand{\NN}{\mathbb{N}}
\newcommand{\eps}{\varepsilon}
\newcommand{\DS}{\textsc{Dominating Set}\xspace}
\newcommand{\OCT}{\textsc{Odd Cycle Transversal}\xspace}
\newcommand{\SAT}{\textsc{Satisfiability}\xspace}
\newcommand{\ST}{\textsc{Steiner Tree}\xspace}
\newcommand{\CVC}{\textsc{Connected Vertex Cover}\xspace}
\newcommand{\COCT}{\textsc{Connected Odd Cycle Transversal}\xspace}
\newcommand{\CDS}{\textsc{Connected Dominating Set}\xspace}
\newcommand{\FVS}{\textsc{Feedback Vertex Set}\xspace}
\newcommand{\VC}{\textsc{Vertex Cover}\xspace}
\newcommand{\IS}{\textsc{Independent Set}\xspace}
\newcommand{\HC}{\textsc{Hamiltonian Cycle}\xspace}
\DeclareMathOperator{\lcw}{lin-cw} 
\DeclareMathOperator{\cw}{cw} 
\DeclareMathOperator{\tw}{tw} 
\DeclareMathOperator{\poly}{poly} 
\DeclareMathOperator{\modtw}{mod-tw} 
\DeclareMathOperator{\modpw}{mod-pw} 
\newcommand{\sep}{:}
\newcommand{\symdiff}{\mathbin{\triangle}}
\newcommand{\cexpr}{\mu}
\newcommand{\tree}{T}
\newcommand{\augtree}{\hat{T}}
\newcommand{\intro}[1]{{#1}}
\newcommand{\union}{\oplus}
\newcommand{\relab}[2]{\rho_{#1 \rightarrow #2}}
\newcommand{\join}[2]{\eta_{#1,#2}}
\newcommand{\deadop}[1]{\bot_{#1}}
\newcommand{\lfct}{\mathtt{lab}}
\newcommand{\livelabels}[1]{{L^{live}_{#1}}}
\newcommand{\deadlabels}[1]{{L^{dead}_{#1}}}
\newcommand{\dead}{D}
\newcommand{\labels}{L}
\newcommand{\nlabels}{k}
\newcommand{\formula}{\sigma}
\newcommand{\tassign}{\tau}
\newcommand{\grpsize}{p}
\newcommand{\vgrpsize}{\beta}
\newcommand{\ngrps}{t}
\newcommand{\nvars}{n}
\newcommand{\nclss}{m}
\newcommand{\clss}{q}
\newcommand{\embedding}{\kappa}
\newcommand{\statemap}{\mathbf{state}}
\newcommand{\states}{\mathbf{States}}
\newcommand{\stateset}{\mathbf{S}}
\newcommand{\atoms}{\mathbf{Atoms}}
\newcommand{\bstates}{\Omega}
\newcommand{\rvertex}{\hat{r}}
\newcommand{\state}{\mathbf{s}}
\newcommand{\bolda}{\mathbf{a}}
\newcommand{\zero}{\mathbf{0}}
\newcommand{\one}{\mathbf{1}}
\newcommand{\two}{\mathbf{2}}
\newcommand{\conn}{\mathtt{conn}}
\newcommand{\sol}{\mathtt{sol}}
\newcommand{\dom}{\mathtt{dom}}
\newcommand{\powerset}[1]{\mathcal{P}(#1)}
\newcommand{\tfa}{\text{ for all }}
\newcommand{\algo}{\mathcal{A}}
\newcommand{\cfct}{\mathbf{c}}
\newcommand{\ctarget}{{\overline{c}}}
\newcommand{\budget}{{\overline{b}}}
\newcommand{\wmax}{N}
\newcommand{\wfct}{\mathbf{w}}
\newcommand{\wtarget}{{\overline{w}}}
\newcommand{\family}{\mathcal{F}}
\newcommand{\PP}{\mathbb{P}}
\newcommand{\rsols}{\mathcal{R}}
\newcommand{\sols}{\mathcal{S}}
\newcommand{\cuts}{\mathcal{C}}
\newcommand{\cc}{\mathtt{cc}}
\newcommand{\fixedvertex}{{v_*}}
\newcommand{\dpsols}[2]{\mathcal{A}_{#1}^{#2}}
\newcommand{\dppoly}[2]{A_{#1}^{#2}}
\DeclareMathOperator{\feas}{feas}
\DeclareMathOperator{\merge}{merge}
\newcommand{\restrict}[1]{{\big|_{#1}}}
\newcommand{\universe}{U}
\newcommand{\lattice}{\mathcal{L}}
\newcommand{\idxset}{I}
\tikzstyle{filled}=[fill=black, draw=black, shape=circle, inner sep=1pt, minimum size=5]
\tikzstyle{big empty}=[fill=white, draw=black, shape=rectangle, inner sep=2pt, minimum size=20pt]
\tikzstyle{small empty}=[fill=white, draw=black, shape=circle, inner sep=0pt, minimum size=20pt, line width=0.75pt]
\tikzstyle{rect}=[fill=black, draw=black, shape=rectangle, inner sep=1pt, minimum size=5pt]
\tikzstyle{big empty red}=[fill=white, draw={rgb,255: red,171; green,0; blue,60}, shape=rectangle, inner sep=2pt, minimum size=20pt]
\tikzstyle{mid empty}=[fill=white, draw=black, shape=rectangle, inner sep=2pt, minimum size=20pt]
\tikzstyle{mid empty red}=[fill=white, draw={rgb,255: red,171; green,0; blue,60}, shape=rectangle, inner sep=2pt, minimum size=20pt]
\tikzstyle{small empty red}=[fill={rgb,255: red,227; green,159; blue,178}, draw={rgb,255: red,171; green,0; blue,60}, shape=circle, inner sep=0pt, minimum size=20pt, line width=0.75pt]
\tikzstyle{red rect}=[fill={rgb,255: red,171; green,0; blue,60}, draw={rgb,255: red,171; green,0; blue,60}, shape=rectangle, minimum size=5pt]
\tikzstyle{big circle}=[fill=white, draw=black, shape=circle, line width=1pt, minimum size=25pt]
\tikzstyle{filled red}=[fill={rgb,255: red,171; green,0; blue,60}, draw={rgb,255: red,171; green,0; blue,60}, shape=circle, inner sep=1pt, minimum size=5pt]
\tikzstyle{state 0}=[fill=white, draw=black, shape=circle, line width=0.75pt, minimum size=10pt]
\tikzstyle{state L}=[fill={rgb,255: red,25; green,61; blue,182}, draw=black, shape=circle, line width=0.75pt, minimum size=10pt]
\tikzstyle{state R}=[fill={rgb,255: red,191; green,0; blue,64}, draw=black, shape=circle, line width=0.75pt, minimum size=10pt]
\tikzstyle{state unknown}=[fill=black, draw=black, shape=circle, line width=0.75pt, minimum size=10pt]
\tikzstyle{tiny empty}=[fill=white, draw=black, shape=circle, inner sep=0pt, minimum size=12pt, line width=0.75pt]
\tikzstyle{tiny filled}=[fill=black, draw=black, shape=circle, inner sep=0pt, minimum size=12pt, line width=0.75pt]
\tikzstyle{wide rectangle}=[fill=white, draw=black, shape=rectangle, inner sep=2 pt, minimum height=20pt, minimum width=30pt, line width=0.75pt]
\tikzstyle{empty 30pt}=[fill=white, draw=black, shape=circle, inner sep=0pt, minimum size=30pt, line width=0.75pt]
\tikzstyle{empty 20pt}=[fill=white, draw=black, shape=circle, inner sep=0pt, minimum size=20pt, line width=0.75pt]
\tikzstyle{filled 10pt}=[fill=black, draw=black, shape=circle, minimum size=5pt, text=white, inner sep=1pt]
\tikzstyle{tiny rectangle}=[fill=white, draw=black, shape=rectangle, minimum width=5pt, minimum height=5pt, inner sep=0pt]
\tikzstyle{dash circle}=[fill=white, draw=black, shape=circle, dashed, inner sep=0.5pt]
\tikzstyle{tall}=[fill=white, draw=black, shape=rectangle, minimum width=1cm, minimum height=1.25cm, inner sep=0pt, line width=0.5pt]
\tikzstyle{filled R}=[fill={rgb,255: red,191; green,0; blue,64}, draw=black, shape=circle, minimum size=5pt, inner sep=1pt]
\tikzstyle{filled B}=[fill={rgb,255: red,0; green,0; blue,208}, draw=black, shape=circle, minimum size=5pt, inner sep=1pt]
\tikzstyle{filled G}=[fill={rgb,255: red,0; green,186; blue,0}, draw=black, shape=circle, minimum size=5pt, inner sep=1pt]
\tikzstyle{filled W}=[fill=white, draw=black, shape=circle, minimum size=5pt, inner sep=1pt]
\tikzstyle{big none}=[fill=none, draw=none, minimum size=5pt, inner sep=1pt]
\tikzstyle{state G}=[fill={rgb,255: red,0; green,186; blue,0}, draw=black, shape=circle, line  width=0.75pt, minimum size=10pt]
\tikzstyle{nano}=[fill=black, draw=black, shape=circle, minimum size=3pt, inner sep=0pt]
\tikzstyle{big empty}=[fill=white, draw=black, shape=circle, inner sep=0pt, minimum size=30pt]
\tikzstyle{nano 2pt}=[fill=black, draw=black, shape=circle, minimum size=2pt, inner sep=0pt]
\tikzstyle{state U}=[fill={rgb,255: red,180; green,180; blue,180}, draw=black, shape=circle, line width=0.75pt, minimum size=10pt]
\tikzstyle{thick}=[-, line width=1pt]
\tikzstyle{thick arrow}=[->, line width=1pt]
\tikzstyle{packing}=[-, line width=1pt, draw={rgb,255: red,171; green,0; blue,60}]
\tikzstyle{dotted}=[-, line width=1pt, dashed]
\tikzstyle{dashed cyan thick}=[-, dashed, line width=1pt, draw={rgb,255: red,2; green,200; blue,200}]
\tikzstyle{white}=[-, draw=white]
\tikzstyle{del}=[-, double, line width=1pt]
\tikzstyle{thin}=[-, line width=0.5pt]
\tikzstyle{red dashed}=[-, line width=1pt, draw={rgb,255: red,191; green,0; blue,64}, dashed]
\tikzstyle{dark green}=[-, line width=1pt, draw={rgb,255: red,44; green,127; blue,35}]
\tikzstyle{thin green}=[-, line width=0.5pt, draw={rgb,255: red,27; green,125; blue,52}]
\tikzstyle{thin red}=[-, line width=0.5pt, draw={rgb,255: red,181; green,0; blue,3}]
\tikzstyle{thin dashed}=[-, line width=0.5pt, dashed]
\tikzstyle{green fill}=[-, line width=0.5pt, draw={rgb,255: red,27; green,125; blue,52}, fill={rgb,255: red,162; green,255; blue,155}]
\tikzstyle{blue fill}=[-, line width=0.5pt, draw={rgb,255: red,18; green,0; blue,157}, fill={rgb,255: red,116; green,123; blue,255}]
\tikzstyle{thin arrow}=[->]
\tikzstyle{fill red}=[-, fill={rgb,255: red,255; green,0; blue,4}, fill opacity=0.2, draw=none]
\tikzstyle{bold green}=[-, fill=none, draw={rgb,255: red,0; green,182; blue,0}, line width=2pt]
\tikzstyle{bold red}=[-, draw={rgb,255: red,225; green,0; blue,0}, line width=2pt]
\tikzstyle{bold cyan dashed}=[-, draw={rgb,255: red,2; green,200; blue,200}, line width=1.5pt, dashed]
\begin{document}
  
\maketitle

\begin{abstract}
  The complexity of problems involving global constraints is usually much more difficult to understand than the complexity of problems only involving local constraints. In the realm of graph problems, connectivity constraints are a natural form of global constraints. We study connectivity problems from a fine-grained parameterized perspective. In a breakthrough result, Cygan et al.~(TALG 2022) first obtained algorithms with single-exponential running time $\alpha^{\tw} n^{\Oh(1)}$ for connectivity problems parameterized by treewidth by introducing the cut-and-count-technique, which reduces many connectivity problems to locally checkable counting problems. Furthermore, the obtained bases $\alpha$ were shown to be optimal under the Strong Exponential-Time Hypothesis (SETH).
  
  However, since only sparse graphs may admit small treewidth, we lack knowledge of the fine-grained complexity of connectivity problems with respect to dense structure. The most popular graph parameter to measure dense structure is arguably clique-width, which intuitively measures how easily a graph can be constructed by repeatedly adding bicliques. Bergougnoux and Kant\'e (TCS 2019) have shown, using the rank-based approach, that also parameterized by clique-width many connectivity problems admit single-exponential algorithms. Unfortunately, the obtained running times are far from optimal under SETH.
  
  We show how to obtain optimal running times parameterized by clique-width for two benchmark connectivity problems, namely \textsc{Connected Vertex Cover} and \textsc{Connected Dominating Set}. These are the first tight results for connectivity problems with respect to clique-width and these results are obtained by developing new algorithms based on the cut-and-count-technique and novel lower bound constructions. Precisely, we show that there exist one-sided error Monte-Carlo algorithms that given a $k$-clique-expression solve
  \begin{itemize}
    \item \textsc{Connected Vertex Cover} in time $6^k n^{\Oh(1)}$, and
    \item \textsc{Connected Dominating Set} in time $5^k n^{\Oh(1)}$.
  \end{itemize}
  Both results are shown to be tight under SETH.
\end{abstract}
\newpage

\tableofcontents

\section{Introduction}
\label{sec:intro}

One way to cope with the NP-hardness of a problem is the theory of parameterized complexity, where we seek to solve structured instances faster than worst-case instances; an additional parameter quantifies how structured an instance is. Ideally, we obtain fixed-parameter tractable algorithms with running time $\Oh^*(f(k))$\footnote{The $\Oh^*$-notation hides polynomial factors in the input size.}, where $k$ is the parameter and $f$ some computable function. Having established the existence of such an algorithm, the next natural step is to take a fine-grained perspective and to determine the smallest possible function $f$ for this problem-parameter-combination, which allows us to quantify the precise impact of the considered structure on problem complexity.

We study connectivity problems from a fine-grained parameterized perspective. This line of work starts with the breakthrough result of Cygan et al.~\cite{CyganNPPRW22} which for the first time obtained algorithms with running time $\Oh^*(\alpha^{\tw})$, for some constant \emph{base} $\alpha > 1$, for connectivity problems parameterized by \emph{treewidth} ($\tw$) by introducing the cut-and-count-technique, which reduces connectivity problems to locally checkable counting problems. In addition, the obtained bases $\alpha$ were proven to be optimal assuming the Strong Exponential-Time Hypothesis (SETH) \cite{CyganNPPRW11}.

As only sparse graphs may have small treewidth, we lack knowledge of the precise complexity of connectivity problems with respect to dense structure. In the regime of dense graphs, \emph{clique-width} is one of the most popular parameters. Bergougnoux~\cite{Bergougnoux19} has applied cut-and-count to several width-parameters based on structured neighborhoods with clique-width among these. Moreover, Bergougnoux and Kant\'e~\cite{BergougnouxK19a}, building upon the rank-based approach of Bodlaender et al.~\cite{BodlaenderCKN15}, obtain single-exponential running times $\Oh^*(\alpha^{\cw})$ for a large class of connectivity problems parameterized by clique-width. As both articles are aimed at obtaining a breadth of single-exponential algorithms for a large class of problems, the \textsc{Connected (Co-)$(\sigma, \rho)$-Dominating Set} problems, the obtained bases for particular problems are far from being optimal. For example, the former article implies an $\Oh^*(128^{\cw})$-time algorithm for \CDS and the latter article yields an $\Oh^*((27 \cdot 2^{\omega + 1})^{\cw})$-time algorithm for \CVC and an  $\Oh^*((8 \cdot 2^{\omega + 1})^{\cw})$-time algorithm for \CDS, where $\omega$ is the matrix multiplication exponent, see e.g.\ Alman and Vassilevska W.~\cite{AlmanW21}. Even if $\omega = 2$, this only yields the very large bases $216$ and $64$ respectively. 

We show that the running times for \CVC and \CDS parameterized by clique-width can be considerably optimized by providing novel algorithms. These faster algorithms again rely on the cut-and-count-technique and are fine-tuned by precisely analyzing which cut-and-count states are necessary to consider. Moreover, we use further techniques such as fast subset convolution, inclusion-exclusion states, and distinguishing between live and dead labels to obtain the improved running times.

\begin{thm}\label{thm:cw_algos}
 There are one-sided error Monte-Carlo algorithms that, given a $\nlabels$-expression\footnote{A $\nlabels$-expression witnesses\ that the clique-width of $G$ is at most $\nlabels$.} for a graph $G$, can solve
	\begin{itemize}
	 \item \CVC in time $\Oh^*(6^\nlabels)$,
	 \item \CDS in time $\Oh^*(5^\nlabels)$.
	\end{itemize} 
\end{thm}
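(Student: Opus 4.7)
The plan is to apply the cut-and-count technique of Cygan et al.\ separately to each of the two problems and to execute the resulting counting subproblem by a dynamic program along the $\nlabels$-expression $\cexpr$. Cut-and-count reduces the decision of a connectivity problem to computing, modulo $2$, the weighted number of pairs $(X,C)$ where $X$ is a candidate solution (a vertex cover of the prescribed size for \CVC, a dominating set of the prescribed size for \CDS) and $C$ is a \emph{consistent} cut of $X$ that places a fixed vertex $\fixedvertex$ on its left side; combined with the Isolation Lemma on random vertex weights, this isolates a connected solution with constant probability and yields the claimed one-sided error.

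The main work is therefore a DP that at every node of $\cexpr$ tabulates a small per-label state for each of the $\nlabels$ labels, together with global counters for the solution size and the total weight $w \in \{0,\ldots,\wmax\}$. A vertex carries one of the cut-and-count states $\{0,1_L,1_R\}$ for \CVC, and for \CDS the $0$-vertices are additionally split into \emph{dominated} and \emph{still-to-be-dominated}. The naive per-label alphabet tracks for each vertex type whether the label contains such a vertex, giving $2^3=8$ states for \CVC and $2^4=16$ states for \CDS, which is far too many. The heart of the argument is an alphabet reduction built on distinguishing \emph{live} labels, whose composition still admits a non-trivial join without violating cut consistency or the problem-specific local constraint, from \emph{dead} labels, whose composition already forbids any such join (typically because they contain both an $L$-vertex and an $R$-vertex, or, for \CVC, because any join with a non-empty label would create an uncovered edge). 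Once a label turns dead we \emph{commit} to its contents: its vertex count, cover size, weight and cut-side tallies are pushed into the global counters and from then on the label is represented by the single collapsed state ``dead''. A case analysis of the remaining live compositions leaves the six states $\emptyset, \{0\}, \{L\}, \{R\}, \{0,L\}, \{0,R\}$ for \CVC, giving an alphabet of size $6$ including the dead state; for \CDS an analogous analysis combined with an inclusion--exclusion summation over which still-to-be-dominated vertices are ``present'' folds the fourth vertex type into the other three and brings the alphabet down to $5$.

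Transitions for introduce, relabel and join nodes are carried out in polynomial time per table entry by updating the per-label state, rejecting configurations that violate cut consistency or the problem constraint, and triggering commitment when an overload arises; the disjoint union node is a coordinate-wise convolution over the label alphabet, for which fast subset convolution is invoked so that union nodes do not inflate the base beyond $6$, resp.\ $5$. The step I expect to be the main obstacle is the bookkeeping around commitment: every legitimate pair $(X,C)$ should be counted exactly once, yet different sequences of operations can turn the same label dead at different moments, so we must verify that the commitment rule is well defined with respect to the DP semantics and, for \CDS, that the inclusion--exclusion coefficients survive both the commitment and the parity computation. Once this is established, summing the per-node running time over the $\Oh(n)$ nodes of $\cexpr$ gives the stated bounds $\Oh^*(6^\nlabels)$ for \CVC and $\Oh^*(5^\nlabels)$ for \CDS.
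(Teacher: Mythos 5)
Your high-level plan matches the paper (cut-and-count plus isolation lemma, DP over the clique-expression, per-label state set, fast convolution at union nodes, live/dead labels to save a factor), but the execution has two genuine flaws that would make the algorithm incorrect rather than merely incomplete.

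First, your \CVC alphabet is wrong. You list the live states as $\emptyset, \{0\}, \{L\}, \{R\}, \{0,L\}, \{0,R\}$ and fold the ``overloaded'' combinations into a collapsed dead symbol, citing ``contain both an $L$-vertex and an $R$-vertex'' as a reason for deadness. But the label state $\{1_L, 1_R\}$ is a perfectly valid \emph{live} state: such a label can be feasibly joined to a label with state $\{0\}$ (all new edges covered, no edge crosses the cut) and to nothing else, so it must be tracked, not committed. The state that genuinely cannot ever be joined again is $\{0, 1_L, 1_R\}$, because any nonempty partner label either has a $0$ (giving an uncovered edge) or a $1_L$ or $1_R$ (crossing the cut). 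The paper's six states are $\{0\}, \{L\}, \{R\}, \{0,L\}, \{0,R\}, \{L,R\}$, i.e.\ the power set of $\{0,1_L,1_R\}$ with exactly $\emptyset$ (a nonempty label always realizes at least one vertex type) and the full set excluded. Collapsing $\{L,R\}$ into a dead symbol would either reject feasible future joins or accept infeasible ones depending on your commitment semantics; either way the count modulo $2$ is wrong.

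Second, and more structurally, your notion of ``dead'' is a property of the partial solution (the current composition forbids future joins), whereas the paper's is a property of the clique-expression itself: a label is dead at a node when its vertices have received all their incident edges in $G$. The paper preprocesses to a \emph{nice} expression and inserts explicit dead-operations into the syntax tree; deadness is then known \emph{a priori}, independent of which partial solution the DP row describes, and the state is simply dropped for dead labels. Your state-dependent commitment rule means different table entries for the same $(t,\text{label})$ pair could disagree on whether a label is committed, so the rows of the DP table do not index a coherent state space; you do flag this (``we must verify that the commitment rule is well defined with respect to the DP semantics''), but as written it is not salvageable without moving to the expression-level definition.

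For \CDS, your sketch gestures toward an inclusion--exclusion over still-to-be-dominated vertices to shed the fourth vertex type, and you again flag the difficulty (whether the coefficients survive commitment and parity) without resolving it. The paper resolves it by switching to \emph{allowed}/\emph{forbidden} (rather than dominated/undominated) vertex states, so that the label states become the subsets of $\{\mathbf{F},\mathbf{L},\mathbf{R}\}$ with all subsets of size $\geq 2$ collapsed to one symbol (they have identical join behavior), and the domination requirement is enforced only at dead-operations via an argument that works only modulo $2$: a partial solution with $u>0$ undominated vertices in a newly dead label appears $2^{u}$ times when summing over all assignments of forbidden status, and so cancels. This is the actual mechanism you would need, and it is not a straightforward ``fold the fourth type in''.

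In short, the framework and the source of the savings (live/dead distinction, fast convolution, isolation lemma) are right, but the concrete state sets and the semantics of ``dead'' need to be corrected, and the mod-$2$ inclusion--exclusion at dead nodes for \CDS needs to be worked out rather than flagged.
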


We show that these algorithms are essentially the correct ones for these problem-parameter-combinations by proving that the obtained running times are optimal under SETH. To prove these lower bounds, we follow the by now standard construction principle of Lokshtanov et al.~\cite{LokshtanovMS18} for lower bounds relative to width-parameters. To apply this principle for clique-width, we closely investigate the problem behavior across \emph{joins}, i.e., the edge-structures via which clique-width is defined, and the results of this investigation strongly guide us in designing appropriate gadgets. Precisely, we obtain the following tight lower bounds:

\begin{thm}\label{thm:cw_lbs}
 Assuming SETH, the following statements hold for all $\eps > 0$:
 \begin{itemize}
	 \item \CVC cannot be solved in time $\Oh^*((6 - \eps)^{\cw})$.
	 \item \CDS cannot be solved in time $\Oh^*((5 - \eps)^{\cw})$.
 \end{itemize}
\end{thm}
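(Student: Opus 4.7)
The plan is to adapt the framework of Lokshtanov, Marx and Saurabh~\cite{LokshtanovMS18} for lower bounds relative to width-parameters to the setting of clique-width, and to reduce from CNF-SAT under SETH. Starting from a formula with $\nvars$ variables and $\nclss$ clauses, I would partition the variables into $\ngrps = \nvars/\vgrpsize$ blocks of size $\vgrpsize$ and build a graph $G$ whose high-level structure is a $\ngrps \times \nclss$ grid of cells. Each row will host a chain of copies of a block gadget connected so as to enforce consistency of the encoded assignment along the row, and each column will host a clause gadget that is joined to the block gadgets of its column via bicliques. The target size of the \CVC (resp.\ \CDS) solution is then tuned so that feasible solutions are precisely those whose block gadgets encode a satisfying assignment. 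A clique-width expression that processes the grid column by column keeps only the labels of the current column live, yielding $\cw(G) \leq \ngrps \cdot \gamma + \Oh(1)$, where $\gamma$ is the smallest integer with $\alpha^\gamma \geq 2^\vgrpsize$ for $\alpha \in \{6,5\}$. Since $\gamma/\vgrpsize \to 1/\log_2 \alpha$ as $\vgrpsize \to \infty$, an $\Oh^*((\alpha - \eps)^{\cw})$-time algorithm would then solve CNF-SAT in time $2^{(1-\delta)\nvars}$ for some $\delta > 0$, contradicting SETH.

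The crux of the construction is the design of the block gadgets. Since joins in a clique-width expression add \emph{all} edges between two label classes, I would first analyze how partial \CVC and \CDS solutions can interact with such a join. This analysis should identify exactly $\alpha$ relevant ``types'' that a vertex, and by extension an entire label class in the cut-and-count dynamic program, can take: for \CVC the six types combine in-cover/out-of-cover status with the two cut-and-count colors and a marker for whether further incident edges are still permitted, while for \CDS the five types combine cover membership with domination- and connectivity-related markers. The block gadget is then engineered so that, across feasible extensions of a partial solution, its $\gamma$ query labels realize exactly $2^\vgrpsize$ distinct type-tuples, in bijection with the $2^\vgrpsize$ assignments of the $\vgrpsize$ variables it encodes. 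The clause gadget is in turn a small widget that, when joined to a designated query label of each block in its column, evaluates whether any of its literals is satisfied based solely on the exposed label types and forces infeasibility otherwise.

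The main obstacle will be the rigidity of joins: I cannot selectively couple a clause gadget to an individual variable inside a block, so the encoding that each block exposes through its labels must be legible simultaneously to every clause in its column, and the clause widget must be a uniform device that reads only aggregate label types. Ensuring that all $\alpha^\gamma$ type-tuples are actually realizable by the gadget (so that no state is wasted and the ratio $1/\log_2\alpha$ is attained in the limit), while still letting the clause gadget implement arbitrary disjunctions of literals from these types, is where most of the construction effort will concentrate. For \CDS the situation will be more delicate because the domination constraint couples cover membership and connectivity through joins, and the live/dead label distinction must be used carefully to prevent unwanted propagation. Once the gadgets are in place, the remaining work is to stitch in a global connectivity wrapper anchoring the cut-and-count witness, to verify that the resulting $\nlabels$-expression attains the claimed clique-width bound, and to combine this with the size-budget equations above to obtain both parts of the theorem.
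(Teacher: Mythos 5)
Your proposal follows the same high-level framework as the paper (Lokshtanov–Marx–Saurabh grid of blocks, groups of $\vgrpsize$ variables mapped into $\alpha^\gamma$ codes, column-by-column linear clique-expression giving $\cw \leq \ngrps\gamma + \Oh(1)$, SETH contradiction via $\gamma/\vgrpsize \to 1/\log_2\alpha$). However, there is a genuine gap at the point where you say the block gadgets in a row are ``connected so as to enforce consistency of the encoded assignment along the row.'' With clique-width joins between consecutive path gadgets, \emph{strict} consistency is not achievable: adjacent gadgets share only a join, and a solution can silently change the encoded state from one gadget to the next without any local violation of the CVC/CDS constraints. The paper's central technical move is precisely to \emph{not} try to forbid such changes. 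Instead they analyze the pairwise compatibility of label-states across a join, extract a maximal \emph{triangular submatrix} of the compatibility matrix, and build the path gadget so that states can only increase along this \emph{transition order}. Consequently each row can ``cheat'' at most $\alpha-1$ times, and by repeating the cycle of $\nclss$ clauses $\Oh(\ngrps\grpsize)$ times and applying the pigeonhole principle they obtain a cheat-free region where a well-defined assignment can be read off. Your proposal omits this entirely, which is why your grid is $\ngrps \times \nclss$ rather than the paper's $\ngrps\grpsize \times \nclss(5\ngrps\grpsize + 1)$ (for CVC); without the repeated clause cycles and the transition-order argument, your soundness direction (solution $\Rightarrow$ satisfying assignment) breaks down because a solution could use a different assignment for each column. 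Relatedly, your characterization of the label-state sets is off: in the lower bound the relevant vertex atoms are $\{\zero,\one_0,\one_1\}$ (root-connectivity, not cut-and-count colors) for CVC and $\{\zero_0,\zero_1,\one_0,\one_1\}$ for CDS, and the $6$ (resp.\ $5$) label-states are specific nonempty subsets surviving after trimming empty/full and compatibility-equivalent states; there is no ``marker for further incident edges'' in these states (that notion belongs to the live/dead label machinery of the algorithm, not the lower bound). You would need to carry out the compatibility-matrix analysis explicitly to see which subsets survive and in which order they must appear, since this dictates the design of the boundary (indicator vertices) and the central clique of the path gadget.
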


This work is part of a larger research program to determine the optimal running times for various connectivity problems relative to several width-parameters ranging from restrictive to more and more general ones, hence yielding a \emph{fine-grained} understanding of the \emph{price of generality} in this setting. We summarize the known results in \cref{table:conn_time_overview}. The cut-and-count-technique by Cygan et al.~\cite{CyganNPPRW11arxiv, CyganNPPRW22} together with their lower bounds settle the complexity relative to treewidth (and pathwidth) for many connectivity problems. Bojikian et al.~\cite{BojikianCHK23} consider the even more restrictive \emph{cutwidth} and combine cut-and-count with the rank-based approach to improve upon the treewidth-algorithms or provide more economical lower bound constructions of low cutwidth when no improved algorithm exists. This work and a companion paper~\cite{HegerfeldK23mtw} present the first fine-grained parameterized results for connectivity problems parameterized by a dense width-parameter. The companion paper~\cite{HegerfeldK23mtw} considers the parameter \emph{modular-treewidth} which lifts treewidth into the dense regime by combining tree decompositions with modular decompositions and thus serves as a natural intermediate step between treewidth and clique-width. The algorithmic results on modular-treewidth are obtained by either reducing directly to the treewidth-case if the base in the running time remains the same or by applying the cut-and-count-technique and the modular structure to essentially reduce to a more involved problem parameterized by treewidth; in the latter case, new lower bound constructions are provided that follow similar high-level principles as here, but have to adhere to different design restrictions. Cygan et al.~\cite{CyganNPPRW22} have observed that imposing a connectivity constraint increases the base by at most 1 in the sparse setting, e.g., \VC has optimal base 2, see Lokshtanov et al.~\cite{LokshtanovMS18}, and \CVC has optimal base 3 parameterized by treewidth. In the dense setting, the impact of the connectivity constraint can vary more, e.g., parameterized by clique-width the optimal bases of \VC and \DS are 2 and 4~\cite{IwataY15, KatsikarelisLP19}, respectively, which increase to 6 and 5, respectively, when adding the connectivity constraint.
	\newcommand{\sexp}[1]{$\Oh^*(#1^k)$}%
	\begin{table}%
		\centering
		\begin{tabular}{l|cccc}%
			Parameters & cutwidth & treewidth & modular-tw & clique-width\\%
			\hline%
			\CVC & \sexp{2} & \sexp{3} & \sexp{5} & \sexp{6} \\%
			\CDS & \sexp{3} & \sexp{4}	& \sexp{4} & \sexp{5} \\%
			\ST  & \sexp{3} & \sexp{3} & \sexp{3} & ? \\%
			\FVS & \sexp{2} & \sexp{3} & \sexp{5} & ? \\%
			\hline%
			References & \cite{BojikianCHK23} & \cite{CyganNPPRW11arxiv, CyganNPPRW22} & \cite{HegerfeldK23mtw} & here%
		\end{tabular}%
		\caption{Optimal running-times of several connectivity problems with respect to various width-parameters listed in increasing generality. The results in the last column are obtained in this paper. Between modular-treewidth and clique-width, we only have the relationship $\cw(G) \leq 3 \cdot 2^{\modtw(G)-1}$, but the same results are also tight for the more restrictive modular-pathwidth, where we have $\cw(G) \leq \modpw(G) + 2$ by Hegerfeld and Kratsch~\cite{HegerfeldK23mtw}. The ``?'' marks problem-parameter combinations, where an algorithm with single-exponential running time is known by Bergougnoux and Kant\'e~\cite{BergougnouxK19a}, but a gap between the lower bound and algorithm remains.}\label{table:conn_time_overview}%
		\vspace*{-0.8cm}
	\end{table}%
\subparagraph*{Further Related Work.}
Beyond these tight results, the cut-and-count-technique has also been applied to the parameters branchwidth~\cite{PinoBR16} and treedepth~\cite{HegerfeldK20, NederlofPSW20}. Due to its reliance on the isolation lemma, the cut-and-count-technique yields randomized algorithms. The rank-based approach of Bodlaender et al.~\cite{BodlaenderCKN15} and the matroid-based techniques by Fomin et al.~\cite{FominLPS16, FominLPS17} deal with this shortcoming at the cost of a higher running time and the rank-based approach can also help for problems without connectivity constraints. By combining the rank-based approach with other techniques to avoid having to resort to Gaussian elimination, optimal running times can be obtained in some cases such as for \HC parameterized by pathwidth~\cite{CurticapeanLN18, CyganKN18}, or coloring problems parameterized by cutwidth~\cite{GroenlandMNS22, JansenN19}. There are further applications of the rank-based approach to connectivity problems relative to dense width-parameters, such as rankwidth~\cite{BergougnouxK21} and mim-width~\cite{BergougnouxDJ23}. We also refer to the survey of Nederlof on rank-based methods~\cite{Nederlof20}.

Moving away from connectivity problems, we survey some more of the literature obtaining tight fine-grained parameterized algorithms for dense parameters. Iwata and Yoshida show that for any $\eps > 0$ \VC can be solved in time $\Oh^*((2-\eps)^{\tw})$ if and only if \VC can be solved in time $\Oh^*((2-\eps)^{\cw})$~\cite{IwataY15}; as the bases differ for treewidth and clique-width in our case, it seems difficult to transfer their techniques to our setting. Lampis~\cite{Lampis20} obtains the tight running time of $\Oh^*((2^q - 2)^{\cw})$ for $q$-Coloring and a tight result for $q$-Coloring parameterized by a more restrictive variant of modular-treewidth. Generalizing to homomorphism problems, Ganian et al.~\cite{GanianHKOS22} obtain tight results for parameterization by clique-width, where the obtained base depends on a special measure of the target graph. Katsikarelis et al.~\cite{KatsikarelisLP19} obtain tight results for \textsc{$(k,r)$-Center} parameterized by cliquewidth and, in particular, the tight running time $\Oh^*(4^{\cw})$ for \textsc{Dominating Set}. Jacob et al.~\cite{JacobBDP21} and Hegerfeld and Kratsch~\cite{HegerfeldK22} show that the running time $\Oh^*(4^{\cw})$ is tight for \OCT, where the latter article also considers a generalization to more colors and contains tight results for parameters that are not width-parameters.

\section{Technical Overview}

In this section, we outline the techniques used to prove \cref{thm:cw_algos} and \cref{thm:cw_lbs}.

\subsection{Algorithmic Techniques}

\subparagraph*{Cut and Count.} 
The cut-and-count-technique by Cygan et al.~\cite{CyganNPPRW22} allows us to reduce the connectivity constraint to a locally checkable counting problem. A \emph{consistent cut} of a graph $G = (V, E)$ is an ordered partition of the vertices $V$ into two parts $(V_L, V_R)$ such that no edge in $E$ crosses between the two cut sides $V_L$ and $V_R$. The key property of consistent cuts is that $G$ admits precisely $2^{\cc(G)}$ distinct consistent cuts, where $\cc(G)$ is the number of connected components of $G$. By fixing a vertex $\fixedvertex$ and only considering consistent cuts $(V_L, V_R)$ with $\fixedvertex \in V_L$, this number reduces to $2^{\cc(G)-1}$, so that $G$ admits an odd number of such consistent cuts if and only if $G$ is connected. Hence, if we count pairs $(X, (X_L, X_R))$, where $\fixedvertex \in X \subseteq V$ is a partial solution and $(X_L, X_R)$ a consistent cut of $G[X]$ with $\fixedvertex \in X_L$, modulo two, then only connected solutions survive. When multiple connected solutions exist, this can lead to unwanted cancellations modulo two, but this issue can be avoided at the cost of randomization by using the isolation lemma~\cite{MulmuleyVV87}.

\subparagraph*{Lifting Vertex States to Label States.}
For dynamic programming along clique-expressions, we have to characterize the relevant interactions of a partial solution with the \emph{labels} which govern which \emph{joins} can be constructed by the expression. In the considered problems, a single vertex $v$ can take a constant number of different states with respect to a partial solution which we capture with a problem-dependent set $\bstates$; e.g., for \CVC, we have $\bstates = \{\zero, \one_L, \one_R\}$, representing $v \notin X$ (state $\zero$), $v \in X_L$ (state $\one_L$), and $v \in X_R$ (state $\one_R$), respectively. A clique-expression repeatedly adds \emph{joins} between pairs of vertex sets, say $A$ and $B$, i.e., all possible edges between $A$ and $B$ are added, and the algorithm must check whether a partial solution remains feasible after adding a join and possibly update some states. A priori, each choice of vertex states in a label could yield different behaviors for partial solutions. However, the crucial observation for the considered problems is that the precise multiplicity of a vertex state in $A$ or in $B$ is irrelevant for a join, rather it suffices to distinguish which vertex states appear on each side and which do not. Therefore, the relevant label states are captured by the subsets of $\bstates$. The next two techniques will allow us to reduce the number of considered states further.

\subparagraph*{Nice Clique-Expressions.}
For both algorithms, we refine and augment standard clique-expressions to distinguish between \emph{live} and \emph{dead} labels. When performing dynamic programming along a clique-expression, we consider the induced subgraphs defined by subexpressions of the given clique-expression. At a subexpression, we say that a label $\ell$ is \emph{live} if in the remaining expression the vertices with label $\ell$ receive further edges that are not present in the current subexpression, otherwise we say that $\ell$ is \emph{dead}. First, we observe that we do not need to track the states of a partial solution at the dead labels, as they only have trivial interactions with the other states in the remaining expression. Hence, we only need to consider the states that can be attained at live labels which allows us to reduce the number of considered states for \CVC. To simplify the description of the algorithms and avoid handling of edge cases, we transform the clique-expressions so that no degenerate cases occur and add a dead-operation $\deadop{\ell}$ which handles label $\ell$ turning from live to dead. The dead-operation is similar to \emph{forget vertex nodes} in \emph{nice tree decompositions}~\cite{Kloks94}. Distinguishing live and dead labels has been used before~\cite{GanianHKOS22, KatsikarelisLP19, Lampis20} to obtain improved running times, but handling the label types explicitly via an additional operation is new to the best of the authors' knowledge.

\subparagraph*{Inclusion-Exclusion States.}
For \CDS, we transform to a different set of vertex states, called \emph{inclusion-exclusion states}, which have proven helpful for domination problems before~\cite{HegerfeldK20, NederlofRD14, PilipczukW18, RooijBR09}. With these states we do not track whether a vertex is \emph{undominated} or \emph{dominated} by a partial solution, but rather \emph{allow} a vertex to be dominated or \emph{forbid} it. A solution to the original problem can usually be recovered by an inclusion-exclusion argument, however when lifting to label states this argument does not directly transfer. We show that the argument can be adapted for the label states when working modulo two, whereas for vertex states the argument is known to also work for non-modular counting. The advantage of the inclusion-exclusion states is that at join-operations we do not have to update vertex states from undominated to dominated, thus simplifying the algorithm and also allowing us to collapse several label states into a single one. The dead-operations of nice clique-expressions serve as suitable timepoints in the algorithm to apply the adapted inclusion-exclusion argument.

\subparagraph*{Fast Convolutions.}
To quickly compute the dynamic programming recurrences, we utilize algorithms for \emph{fast subset convolution}. In \cref{sec:fast_convolutions}, we tailor the techniques developed by Björklund et al.~\cite{BjorklundHKK10} on trimmed subset convolutions to obtain a fast algorithm for the union-recurrence appearing in the \CVC algorithm. For \CDS, the lattice-based results of Björklund et al.~\cite{BjorklundHKKNP16} provide the necessary means to compute the union-recurrence quickly. In both cases, we obtain fast convolution algorithms applicable in more general settings, so these results could be of independent interest.

\subsection{Lower Bound Approach}

\subparagraph*{Grid-like Construction.}
Both lower bounds are based on the high-level construction principle already present in the SETH-lower bounds of Lokshtanov et al.~\cite{LokshtanovMS18} for parameterization by path/treewidth. The resulting graphs can be interpreted as a \emph{grid/matrix of blocks}, where each block spans several rows and columns. Each row is a long \emph{path-like gadget} simulating a constant number of variables of the \SAT instance and contributes one unit of clique-width. The more variables we are able to simulate per row, the higher the running time we can rule out. Every column represents a clause and consists of gadgets \emph{decoding} the states of the path gadgets and verifying whether the resulting assignment satisfies the corresponding clause.

\subparagraph*{Path Gadgets and State Transitions.} Our main technical contribution is the design of the \emph{path gadgets} that lie at the intersection of every row and column, whereas the design of the \emph{decoding gadgets} can be adapted from known constructions by Cygan et al.~\cite{CyganNPPRW11arxiv}. Since every row should contribute one unit of clique-width, adjacent path gadgets in a row must be connected by a join. Our goal is to design a path gadget, under these restrictions, admitting as many distinct states as possible. An important issue is how the state of the path gadgets may transition along each row, as the reduction only works when the state transitions follow some \emph{transition order}.

\subparagraph*{Determining the Transition Order.} Since the number of possible label states can be large, it is not immediate how to pick an appropriate transition order. For lower bounds parameterized by path/treewidth, this is much less of an issue, since the number of possible states is much smaller. Hence, we systematically analyze the possible state transitions across a join, obtaining a \emph{transition/compatibility} matrix showing which pairs of states can lead to a globally feasible solution and which cannot. After possibly reordering the rows and columns of the compatibility matrix, a possible transition order must induce a \emph{triangular submatrix}. From a largest possible triangular submatrix of the compatibility matrix, we can then deduce an appropriate transition order which guides the design of the path gadget.

\subparagraph*{Anatomy of a Path Gadget.} Our path gadgets consist of three parts: a \emph{central clique} that communicates with the decoding gadgets, and two \emph{boundary} parts, i.e., the \emph{left} and \emph{right} part that connect to the previous and following join, respectively. In the central clique, each solution will avoid exactly one vertex representing the state of the path gadget. To implement the transition order, the left and right part have to communicate appropriate states to the two adjacent path gadgets. By taking the triangular submatrix and \emph{pairing} states along the main diagonal, we see which states must be communicated in each case. The central idea behind designing the left and right part is to isolate the constituent state properties of the boundary vertices, such as, whether they are contained in the partial solution or whether they are dominated. This allows for simple communication with the central clique and expedites the implementation of the transition order.

\section{Preliminaries}

For two integers $a,b$ we write $a \equiv_c b$ to indicate equality modulo $c \in \NN$. We use Iverson's bracket notation: for a boolean predicate $p$, we have that $[p]$ is $1$ if $p$ is true and $0$ otherwise. For a function $f$ we denote by $f[v \mapsto \alpha]$ the function $(f \setminus \{(v, f(v))\}) \cup \{(v, \alpha)\}$, viewing $f$ as a set; we also write $f[v \mapsto \alpha, w \mapsto \beta]$ instead of $(f[v \mapsto \alpha])[w \mapsto \beta]$. By $\ZZ_2$ we denote the field of two elements. 
For $n_1, n_2 \in \ZZ$, we write $[n_1, n_2] = \{x \in \ZZ \sep n_1 \leq x \leq n_2\}$ and $[n_2] = [1, n_2]$. For a function $f\colon V \rightarrow \ZZ$ and a subset $W \subseteq V$, we write $f(W) = \sum_{v \in W} f(v)$. Note that for functions $g\colon A \rightarrow B$, where $B \not\subseteq \ZZ$, and a subset $A' \subseteq B$, we still denote the \emph{image of $A'$ under $g$} by $g(A') = \{g(v) \sep v \in A'\}$. If $f \colon A \rightarrow B$ is a function and $A' \subseteq A$, then $f\big|_{A'}$ denotes the \emph{restriction} of $f$ to $A'$ and for a subset $B' \subseteq B$, we denote the \emph{preimage of $B'$ under $f$} by $f^{-1}(B') = \{a \in A \sep f(a) \in B'\}$. An ordered tuple of sets $(A_1, \ldots, A_\ell)$ is an \emph{ordered subpartition} if $A_i \cap A_j = \emptyset$ for all $i \neq j \in [\ell]$.

\paragraph*{Graph Notation}
We use common graph-theoretic notation and assume that the reader knows the essentials of parameterized complexity. Let $G = (V, E)$ be an undirected graph. For a vertex set $X \subseteq V$, we denote by $G[X]$ the subgraph of $G$ that is induced by $X$. The \emph{open neighborhood} of a vertex $v$ is given by $N(v) = \{u \in V \sep \{u,v\} \in E\}$, whereas the \emph{closed neighborhood} is given by $N[v] = N(v) \cup \{v\}$. For sets $X \subseteq V$ we define $N[X] = \bigcup_{v \in X} N[v]$ and $N(X) = N[X] \setminus X$. For two disjoint vertex subsets $A, B \subseteq V$, we define $E_G(A,B) = \{\{a,b\} \in E(G) \sep a \in A, b \in B\}$ and adding a \emph{join} between $A$ and $B$ means adding an edge between every vertex in $A$ and every vertex in $B$. For a vertex set $X \subseteq V$, we define $\delta_G(X) = E_G(X, V \setminus X)$ and we write $\delta_G(v) = \delta_G({v})$ for single vertices $v$. We denote the \emph{number of connected components} of $G$ by $\cc(G)$. A \emph{cut} of $G$ is a partition $V = V_L \cup V_R$, $V_L \cap V_R = \emptyset$, of its vertices into two parts.

\paragraph*{Clique-Expressions and Clique-Width}

A \emph{labeled graph} is a graph $G = (V,E)$ together with a \emph{label function} $\lfct \colon V \rightarrow \NN = \{1, 2, 3, \ldots\}$; we usually omit mentioning $\lfct$ explicitly. A labeled graph is \emph{$k$-labeled} if $\lfct(v) \leq k$ for all $v \in V$. We consider the following four operations on labeled graphs: 
\begin{itemize}
  \item the \emph{introduce}-operation $\intro{\ell}(v)$ which constructs a single-vertex graph whose unique vertex $v$ has label $\ell$,
  \item the \emph{union}-operation $G_1 \union G_2$ which constructs the disjoint union of two labeled graphs $G_1$ and $G_2$,
  \item the \emph{relabel}-operation $\relab{i}{j}(G)$ changes the label of all vertices in $G$ with label $i$ to label $j$, 
  \item the \emph{join}-operation $\join{i}{j}(G)$, $i \neq j$, which adds an edge between every vertex in $G$ with label $i$ and every vertex in $G$ with label $j$. 
\end{itemize}
A valid expression that only consists of introduce-, union-, relabel-, and join-operations is called a \emph{clique-expression}. The graph constructed by a clique-expression $\cexpr$ is denoted $G_\cexpr$ and the constructed label function is denoted $\lfct_\cexpr \colon V(G_\cexpr) \rightarrow \NN$.

We associate to a clique-expression $\cexpr$ the syntax tree $\tree_\cexpr$ in the natural way and to each node $t \in V(\tree_\cexpr)$ the corresponding operation. For any node $t \in V(\tree_\cexpr)$ the subtree rooted at $t$ induces a \emph{subexpression} $\cexpr_t$. When a clique-expression $\cexpr$ is fixed, we define $G_t = G_{\cexpr_t}$, $V_t = V(G_t)$, $E_t = E(G_t)$, and $\lfct_t = \lfct_{\cexpr_t}$ for any $v \in V(\tree_\cexpr)$. Furthermore, we write $V_t^\ell = \lfct_t^{-1}(\ell)$ for the set of all vertices with label $\ell$ at node $t$ and we write $\labels_t = \{\ell \in \NN \sep V_t^\ell \neq \emptyset\}$ for the set of \emph{nonempty labels at node $t$}.

We say that a clique-expression $\cexpr$ is a \emph{$k$-clique-expression} or just \emph{$k$-expression} if $G_t$ is $k$-labeled for all $t \in V(\tree_\cexpr)$. The \emph{clique-width} of a graph $G$, denoted by $\cw(G)$, is the minimum $k$ such that there exists a $k$-expression $\cexpr$ with $G = G_\cexpr$. A clique-expression $\cexpr$ is \emph{linear} if in every union-operation the second graph consists only of a single vertex. Accordingly, we also define the \emph{linear-clique-width} of a graph $G$, denoted $\lcw(G)$, by only considering linear clique-expressions.

\paragraph*{Strong Exponential-Time Hypothesis}

The \emph{Strong Exponential-Time Hypothesis} (\SETH) \cite{CalabroIP09, ImpagliazzoPZ01} concerns the complexity of $\clss$-\SAT, i.e., \SAT where every clause contains at most $\clss$ literals. We define $c_\clss = \inf \{\delta \sep \clss\text{-\SAT can be solved in time } \Oh(2^{\delta \nvars}) \}$ for all $\clss \geq 3$. The weaker \emph{Exponential-Time Hypothesis} (ETH) of Impagliazzo and Paturi~\cite{ImpagliazzoP01} posits that $c_3 > 0$, whereas the Strong Exponential-Time Hypothesis states that $\lim_{\clss \rightarrow \infty} c_\clss = 1$. 
Or equivalently, for every $\delta < 1$, there is some $\clss$ such that $\clss$-\SAT cannot be solved in time $\Oh(2^{\delta \nvars})$.
For our lower bounds, the following weaker variant of \SETH, also called \CNFSETH, is sufficient.

\begin{cnj}[\CNFSETH]
  \label{conj:cnfseth}
  For every $\eps > 0$, there is no algorithm solving \SAT with $n$ variables and $m$ clauses in time $\Oh(\poly(m)(2-\eps)^n)$.
\end{cnj}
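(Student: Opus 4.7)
Because the statement is labelled as a conjecture rather than a theorem, there is strictly speaking nothing to prove; the only claim being implicitly made is that \CNFSETH is a ``weaker variant of \SETH'', i.e., that \SETH implies \CNFSETH. The plan is to sketch this implication, which is the natural justification the paper would provide at this point.

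I would argue the contrapositive. Suppose \CNFSETH fails: there exist $\eps > 0$ and an algorithm $\algo$ that decides any CNF-formula with $\nvars$ variables and $\nclss$ clauses in time $\Oh(\poly(\nclss) (2-\eps)^\nvars)$. The next step is to apply $\algo$ to $\clss$-\SAT for an arbitrary $\clss \geq 3$: every $\clss$-\SAT instance on $\nvars$ variables has at most $\binom{2\nvars}{\clss} = \Oh(\nvars^\clss)$ distinct clauses, so $\nclss = \Oh(\nvars^\clss)$ and hence $\poly(\nclss) = \poly(\nvars)$ for any fixed $\clss$. Therefore $\algo$ solves $\clss$-\SAT in time $\poly(\nvars) (2-\eps)^\nvars$, which is $\Oh(2^{\delta \nvars})$ for any $\delta$ with $\log_2(2-\eps) < \delta < 1$. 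This gives $c_\clss \leq \delta$ for every $\clss \geq 3$, and hence $\lim_{\clss \rightarrow \infty} c_\clss \leq \delta < 1$, contradicting \SETH.

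The only mildly subtle point is absorbing the polynomial factor in $\nvars$ without pushing the exponent base up to $2$; this is handled by choosing $\delta$ strictly between $\log_2(2-\eps)$ and $1$, which is feasible precisely because $\eps > 0$ forces $\log_2(2-\eps) < 1$. I expect no further technical difficulty: once this bookkeeping is done the implication is immediate from the definitions, which is consistent with the paper stating \CNFSETH without further elaboration as a weakening of \SETH.
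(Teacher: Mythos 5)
You correctly identify that there is nothing to prove: the paper states \CNFSETH as a conjecture and never proves it, merely remarking that it is a weaker variant of \SETH sufficient for the lower bounds. Your sketch of the implication \SETH $\Rightarrow$ \CNFSETH is sound and is exactly the standard justification one would give: for fixed $\clss$, a $\clss$-CNF on $\nvars$ variables has at most $\Oh(\nvars^\clss)$ distinct clauses, so $\poly(\nclss)$ collapses to $\poly(\nvars)$ and is absorbed into the exponential, showing $c_\clss \le \log_2(2-\eps) < 1$ for every $\clss$ and hence $\lim_{\clss\to\infty} c_\clss < 1$. (One could state the conclusion slightly more sharply as $c_\clss \le \log_2(2-\eps)$ directly from the infimum definition rather than introducing an intermediate $\delta$, but this is cosmetic.) Since the paper offers no proof, there is no alternative route to compare against.
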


\paragraph*{Isolation Lemma}

\begin{dfn}
  A function $\wfct \colon U \rightarrow \ZZ$ \emph{isolates} a set family $\family \subseteq \powerset{U}$ if there is a unique $S' \in \family$ with $\wfct(S') = \min_{S \in \family} \wfct(S)$, where for subsets $X$ of $U$ we define $\wfct(X) = \sum_{u \in X} \wfct(u)$.
\end{dfn}

\begin{lem}[Isolation Lemma, \cite{MulmuleyVV87}]
 \label{thm:isolation}
 Let $\family \subseteq \powerset{U}$ be a nonempty set family over a universe $U$. Let $\wmax \in \NN$ and for each $u \in U$ choose a weight $\wfct(u) \in [\wmax]$ uniformly and independently at random. Then
  $\PP[\wfct \text{ isolates } \family] \geq 1 - |U|/\wmax$.
\end{lem}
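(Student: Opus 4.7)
The plan is to follow the classical union-bound argument over ``ambiguous'' elements. Fix any element $u \in U$ and, given a weighting $\wfct$, define
\[
\alpha(u) = \min\{\wfct(S) \sep S \in \family,\ u \notin S\}, \qquad
\beta(u) = \min\{\wfct(S \setminus \{u\}) \sep S \in \family,\ u \in S\},
\]
with the convention $\alpha(u) = +\infty$ if no such $S$ exists, and likewise for $\beta(u)$. The key observation is that both $\alpha(u)$ and $\beta(u)$ depend only on the values $\wfct(v)$ for $v \neq u$, hence so does the threshold $\alpha(u) - \beta(u)$.

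Next I would show that, conditional on $\wfct\big|_{U \setminus \{u\}}$, the element $u$ is ``confused'' (meaning there exist two minimum-weight sets in $\family$, one containing $u$ and one not containing $u$) if and only if $\wfct(u) = \alpha(u) - \beta(u)$. Indeed, the minimum weight of a set in $\family$ is $\min(\alpha(u),\, \wfct(u) + \beta(u))$; the minimizer is unique with respect to containing $u$ precisely unless these two expressions are equal. Since $\wfct(u)$ is drawn uniformly from $[\wmax]$ independently of $\wfct\big|_{U \setminus \{u\}}$, the probability that $\wfct(u)$ hits the single determined value $\alpha(u)-\beta(u)$ is at most $1/\wmax$.

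Applying the union bound over all $u \in U$, the probability that some element of $U$ is confused is at most $|U|/\wmax$. I would then argue that if no element is confused, the minimum-weight set is unique: otherwise two distinct minimizers $S_1, S_2 \in \family$ would satisfy $S_1 \symdiff S_2 \neq \emptyset$, and any $u \in S_1 \symdiff S_2$ would witness confusion (since one of $S_1, S_2$ is a minimum-weight set containing $u$ and the other a minimum-weight set avoiding $u$). Hence with probability at least $1 - |U|/\wmax$ the weighting $\wfct$ isolates $\family$, which is the claimed bound.

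There is no real obstacle here beyond bookkeeping the edge cases where $\alpha(u)$ or $\beta(u)$ may be infinite (i.e., all sets in $\family$ contain, or all avoid, $u$); in those cases $u$ trivially cannot be confused and the corresponding term in the union bound may be dropped. The argument is short and self-contained once the threshold characterization of confusion is in place.
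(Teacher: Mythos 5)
Your proof is correct and is the standard argument for the Isolation Lemma: a union bound over ``confused'' elements, using the fact that the confusion threshold $\alpha(u)-\beta(u)$ is determined by $\wfct\big|_{U\setminus\{u\}}$ and hence hit by the independent uniform draw $\wfct(u)$ with probability at most $1/\wmax$, combined with the observation that any pair of distinct minimizers produces a confused element in their symmetric difference. Note, however, that the paper does not prove this lemma at all; it imports it directly from Mulmuley, Vazirani, and Vazirani \cite{MulmuleyVV87}, so there is no in-paper proof to compare against, and your argument is exactly the classical one from that source.
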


When counting objects modulo 2 the Isolation Lemma allows us to avoid unwanted cancellations by ensuring with high probability that there is a unique solution. In our applications, we will choose $\wmax$ so that we obtain an error probability of less than $1/2$.

\subsection{Cut and Count}
\label{sec:cutandcount}

Let $G = (V, E)$ denote a connected graph. To solve a vertex selection problem on $G$ involving a connectivity constraint, we make the following general definitions. The family of \emph{connected solutions} to our problem is denoted by $\sols \subseteq \powerset{V}$. We have to determine whether $\sols$ is empty or not. The \emph{cut-and-count-technique} by Cygan et al.~\cite{CyganNPPRW22} accomplishes this in two parts:
\begin{itemize}
  \item \textbf{The Cut part:} By relaxing the connectivity constraint, we obtain a set $\sols \subseteq \rsols \subseteq \powerset{V}$ of possibly connected solutions, called \emph{candidates}. The set $\dpsols{}{}$ contains pairs $(X, (X_L,X_R))$ consisting of a candidate $X \in \rsols$ and a \emph{consistent cut} $(X_L,X_R)$  of $G[X]$, cf.~\cref{dfn:cons_cut}.
  \item \textbf{The Count part:} We compute $|\dpsols{}{}|$ modulo 2 using a subprocedure. The consistent cuts are defined so that disconnected candidate solutions $X \in \rsols \setminus \sols$ cancel, because they are consistent with an even number of cuts. Hence, only connected candidates $X \in \sols$ remain and we have $\sols \neq \emptyset$ if the parity of $|\dpsols{}{}|$ is odd.
\end{itemize}
If $|\sols|$ is divisible by 2, then this approach fails, since the connected candidates also cancel when counting modulo 2. The Isolation Lemma (\cref{thm:isolation}) allows us to avoid this issue at the cost of randomization. By sampling a weight function $\wfct \colon V \rightarrow [\wmax]$, we can instead count pairs with a fixed weight and it is likely that there is a weight with a unique solution if a solution exists at all.
\begin{dfn}[\cite{CyganNPPRW11}]
  \label{dfn:cons_cut}
  A cut $(V_L, V_R)$ of an undirected graph $G = (V, E)$ is \emph{consistent} if $u \in V_L$ and $v \in V_R$ implies $\{u,v\} \notin E$, i.e., $E_G(V_L, V_R) = \emptyset$. A \emph{consistently cut subgraph} of $G$ is a pair $(X, (X_L, X_R))$ such that $X \subseteq V$ and $(X_L, X_R)$ is a consistent cut of $G[X]$. We denote the set of consistently cut subgraphs of $G$ by $\cuts(G)$.
\end{dfn}

To ensure that connected solutions are not compatible with an even number of consistent cuts, we will usually force a single vertex to the left side of the consistent cut. This results in the following fundamental property of consistent cuts.
\begin{lem}[\cite{CyganNPPRW11}]
  \label{thm:cons_cut}
  Let $X$ be a subset of vertices such that $\fixedvertex \in X \subseteq V$. The number of consistently cut subgraphs $(X, (X_L, X_R))$ such that $\fixedvertex \in X_L$ is equal to $2^{\cc(G[X]) - 1}$.
\end{lem}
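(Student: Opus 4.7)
The plan is to exploit the structural observation that in any consistent cut of $G[X]$, every connected component of $G[X]$ must be placed entirely on one side of the cut. I would first unpack the definition: $(X, (X_L, X_R))$ is consistently cut precisely when $E_{G[X]}(X_L, X_R) = \emptyset$. Since every edge inside a connected component $C$ of $G[X]$ connects vertices within $C$, if any vertex of $C$ lies in $X_L$ and another lies in $X_R$, then by connectivity of $C$ there is an edge of $G[X]$ crossing the cut, contradicting consistency. Hence each connected component of $G[X]$ is assigned, as a whole, to one of the two sides.

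The next step is to count: enumerate the connected components $C_1, \ldots, C_{\cc(G[X])}$ of $G[X]$ and observe that each consistent cut corresponds bijectively to a function $f\colon \{C_1, \ldots, C_{\cc(G[X])}\} \to \{L, R\}$, where $C_i \subseteq X_L$ if $f(C_i) = L$ and $C_i \subseteq X_R$ if $f(C_i) = R$. Conversely, any such function yields a consistent cut since no edges of $G[X]$ cross between components. This gives $2^{\cc(G[X])}$ consistent cuts in total, which proves the unconstrained version.

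Finally, I would impose the constraint $\fixedvertex \in X_L$. Let $C_*$ be the unique connected component of $G[X]$ containing $\fixedvertex$ (this exists because $\fixedvertex \in X$). The constraint $\fixedvertex \in X_L$ is equivalent under the bijection above to $f(C_*) = L$, which fixes one of the $\cc(G[X])$ binary choices. The remaining $\cc(G[X]) - 1$ components may still be independently assigned to either side, so exactly $2^{\cc(G[X]) - 1}$ consistently cut subgraphs satisfy the constraint, as claimed.

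There is no real obstacle in this proof; the only subtlety is the observation that consistency forces each component to lie entirely on one side, which is immediate once one traces an arbitrary path within a component. The argument is essentially a reformulation of the definitions together with the bijection to $\{L, R\}$-colorings of components.
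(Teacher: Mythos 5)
Your proof is correct and is the standard argument: consistency forces each connected component of $G[X]$ to lie entirely on one side, giving a bijection between consistent cuts and $\{L,R\}$-assignments of components, and pinning $\fixedvertex$ to the left fixes exactly one of these $\cc(G[X])$ binary choices. The paper states this lemma as a citation to Cygan et al.\ without reproducing a proof, and your argument coincides with the one given there.
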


With \cref{thm:cons_cut} we can distinguish disconnected candidates from connected candidates by determining the parity of the number of consistent cuts for the respective candidate. 

\begin{cor}
  \label{thm:cons_cut_family}
  Let $\rsols \subseteq \powerset{V}$ be a family of vertex sets so that every $X \in \rsols$ contains $\fixedvertex$. If the set $\dpsols{}{} = \{(X,(X_L, X_R)) \in \cuts(G) \sep X \in \rsols, \fixedvertex \in X_L\}$ has odd cardinality, then there exists an $X \in \rsols$ such that $G[X]$ is connected.
\end{cor}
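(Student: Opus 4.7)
The plan is a direct reduction to \cref{thm:cons_cut} by partitioning $\dpsols{}{}$ according to the first coordinate. For any fixed $X \in \rsols$, the assumption guarantees $\fixedvertex \in X$, so \cref{thm:cons_cut} applied to the graph $G[X]$ (with the same distinguished vertex $\fixedvertex$) yields that the number of consistent cuts $(X_L, X_R)$ of $G[X]$ with $\fixedvertex \in X_L$ is exactly $2^{\cc(G[X]) - 1}$. Summing these fiber sizes over all $X \in \rsols$ gives the identity
\[
  |\dpsols{}{}| = \sum_{X \in \rsols} 2^{\cc(G[X]) - 1}.
\]

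From here the argument proceeds by contraposition. Assume that $G[X]$ is disconnected for every $X \in \rsols$, so $\cc(G[X]) \geq 2$ and therefore $2^{\cc(G[X]) - 1}$ is even for each term in the sum. This makes $|\dpsols{}{}|$ a sum of even numbers, hence even, contradicting the hypothesis that $|\dpsols{}{}|$ is odd. Consequently, at least one $X \in \rsols$ must satisfy $\cc(G[X]) = 1$, i.e., $G[X]$ is connected.

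There is no real obstacle here, as the statement is essentially an immediate corollary of \cref{thm:cons_cut}: the key point is simply that $2^n$ is odd if and only if $n = 0$, so only connected candidates contribute odd fiber sizes and thus only they can make the total count odd. The role of the distinguished vertex $\fixedvertex$ (enforced by the condition $\fixedvertex \in X$ and $\fixedvertex \in X_L$) is precisely to shave the exponent down from $\cc(G[X])$ to $\cc(G[X]) - 1$, which is what makes the parity test meaningful.
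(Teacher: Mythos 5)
Your proof is correct and matches the paper's own argument essentially step for step: both partition $\dpsols{}{}$ by the candidate $X$, apply \cref{thm:cons_cut} to get fiber sizes $2^{\cc(G[X])-1}$, and argue by contraposition that if every $G[X]$ were disconnected the total would be even.
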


\begin{proof}
  We will prove the contrapositive, so suppose that $G[X]$ is not connected for every $X \in \rsols$. By \cref{thm:cons_cut}, we have that $|\dpsols{}{}| = \sum_{X \in \rsols} 2^{\cc(G[X]) - 1} \equiv_2 0$, since $\cc(G[X]) \geq 2$ for all $X \in \rsols$.
\end{proof}

\subsection{Nice Clique-Expressions}
Let $\cexpr$ be a $\nlabels$-expression for $G = (V, E)$; the associated syntax tree is $\tree_\cexpr$. We say that a clique-expression $\cexpr$ is \emph{irredundant} if for any join-operation $\join{i}{j}(G_{t'}) = t \in V(\tree_\cexpr)$, it holds that $E_{t'}(V_{t'}^i, V_{t'}^j) = \emptyset$, i.e., no edge added by the join existed before.

\begin{thm}[\cite{CourcelleO00}]\label{thm:irredundant_expression}
  Any $\nlabels$-expression $\cexpr$ can be transformed into an equivalent, i.e., $G_{\cexpr'} = G_\cexpr$ irredundant $k$-expression $\cexpr'$ in polynomial time.
\end{thm}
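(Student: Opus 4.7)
The proof proceeds by structural induction on the clique-expression $\cexpr$. The atomic case of a single introduce $\intro{\ell}(v)$ is trivially irredundant since it contains no join. For the inductive cases of union and relabel operations, neither adds any edges, so applying them to irredundant subexpressions preserves irredundance. The only substantive case is when the outermost operation is a join $\cexpr = \join{i}{j}(\cexpr')$, where by induction we may assume $\cexpr'$ has already been transformed into an equivalent irredundant $k$-expression.

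If no edges of $G_{\cexpr'}$ run between $V_{\cexpr'}^i$ and $V_{\cexpr'}^j$, then $\cexpr$ itself is already irredundant and we are done. Otherwise, the plan is to rewrite $\cexpr'$ into a further irredundant $k$-expression $\cexpr''$ whose labeled graph coincides with $G_{\cexpr'}$ except that every edge of $E_{G_{\cexpr'}}(V_{\cexpr'}^i, V_{\cexpr'}^j)$ has been removed; these missing edges will then be reinstated by the outer join $\join{i}{j}$, so that the resulting full expression computes $G_\cexpr$ correctly and is irredundant by construction. By the inductive irredundance of $\cexpr'$, each offending edge $\{u,v\}$ was contributed by a unique deeper join $\join{a}{b}$ inside $\cexpr'$. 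At each such deeper join, we partition the label-$a$ vertices (and analogously label-$b$ vertices) into those that will carry label $i$ or $j$ at the root of $\cexpr'$, which are exactly those responsible for the redundancy, and the rest. We then insert auxiliary relabel operations just before the deeper join to temporarily move the first group to a spare label, perform the join only on the remaining pairs, and relabel back afterwards so that all other edges of $G_{\cexpr'}$ remain intact and the final labeling at the root of $\cexpr'$ is unchanged.

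The main obstacle is respecting the $k$-label budget during these temporary relabelings. The key observation is that at any node at most $k$ labels are in use, and each offending join only needs one auxiliary slot; such a slot can always be obtained by preliminary cosmetic relabeling that collapses two labels absent from the deeper join into one, thereby freeing a label. Furthermore, at the root of $\cexpr'$ only labels $i$ and $j$ must survive for the outer join to operate correctly, giving ample flexibility to restore any renamed labels before reaching the root. Each eliminated redundancy enlarges the expression by only a constant number of operations, and since the total number of offending edges throughout $\cexpr$ is at most $|V(G_\cexpr)|^2$, iterating this procedure bottom-up yields an equivalent irredundant $k$-expression $\cexpr'$ in polynomial time.
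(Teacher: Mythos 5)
The paper cites this result to Courcelle and Olariu without proof, so I assess your argument on its own merits.

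Your high-level plan---structural induction where only the outermost-join case is substantive, and it is handled by rewriting $\cexpr'$ into an irredundant expression $\cexpr''$ that builds $G_{\cexpr'}$ minus the edges already present between labels $i$ and $j$, with the same labeling---is sound. But you miss a structural fact that both validates the plan and makes most of your machinery unnecessary. In any clique-expression, if two vertices share a label at a node $t$, then they share a label at every ancestor of $t$: relabels act uniformly on all vertices carrying a given label, and unions and joins leave labels unchanged, so the label trajectory from $t$ upward is a function of the label at $t$ alone. Hence at a deeper join $\join{a}{b}$ inside $\cexpr'$ with child $t'$, every vertex of $V_{t'}^a$ ends up with one common label $\alpha$ at the root of $\cexpr'$, and every vertex of $V_{t'}^b$ with one common label $\beta$. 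The edges contributed by that join are therefore \emph{all} offending (when $\{\alpha,\beta\}=\{i,j\}$) or \emph{none} are; your proposed partition of $V_{t'}^a$ into ``those carrying label $i$ or $j$ at the root and the rest'' is always trivial. This means the correct move is not to relabel but simply to \emph{delete} the join nodes whose two labels both map into $\{i,j\}$ at the root of $\cexpr'$: this removes exactly the offending edges, leaves the labeling unchanged, preserves irredundancy (each surviving join then sees a subset of the edges it saw before, so its irredundancy check can only get easier), and costs nothing in the label budget.

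Because you did not see this, you build a relabel-and-restore gadget and then try to pay for a spare label with an argument that does not hold. Your claim that a slot ``can always be obtained by preliminary cosmetic relabeling that collapses two labels absent from the deeper join into one'' is false: merging two nonempty labels is not cosmetic, since later relabel and join operations that distinguish those labels can no longer be expressed, and there need not be two such labels whose eventual fates coincide. You also never argue that your modified expression remains irredundant, which is required to maintain the inductive invariant; with inserted relabels that is a genuine obligation, whereas with deletions it is immediate. The fix is to replace your relabel-and-restore step by the deletion step above and to add the short irredundancy-preservation argument.
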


\begin{lem}[\cite{BergougnouxKK20}]\label{thm:irredundant_lemma}
  If $\cexpr$ is an irredundant $\nlabels$-expression for the graph $G = (V,E)$ and $t \in V(\tree_\cexpr)$, then for all labels $\ell \in \labels_t$ and vertices $u, v \in V_t^\ell$ we have that $N_G(u) \setminus V_t = N_G(v) \setminus V_t$. Furthermore, if $u \in V_t^i$, $v \in V_t^j$ with $i \neq j \in \labels_t$ and $\{u,v\} \in E \setminus E_t$, then $E_G(V_t^i, V_t^j) \subseteq E \setminus E_t$.
\end{lem}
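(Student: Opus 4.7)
The plan is to trace every edge to the unique moment at which it is introduced---necessarily a join-operation, since introduce-, union-, and relabel-operations do not add edges---and to exploit the basic fact that labels are treated uniformly by every operation. More precisely, I will establish the \emph{label-twin invariant}: if two vertices share a label at a node $t$, then they share a label at every ancestor of $t$ in $\tree_\cexpr$, because the only way labels change going upward is via relabel-operations, which act identically on all vertices of a given label. This is an immediate induction along the path from $t$ upward. A parallel consequence of inspecting the four operations is that edges are never removed, so $E_t \subseteq E_s$ whenever $s$ is an ancestor of $t$.

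For the first statement, I will fix $\ell \in \labels_t$, vertices $u, v \in V_t^\ell$, and some $w \in N_G(u) \setminus V_t$, and show $w \in N_G(v)$ (symmetry then closes the claim). Since $w \notin V_t$ we have $\{u,w\} \in E \setminus E_t$, so this edge must be added by some join $s = \join{a}{b}$ at an ancestor of $t$, with $u$ of label $a$ and $w$ of label $b$ at the child $s'$ of $s$. Applying the label-twin invariant to $(u,v)$ along the path from $t$ to $s'$ yields $v \in V_{s'}^a$, so the same join also adds $\{v,w\}$, as required.

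For the second statement, I will take $u \in V_t^i$, $v \in V_t^j$ with $\{u,v\} \in E \setminus E_t$, together with arbitrary $u' \in V_t^i$ and $v' \in V_t^j$, aiming at $\{u',v'\} \in E \setminus E_t$. The same reasoning locates a join $s = \join{a}{b}$ above $t$ that creates $\{u,v\}$, with $u \in V_{s'}^a$ and $v \in V_{s'}^b$ at the child $s'$; the label-twin invariant then hands over $u' \in V_{s'}^a$ and $v' \in V_{s'}^b$. Irredundancy at $s$ gives $E_{s'}(V_{s'}^a, V_{s'}^b) = \emptyset$, so $\{u',v'\} \notin E_{s'}$; combined with $E_t \subseteq E_{s'}$ this yields $\{u',v'\} \notin E_t$, while the join $s$ then adds $\{u',v'\}$ along with the rest of the $V_{s'}^a$-to-$V_{s'}^b$ edges.

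The only step where a hypothesis is genuinely used is the invocation of irredundancy in the second statement, and this is where I expect the main difficulty to lie: without irredundancy, $\{u',v'\}$ could already have been introduced by some earlier join strictly between $t$ and $s$, leaving the two label classes $V_t^i$ and $V_t^j$ only partially adjacent outside $V_t$ and breaking the claim. Everything else is bookkeeping flowing from the label-twin invariant and from the fact that no operation deletes an edge.
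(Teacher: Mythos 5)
The paper cites this lemma from Bergougnoux, Kant\'e, and Kwon~\cite{BergougnouxKK20} without providing its own proof, so there is no paper-internal argument to compare against; I therefore assess your proof on its own.

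Your argument is correct. The label-twin invariant (vertices sharing a label at $t$ share a label at every ancestor of $t$, because relabel is the only label-changing operation and it acts uniformly on a whole label class) together with edge monotonicity ($E_t \subseteq E_s$ for every ancestor $s$ of $t$) is exactly what is needed, and you correctly isolate the single use of irredundancy: it is what makes the second statement all-or-nothing by forcing $E_{s'}(V_{s'}^a, V_{s'}^b) = \emptyset$ at the creating join. One step you leave implicit is that the child $s'$ of the creating join $s$ is indeed an ancestor of $t$, so that the label-twin invariant applies. This follows from the nestedness of the vertex sets $V_\cdot$ along the syntax tree: for the first statement, $u \in V_t \cap V_{s'}$ and $w \in V_{s'} \setminus V_t$ force $s'$ to be a strict ancestor of $t$; for the second, $\{u,v\} \subseteq V_t \cap V_{s'}$ together with $\{u,v\} \notin E_t$ rules out $s$ being $t$ or a descendant of $t$. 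Filling this in is routine. Note also that your second part actually yields the slightly stronger conclusion that \emph{every} pair in $V_t^i \times V_t^j$ is an edge of $E \setminus E_t$, not merely that whatever edges already exist between the two label classes lie outside $E_t$; this is fine, since the stated containment follows immediately.
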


Henceforth, we will assume that the given $\nlabels$-expression $\cexpr$ is irredundant. Irredundancy still allows several edge cases regarding empty labels to occur, which would require special handling in the dynamic programming algorithms. To avoid this extra effort in the algorithms, we show how to transform any clique-expression such that these edge cases do not occur.

\begin{dfn}
  We say that a clique-expression $\cexpr$ of a graph $G = G_\cexpr$ is \emph{nice} if $\cexpr$ satisfies the following properties:
  \begin{itemize}
   \item $\cexpr$ is irredundant,
   \item for every join-node $\join{i}{j}(G_{t'}) = t \in V(\tree_\cexpr)$, where $t'$ is the child of $t$, we have that $G_t \neq G_{t'}$, i.e., $t$ adds at least one edge and $V_{t'}^i \neq \emptyset$ and $V_{t'}^j \neq \emptyset$,
   \item for every relabel-node $\relab{i}{j}(G_{t'}) = t \in V(\tree_\cexpr)$, where $t'$ is the child of $t$, we have that $V_{t'}^i \neq \emptyset$ and $V_{t'}^j \neq \emptyset$.
  \end{itemize}
\end{dfn}

\begin{lem}\label{thm:nice_expression}
  Any $\nlabels$-expression $\cexpr$ can be transformed into an equivalent, i.e., $G_{\cexpr'} = G_\cexpr$, nice $\nlabels$-expression $\cexpr'$ in polynomial time.
\end{lem}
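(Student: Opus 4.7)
The plan is to first invoke \cref{thm:irredundant_expression} to reduce to the case where $\cexpr$ is irredundant, and then process the syntax tree $\tree_\cexpr$ in post-order, repairing any node that violates a niceness condition. Introduce- and union-nodes are already nice, so the work lies with join- and relabel-nodes whose child has an empty label class involved in the operation.

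For a join-node $\join{i}{j}(G_{t'})$ with $V_{t'}^i = \emptyset$ or $V_{t'}^j = \emptyset$, no edge is added, so we simply splice $t'$ in place of $t$; conversely, by irredundancy any join between two non-empty label classes adds at least one edge, so for the remaining joins the condition $G_t \neq G_{t'}$ holds automatically. A relabel-node $\relab{i}{j}(G_{t'})$ with $V_{t'}^i = \emptyset$ is also a no-op and is removed in the same way. The genuinely problematic case is a relabel with $V_{t'}^i \neq \emptyset$ but $V_{t'}^j = \emptyset$, where we cannot just delete the node because the label of the vertices in $V_{t'}^i$ really needs to change from $i$ to $j$.

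For this last case my plan is to apply a \emph{label-swap trick} to the already-processed subexpression $\cexpr_{t'}$. Let $\pi\colon \NN \to \NN$ be the transposition swapping $i$ and $j$, and construct $\cexpr_{t'}'$ from $\cexpr_{t'}$ by replacing every label argument $\ell$ occurring in every operation of $\cexpr_{t'}$ with $\pi(\ell)$. Since labels enter the operations only as formal markers, $\cexpr_{t'}'$ builds the same graph as $\cexpr_{t'}$ but with the labels $i$ and $j$ exchanged at every node; in particular at its root we then have $V^i = \emptyset$ and $V^j$ equal to the old $V_{t'}^i$, which matches exactly the effect of $\relab{i}{j}$ applied to $\cexpr_{t'}$. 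Hence we replace the subtree rooted at $t$ by $\cexpr_{t'}'$. The main obstacle is to verify that this rewrite preserves irredundancy, the $\nlabels$-labeled property, and the niceness conditions already established for $\cexpr_{t'}$; this follows because $\pi$ is a bijection on labels, so the cardinalities of the label classes at every internal node and the set of edges each operation adds are preserved up to the renaming. Since post-order processing rewrites each subtree at most once and each rewrite takes time linear in the subtree size, the overall procedure runs in polynomial time and yields an equivalent nice $\nlabels$-expression for $G$.
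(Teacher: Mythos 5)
Your proposal is correct and follows essentially the same route as the paper: first make the expression irredundant via \cref{thm:irredundant_expression}, delete no-op join and relabel nodes, and handle the problematic relabel $\relab{i}{j}$ with $V_{t'}^i \neq \emptyset$, $V_{t'}^j = \emptyset$ by swapping the roles of labels $i$ and $j$ throughout the subexpression and deleting the relabel node. The paper phrases the removal as repeatedly eliminating one "unnecessary relabel" at a time (noting that each such rewrite creates no new unnecessary relabels), whereas you organize the same rewrites in a single post-order pass; these are equivalent, and your observation that the label-swap is a bijection preserving irredundancy, the $\nlabels$-bound, and the niceness already established below $t$ is exactly the justification needed.
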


\begin{proof}
	First running the algorithm of \cref{thm:irredundant_expression}, we can assume that $\cexpr$ is already irredundant. If a join node $t = \join{i}{j}(G_{t'})$ does not add any edges, then we must have $V_{t'}^i = \emptyset$ or $V_{t'}^j = \emptyset$ by irredundancy. Clearly, we can simply remove such join-nodes from the expression. The next step is to observe that any relabel node of the form $\relab{i}{j}(G_{t'}) = t \in V(\tree_\cexpr)$, where $t'$ is the child of $t$ and $V_{t'}^i = \emptyset$, can be removed from $\cexpr$ without changing the resulting graph, as no label is changed by such a node.
	
	Now, suppose that $\cexpr$ contains $p > 0$ relabel nodes of the form $\relab{i}{j}(G_{t'}) = t \in V(\tree_\cexpr)$, where $t'$ is the child of $t$ and $V_{t'}^i \neq \emptyset$ and $V_{t'}^j = \emptyset$; we call such relabels \emph{unnecessary}. We pick one such occurrence and show how to remove it. If $t$ is the root of $\cexpr$, then we can simply remove $t$ without changing the resulting graph. If $t$ is not the root of $\cexpr$, then let $s$ be the parent node of $t$. We swap the role of label $i$ and $j$ in every proper descendant of $t$ in $\tree_\cexpr$ and remove the node $t$ letting $t'$ be a child of $s$ in place of $t$. The correctness of the transformation can be proved by straightforward bottom-up induction along $\tree_\cexpr$ and this transformation does not create any new unnecessary relabels. 
  
  By repeating this transformation for every unnecessary relabel, we obtain an equivalent nice $\nlabels$-expression $\cexpr'$ in polynomial time.
\end{proof}

When designing tight algorithms for problems parameterized by clique-width, one often observes that there are states that cannot be attained by a label unless the vertices with this label have already received all their incident edges by the current subexpression. But for such labels the dynamic programming algorithm does not need to store the state as there will be no interesting interaction with other labels in the remaining expression. So, to improve the running time, we only store the state for \emph{live} labels, i.e., labels that contain vertices that are still missing some incident edges; labels that are not live are called \emph{dead}. This idea has been used several times before for dynamic programming on clique-expressions~\cite{GanianHKOS22, KatsikarelisLP19, Lampis20}.

While one could precompute for a given clique-expression $\cexpr$ which labels at a node $t \in V(\tree_\cexpr)$ are live, we choose to explicitly mark when a label is no longer live in the syntax tree. To do so, we augment the syntax tree with \emph{dead nodes} after join nodes that change at least one label from live to dead. The function of these dead nodes is comparable to that of forget vertex nodes in a nice tree decomposition~\cite{Kloks94}. Especially in the algorithm for \CDS this explicitness helps, because it allows us to cleanly separate two computations, namely the standard computation for join nodes and an extra computation that has to be performed when a label turns dead. We now proceed with the formal definitions.

For the remainder of this section, we assume that $G$ is a connected graph with at least two vertices.

\begin{dfn}
  Given a clique-expression $\cexpr$ for $G=(V,E)$ and a node $t \in V(\tree_\cexpr)$, the set of \emph{dead vertices at $t$} is defined by $\dead_t = \{v \in V_t \sep \delta_{G}(v) \subseteq E_t\}$. A vertex $v \in V_t \setminus \dead_t$ is called \emph{live at $t$}.
\end{dfn}

\begin{lem}\label{thm:dead_label_irredundant}
  Given an irredundant $\nlabels$-expression $\cexpr$ for $G=(V,E)$, a node $t \in V(\tree_\cexpr)$, and a nonempty label $\ell \in \labels_t$, we have that either $V_t^\ell \cap \dead_t = \emptyset$ or $V_t^\ell \subseteq \dead_t$.
\end{lem}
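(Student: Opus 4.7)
The plan is to pick some $u \in V_t^\ell \cap \dead_t$ (assuming this intersection is nonempty) and show that every $v \in V_t^\ell$ is also dead. Deadness of $u$ means $\delta_G(u) \subseteq E_t$; in particular no neighbor of $u$ lies outside $V_t$, since such an edge could never have been produced by the subexpression $\cexpr_t$. The first part of \cref{thm:irredundant_lemma} then gives $N_G(v) \setminus V_t = N_G(u) \setminus V_t = \emptyset$, so every neighbor $w \in N_G(v)$ already belongs to $V_t$. The task therefore reduces to showing $\{v,w\} \in E_t$ for every such $w$, and I would split on the label of $w$ at $t$.

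If $w \in V_t^\ell$ (this subsumes $w = u$), then $v$ and $w$ share the label $\ell$ at $t$. The key observation is that any two vertices with a common label at $t$ must still share a (possibly different) label at every ancestor of $t$: relabel-operations act uniformly on all vertices with a given label, while union- and introduce-operations never merge distinct label classes. Since a join only adds edges between two \emph{different} labels, the edge $\{v,w\}$ can never be introduced by any ancestor of $t$. As $\{v,w\} \in E$ and edges are only ever created by joins, we conclude $\{v,w\} \in E_t$.

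If instead $w \in V_t^j$ with $j \neq \ell$, I would argue by contradiction: suppose $\{v,w\} \in E \setminus E_t$ and let $s$ be the lowest ancestor of $t$ adding this edge, necessarily a join $\join{i'}{j'}$ whose child $s'$ assigns $v$ label $i'$ and $w$ label $j'$. By the same label-preservation observation, all of $V_t^\ell$ has label $i'$ at $s'$ and all of $V_t^j$ has label $j'$ at $s'$; irredundancy forces $E_{s'}(V_{s'}^{i'}, V_{s'}^{j'}) = \emptyset$, so the join creates every pair in $V_{s'}^{i'} \times V_{s'}^{j'}$ at once, in particular the edge $\{u,w\}$. This puts $\{u,w\} \in E \setminus E_t$, contradicting the deadness of $u$; hence $\{v,w\} \in E_t$ after all. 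Combining the two cases yields $\delta_G(v) \subseteq E_t$, i.e.\ $v \in \dead_t$, finishing the proof.

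The only delicate point is the tracing argument in the different-label case. It is essentially a sharper version of the second part of \cref{thm:irredundant_lemma}, but phrased for the specific pair $(u,w)$ so that the deadness of $u$ produces the contradiction. Everything else reduces to routine bookkeeping about how labels evolve along the syntax tree.
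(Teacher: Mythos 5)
Your proof is correct but takes a genuinely different route than the paper's. The paper argues from the ``live'' side: it takes a live vertex $x \in V_t^\ell \setminus \dead_t$, selects a witnessing edge $\{x,y\} \in E \setminus E_t$, passes to the lowest ancestor $\tilde t$ of $t$ with $x,y \in V_{\tilde t}$, and invokes the second part of \cref{thm:irredundant_lemma} there to conclude that every $x' \in V_t^\ell$ has an edge to $y$ outside $E_t$, hence is live. You instead argue from the ``dead'' side: you fix a dead $u$, use only the \emph{first} part of \cref{thm:irredundant_lemma} to dispose of neighbors outside $V_t$, and then redo by hand the tracing argument that underlies the second part — locating the join that would have created a putative edge $\{v,w\} \in E \setminus E_t$ and showing it would simultaneously create $\{u,w\}$, contradicting $u$'s deadness. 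Your Case~1 (same-label neighbors) is a clean extra observation that the paper silently folds into the phrase ``we can assume that there are $i \neq j$''; making it explicit, as you do, costs length but tightens the argument, and your Case~2 has the small advantage of actually producing $\{u,w\}$ as an edge created at the join, rather than relying on the quantified statement of \cref{thm:irredundant_lemma} part~2. Both proofs are sound; the paper's is terser because it delegates the tracing to the prepackaged lemma, while yours is more self-contained.
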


\begin{proof}
  If $x \in V_t^\ell \setminus \dead_t \neq \emptyset$, then there exists an edge $\{x,y\} \in E \setminus E_t$. Let $\tilde{t} \in V(\tree_\cexpr)$ be the lowest ancestor of $t$ such that $x,y \in V_{\tilde{t}}$; we either have $\tilde{t} = t$ or $\tilde{t}$ is some union node above $t$. In either case, we can assume that there are $i \neq j \in \labels_{\tilde{t}}$ with $x \in V_t^\ell \subseteq V_{\tilde{t}}^i$ and $y \in V_{\tilde{t}}^j$. By the second part of \cref{thm:irredundant_lemma}, we see that $\{x',y\} \in E \setminus E_{\tilde{t}} \subseteq E \setminus E_t$ for all $x' \in V_t^\ell \subseteq V_{\tilde{t}}^i$. Hence, $V_t^\ell \setminus \dead_t \neq \emptyset$ implies that $V_t^\ell \cap \dead_t = \emptyset$ which proves the lemma.
\end{proof}

If we do not require irredundancy, then it is easy to construct clique-expressions where \cref{thm:dead_label_irredundant} fails, i.e., $\emptyset \neq V_t^\ell \cap \dead_t \neq V_t^\ell$. By considering only irredundant clique-expressions, we can say that a whole label is dead or live which simplifies the handling of dead vertices, as we can perform a single computation once a label turns dead. In particular, without irredundancy a label containing only dead vertices at one node might get new live vertices later on in the expression and it is often unclear how to handle such cases.

The following definition formalizes the handling of live and dead labels and the dead nodes that are added when a label turns from live to dead.

\begin{dfn}\label{dfn:dead_nodes}
  Given an irredundant $\nlabels$-expression for $G=(V,E)$, the \emph{augmented syntax tree} $\augtree_\cexpr$ of $\cexpr$ is obtained from the syntax tree $\tree_\cexpr$ by inserting up to two \emph{dead nodes} directly above every join node $t = \join{i}{j}(G_{t'})$, where $t'$ is the child of $t$ in $\tree_\cexpr$, based on the following criteria:
  \begin{itemize}
    \item if $V_t^i \subseteq \dead_t \setminus \dead_{t'}$, then the node $\deadop{i}$ is inserted,
    \item if $V_t^j \subseteq \dead_t \setminus \dead_{t'}$, then the node $\deadop{j}$ is inserted,
    \item if both nodes $\deadop{i}$ and $\deadop{j}$ are inserted, then we insert them in any order.
  \end{itemize}
  We extend the notations $G_t$, $V_t$, $\dead_t$, $V_t^\ell$, for $\ell \in [\nlabels]$, to dead nodes by inheriting the values of the child node. 
  
	For every node $t \in V(\augtree_\cexpr)$ of the augmented syntax tree, we inductively define the set of \emph{live labels} $\livelabels{t} \subseteq \labels_t$ by 
  \begin{equation*}
    \livelabels{t} = 
    \begin{cases}
      \{\ell\} & \text{if } t = \intro{\ell}(v), \\
      \livelabels{t'} \setminus \{i\} & \text{if } t = \relab{i}{j}(G_{t'}), \\
      \livelabels{t'} & \text{if } t = \join{i}{j}(G_{t'}), \\
      \livelabels{t'} \setminus \{\ell\} & \text{if } t = \deadop{\ell}(G_{t'}), \\
      \livelabels{t_1} \cup \livelabels{t_2} & \text{if } t = G_{t_1} \union G_{t_2}.
   \end{cases}
  \end{equation*}
  Dually, the set of \emph{dead labels} $\deadlabels{t} \subseteq \labels_t$ is given by $\deadlabels{t} = \labels_t \setminus \livelabels{t}$.
\end{dfn}

We now show that, up to pending dead nodes, $\livelabels{t}$ contains all nonempty labels that only consist of live vertices at $t$. Due to \cref{thm:dead_label_irredundant}, no label of an irredundant $\nlabels$-expression can contain both live and dead vertices simultaneously.

\begin{lem}\label{thm:livelabels_characterization}
  Let $\cexpr$ be a nice $\nlabels$-expression of $G=(V,E)$ and $\augtree_\cexpr$ its augmented syntax tree. For any node $t \in V(\augtree_\cexpr)$ and $\ell \in \labels_t$, we have that $V^\ell_t \cap \dead_t = \emptyset$ implies $\ell \in \livelabels{t}$. If $t$ is not the child of a dead node, then we even have for every $\ell \in \labels_t$ that $V^\ell_t \cap \dead_t = \emptyset$ if and only if $\ell \in \livelabels{t}$.
\end{lem}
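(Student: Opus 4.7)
The plan is to proceed by structural induction on $t \in V(\augtree_\cexpr)$, proving simultaneously the forward implication $V_t^\ell \cap \dead_t = \emptyset \Rightarrow \ell \in \livelabels{t}$ for every $t$, and the reverse implication for every $t$ that is not a child of a dead node. The base case will be an introduce node $\intro{\ell_0}(v)$, where connectedness of $G$ together with $|V(G)|\geq 2$ forces $\delta_G(v) \neq \emptyset$, so $\dead_t = \emptyset$ and $\livelabels{t} = \{\ell_0\}$ make both directions hold trivially. Since dead nodes sit only directly above join nodes, introduce, relabel, and union children never fall into the ``child of a dead node'' case, so the full induction hypothesis is always available at them.

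For the forward direction at non-leaf nodes the argument is essentially bookkeeping. For a relabel $\relab{i}{j}(G_{t'})$ I will split $V_t^j \cap \dead_t = (V_{t'}^i \cup V_{t'}^j) \cap \dead_{t'}$ along the disjoint union and apply forward induction at $t'$ (niceness ensures the involved labels are nonempty). For joins the inclusion $\dead_{t'} \subseteq \dead_t$ gives the implication directly; for unions the disjointness of the two sides localizes the hypothesis to one side, which is handled by forward induction. For a dead node $\deadop{\ell^*}(G_{t'})$ the insertion rule for $\deadop{\ell^*}$ together with niceness yields $\emptyset \neq V_t^{\ell^*} \subseteq \dead_t$, which forces any $\ell$ satisfying the hypothesis to differ from $\ell^*$, and forward induction at $t'$ (the only direction available there, since $t'$ is a child of the dead node $t$) then finishes the case.

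For the reverse direction at relabel, union, and join nodes (when not a child of a dead node), my unifying tool will be \cref{thm:dead_label_irredundant}: the reverse induction hypothesis provides a nonempty subset of $V_t^\ell$ disjoint from $\dead_t$, and the irredundancy dichotomy then forces $V_t^\ell \cap \dead_t = \emptyset$ globally. For a join $t = \join{i}{j}(G_{t'})$ that is not a child of a dead node, neither $\deadop{i}$ nor $\deadop{j}$ was inserted, so if $\ell \in \livelabels{t}$ had $V_t^\ell \cap \dead_t \neq \emptyset$ then \cref{thm:dead_label_irredundant} together with reverse induction at $t'$ would upgrade this to $V_t^\ell \subseteq \dead_t \setminus \dead_{t'}$, triggering the insertion of $\deadop{\ell}$ and contradicting the assumption on $t$.

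The hard part will be the reverse direction at a topmost dead node $t$, since its immediate child is itself a child of a dead node and therefore only provides a forward implication. I plan to resolve this by descending past the (at most two) dead nodes in the chain above the corresponding join $t_J$ to the child $t_J'$ of that join, which is not a child of a dead node and hence admits the full induction hypothesis. Since vertex sets and dead sets are unchanged across the dead-node chain and the vertex sets are unchanged across the join, for $\ell \in \livelabels{t}$ the case $\ell \notin \{i,j\}$ follows because newly dead vertices at the join lie in $V_{t_J'}^i \cup V_{t_J'}^j$, which is disjoint from $V_t^\ell = V_{t_J'}^\ell$. The case $\ell \in \{i,j\}$ I will handle by contradiction: $V_t^\ell \subseteq \dead_t$ together with reverse induction at $t_J'$ would yield $V_t^\ell \subseteq \dead_t \setminus \dead_{t_J'}$, triggering the insertion of $\deadop{\ell}$ and thereby removing $\ell$ from $\livelabels{t}$, contrary to assumption.
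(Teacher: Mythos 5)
Your proposal is correct and follows essentially the same route as the paper's proof: structural induction on the augmented syntax tree, with the forward implication shown for all nodes and the reverse only where $t$ is not the child of a dead node, relying on the dichotomy of \cref{thm:dead_label_irredundant} to rule out partially dead labels, and descending from the (topmost) dead node through the join to its child $t'_J$, where the full induction hypothesis is available. The only cosmetic differences are that the paper packages the reverse direction at join nodes into the observation that $\dead_t = \dead_{t'}$ there (rather than your explicit contradiction via the dead-node insertion rule), and it treats topmost and non-topmost dead nodes uniformly instead of separating them up front; neither constitutes a different argument.
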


\begin{proof}
  First, recall that $V_t^\ell \cap \dead_t \in \{\emptyset, V_t^\ell\}$ for every $t \in V(\augtree_\cexpr)$ and $\ell \in \labels_t$ by \cref{thm:dead_label_irredundant}. We prove the statement inductively along the augmented syntax tree $\augtree_\cexpr$ by making a case distinction based on the current node type. Note that only join nodes and other dead nodes can be children of a dead node. 
  \begin{itemize}
    \item If $t = \intro{\ell}(v)$, then $\dead_t = \emptyset$ as $v$ cannot be isolated by assumption, so the statement holds.
    \item If $t = \relab{i}{j}(G_{t'})$, then $\dead_t = \dead_{t'}$ and by induction we have $\livelabels{t'} = \{\ell \in \labels_{t'} \sep V_{t'}^\ell \cap \dead_{t'} = \emptyset\}$. As all other labels stay the same, the only interesting labels are $i$ and $j$. Since $i \notin \labels_t$, there is nothing to prove for label $i$. We have that $V_t^j = V_{t'}^i \cup V_{t'}^j$ and, by niceness of $\cexpr$, we have $i,j \in \labels_{t'}$ and $j \in \livelabels{t'} \iff i,j \in \livelabels{t'}$. Now, 
    \begin{equation*}
      j \in \livelabels{t} \iff i,j \in \livelabels{t'} \iff V_{t'}^i \cap \dead_{t'} = \emptyset \wedge V_{t'}^j \cap \dead_{t'} = \emptyset \iff V_t^j \cap \dead_t = \emptyset
    \end{equation*}
    where the last equivalence follows from $\dead_t = \dead_{t'}$ and $V_t^j = V_{t'}^i \cup V_{t'}^j$. 
    \item If $t = \join{i}{j}(G_{t'})$, then $V_t^\ell = V_{t'}^\ell$ for all $\ell \in \labels_t = \labels_{t'}$ and $\dead_{t'} \subseteq \dead_t$ and hence the forward implication follows from the statement at $t'$. If $t$ is not the child of a dead node, then we even have $\dead_t = \dead_{t'}$, so also the reverse implication follows from the statement at $t'$.
    \item If $t$ is a dead node, then either the child or grandchild of $t$ is the join node $t^* = \join{i}{j}(G_{t'})$ that caused $t$ to exist. By induction, $\livelabels{t'} = \{\ell \in \labels_{t'} \sep V_{t'}^\ell \cap \dead_{t'} = \emptyset\}$. We have that $V_{t^*}^\ell = V_t^\ell = V_{t'}^\ell$ for every $\ell \in \labels_t = \labels_{t'}$, $\dead_t \setminus \dead_{t'} = \dead_{t^*} \setminus \dead_{t'} \subseteq V_t^i \cup V_t^j$ and $\livelabels{t} = \livelabels{t'} \setminus X$, where $\emptyset \neq X \subseteq \{i,j\}$ is the set of labels that were removed from the live labels between $t^*$ and $t$. Since $\livelabels{t} \setminus \{i,j\} = \livelabels{t'} \setminus \{i,j\}$, the equivalence holds for all $\ell \in \labels_t \setminus \{i,j\}$ by induction. For $\ell \in X = \livelabels{t'} \setminus \livelabels{t}$, we have $V_t^\ell \subseteq \dead_t \setminus \dead_{t'}$ by construction of dead nodes and hence $V_t^\ell \cap \dead_t \neq \emptyset$, therefore the forward implication holds for all labels $\ell \in \labels_t$ at node $t$. Finally, if $t$ is not the child of a dead node, then we have removed all labels $\ell \in \labels_t$ with $V_t^\ell \subseteq \dead_{t^*} \setminus \dead_{t'} = \dead_t \setminus \dead_{t'}$, so in this case the reverse implication holds for all labels $\ell \in \labels_t$ as well.
    \item If $t$ is a union node, then $V_t^\ell = V_{t_1}^\ell \cup V_{t_2}^\ell$ for all $\ell \in [\nlabels]$ and $\dead_t = \dead_{t_1} \cup \dead_{t_2}$. If $\ell \in \labels_{t_1} \symdiff \labels_{t_2} \subseteq \labels_t$, then suppose without loss of generality that $\ell \in \labels_{t_1} \setminus \labels_{t_2}$, hence $V_{t_1}^\ell = V_t^\ell$ and $V_{t_2}^{\ell} = \emptyset$, so we see that
    \begin{equation*}
      \ell \in \livelabels{t} \iff \ell \in \livelabels{t_1} \iff V_{t_1}^\ell \cap \dead_{t_1} = \emptyset \iff V_{t_1}^\ell \cap (\dead_{t_1} \cup \dead_{t_2}) = \emptyset \iff V_t^\ell \cap \dead_t = \emptyset,
    \end{equation*}
    where the third equivalence uses $V_{t_1}^\ell \cap \dead_{t_2} = \emptyset$. If on the other hand $\ell \in \labels_{t_1} \cap \labels_{t_2} \subseteq \labels_t$, then it follows from \cref{thm:dead_label_irredundant} that $V_{t_1}^\ell \cap \dead_{t_1} = \emptyset$ if and only if $V_{t_2}^\ell \cap \dead_{t_2} = \emptyset$. Therefore,
    \begin{equation*}
      \ell \in \livelabels{t} \iff \ell \in \livelabels{t_1} \cup \livelabels{t_2} \iff V_{t_1}^\ell \cap \dead_{t_1} = \emptyset \wedge V_{t_2}^\ell \cap \dead_{t_2} = \emptyset \iff V_t^\ell \cap \dead_t = \emptyset,
    \end{equation*}
    where the last equivalence follows from $V_{t_1}^\ell \cap \dead_{t_2} = \emptyset = V_{t_2}^\ell \cap \dead_{t_1}$. \qedhere
  \end{itemize}
\end{proof}

\begin{lem}\label{thm:deadlabels_recurrences}
  Let $\cexpr$ be a nice $\nlabels$-expression of $G=(V,E)$. For every node $t \in V(\augtree_\cexpr)$, the set of dead labels $\deadlabels{t}$ satisfies the following recurrences:
  \begin{equation*}
    \deadlabels{t} = 
    \begin{cases}
      \emptyset & \text{if } t = \intro{\ell}(v), \\
      \deadlabels{t'} \setminus \{i\} & \text{if } t = \relab{i}{j}(G_{t'}), \\
      \deadlabels{t'} & \text{if } t = \join{i}{j}(G_{t'}), \\
      \deadlabels{t'} \cup \{\ell\} & \text{if } t = \deadop{\ell}(G_{t'}), \\
      \deadlabels{t_1} \cup \deadlabels{t_2} & \text{if } t = G_{t_1} \union G_{t_2}.
    \end{cases}
  \end{equation*}
\end{lem}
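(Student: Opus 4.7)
The plan is to run induction on the augmented syntax tree $\augtree_\cexpr$ using the identity $\deadlabels{t} = \labels_t \setminus \livelabels{t}$. The recurrences for $\livelabels{t}$ are given syntactically in \cref{dfn:dead_nodes}, and the recurrences for $\labels_t$ follow directly from the semantics of each operation (using niceness to rule out the degenerate cases where a relabel or a join acts on an empty label). With both at hand, the claimed recurrences for $\deadlabels{t}$ reduce in most cases to routine set-algebraic manipulations, so I would simply do a case distinction on the type of $t$.

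For the introduce case $\labels_t = \livelabels{t} = \{\ell\}$ gives $\deadlabels{t} = \emptyset$. For a join, $\labels_t = \labels_{t'}$ and $\livelabels{t} = \livelabels{t'}$, so the result is immediate. For a relabel $t = \relab{i}{j}(G_{t'})$, niceness guarantees $j \in \labels_{t'}$, hence $\labels_t = \labels_{t'} \setminus \{i\}$; combined with $\livelabels{t} = \livelabels{t'} \setminus \{i\}$ this yields
\[(\labels_{t'} \setminus \{i\}) \setminus (\livelabels{t'} \setminus \{i\}) = \deadlabels{t'} \setminus \{i\}\]
by direct expansion. For a dead node $t = \deadop{\ell}(G_{t'})$, the construction in \cref{dfn:dead_nodes} inserts $\deadop{\ell}$ only when label $\ell$ turns dead at the underlying join, so in particular $\ell \in \labels_{t'}$; together with $\labels_t = \labels_{t'}$ and $\livelabels{t} = \livelabels{t'} \setminus \{\ell\}$, one obtains
\[\labels_{t'} \setminus (\livelabels{t'} \setminus \{\ell\}) = (\labels_{t'} \setminus \livelabels{t'}) \cup \{\ell\} = \deadlabels{t'} \cup \{\ell\}.\]

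The main obstacle is the union case, where the naive computation gives $\deadlabels{t} = (\labels_{t_1} \cup \labels_{t_2}) \setminus (\livelabels{t_1} \cup \livelabels{t_2})$, and this need not equal $\deadlabels{t_1} \cup \deadlabels{t_2}$ in general (a label could be dead in one child while live in the other, destroying the inclusion $\deadlabels{t_1} \cup \deadlabels{t_2} \subseteq \deadlabels{t}$). The key point to push through is that this cannot happen here: for any $\ell \in \labels_{t_1} \cap \labels_{t_2}$, \cref{thm:dead_label_irredundant} forces $V_{t_i}^\ell \cap \dead_{t_i} \in \{\emptyset, V_{t_i}^\ell\}$ for $i \in \{1,2\}$, and the union case in the proof of \cref{thm:livelabels_characterization} (which applies since neither $t_1$ nor $t_2$ is the child of a dead node) established that $V_{t_1}^\ell \cap \dead_{t_1} = \emptyset$ if and only if $V_{t_2}^\ell \cap \dead_{t_2} = \emptyset$, i.e., $\ell$ has the same live/dead status in both children. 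Applying \cref{thm:livelabels_characterization} once more to translate between the semantic and syntactic notions, this gives $\ell \in \livelabels{t_1} \iff \ell \in \livelabels{t_2}$ for $\ell \in \labels_{t_1} \cap \labels_{t_2}$. The inclusion $\deadlabels{t_1} \cup \deadlabels{t_2} \subseteq \deadlabels{t}$ is now immediate, and the reverse inclusion follows because any $\ell \in \deadlabels{t}$ lies in some $\labels_{t_i}$ and is not in $\livelabels{t_i}$, hence lies in $\deadlabels{t_i}$. This completes the union case and hence the induction.
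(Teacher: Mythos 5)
Your proof is correct and follows the same approach as the paper: use $\deadlabels{t} = \labels_t \setminus \livelabels{t}$, combine the (niceness-derived) recurrences for $\labels_t$ with the defining recurrences for $\livelabels{t}$, and in the union case invoke \cref{thm:dead_label_irredundant} and \cref{thm:livelabels_characterization} to argue that a label in $\labels_{t_1} \cap \labels_{t_2}$ has the same live/dead status in both children, so that $\deadlabels{t_1} \cap \livelabels{t_2} = \emptyset = \deadlabels{t_2} \cap \livelabels{t_1}$. If anything you spell out the dead-node case more carefully than the paper's own proof, which handles it only implicitly.
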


\begin{proof}
  For every $t \in V(\augtree_\cexpr)$, the set of nonempty labels $\labels_t$ is the disjoint union of $\livelabels{t}$ and $\deadlabels{t}$ by definition of $\deadlabels{t}$. For join and relabel nodes, we have that $\labels_t = \labels_{t'}$, so the recurrences directly follow from the recurrences for $\livelabels{t}$. For introduce nodes $t = \intro{\ell}(v)$, we have $\labels_t = \livelabels{t} = \{\ell\}$ and hence $\deadlabels{t} = \emptyset$. For relabel nodes $t = \relab{i}{j}(G_{t'})$, we have $\labels_t = \labels_{t'} \setminus \{i\}$ and hence the recurrence follows. 
  
  For a union node $t = G_{t_1} \union G_{t_2}$, we have by \cref{thm:livelabels_characterization} that $\livelabels{t_i} = \{\ell \in \labels_{t_i} \sep V_{t_i}^\ell \cap \dead_{t_i} = \emptyset\}$ for $i \in \{1,2\}$. Hence, for any $\ell \in \labels_{t_1} \cap \labels_{t_2} \subseteq \labels_t$, we have $\ell \in \livelabels{t_1} \iff \ell \in \livelabels{t_2}$ by irredundancy of $\cexpr$. Therefore, we compute
  \begin{align*}
    \deadlabels{t} & = \labels_t \setminus \livelabels{t} = (\labels_{t_1} \cup \labels_{t_2}) \setminus (\livelabels{t_1} \cup \livelabels{t_2}) \\
    & = (\labels_{t_1} \setminus (\livelabels{t_1} \cup \livelabels{t_2})) \cup (\labels_{t_2} \setminus (\livelabels{t_1} \cup \livelabels{t_2})) \\
    & = (\deadlabels{t_1} \cap (\labels_{t_1} \setminus \livelabels{t_2})) \cup (\deadlabels{t_2} \cap (\labels_{t_2} \setminus \livelabels{t_1})) \\
    & = \deadlabels{t_1} \cup \deadlabels{t_2},
  \end{align*}
  since the preceding argumentation shows that $\deadlabels{t_1} \cap \livelabels{t_2} = \emptyset = \deadlabels{t_2} \cap \livelabels{t_1}$.
\end{proof}

Since the set $\dead_t$ never shrinks when going up the augmented syntax tree $\augtree_\cexpr$, a dead vertex can never turn live again. Even though the relation between $\dead_t$ and $\livelabels{t}$ does not always hold in both directions, cf.~\cref{thm:livelabels_characterization}, the next lemma still shows that a vertex can never switch from a dead label back to a live label.

\begin{lem}\label{thm:no_revival}
  Let $\cexpr$ be a nice $\nlabels$-expression of $G = (V,E)$ and $v \in V$ a vertex. For any node $t \in V(\augtree_\cexpr)$ such that $v \in V_t^\ell$ with $\ell \in \deadlabels{t}$, we have for any ancestor $t^* \in V(\augtree_\cexpr)$ and $\ell^* \in \labels_{t^*}$ with $v \in V_{t^*}^{\ell^*}$ that $\ell^* \in \deadlabels{t^*}$.
\end{lem}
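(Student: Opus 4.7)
The plan is to induct on the distance from $t$ to $t^*$ along the ancestor path in $\augtree_\cexpr$, advancing one operation per step. The base case $t = t^*$ is trivial. For the inductive step let $t'$ be the child of $t^*$ on this path and assume $v \in V_{t'}^{\ell'}$ with $\ell' \in \deadlabels{t'}$; I will exhibit $\ell^* \in \deadlabels{t^*}$ with $v \in V_{t^*}^{\ell^*}$ by a case analysis on the operation at $t^*$.

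Union, join, and dead operations all preserve the label of $v$, and the recurrences of \cref{thm:deadlabels_recurrences} yield $\deadlabels{t^*} \supseteq \deadlabels{t'}$, so taking $\ell^* = \ell'$ closes these cases. A relabel $t^* = \relab{i}{j}(G_{t'})$ is equally immediate when $\ell' \neq i$, since then $\ell^* = \ell' \in \deadlabels{t'} \setminus \{i\} = \deadlabels{t^*}$. The only nontrivial case is $\ell' = i$, where $v$ is reassigned to the new label $j$ at $t^*$; here I must show $j \in \deadlabels{t^*}$.

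To treat this case I combine irredundancy with niceness. From $i \in \deadlabels{t'}$, the contrapositive of \cref{thm:livelabels_characterization} gives $V_{t'}^i \cap \dead_{t'} \neq \emptyset$, and \cref{thm:dead_label_irredundant} upgrades this to $V_{t'}^i \subseteq \dead_{t'}$; niceness supplies $V_{t'}^i \neq \emptyset$. A relabel adds no edges, so $\dead_{t^*} = \dead_{t'}$ and $V_{t^*}^j = V_{t'}^i \cup V_{t'}^j \supseteq V_{t'}^i$, whence $V_{t^*}^j \cap \dead_{t^*} \neq \emptyset$. Since dead nodes are inserted only directly above join nodes, $t^*$ (a relabel node) is not the child of a dead node, so the full ``if and only if'' half of \cref{thm:livelabels_characterization} applies at $t^*$ and forces $j \in \deadlabels{t^*}$. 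The main obstacle is exactly this relabel case: \emph{a priori} nothing in the recurrences prevents a live target label $j$ from absorbing a dead source label $i$ and thereby reviving $v$; it is precisely the irredundancy-induced label-wide dead/live dichotomy of \cref{thm:dead_label_irredundant}, together with niceness ensuring that the source label is genuinely nonempty, that rules this pathology out.
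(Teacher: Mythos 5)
Your proof is correct and follows essentially the same approach as the paper's: scanning the recurrences of \cref{thm:deadlabels_recurrences} to see that dead labels only accumulate except possibly at relabel nodes, then invoking the irredundancy-induced dichotomy of \cref{thm:dead_label_irredundant} together with \cref{thm:livelabels_characterization} to rule out a dead source label reviving a vertex via relabeling. Your relabel argument is written out more explicitly than the paper's terse ``by irredundancy and \cref{thm:livelabels_characterization}'' step (in particular, you correctly observe that the full equivalence in \cref{thm:livelabels_characterization} may be applied at $t^*$ because a relabel node cannot be the child of a dead node), but the underlying reasoning is the same.
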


\begin{proof}
  Consider the recurrences of \cref{thm:deadlabels_recurrences} and note that for join nodes and dead nodes $t$, we have that $\deadlabels{t'} \subseteq \deadlabels{t}$, where $t'$ is the child of $t$. For union nodes $t$, we have that $\deadlabels{t_i} \subseteq \deadlabels{t}$, where $t_i$, $i \in [2]$, are the children of $t$. It remains to consider relabel nodes $t = \relab{i}{j}(G_{t'})$, where $t'$ is the child of $t$. Here we have that $\deadlabels{t} = \deadlabels{t'} \setminus \{i\}$, because label $i$ is empty at $t$. If we have $i \in \deadlabels{t'}$, then also $j \in \deadlabels{t'}$ and $j \in \deadlabels{t}$ by irredundancy of $\cexpr$ and \cref{thm:livelabels_characterization}. Since $V_t^j = V_{t'}^i \cup V_{t'}^j$, this shows that also at relabel nodes no vertex can switch from a dead label to a live label.
\end{proof}

At a union node of a $\nlabels$-expression, one often has to efficiently compute a convolution-like recurrence for the dynamic programming algorithm. The first step is to handle the labels that are nonempty at only one of the children of the union node. For these, the computation is usually trivial and the remaining part is to design a fast convolution algorithm tailored to the problem for the labels which are nonempty at both children. To encapsulate this splitting of the label set, we make the following definition.
\newcommand{\splitlabels}[2]{{\tilde{L}_{#1, #2}}}

\begin{dfn}\label{dfn:union_split}
  Let $\cexpr$ be a nice $\nlabels$-expression of $G=(V,E)$ and $t \in V(\augtree_\cexpr)$ be a union node, i.e., $t = G_{t_1} \union G_{t_2}$. The \emph{union-split at $t$} of a function $f \colon \livelabels{t} \rightarrow \states$, where $\states$ is some finite set, are the functions $f_{t,1} = f\restrict{\splitlabels{t}{1}}$, $f_{t,2} = f\restrict{\splitlabels{t}{2}}$, $f_{t,12} = f\restrict{\splitlabels{t}{12}}$, where $\splitlabels{t}{1} = \livelabels{t_1} \setminus \labels_{t_2}$, $\splitlabels{t}{2} = \livelabels{t_2} \setminus \labels_{t_1}$, $\splitlabels{t}{12} = \livelabels{t_1} \cap \livelabels{t_2}$. 
\end{dfn}

\begin{lem}\label{thm:partition_union_split}
	Let $\cexpr$ be a nice $\nlabels$-expression of $G=(V,E)$ and $t \in V(\augtree_\cexpr)$ be a union node, i.e., $t = G_{t_1} \union G_{t_2}$. We have that $\livelabels{t_1} \cap \deadlabels{t_2} = \emptyset = \deadlabels{t_1} \cap \livelabels{t_2}$ and the sets $\splitlabels{t}{1}$, $\splitlabels{t}{2}$, and $\splitlabels{t}{12}$ partition $\livelabels{t}$.
\end{lem}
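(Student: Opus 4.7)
The plan is to handle the two halves of the statement separately: first the emptiness assertions $\livelabels{t_1} \cap \deadlabels{t_2} = \emptyset$ (and its symmetric twin), and then the partition property, which will essentially follow from the first half combined with the recurrence $\livelabels{t} = \livelabels{t_1} \cup \livelabels{t_2}$ from \cref{dfn:dead_nodes}.

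For the emptiness claim I would proceed by contradiction: assume $\ell \in \livelabels{t_1} \cap \deadlabels{t_2}$. Since $t_1$ and $t_2$ are children of the union node $t$, neither is the child of a dead node, so \cref{thm:livelabels_characterization} applies as a biconditional at both. Together with the dichotomy of \cref{thm:dead_label_irredundant}, this gives a vertex $u \in V_{t_1}^\ell$ with $u \notin \dead_{t_1}$, while $V_{t_2}^\ell \subseteq \dead_{t_2}$. For any $v \in V_{t_2}^\ell$, the property $\delta_G(v) \subseteq E_{t_2}$ implies $N_G(v) \subseteq V_{t_2} \subseteq V_t$. Since $u, v \in V_t^\ell$, irredundancy via \cref{thm:irredundant_lemma} forces $N_G(u) \setminus V_t = N_G(v) \setminus V_t = \emptyset$, i.e., $N_G(u) \subseteq V_t$. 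On the other hand, liveness of $u$ at $t_1$ produces an edge $\{u,w\} \in E \setminus E_{t_1}$; regardless of whether $w \in V_{t_1}$ or $w \in V_{t_2}$, the facts $E_t = E_{t_1} \cup E_{t_2}$ and $V_{t_1} \cap V_{t_2} = \emptyset$ rule out $\{u,w\} \in E_{t_2}$, so $\{u,w\} \in E \setminus E_t$ and hence $u \notin \dead_t$. Then \cref{thm:dead_label_irredundant} applied at $t$ yields $V_t^\ell \cap \dead_t = \emptyset$, contradicting $v \in V_t^\ell \cap \dead_t$, which holds because $\dead_{t_2} \subseteq \dead_t$. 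The symmetric statement follows by interchanging the roles of $t_1$ and $t_2$.

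For the partition property, pairwise disjointness of $\splitlabels{t}{1}$, $\splitlabels{t}{2}$, and $\splitlabels{t}{12}$ is immediate from their definitions together with the containments $\livelabels{t_i} \subseteq \labels_{t_i}$. To see that the union equals $\livelabels{t} = \livelabels{t_1} \cup \livelabels{t_2}$, fix $\ell$ in the latter set: if $\ell \in \livelabels{t_1} \cap \livelabels{t_2}$ then $\ell \in \splitlabels{t}{12}$; if $\ell \in \livelabels{t_1} \setminus \livelabels{t_2}$, then $\ell \in \labels_{t_2}$ would put $\ell$ in $\livelabels{t_1} \cap \deadlabels{t_2} = \emptyset$, so $\ell \notin \labels_{t_2}$ and $\ell \in \splitlabels{t}{1}$; the remaining case $\ell \in \livelabels{t_2} \setminus \livelabels{t_1}$ is handled symmetrically.

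The main obstacle is the emptiness claim: the subtle point is that a dead vertex has its entire $G$-neighborhood inside its defining subexpression, and irredundancy transfers this structural property to every vertex sharing its label, which ultimately forces a would-be live $u$ to be dead at $t$ as well. Once this observation is locked in, the partition property reduces to routine set-algebraic bookkeeping against the recurrence for $\livelabels{t}$.
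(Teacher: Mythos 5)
Your proof is correct and follows essentially the same route as the paper: reduce the partition claim to the emptiness of $\livelabels{t_1} \cap \deadlabels{t_2}$, translate live/dead \emph{labels} at $t_1, t_2$ into live/dead \emph{vertices} via \cref{thm:livelabels_characterization} (noting that children of a union node are not children of dead nodes), and derive a contradiction with the dichotomy of \cref{thm:dead_label_irredundant} at the union node $t$. The detour through \cref{thm:irredundant_lemma} to establish $N_G(u) \subseteq V_t$ is unnecessary—your conclusion $\{u,w\} \notin E_{t_2}$ already follows from $u \notin V_{t_2}$ alone—but this extra step is harmless.
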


\begin{proof}
  The three sets are clearly disjoint by definition. Since $\livelabels{t} = \livelabels{t_1} \cup \livelabels{t_2}$, it suffices to argue that $\splitlabels{t}{1} = \livelabels{t_1} \setminus \livelabels{t_2}$ and $\splitlabels{t}{2} = \livelabels{t_2} \setminus \livelabels{t_1}$. Note that $\splitlabels{t}{1} = \livelabels{t_1} \setminus \livelabels{t_2}$ is equivalent to $\livelabels{t_1} \cap \deadlabels{t_2} = \emptyset$ and similarly for the other equality. Without loss of generality, we consider $\splitlabels{t}{1}$. We have that $\splitlabels{t}{1} = \livelabels{t_1} \setminus \labels_{t_2} \subseteq  \livelabels{t_1} \setminus \livelabels{t_2}$, since $\livelabels{t_2} \subseteq \labels_{t_2}$. For the other direction, note that any label $\ell \in \livelabels{t_1} \cap \labels_{t_2}$ must be contained in $\livelabels{t_2}$, as otherwise we would have $V_{t_1}^\ell \cap \dead_{t_1} = \emptyset$ and $V_{t_2}^\ell \subseteq \dead_{t_2}$ by \cref{thm:livelabels_characterization}, which implies $V_t^\ell \cap \dead_t = V_{t_2}^\ell \notin \{\emptyset, V_t^\ell\}$ contradicting \cref{thm:dead_label_irredundant}.
\end{proof}

\section{Dynamic Programming Algorithms}
\subsection{Connected Vertex Cover}

In \CVC, we are given a graph $G = (V,E)$, a cost function $\cfct\colon V \rightarrow \ZZ_{> 0}$, a non-negative integer $\budget$ called the \emph{budget} and we have to decide whether there exists a subset of vertices $X$ such that $G - X$ contains no edges, $G[X]$ is connected, and $\cfct(X) \leq \budget$. We only consider \CVC instances where the costs are polynomially bounded in the input size. Furthermore, we assume that $G$ is connected and contains at least two vertices.

Given a $\nlabels$-expression $\cexpr$ for $G = (V,E)$, we can assume, after polynomial-time preprocessing, that $\cexpr$ is a nice $\nlabels$-expression by \cref{thm:nice_expression}. We want to apply the cut-and-count-technique to solve \CVC in time $\Oh^*(6^{\nlabels})$. To do so, we first pick an edge in $G$, branch on one of its endpoints $\fixedvertex$, and in this branch only consider solutions containing $\fixedvertex$. Furthermore, we sample a weight function $\wfct\colon V \rightarrow [2|V|]$ for the isolation lemma, cf.\ \cref{thm:isolation}. We perform bottom-up dynamic programming along the augmented syntax tree $\augtree_\cexpr$. At every node $t \in V(\augtree_\cexpr)$, we consider the following family of partial solutions
\begin{equation*}
	\dpsols{t}{} = \{(X,(X_L, X_R)) \in \cuts(G_t) \sep G_t - X \text{ contains no edges and } \fixedvertex \in V_t \rightarrow \fixedvertex \in X_L\}.
\end{equation*}
In other words, $\dpsols{t}{}$ contains all consistently cut vertex covers of $G_t$ such that $\fixedvertex$ is on the left side of the cut if possible. For every $t \in V(\augtree_\cexpr)$, $\ctarget \in [0, \cfct(V)]$, $\wtarget \in [0, \wfct(V)]$, we define $\dpsols{t}{\ctarget, \wtarget} = \{(X,(X_L, X_R)) \in \dpsols{t}{} \sep \cfct(X) = \ctarget, \wfct(X) = \wtarget\}$. Let $\rvertex$ denote the root node of the augmented syntax tree $\augtree_\cexpr$. By \cref{thm:cons_cut_family}, it follows that there exists a connected vertex cover $X$ of $G$ with $\cfct(X) \leq \budget$ if there exist $\ctarget \in [0, \budget]$ and $\wtarget \in [0, \wfct(V)]$ such that $\dpsols{\rvertex}{\ctarget, \wtarget}$ has odd cardinality.

To facilitate the dynamic programming algorithm, we need to analyze the behavior of a partial solution $(X,(X_L,X_R)) \in \dpsols{t}{}$ with respect to a label $V_t^\ell$, $\ell \in \labels_t$. A single vertex $v \in V_t^\ell$ can take one of the states $\bstates = \{\zero, \one_L, \one_R\}$, meaning respectively $v \notin X$, or $v \in X_L$, or $v \in X_R$. To check the feasibility of $(X, (X_L,X_R))$, it is sufficient to store for each label which vertex states appear and which do not, as the constraints implied by $\dpsols{t}{}$ are "CSP-like" and they can be evaluated for every join by considering all pairs of involved vertex states. This idea yields the power set $\powerset{\bstates}$ of $\bstates$ as the set of possible states for each label.

The power set $\powerset{\bstates}$ a priori yields eight different states per label. However, we can exclude the state $\emptyset$ and the state $\bstates = \{\zero, \one_L, \one_R\}$ from consideration. The former can be excluded, since we only need to store the state for nonempty labels. The exclusion of the state $\bstates = \{\zero, \one_L, \one_R\}$ is more subtle: any additional incident join would lead to an infeasible solution for this state, hence only dead labels, cf.\ \cref{dfn:dead_nodes}, may take this state. We return to this issue in a moment. Since it suffices to store the states of live labels, we set $\states = \powerset{\bstates} \setminus \{\emptyset, \bstates\} = \{\{\zero\}, \{\one_L\}, \{\one_R\}, \{\zero, \one_L\}, \{\zero, \one_R\}, \{\one_L, \one_R\}\}$.

Given a node $t \in V(\augtree_\cexpr)$, a \emph{$t$-signature} is a function $f \colon \livelabels{t} \rightarrow \states$. For every node $t \in V(\augtree_\cexpr)$, $\ctarget \in [0, \cfct(V)]$, $\wtarget \in [0, \wfct(V)]$, and $t$-signature $f$, we define 
\begin{align*}
	\dpsols{t}{\ctarget,\wtarget}(f) = \{(X,(X_L, X_R)) \in \dpsols{t}{\ctarget, \wtarget} \sep\,\, 
  & \zero  \in f(\ell) \leftrightarrow V_t^\ell \not\subseteq X \tfa \ell \in \livelabels{t}, \\
  & \one_L \in f(\ell) \leftrightarrow X_L \cap V_t^\ell \neq \emptyset \tfa \ell \in \livelabels{t}, \\
  & \one_R \in f(\ell) \leftrightarrow X_R \cap V_t^\ell \neq \emptyset \tfa \ell \in \livelabels{t}\}.
\end{align*}
Instead of computing the sets $\dpsols{t}{\ctarget,\wtarget}(f)$ directly, we compute only the parity of their cardinality, i.e., $\dppoly{t}{\ctarget,\wtarget}(f) = |\dpsols{t}{\ctarget,\wtarget}(f)| \mod 2$.

We can now argue more formally that the exclusion of the states $\emptyset$ and $\bstates$ does not cause issues. First, for any nonempty $V_t^\ell$ at least one of the three cases $V_t^\ell \not\subseteq X$, $X_L \cap V_t^\ell \neq \emptyset$, or $X_R \cap V_t^\ell \neq \emptyset$ has to occur, hence the state $\emptyset$ cannot be attained by any $V_t^\ell$ with $\ell \in \livelabels{t}$.  

Secondly, consider some node $t$ that is not the child of a dead node, and $(X, (X_L, X_R)) \in \dpsols{t}{\ctarget, \wtarget}$ such that there is some live label $\ell \in \livelabels{t}$ for which the three cases $V_t^\ell \not\subseteq X$, $X_L \cap V_t^\ell \neq \emptyset$, and $X_R \cap V_t^\ell \neq \emptyset$ simultaneously occur. Since $\ell$ is a live label, there is some $v \in N_G(V_t^\ell) \setminus N_{G_t}(V_t^\ell)$ by \cref{thm:livelabels_characterization}. We claim that $(X, (X_L, X_R))$ cannot be extended to a consistently cut vertex cover $(X', (X'_L, X'_R))$ of $G' = G[V_t \cup \{v\}]$ (and hence also not of $G$). If $v \notin X'$, then there is an uncovered edge in $G'$ between $V_t^\ell$ and $v$. If $v \in X'$, then there is an edge in $G'$ crossing the cut $(X'_L, X'_R)$ and so the cut cannot be consistent. Hence, we can safely discard any partial solutions that attain the state $\{\zero, \one_L, \one_R\}$ with a live label, as they can never be extended to a global solution.

The state $\bstates = \{\zero, \one_L, \one_R\}$ can be obtained when two sets of vertices become united under a common label, i.e., during a relabel-operation or union-operation. We will give recurrences for the quantities $\dpsols{t}{\ctarget,\wtarget}(f)$, where $f$ is a $t$-signature which is not allowed to attain the state $\bstates$ for any label, hence such situations are implicitly filtered out in the algorithm as the recurrences simply do not consider state combinations that lead to $\bstates$.

We proceed to give recurrences for computing $\dppoly{t}{\ctarget,\wtarget}(f)$, for every $t \in V(\augtree_\cexpr)$, $t$-signature $f$, $\ctarget \in [0, \cfct(V)]$, $\wtarget \in [0, \wfct(V)]$ depending on the type of the considered node $t$.

\subparagraph*{Introduce node.} 
If $t = \intro{\ell}(v)$ for some $\ell \in [\nlabels]$, then $\livelabels{t} = \{\ell\}$ and
\begin{align*}
  \dppoly{t}{\ctarget,\wtarget}(f) & = [v \neq \fixedvertex \vee f(\ell) = \{\one_L\}] \\
  & \,\,\cdot\,\, [(f(\ell) = \{\zero\} \wedge \ctarget = \wtarget = 0) \vee (f(\ell) \in \{\{\one_L\}, \{\one_R\}\} \wedge \ctarget = \cfct(v) \wedge \wtarget = \wfct(v))], 
\end{align*}
since in a singleton graph any choice of singleton state leads to a valid solution, but if $v = \fixedvertex$ then only the solution with $\fixedvertex$ on the left side is allowed.

\subparagraph*{Relabel node.}
If $t = \relab{i}{j}(G_{t'})$, where $t'$ is the child of $t$, for some $i, j \in [\nlabels]$, then by niceness of $\cexpr$ it follows that $i \in \labels_{t'}$, $j \in \labels_{t'}$, $\labels_t = \labels_{t'} \setminus \{i\}$ and either $i,j \in \livelabels{t'}$ or $i,j \in \deadlabels{t'}$. 
\begin{itemize}
 \item If labels $i$ and $j$ are live at $t'$, then label $j$ is live at $t$ and the recurrence is given by 
 \begin{align*} 
   \dppoly{t}{\ctarget,\wtarget}(f) = \sum_{\mathclap{\substack{\stateset_1, \stateset_2 \in \states \colon \\ \stateset_1 \cup \stateset_2 = f(j)}}} \dppoly{t'}{\ctarget, \wtarget}(f[i \mapsto \stateset_1, j \mapsto \stateset_2]),
 \end{align*}
 since $V_t^j = V_{t'}^i \cup V_{t'}^j$ and we simply have to iterate over all possible combinations of previous states at labels $i$ and $j$ that yield the desired state $f(j)$.
 \item If labels $i$ and $j$ are dead at $t'$, then label $j$ is dead at $t$ and since we do not track the state of dead labels, we can simply copy the previous table, i.e.,
 \begin{equation*}
   \dppoly{t}{\ctarget, \wtarget}(f) = \dppoly{t'}{\ctarget, \wtarget}(f).
 \end{equation*}
\end{itemize}

\subparagraph*{Join node.} 
To check whether two states can lead to a feasible solution after adding a join between their labels, we introduce a helper function $\feas \colon \states \times \states \rightarrow \{0,1\}$ defined by $\feas(\stateset_1, \stateset_2) = [\zero \notin \stateset_1 \vee \zero \notin \stateset_2] [\one_L \in \stateset_1 \rightarrow \one_R \notin \stateset_2] [\one_R \in \stateset_1 \rightarrow \one_L \notin \stateset_2]$, or equivalently by the following table:
\begin{equation*}
  \begin{array}{r|cccccc}
    \feas & \{\zero\} & \{\one_L\} & \{\one_R\} & \{\zero, \one_L\} & \{\zero, \one_R\} & \{\one_L, \one_R\} \\
    \hline
    \{\zero\} &          0 & 1 & 1 & 0 & 0 & 1 \\
    \{\one_L\} &         1 & 1 & 0 & 1 & 0 & 0 \\
    \{\one_R\} &         1 & 0 & 1 & 0 & 1 & 0 \\
    \{\zero, \one_L\} &  0 & 1 & 0 & 0 & 0 & 0 \\
    \{\zero, \one_R\} &  0 & 0 & 1 & 0 & 0 & 0 \\
    \{\one_L, \one_R\} & 1 & 0 & 0 & 0 & 0 & 0
  \end{array}
\end{equation*} 
There are two reasons for infeasibility: a join edge is not covered, i.e., $\zero$ appears on both sides, or a join edge connects both sides of the cut, i.e., $\one_L$ appears on one side and $\one_R$ on the other. Using this helper function, we can now state the recurrence for the join case.

We have that $t = \join{i}{j}(G_{t'})$ for some $i \neq j \in \labels_{t'}$ and where $t'$ is the child of $t$. We must have $i, j \in \livelabels{t'}$ and if the set of dead vertices changes, i.e., $\dead_t \neq \dead_{t'}$, then this will be handled by future dead nodes. Hence, we simply have to filter out all partial solutions that became infeasible due to the new join:
 \begin{align*}
  \dppoly{t}{\ctarget,\wtarget}(f) = \feas(f(i), f(j))\dppoly{t'}{\ctarget, \wtarget}(f).
 \end{align*}
 
\subparagraph*{Dead node.}
We have that $t = \deadop{\ell}(G_{t'})$, where $t'$ is the child of $t$, $\ell \notin \livelabels{t}$, and $\livelabels{t} = \livelabels{t'} \setminus \{\ell\}$. Since the only change is that $t$-signatures do not track the state of label $\ell$ anymore, we have to add up the contributions of all previous states of label $\ell$. Hence, the recurrence is given by
\begin{equation*}
	\dppoly{t}{\ctarget, \wtarget}(f) = \sum_{\stateset \in \states} \dppoly{t'}{\ctarget, \wtarget}(f[\ell \mapsto \stateset]).
\end{equation*}

\subparagraph*{Union node.} 
We have that $t = G_{t_1} \union G_{t_2}$, where $t_1$ and $t_2$ are the children of $t$ and we have $\livelabels{t} = \livelabels{t_1} \cup \livelabels{t_2}$. Given a $t$-signature $f$, we consider the union-split $f_{t,1}$, $f_{t,2}$, $f_{t,12}$ of $f$ at $t$, cf.~\cref{dfn:union_split}. For every label $\ell \in \splitlabels{t}{12} = \livelabels{t_1} \cap \livelabels{t_2}$, we need to consider all states $\stateset_1, \stateset_2 \in \states$ such that $\stateset_1 \cup \stateset_2 = f(\ell)$, where $\stateset_i$ is the state of label $\ell$ at $t_i$. Furthermore, we have to distribute $\ctarget$ and $\wtarget$ among the partial solutions at $t_1$ and the partial solutions at $t_2$. Hence, we obtain the recurrence
\begin{align*}
  \dppoly{t}{\ctarget,\wtarget}(f) = \sum_{\substack{\ctarget_1 + \ctarget_2 = \ctarget \\ \wtarget_1 + \wtarget_2 = \wtarget}} \sum_{\substack{g_1, g_2 \colon \splitlabels{t}{12} \rightarrow \states \colon  \\ g_1(\ell) \cup g_2(\ell) = f_{t,12}(\ell) \,\forall \ell \in \splitlabels{t}{12}}} \dppoly{t_1}{\ctarget_1,\wtarget_1}(g_1 \cup f_{t,1}) \dppoly{t_2}{\ctarget_2,\wtarget_2}(g_2 \cup f_{t,2}),
\end{align*}
where we consider $g_i$ and $f_{t,i}$ as sets in $g_i \cup f_{t,i}$ for $i \in [2]$.

We now argue how this recurrence can be computed in time $\Oh^*(6^{|\livelabels{t}|})$ for fixed $\ctarget \in [0, \cfct(V)]$, $\wtarget \in [0, \wfct(V)]$, and for all $t$-signatures $f$. We first branch on all possibilities for $(\ctarget_1, \ctarget_2, \wtarget_1, \wtarget_2)$; these are $n^{\Oh(1)}$ possibilities as $\cfct(V) \leq n^{\Oh(1)}$ by assumption and $\wfct(V) \leq 2n^2$. Fixing one of these possibilities, we further branch on $f_{t,1}$ and $f_{t,2}$, this leads to $6^{|\splitlabels{t}{1}| + |\splitlabels{t}{2}|}$ choices. Now, the quantities $\dppoly{t_1}{\ctarget_1,\wtarget_1}(g_1 \cup f_{t,1})$ and $\dppoly{t_2}{\ctarget_2,\wtarget_2}(g_2 \cup f_{t,2})$ can be considered as functions of $g_1$ and $g_2$ respectively and the inner sum over these in the recurrence is their componentwise cover product over $\states$ evaluated at $f_{t,12}$.

Since the set $\states$ is clearly a closure difference, we can apply \cref{thm:fast_compwise_cover_product} to compute this componentwise cover product in time $\Oh^*(6^{|\splitlabels{t}{12}|})$ for all possible $f_{t,12}$. Since $\livelabels{t}$ is partitioned into $\splitlabels{t}{1}$, $\splitlabels{t}{2}$, and $\splitlabels{t}{12}$ by \cref{thm:partition_union_split}, we need in total time $\Oh^*(6^{|\livelabels{t}|})$ to compute $\dppoly{t}{\ctarget,\wtarget}(f)$ for all choices of $\ctarget$, $\wtarget$, and $f$.

\begin{lem}\label{thm:cvc_count_part}
 Given a nice $\nlabels$-expression $\cexpr$ of $G = (V,E)$, there is an algorithm that computes the quantities $\dppoly{t}{\ctarget,\wtarget}(f)$ for all nodes $t \in V(\augtree_\cexpr)$, all $t$-signatures $f$, and all $\ctarget \in [0,\cfct(V)]$, $\wtarget \in [0,2n^2]$, in time $\Oh^*(6^{\nlabels})$.
\end{lem}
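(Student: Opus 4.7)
The plan is to process the augmented syntax tree $\augtree_\cexpr$ in post-order and, at each node $t$, evaluate the corresponding recurrence from the preceding case analysis to fill in $\dppoly{t}{\ctarget,\wtarget}(f)$ for all $t$-signatures $f$ and all $\ctarget \in [0,\cfct(V)]$, $\wtarget \in [0,2n^2]$. Correctness of each recurrence follows directly from unpacking the definition of $\dpsols{t}{\ctarget,\wtarget}(f)$ and the semantics of the corresponding operation: at introduce nodes there is a single vertex; at a live relabel node $V_t^j = V_{t'}^i \cup V_{t'}^j$ forces the merged label state to be a union; at join nodes the function $\feas$ records exactly the two failure modes (an uncovered join edge or a cut edge); at dead nodes we simply forget the state of the newly dead label; and at union nodes each vertex sits in exactly one of $V_{t_1}$ or $V_{t_2}$, so the cost and weight split additively and per-label states in $\splitlabels{t}{12}$ combine by union.

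The main task is the running-time analysis, which I will carry out node by node against the budget $\Oh^*(6^{|\livelabels{t}|})$. Introduce nodes and join nodes trivially fit this budget since each entry is computed in constant time. At a relabel node in the dead case we copy the child table in polynomial time; in the live case, iterating for each entry over the at most constantly many $(\stateset_1,\stateset_2)$ with $\stateset_1\cup\stateset_2=f(j)$ yields time $\Oh^*(6^{|\livelabels{t'}|}) = \Oh^*(6^{|\livelabels{t}|+1})$, which is absorbed into $\Oh^*(6^\nlabels)$. At a dead node $t = \deadop{\ell}(G_{t'})$ the recurrence is a sum over the six states of the forgotten label $\ell$, so the cost is again $\Oh^*(6^{|\livelabels{t'}|}) = \Oh^*(6^{|\livelabels{t}|+1}) \leq \Oh^*(6^\nlabels)$. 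The union case is the only nontrivial one, and its analysis has already been carried out in the text preceding the lemma: we fix the polynomially many quadruples $(\ctarget_1,\ctarget_2,\wtarget_1,\wtarget_2)$ and the signatures $f_{t,1}$, $f_{t,2}$ on the private labels (costing $6^{|\splitlabels{t}{1}|+|\splitlabels{t}{2}|}$), and then compute the inner sum over $g_1,g_2$ on $\splitlabels{t}{12}$ as a componentwise cover product over $\states$, which can be done simultaneously for all $f_{t,12}$ in time $\Oh^*(6^{|\splitlabels{t}{12}|})$ by \cref{thm:fast_compwise_cover_product} since $\states$ is a closure difference. By \cref{thm:partition_union_split} the three index sets partition $\livelabels{t}$, so the total cost at $t$ multiplies to $\Oh^*(6^{|\livelabels{t}|}) \leq \Oh^*(6^\nlabels)$.

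Finally, the augmented syntax tree $\augtree_\cexpr$ has size polynomial in $|V|$ (the input $\nlabels$-expression has polynomial size, and \cref{dfn:dead_nodes} inserts at most two extra nodes per join), so summing the per-node costs yields the claimed bound $\Oh^*(6^\nlabels)$. The chief obstacle in this plan is the union node: without a fast convolution one would pay a factor $6^{2|\splitlabels{t}{12}|}$ coming from the independent choices of $g_1$ and $g_2$, and the crux is to recognize that the inner sum is exactly the componentwise cover product over the six-element state set $\states$ so that \cref{thm:fast_compwise_cover_product} applies with base $6$ rather than $36$.
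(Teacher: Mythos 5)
Your running-time analysis is correct and matches the paper's (the union node handled via the componentwise cover product over the closure difference $\states$, all other node types fitting the $\Oh^*(6^{|\livelabels{t}|})$ budget up to a constant factor in the exponent, and the augmented tree having polynomial size). However, your treatment of correctness has a genuine gap at the dead nodes, and this is precisely the one case the paper singles out as non-trivial. You write that "at dead nodes we simply forget the state of the newly dead label," but the recurrence
\[
  \dppoly{t}{\ctarget,\wtarget}(f) = \sum_{\stateset \in \states} \dppoly{t'}{\ctarget,\wtarget}(f[\ell \mapsto \stateset])
\]
sums over only the six states in $\states = \powerset{\bstates}\setminus\{\emptyset,\bstates\}$. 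A priori, a partial solution $(X,(X_L,X_R)) \in \dpsols{t}{\ctarget,\wtarget}(f)$ could satisfy all three conditions $V_t^\ell \not\subseteq X$, $X_L\cap V_t^\ell\neq\emptyset$, $X_R\cap V_t^\ell\neq\emptyset$ simultaneously (the excluded state $\bstates$), in which case it would not appear in any of the six sets $\dpsols{t'}{\ctarget,\wtarget}(f[\ell\mapsto\stateset])$ and the recurrence would undercount. The paper's proof is devoted almost entirely to ruling this out: the dead node $t=\deadop{\ell}(G_{t'})$ was created by a join node $t^*=\join{\ell}{i}(\cdot)$ (or $\join{i}{\ell}$) with $V_t^i\neq\emptyset$ by niceness; since $V_t^i$ is nonempty, one of the three conditions must hold for $V_t^i$ as well, and any such condition paired with state $\bstates$ at $\ell$ across the join yields an uncovered edge or an edge crossing the cut $(X_L,X_R)$, contradicting $(X,(X_L,X_R))\in\dpsols{t}{\ctarget,\wtarget}(f)$. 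Without this argument, the correctness of the recurrence — and thus of the algorithm — does not follow "directly from unpacking the definition," and your proof is incomplete.
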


\begin{proof}
  The algorithm proceeds by bottom-up dynamic programming along the augmented syntax tree $\augtree_\cexpr$ of the nice $\nlabels$-expression $\cexpr$ and computes the quantities $\dppoly{t}{\ctarget,\wtarget}(f)$ via the given recurrences. For an introduce node, relabel node, or join node, the recurrence for $\dppoly{t}{\ctarget,\wtarget}(f)$ for fixed $f$, $\ctarget$, and $\wtarget$, can clearly be computed in polynomial time, since additions and multiplications in $\ZZ_2$ take constant time. For a union node $t$, we have argued how to compute the recurrences for all $f$, $\ctarget$, and $\wtarget$ simultaneously in time $\Oh^*(6^{|\livelabels{t}|})$. As $\cexpr$ is a $\nlabels$-expression, we have $|\livelabels{t}| \leq \nlabels$ for all $t \in V(\augtree_\cexpr)$ and in particular at most $6^\nlabels$ $t$-signatures for any node $t \in V(\tree_\cexpr)$. Hence, the running time follows.
 
  It remains to prove the correctness of the recurrences. The proof of correctness for introduce nodes, relabel nodes, join nodes, and union nodes is straightforward and hence omitted. Consider a dead node $t = \deadop{\ell}(G_{t'})$, where $t'$ is the child of $t$. We claim that $\dpsols{t}{\ctarget, \wtarget}(f) = \bigcup_{\stateset \in \states} \dpsols{t'}{\ctarget, \wtarget}(f[\ell \mapsto \stateset])$ for all $t$-signatures $f$, $\ctarget$, and $\wtarget$. Since the union on the right-hand side is clearly disjoint, this claim immediately proves the correctness of the recurrence for dead nodes. We proceed with proving the claim. The right-hand side is contained in the left-hand side, since the set on the left-hand side is defined by fewer constraints. 
  
  For the other direction of the claim, suppose there is some $(X, (X_L, X_R)) \in \dpsols{t}{\ctarget, \wtarget}(f) \setminus \left( \bigcup_{\stateset \in \states} \dpsols{t'}{\ctarget, \wtarget}(f[\ell \mapsto \stateset]) \right)$ and consider $V_t^\ell$. Since $V_t^\ell \neq \emptyset$, at least one of $V_t^\ell \not\subseteq X$, $X_L \cap V_t^\ell \neq \emptyset$, $X_R \cap V_t^\ell \neq \emptyset$ has to be satisfied. Indeed, all three statements are satisfied simultaneously, because all remaining cases are covered by $\states$. Consider the join node $t^*$ that caused the dead node $t$ to exist. The node $t^*$ adds the final join incident to $V_t^\ell = V_{t^*}^\ell$, say between $V_t^\ell$ and some $V_t^i = V_{t^*}^i$, $i \neq \ell$, and by niceness of $\cexpr$ we have $V_t^i \neq \emptyset$. Hence, at least one of $V_t^i \not\subseteq X$, $X_L \cap V_t^i \neq \emptyset$, $X_R \cap V_t^i \neq \emptyset$ has to be satisfied. Therefore, there is an uncovered edge between $V_t^\ell$ and $V_t^i$ or an edge crossing the cut $(X_L, X_R)$, contradicting that $(X, (X_L, X_R)) \in \dpsols{t}{\ctarget,\wtarget}(f)$. Hence, $\dpsols{t}{\ctarget, \wtarget}(f) \subseteq \bigcup_{\stateset \in \states} \dpsols{t'}{\ctarget, \wtarget}(f[\ell \mapsto \stateset])$, proving the claim.
\end{proof}

\begin{thm}
 There is a randomized algorithm that given a nice $\nlabels$-expression $\cexpr$ for a graph $G = (V, E)$ can solve \CVC in time $\Oh^*(6^k)$. The algorithm does not return false positives and returns false negatives with probability at most $1/2$.
\end{thm}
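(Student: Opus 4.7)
The plan is to assemble the dynamic programming of Lemma \ref{thm:cvc_count_part} with the cut-and-count framework (Corollary \ref{thm:cons_cut_family}) and the isolation lemma (Lemma \ref{thm:isolation}). I would first apply Lemma \ref{thm:nice_expression} to make $\cexpr$ nice in polynomial time. Since $G$ is connected and has at least two vertices, any vertex cover contains an endpoint of a fixed edge $e = \{u,v\}$; hence it suffices to execute the procedure below twice, branching on $\fixedvertex \in \{u,v\}$, and accept if either run accepts.

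For each choice of $\fixedvertex$, I would sample $\wfct\colon V \to [2|V|]$ uniformly at random and invoke Lemma \ref{thm:cvc_count_part}. At the root $\rvertex$ the live-label set is empty, because every vertex of $G$ already has all its incident edges and the root is not the child of a dead node, so Lemma \ref{thm:livelabels_characterization} forces $\livelabels{\rvertex} = \emptyset$; only the empty signature $f_0$ remains, and by construction $\dppoly{\rvertex}{\ctarget,\wtarget}(f_0) \equiv_2 |\dpsols{\rvertex}{\ctarget,\wtarget}|$. The algorithm returns \textsc{Yes} if this parity equals $1$ for some $\ctarget \in [0,\budget]$ and some $\wtarget \in [0,2|V|^2]$.

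One direction of correctness is essentially free: an odd count triggers Corollary \ref{thm:cons_cut_family} applied to the family of vertex covers of cost $\ctarget$ containing $\fixedvertex$ and witnesses an actual connected vertex cover of cost at most $\budget$, ruling out false positives. The main obstacle is bounding the probability of a false negative, and the standard isolation argument handles this: if a connected vertex cover of cost $\leq \budget$ exists, then in at least one branch some such cover contains $\fixedvertex$. Let $\ctarget^* \leq \budget$ be the minimum cost among connected vertex covers containing $\fixedvertex$ and let $\family$ be the family of all such minimum-cost covers. Applying Lemma \ref{thm:isolation} with $\wmax = 2|V|$ isolates $\family$ with probability at least $1/2$, yielding a unique minimiser $X^* \in \family$ of weight $\wtarget^* = \wfct(X^*)$. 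By Lemma \ref{thm:cons_cut}, $X^*$ contributes $2^{\cc(G[X^*])-1} = 1$ consistently cut subgraph at $(\ctarget^*,\wtarget^*)$, while every disconnected candidate of the same cost and weight contributes an even number, so the parity at $(\ctarget^*,\wtarget^*)$ is $1$ and the algorithm accepts.

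The running time is dominated by two invocations of Lemma \ref{thm:cvc_count_part}, each costing $\Oh^*(6^\nlabels)$; the outer scan over $(\ctarget,\wtarget)$ adds only polynomial overhead absorbed by $\Oh^*$. The delicate point to watch is that the isolation lemma has to be applied within a single cost stratum, rather than across the entire set of vertex covers of cost at most $\budget$, because otherwise the unique minimum-weight element could live at a cost we are not currently inspecting; iterating $\ctarget$ explicitly in the outer loop avoids this pitfall, and a single sampled $\wfct$ suffices since isolation then succeeds at the correct stratum with constant probability.
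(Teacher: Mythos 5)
Your proposal matches the paper's proof essentially step for step: branch on the endpoints of a fixed edge to choose $\fixedvertex$, sample $\wfct$, run the dynamic program of Lemma~\ref{thm:cvc_count_part}, read off parities at the root (where the empty signature suffices by $\livelabels{\rvertex}=\emptyset$), and invoke Corollary~\ref{thm:cons_cut_family} for soundness and the isolation lemma together with Lemma~\ref{thm:cons_cut} for the error bound. Your closing remark about isolating within a fixed cost stratum is a correct and slightly more explicit rendering of what the paper does implicitly when it isolates an optimum-cost cover and then fixes $\ctarget=\cfct(X^*)$; the only superfluous step is re-applying Lemma~\ref{thm:nice_expression}, since the theorem already supplies a nice expression.
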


\begin{proof}
  We begin by sampling a weight function $\wfct\colon V \rightarrow [2n]$ uniformly at random. Then, we pick an edge in $G$ and branch on its endpoints; the chosen endpoint takes the role of $\fixedvertex$ in the current branch. We then run the algorithm of \cref{thm:cvc_count_part} to compute the quantities $\dppoly{t}{\ctarget, \wtarget}(f)$. Let $\rvertex$ denote the root node of the expression $\cexpr$. At the root, we have that $\livelabels{\rvertex} = \emptyset$. The algorithm returns true if in one of the branches there is some choice of $\ctarget \leq \budget$, $\wtarget \in [0, 2n^2]$, such that $\dppoly{\rvertex}{\ctarget, \wtarget}(\emptyset) \neq 0$, otherwise the algorithm returns false.
  
  The running time directly follows from \cref{thm:cvc_count_part}. For the correctness, first note that at the root, we have $\dpsols{\rvertex}{\ctarget, \wtarget}(\emptyset) = \dpsols{\rvertex}{\ctarget, \wtarget}$. The algorithm only returns true, if there are some $\ctarget \in [0, \budget]$, $\wtarget \in [0, 2n^2]$ such that $\dpsols{\rvertex}{\ctarget, \wtarget}$ has odd cardinality. By \cref{thm:cons_cut_family}, this implies that there is a connected vertex cover $X$ of $G$ with $\cfct(X) = \ctarget \leq \budget$, hence the algorithm does not return false positives.
  
  For the error probability, suppose that the weight function $\wfct$ isolates an optimum connected vertex cover $X^*$ of $G$ and that $\cfct(X^*) \leq \budget$; by \cref{thm:isolation}, the isolation happens with probability greater than or equal to $1/2$. Furthermore, consider a branch with $\fixedvertex \in X^*$. Set $\ctarget = \cfct(X^*) \leq \budget$ and $\wtarget = \wfct(X^*)$. By \cref{thm:cons_cut}, the connected vertex cover $X^*$ contributes an odd number to $|\dpsols{\rvertex}{\ctarget, \wtarget}|$ and all other contributing sets cannot be connected due to isolation and hence contribute an even number to $|\dpsols{\rvertex}{\ctarget, \wtarget}|$. Therefore $\dppoly{\rvertex}{\ctarget, \wtarget}(\emptyset) \equiv_2 |\dpsols{\rvertex}{\ctarget, \wtarget}| = 1 \neq 0$ and hence the algorithm returns true with probability at least $1/2$ given a positive instance.
\end{proof}
\subsection{Connected Dominating Set}
\label{sec:cw_cds_algo}

\newcommand{\sA}{\mathbf{A}}
\newcommand{\sF}{\mathbf{F}}
\newcommand{\sL}{\mathbf{L}}
\newcommand{\sR}{\mathbf{R}}
\newcommand{\sQ}{\mathbf{2_+}}

In the \CDS problem, we are given a graph $G = (V,E)$, a cost function $\cfct \colon V \rightarrow \ZZ_{> 0}$ and a non-negative integer $\budget$ and the task is to decide whether there exists a set of vertices $X \subseteq V$ with $\cfct(X) \leq \budget$ such that $X$ is a dominating set of $G$, i.e., $N[X] = V$, and $G[X]$ is connected. We only consider \CDS instances where the costs are polynomially bounded in the input size. Furthermore, we assume that $G$ is connected and contains at least two vertices.

We begin by motivating our algorithmic approach for \CDS. Following the approach for \CVC, we would consider partial solutions $(X, (X_L, X_R))$ consisting of a \emph{partial} dominating set $X$ and a consistent cut $(X_L, X_R)$ of the subgraph induced by $X$. A single vertex $v$ can take four states with respect to $(X, (X_L, X_R))$: $\zero_1$, $\zero_0$, $\one_L$, $\one_R$, where the former two indicate that $v \notin X$ and the subscript denotes whether $v$ is dominated by $X$ or not, and the latter two indicate that $v \in X$ and the subscript denotes which cut side contains $v$. We can again store for each label which vertex states appear, yielding as possible label states all subsets of $\{\zero_1, \zero_0, \one_L, \one_R\}$. By observing that the state $\zero_1$ does not impose any constraint for future joins, we can even argue that it suffices to only consider the subsets of $\{\zero_0, \one_L, \one_R\}$. Furthermore, similar to \CVC, the label state $\{\zero_0, \one_L, \one_R\}$ cannot be sensibly attained by live labels, hence we are down to seven states per live label. This approach yields a running time of $\Oh^*(7^{\cw(G)})$, but we can obtain an even faster algorithm.

To obtain the improved running time of $\Oh^*(5^{\cw(G)})$, we instead work with a different set of vertex states common for domination problems~\cite{HegerfeldK20, NederlofRD14, PilipczukW18, RooijBR09}. Instead of considering partial solutions with \emph{dominated} and \emph{undominated} vertices, we consider \emph{allowed} vertices (state $\sA$) and \emph{forbidden} vertices (state $\sF$). As their names imply, allowed vertices may be dominated or undominated, but forbidden vertices may not be dominated. When lifting the \emph{vertex} states to \emph{label} states, the state $\sA$ can be ignored, because it imposes no constraint on joins, therefore we obtain the subsets of $\{\sF, \sL, \sR\}$ as \emph{label states}. The advantage of this set of states is that all subsets of size at least two behave the same with respect to joins, allowing us to collapse them to a single state, and that we do not have to update states from undominated to dominated when handling joins; this step yields the desired five label states.

However, recovering the solutions to the original problem from this set of states usually requires some type of inclusion-exclusion argument. The application of this step is non-standard for clique-width. For sparse graph parameters, such as treewidth, the inclusion-exclusion argument can be applied to \emph{single} vertices, i.e., if we subtract the partial solutions where a vertex $v$ has state $\sF$ from those where $v$ has state $\sA$, then only partial solutions dominating $v$ remain. For clique-width however, we have to apply the argument to groups of vertices and such a subtraction would only yield that \emph{some} vertices in the label must be dominated and not, as is desired, all of them. Moreover, the collapsing of several label states into a single one complicates the inclusion-exclusion argument further. Surprisingly, working modulo 2 resolves all of these problems simultaneously and it is also the natural setting for the cut-and-count-technique. Lastly, the inclusion-exclusion argument should only be applied when all edges incident to a label are already constructed, i.e., the considered label is \emph{dead}, hence we again use augmented syntax trees.

We proceed by giving the formal details of the algorithm. Given a $\nlabels$-expression $\cexpr$ for $G=(V,E)$, we can assume that $\cexpr$ is nice after polynomial-time preprocessing, see \cref{thm:nice_expression}. We sample a weight function $\wfct \colon V \rightarrow [2n]$ for the isolation lemma, cf.~\cref{thm:isolation}. To solve \CDS, we perform bottom-up dynamic programming along the augmented syntax tree $\augtree_\cexpr$ of $\cexpr$. We pick some $\fixedvertex \in V$ and only consider solutions containing $\fixedvertex$ for now.

To implement the inclusion-exclusion and cut-and-count approach, the dynamic programming algorithm considers the following family of partial solutions at a node $t \in V(\augtree_\cexpr)$.
\begin{dfn}
  At a node $t \in V(\augtree_\cexpr)$, the family $\dpsols{t}{}$ of \emph{partial solutions} consists of all ordered subpartitions $(X_L, X_R, F)$ of $V_t$ satisfying the following properties:
  \begin{itemize}
    \item $X = X_L \cup X_R$, $(X_L, X_R)$ is a consistent cut of $G_t[X]$,
    \item $\fixedvertex \in X_L$ if $\fixedvertex \in V_t$,
    \item $N_{G_t}[X] \cap F = \emptyset$,
    \item $V_t^\ell \subseteq N_{G_t}[X]$ for all $\ell \in \deadlabels{t}$.
  \end{itemize}
  Furthermore, for every node $t \in V(\augtree_\cexpr)$, $\ctarget \in [0, \cfct(V)]$, and $\wtarget \in [0, \wfct(V)]$, we define $\dpsols{t}{\ctarget, \wtarget} = \{ (X_L, X_R, F) \in \dpsols{t}{} \sep \cfct(X_L \cup X_R) = \ctarget, \wfct(X_L \cup X_R) = \wtarget\}$.
\end{dfn}

Essentially, every $(X_L, X_R, F) \in \dpsols{t}{}$ consists of a consistently cut \emph{partial} dominating set $X = X_L \cup X_R$ of $G_t$ that dominates all vertices with dead labels and does \emph{not} dominate the vertices in $F$. To any $(X_L, X_R, F) \in \dpsols{t}{}$, there is also an associated set of vertices $A = V_t \setminus (X_L \cup X_R \cup F)$ that are \emph{allowed} to be dominated.

Notice that for any $(X_L, X_R, F) \in \dpsols{t}{}$, we must have that $F \cap \left(\bigcup_{\ell \in \deadlabels{t}} V_t^\ell \right) = \emptyset$ as otherwise the third and fourth property in the definition of $\dpsols{t}{}$ cannot be simultaneously satisfied. In particular, for the root node $\rvertex$ we must have $F = \emptyset$ and $\deadlabels{\rvertex} = \labels_{\rvertex}$, so that $\dpsols{\rvertex}{}$ only contains consistently cut dominating sets of $G$. Hence, if there are $\ctarget \in [0, \budget]$, $\wtarget \in [0,2n^2]$, such that $\dpsols{\rvertex}{\ctarget, \wtarget}$ has odd cardinality, then there exists a connected dominating set $X$ of $G$ with $\cfct(X) \leq \budget$ by \cref{thm:cons_cut_family}.

To compute the sets $\dpsols{t}{\ctarget, \wtarget}$ via dynamic programming, we partition the partial solutions according to their states on the live labels $\livelabels{t}$. The state of a partial solution $(X_L, X_R, F) \in \dpsols{t}{}$ at a label $\ell \in \livelabels{t}$ is based on which of the sets $X_L$, $X_R$, and $F$ are intersected by $V_t^\ell$. To capture this, we make the following definition.

\begin{dfn}
  Let $t \in V(\augtree_\cexpr)$ be a node of $\augtree_\cexpr$ and $\ell \in \labels_{t}$ be a nonempty label. Given an ordered subpartition $(X_L, X_R, F)$ of $V_t$, we define the set $\stateset_t^\ell(X_L, X_R, F) \subseteq \{\sF, \sL, \sR\}$ by
  \begin{itemize}
    \item $\sF \in \stateset_t^\ell(X_L, X_R, F) \iff F \cap V_t^\ell \neq \emptyset$,
    \item $\sL \in \stateset_t^\ell(X_L, X_R, F) \iff X_L \cap V_t^\ell \neq \emptyset$,
    \item $\sR \in \stateset_t^\ell(X_L, X_R, F) \iff X_R \cap V_t^\ell \neq \emptyset$.
  \end{itemize}
\end{dfn}

Naively, using the sets $\stateset_t^\ell(X_L, X_R, F)$ would yield $2^3 = 8$ states per label. Note that also $\stateset_t^\ell(X_L, X_R, F) = \emptyset$ is sensible for $(X_L, X_R, F) \in \dpsols{t}{}$ and nonempty label $\ell \in \labels_t$, because this simply means $V_t^\ell \subseteq A = V_t \setminus (X_L \cup X_R \cup F)$, i.e., all vertices in $V_t^\ell$ are allowed to be dominated and not part of the partial dominating set $X_L \cup X_R$. Surprisingly, it turns out that all $\stateset \subseteq \{\sF, \sL, \sR\}$ with $|\stateset| \geq 2$ can be handled in the same way, since all such $\stateset$ can only be feasibly joined to the state $\emptyset$. This enables us to solve \CDS with 5 states per label instead of 8. We define the set $\states = \{\emptyset, \{\sF\}, \{\sL\}, \{\sR\}, \sQ\}$, where $\sQ$ is not a subset of $\{\sF, \sL, \sR\}$ but a formal symbol representing the subsets of size at least 2. 

\begin{dfn}
  Given a node $t \in V(\augtree_\cexpr)$, a \emph{$t$-signature} is a function $f \colon \livelabels{t} \rightarrow \states$. 
  
  A subpartition $(X_L, X_R, F)$ of $V_t$ is \emph{compatible} with a $t$-signature $f$ if for all $\ell \in \livelabels{t}$ the following two properties hold:
  \begin{itemize}
   \item $\stateset_t^\ell(X_L, X_R, F) = f(\ell)$ if $f(\ell) \neq \sQ$,
   \item $|\stateset_t^\ell(X_L, X_R, F)| \geq 2$ if $f(\ell) = \sQ$.
  \end{itemize}
  Furthermore, for every node $t \in V(\augtree_\cexpr)$, $t$-signature $f$, $\ctarget \in [0, \cfct(V)]$, and $\wtarget \in [0, \wfct(V)]$, we define $\dpsols{t}{\ctarget, \wtarget}(f) = \{ (X_L, X_R, F) \in \dpsols{t}{\ctarget, \wtarget} \sep (X_L, X_R, F) \text{ is compatible with } f\}$.
\end{dfn}

Unlike \CVC, there are no states in \CDS that can only appear at dead labels, since the state $\emptyset$ can be joined to any state without making the partial solution infeasible. Regardless, distinguishing live and dead labels remains useful, as we only want to require the domination of vertices with dead labels; vertices with live labels may be dominated by an edge that is missing at the current node. Hence, the dead nodes of $\augtree_\cexpr$ serve as natural nodes to apply the inclusion-exclusion step.

As usual, we do not compute the sets $\dpsols{t}{\ctarget, \wtarget}(f)$ directly, but the parity of their cardinality, i.e., $\dppoly{t}{\ctarget, \wtarget}(f) = |\dpsols{t}{\ctarget, \wtarget}(f)| \mod 2$. We now proceed by presenting the various recurrences for $\dppoly{t}{\ctarget, \wtarget}(f)$, given node $t \in V(\augtree_\cexpr)$, $t$-signature $f$, $\ctarget \in [0, \cfct(V)]$, $\wtarget \in [0, \wfct(V)]$, based on the type of the node $t$.

\subparagraph*{Introduce node.}
If $t = \ell(v)$ for some $\ell \in [\nlabels]$ and $v \in V$, then $\livelabels{t} = \{\ell\}$ as $v$ cannot be an isolated vertex by assumption and 
\begin{align*}
  \dppoly{t}{\ctarget, \wtarget}(f) & = [v \neq \fixedvertex \vee f(\ell) = \{\sL\}] \\
                                    & \,\,\cdot\,\, [(\ctarget = \wtarget = 0 \wedge f(\ell) \in \{\emptyset, \{\sF\}\}) \vee (\ctarget = \cfct(v) \wedge \wtarget = \wfct(v) \wedge f(\ell) \in \{\{\sL\}, \{\sR\}\})],
\end{align*}
first checking for the edge case that $v = \fixedvertex$ and then that $\ctarget$ and $\wtarget$ agree with the chosen state. Note that the state $\sQ$ cannot be achieved here.

\subparagraph*{Relabel node.} 
If $t = \relab{i}{j}(G_{t'})$, where $t'$ is the only child of $t$, then label $i$ and $j$ are nonempty at node $t'$, since $\cexpr$ is a nice expression. By irredundancy of the expression $\cexpr$, either $\{i,j\} \subseteq \deadlabels{t'}$ or $\{i,j\} \cap \deadlabels{t'} = \emptyset$. In the first case, we have that $j \in \deadlabels{t}$ and the recurrence is simply $\dppoly{t}{\ctarget, \wtarget}(f) = \dppoly{t'}{\ctarget, \wtarget}(f)$, since we do not store the state of dead labels and the domination requirement for dead labels remains satisfied.

In the second case, we have $j \in \livelabels{t}$ and we must iterate through all pairs of states that combine to the current state at label $j$. Since $V_t^j = V_{t'}^i \cup V_{t'}^j$, the recurrence is given by
\begin{equation*}
  \dppoly{t}{\ctarget, \wtarget}(f) = \sum_{\mathclap{\substack{\stateset_1, \stateset_2 \in \states\colon \\ \merge(\stateset_1, \stateset_2) = f(j)}}} \dppoly{t'}{\ctarget, \wtarget}(f[i \mapsto \stateset_1, j \mapsto \stateset_2]),
\end{equation*}
where $\merge \colon \states \times \states \rightarrow \states$ is given by:
\begin{equation*}
  \begin{array}{l|ccccc}
    \merge & \emptyset & \{\sF\} & \{\sL\} & \{\sR\} & \sQ \\
    \hline
    \emptyset & \emptyset & \{\sF\} & \{\sL\} & \{\sR\} & \sQ \\
    \{\sF\} & \{\sF\} & \{\sF\} & \sQ & \sQ & \sQ \\
    \{\sL\} & \{\sL\} & \sQ & \{\sL\} & \sQ & \sQ \\
    \{\sR\} & \{\sR\} & \sQ & \sQ & \{\sR\} & \sQ \\
    \sQ & \sQ & \sQ & \sQ & \sQ & \sQ \\
  \end{array}
\end{equation*}
Note that $\stateset_t^j(X_L, X_R, F) = \stateset_{t'}^i(X_L, X_R, F) \cup \stateset_{t'}^j(X_L, X_R, F)$ and if label $i$ had state $\stateset_1$ at $t'$ and label $j$ state $\stateset_2$ at $t'$, then label $j$ has state $\merge(\stateset_1, \stateset_2)$ at node $t$.

\subparagraph*{Join node.}
If $t = \join{i}{j}(G_{t'})$, $i \neq j$, where $t'$ is the only child of $t$, then we know that $i, j \in \livelabels{t'}$ since the expression $\cexpr$ is nice. Due to this join, the vertices with label $i$ or $j$ could receive their final incident edges, possibly leading to $V_t^i \subseteq \dead_t$ or $V_t^j \subseteq \dead_t$. If this happens, this join will be followed by up to two dead nodes. At this join, we filter out the partial solutions that are invalidated by the newly added edges and the recurrence is given by
\begin{equation*}
  \dppoly{t}{\ctarget, \wtarget}(f) = \feas(f(u), f(v)) \dppoly{t'}{\ctarget, \wtarget}(f),
\end{equation*}
where $\feas \colon \states \times \states \rightarrow \{0,1\}$ is given by the following table:
\begin{equation*}
\begin{array}{l|ccccc}
  \feas   & \emptyset & \{\sF\} & \{\sL\} & \{\sR\} & \sQ \\
   \hline
   \emptyset & 1 & 1 & 1 & 1 & 1 \\
   \{\sF\} & 1 & 1 & 0 & 0 & 0 \\
   \{\sL\} & 1 & 0 & 1 & 0 & 0 \\
   \{\sR\} & 1 & 0 & 0 & 1 & 0 \\
   \sQ & 1 & 0 & 0 & 0 & 0 \\
\end{array}
\end{equation*}
There are two possible reasons why a partial solution $(X_L, X_R, F)$ might be invalidated; a new edge connects $F$ and $X = X_L \cup X_R$ or a new edge connects $X_L$ and $X_R$. 

\subparagraph*{Dead node.}
Suppose that $t = \deadop{\ell}(G_{t'})$, where $\ell \in \labels'_t$ and $t'$ is the child of $t$. We have that $\livelabels{t} = \livelabels{t'} \setminus \{\ell\}$ and $\ell \in \deadlabels{t}$. Due to the definition of $\dpsols{t}{}$, we only want to count the partial solutions from $\dpsols{t'}{}$ that dominate $V_t^\ell$ completely. If $V_t^\ell$ contains only a single vertex, then this is easy to check with the given states: we could simply compute for a $t$-signature $f$
\begin{equation*}
  \dppoly{t'}{\ctarget, \wtarget}(f[\ell \mapsto \{\sL\}]) + \dppoly{t'}{\ctarget, \wtarget}(f[\ell \mapsto \{\sR\}]) + (\dppoly{t'}{\ctarget, \wtarget}(f[\ell \mapsto \emptyset]) - \dppoly{t'}{\ctarget, \wtarget}(f[\ell \mapsto \{\sF\}])),
\end{equation*}
where the last part is the inclusion-exclusion argument that checks that the vertex in $V_t^\ell$ is dominated. However, if $|V_t^\ell| \geq 2$ then we also need to handle the state $\sQ$ for which it is unclear how to incorporate it into an inclusion-exclusion argument as we do not even know the precise value of $\stateset_{t'}^\ell(X_L, X_R, F)$ in this case. Furthermore, also the previous inclusion-exclusion argument is invalid over $\ZZ$ if $|V_t^\ell| \geq 2$ as partial solutions with multiple undominated vertices in $V_t^\ell$ are counted several times by $\dppoly{t'}{\ctarget, \wtarget}(f[\ell \mapsto \{\sF\}])$.

Surprisingly, there is a very simple recurrence that avoids all these issues modulo 2. If $f$ is a $t$-signature, and $\ctarget \in [0,n]$, $\wtarget \in [0,2n^2]$, then the recurrence is given by
\begin{equation*}
  \dppoly{t}{\ctarget, \wtarget}(f) = \sum_{\stateset \in \states} \dppoly{t'}{\ctarget, \wtarget}(f[\ell \mapsto \stateset]).
\end{equation*}
This recurrence works because any partial dominating set containing exactly $u$ undominated vertices in $V_t^\ell$ is counted $2^u$ times by the right-hand side of the recurrence and hence cancels modulo 2 for $u > 0$. We proceed by giving the formal proof of correctness for this recurrence.

\begin{proof}  
  Fix $f$, $\ctarget$, and $\wtarget$. The left side counts the cardinality of $\dpsols{t}{\ctarget, \wtarget}(f)$ modulo 2 and the right side clearly computes $\sum_{\stateset \in \states} |\dpsols{t'}{\ctarget, \wtarget}(f[\ell \mapsto \stateset])|$ modulo 2. We have to prove that these terms agree. For readability, set $\dpsols{LHS}{} := \dpsols{t}{\ctarget, \wtarget}(f)$ and $\dpsols{RHS}{} := \bigcup_{\stateset \in \states}\dpsols{t'}{\ctarget, \wtarget}(f[\ell \mapsto \stateset])$. 
  
  Recall that $G_t = G_{t'}$, $V_t^i = V_{t'}^i$ for all $i \in \labels_t$, and $\deadlabels{t} = \deadlabels{t'} \cup \{\ell\}$ with $\ell \notin \deadlabels{t'}$. First, notice that $\dpsols{LHS}{} \subseteq \dpsols{RHS}{}$, since for every possibility of $\stateset_{t'}^\ell(X_L, X_R, F)$, there is an $\stateset \in \states$ such that $(X_L, X_R, F)$ is compatible with $f[\ell \mapsto \stateset]$ at node $t'$. Also note that the sets $\dpsols{t'}{\ctarget, \wtarget}(f[\ell \mapsto \stateset])$ are disjoint for distinct $\stateset \in \states$ and hence every element of $\dpsols{RHS}{}$ is counted exactly once on the right side of the equation. 
  
  Next, we see that for any $(X_L, X_R, F) \in \dpsols{RHS}{}$ that $(X_L, X_R, F) \in \dpsols{LHS}{}$ holds if and only if $V_t^\ell \subseteq N_{G_t}[X]$, where $X = X_L \cup X_R$, since the only requirements that change are the compatibility with $f$, which is easier to satisfy at $t$ than at $t'$, and the requirement that all dead labels are dominated which requires $V_t^\ell \subseteq N_{G_t}[X]$ in addition to $(X_L, X_R, F) \in \dpsols{RHS}{}$.
  
  It remains to show that $\dpsols{RHS}{} \setminus \dpsols{LHS}{}$ contains an even number of elements and hence cancels modulo 2. Consider $(X_L, X_R, F) \in \dpsols{RHS}{} \setminus \dpsols{LHS}{}$, let $X = X_L \cup X_R$ and let $U_t^\ell := U_t^\ell(X_L, X_R) := V_t^\ell \setminus N_{G_t}[X]$ be the set of undominated vertices with label $\ell$ at node $t$. From $(X_L, X_R, F) \in \dpsols{RHS}{} \setminus \dpsols{LHS}{}$ it follows that $U_t^\ell \neq \emptyset$, as otherwise $V_t^\ell \subseteq N_{G_t}[X]$ which contradicts $(X_L, X_R, F) \notin \dpsols{LHS}{}$ by the previous paragraph. For any $F' \subseteq U_t^\ell$, we see that $(X_L, X_R, (F \setminus V_t^\ell) \cup F') \in \dpsols{RHS}{} \setminus \dpsols{LHS}{}$, since $U_t^\ell$ remains unchanged but we only change which vertices are declared forbidden. Since $U_t^\ell \neq \emptyset$, there is an even number of choices, namely $2^{|U_t^\ell|}$ many, for $F'$. 
  
  Fixing some $\emptyset \neq U \subseteq V_t^\ell$, this shows there are an even number of elements $(X_L, X_R, F) \in \dpsols{RHS}{} \setminus \dpsols{LHS}{}$ with $U_t^\ell(X_L, X_R) = U$. Since every element of $\dpsols{RHS}{} \setminus \dpsols{LHS}{}$ is covered by some choice of $U$, it follows that the cardinality of $\dpsols{RHS}{} \setminus \dpsols{LHS}{}$ is even. 
\end{proof}

\subparagraph*{Union node.} 
We have that $t = G_{t_1} \union G_{t_2}$, where $t_1$ and $t_2$ are the children of $t$ and we have $\livelabels{t} = \livelabels{t_1} \cup \livelabels{t_2}$. Given a $t$-signature $f$, we consider the union-split $f_{t,1}$, $f_{t,2}$, $f_{t,12}$ of $f$ at $t$, cf.~\cref{dfn:union_split}. For every label $\ell \in \splitlabels{t}{12} = \livelabels{t_1} \cap \livelabels{t_2}$, we need to consider all states $\stateset_1, \stateset_2 \in \states$ such that $\merge(\stateset_1, \stateset_2) = f(\ell)$, where $\stateset_i$ is the state of label $\ell$ at $t_i$. For two functions $g,h\colon B \rightarrow \states$, where $B$ is some set, we write $\merge(g,h) \colon B \rightarrow \states$ for the \emph{componentwise application} of $\merge$, i.e.\ $\merge(g,h)(\ell) = \merge(g(\ell), h(\ell))$ for all $\ell \in B$. Furthermore, we have to distribute $\ctarget$ and $\wtarget$ among the partial solutions at $t_1$ and the partial solutions at $t_2$, leading to the recurrence
\begin{equation*}
  \dppoly{t}{\ctarget, \wtarget}(f) = \sum_{\substack{\ctarget_1 + \ctarget_2 = \ctarget \\ \wtarget_1 + \wtarget_2 = \wtarget}} \sum_{\substack{g_1, g_2\colon \splitlabels{t}{12} \rightarrow \states \colon \\ \merge(g_1, g_2) = f}} \dppoly{t_1}{\ctarget_1, \wtarget_1}(g_1 \cup f_{t,1}) \dppoly{t_2}{\ctarget_2, \wtarget_2}(g_2 \cup f_{t,2}),
\end{equation*}
where we consider $g_i$ and $f_{t,i}$ as sets in $g_i \cup f_{t,i}$ for $i \in [2]$.
Here, we simply fix the state for the labels that are live at only one child by using the parts $f_{t,1}$ and $f_{t,2}$ and for the labels that are live at both children we sum over all valid state combinations.

 We now argue how this recurrence can be computed in time $\Oh^*(5^{|\livelabels{t}|})$ for fixed $\ctarget \in [0, \cfct(V)]$, $\wtarget \in [0, \wfct(V)]$, and for all $t$-signatures $f$. First, we branch on the numbers $\ctarget_1, \ctarget_2 \in [0, \cfct(V)]$, $\wtarget_1, \wtarget_2 \in [0, \wfct(V)]$, and functions $h_1 \colon \splitlabels{t}{1} \rightarrow \states$, $h_2 \colon \splitlabels{t}{2} \rightarrow \states$. Having fixed these choices, we can calculate the inner sum of the recurrence for all $t$-signatures $f$ with $f_{t,1} = h_1$ and $f_{t,2} = h_2$, by setting $B_{i}(g) := \dppoly{t_i}{\ctarget_i, \wtarget_i}(g \cup h_i)$, for $i \in [2]$ and $g\colon \splitlabels{t}{12} \rightarrow \states$, and computing the \emph{CDS-product}
\begin{equation*}
  (B_1 \otimes_{CDS} B_2)(g) := \sum_{\mathclap{\substack{g_1, g_2\colon \splitlabels{t}{12} \rightarrow \states \colon \\ \merge(g_1, g_2) = g }}} B_1(g_1) B_2(g_2)
\end{equation*}
for all $g\colon \splitlabels{t}{12} \rightarrow \states$. By the forthcoming \cref{thm:fast_cds_product}, we can compute the CDS-product in time $\Oh^*(5^{|\splitlabels{t}{12}|})$. Since there are $\Oh^*(5^{|\splitlabels{t}{1}| + |\splitlabels{t}{2}|})$ branches and $\livelabels{t}$ is partitioned into the three sets $\splitlabels{t}{1}$, $\splitlabels{t}{2}$, $\splitlabels{t}{12}$ by \cref{thm:partition_union_split}, we need time $\Oh^*(5^{|\livelabels{t}|})$ to compute the recurrence for all choices of $\ctarget$, $\wtarget$, and $f$.

\begin{lem}\label{thm:fast_cds_product}
  Given two tables $B_1, B_2 \colon \states^{\idxset} \rightarrow \ZZ_2$, where $I$ is some index set, their CDS-product $B_1 \otimes_{CDS} B_2 \colon \states^{\idxset} \rightarrow \ZZ_2$ can be computed in time $\Oh^*(|\states|^{|\idxset|}) = \Oh^*(5^{|\idxset|})$.
\end{lem}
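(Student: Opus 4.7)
The plan is to view the CDS-product as a join convolution over a product lattice and then apply standard zeta/Möbius-transform machinery. First I would observe that the $\merge$-table makes $(\states, \merge)$ into a lattice: declaring $\emptyset$ the bottom, $\sQ$ the top, and $\{\sF\}, \{\sL\}, \{\sR\}$ three pairwise incomparable atoms between them (the modular lattice commonly denoted $M_3$), one reads off from the $\merge$-table that $\merge(x,y) = x \vee y$. The componentwise extension of $\merge$ is then precisely the join operation of the product lattice $\states^{\idxset}$, so that the CDS-product $B_1 \otimes_{CDS} B_2$ is exactly the join convolution on $\states^{\idxset}$ taken over the ring $\ZZ_2$.

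Next I would compute, for $i \in \{1,2\}$, the zeta transform $\hat{B}_i(g) = \sum_{g' \leq g} B_i(g')$ with respect to the componentwise lattice order. Since $\states^{\idxset}$ is a product lattice, this transform can be carried out coordinate by coordinate: fixing an ordering of $\states$ in which the zeta matrix is unit upper triangular, I would repeatedly apply the fixed $5 \times 5$ zeta matrix of $\states$ to the one-dimensional slices, for each of the $|\idxset|$ coordinates in turn. This takes $\Oh^*(5^{|\idxset|})$ time, and the same bound applies to the inverse (Möbius) transform because a unit upper-triangular integer matrix remains invertible over every commutative ring and in particular over $\ZZ_2$.

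With the transformed tables in hand, the standard identity for join convolution yields
\begin{equation*}
  \widehat{B_1 \otimes_{CDS} B_2}(g) \;=\; \sum_{g' \leq g}\;\sum_{\merge(g_1,g_2) = g'} B_1(g_1)\,B_2(g_2) \;=\; \sum_{g_1 \vee g_2 \leq g} B_1(g_1)\,B_2(g_2) \;=\; \hat{B}_1(g)\cdot\hat{B}_2(g),
\end{equation*}
where the second equality uses that $g_1 \vee g_2 \leq g$ iff $g_1, g_2 \leq g$. So I would multiply the two transformed tables pointwise in $\Oh^*(5^{|\idxset|})$ time and then apply the inverse zeta transform to recover $B_1 \otimes_{CDS} B_2$ within the same total time budget.

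The only place where care is needed is the passage to characteristic two: the classical Möbius function of $M_3$ takes the value $2$ at $(\emptyset, \sQ)$, which vanishes modulo two, but this is harmless because the zeta matrix itself has integer unit-upper-triangular form, is therefore invertible in any commutative ring, and the identity above remains valid verbatim in $\ZZ_2$. I do not expect any other obstacle; the whole argument is essentially a direct instance of the lattice-convolution framework of Björklund et al.~\cite{BjorklundHKKNP16} specialised to the five-element lattice $(\states, \merge)$.
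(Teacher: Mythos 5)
Your proof is correct, and it takes a related but more self-contained route than the paper's. You identify $(\states, \merge)$ abstractly as the five-element modular lattice $M_3$ and then directly implement the fast join-convolution on $M_3^{\idxset}$ by running a coordinate-by-coordinate (Yates-style) zeta transform with the fixed $5\times 5$ transform matrix, multiplying pointwise, and inverting. The paper instead identifies $\states$ with the concrete sublattice $\{\emptyset,\{\sF\},\{\sL\},\{\sR\},\{\sF,\sL,\sR\}\}$ of $\powerset{\{\sF,\sL,\sR\}}$ via a bijection $\kappa$, and then invokes an off-the-shelf black box (the lattice $\vee$-product algorithm of Björklund et al.\ in ``join representation'', via \cref{thm:power_lattice_fast_product}). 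What your version buys is that a reader does not need to unpack the join-representation framework or the count of join-irreducibles; what the paper's version buys is modularity and a clean citation to existing machinery. Your remark on the Möbius function is well taken and correctly resolved: $\mu(\emptyset,\sQ)=2$ vanishes over $\ZZ_2$, but since the integer zeta matrix is unit triangular and $Z M = I$ over $\ZZ$, reducing mod $2$ shows the reduced Möbius matrix is still the inverse over $\ZZ_2$, so the transform/inverse-transform pair and the identity $\widehat{B_1\otimes_{CDS}B_2}=\hat B_1\cdot\hat B_2$ go through verbatim. One minor caution worth stating explicitly if you wrote this up formally: the factorization of the zeta transform over $\lattice^{\idxset}$ into a tensor product of $|\idxset|$ copies of the $5\times 5$ zeta matrix is what justifies the coordinate-by-coordinate computation in $\Oh(|\idxset|\cdot 5^{|\idxset|})$ time; you allude to it but it deserves a sentence.
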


\begin{proof}
  Consider the set family $\lattice = \{\emptyset, \{\sF\}, \{\sL\}, \{\sR\}, \{\sF, \sL, \sR\}\} \subseteq \powerset{\universe}$ over $\universe = \{\sF, \sL, \sR\}$ and the partial order on $\lattice$ induced by set inclusion $\subseteq$. It is easy to verify that this forms a \emph{lattice}, since every pair of elements has a greatest lower bound (\emph{meet}/$\wedge$) and a least upper bound (\emph{join}/$\vee$). In particular, the least upper bounds are given by:
  \begin{equation*}
    \begin{array}{l|ccccc}
      \vee & \emptyset & \{\sF\} & \{\sL\} & \{\sR\} & U \\
      \hline
      \emptyset & \emptyset & \{\sF\} & \{\sL\} & \{\sR\} & U \\
      \{\sF\} & \{\sF\} & \{\sF\} & U & U & U \\
      \{\sL\} & \{\sL\} & U & \{\sL\} & U & U \\
      \{\sR\} & \{\sR\} & U & U & \{\sR\} & U  \\
      U & U & U & U & U & U \\
    \end{array}
  \end{equation*}
  The bijection $\kappa\colon \states \rightarrow \lattice$, with $\kappa(\stateset) = \stateset$ for $\stateset \neq \sQ$ and $\kappa(\sQ) = U = \{\sF, \sL, \sR\}$, turns $\merge$ on $\states$ into $\vee$ on $\lattice$, i.e., $\kappa(\merge(\stateset_1, \stateset_2)) = \kappa(\stateset_1) \vee \kappa(\stateset_2)$ for all $\stateset_1, \stateset_2 \in \states$. Hence, we can write
  \begin{align*}
    (B_1 \otimes_{CDS} B_2)(g) & = \sum_{\substack{g_1, g_2\colon \idxset \rightarrow \states \colon \\ \merge(g_1, g_2) = g}} B_1(g_1) B_2(g_2) \\
    & = \sum_{\substack{h_1, h_2\colon \idxset \rightarrow \lattice \colon \\ h_1(i) \vee h_2(i) = \kappa(g(i)) \,\forall i \in \idxset}} B_1(\kappa^{-1} \circ h_1) B_2(\kappa ^{-1} \circ h_2) \\
    & = \sum_{\substack{h_1, h_2 \in \lattice^{\idxset} \colon \\ h_1 \vee h_2 = \kappa \circ g}} B_1(\kappa^{-1} \circ h_1) B_2(\kappa ^{-1} \circ h_2) = \sum_{\substack{h_1, h_2 \in \lattice^{\idxset} \colon \\ h_1 \vee h_2 = \kappa \circ g}} B'_1(h_1) B'_2(h_2) \\
    & = (B'_1 \otimes_{\lattice^{\idxset}} B'_2)(\kappa \circ g),
  \end{align*}
  where $B'_j(h) = B_j(\kappa^{-1} \circ h)$ for $h \in \lattice^{\idxset}$, $j = 1,2$, and $\vee$ is the join in $\lattice$ or $\lattice^{\idxset}$ depending on the context. By identifying $\lattice^{\idxset}$ with $\lattice^{|\idxset|}$ in the natural way, we can therefore apply \cref{thm:power_lattice_fast_product} to compute the CDS-product in time $\Oh^*(|\lattice|^{|\idxset|}) = \Oh^*(5^{|\idxset|})$ as the calls to $\algo_\lattice$ can be answered in constant time.
\end{proof}

\begin{lem}\label{thm:cds_count_part}
  Given a nice $\nlabels$-expression $\cexpr$ of a graph $G$, there is an algorithm that computes the quantities $\dppoly{t}{\ctarget,\wtarget}(f)$ for all nodes $t \in V(\tree_\cexpr)$, all $t$-signatures $f$, and all $\ctarget \in [0,\cfct(V)]$, $\wtarget \in [0,2n^2]$, in time $\Oh^*(5^{\nlabels})$.
\end{lem}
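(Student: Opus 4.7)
The plan is to essentially mirror the proof of \cref{thm:cvc_count_part}, namely perform bottom-up dynamic programming along the augmented syntax tree $\augtree_\cexpr$ using the case-by-case recurrences derived in the preceding subsection. At every node $t$ the table has at most $|\states|^{|\livelabels{t}|} \cdot (\cfct(V)+1)(\wfct(V)+1) = \Oh^*(5^{|\livelabels{t}|})$ entries, since $\cfct(V) \leq n^{\Oh(1)}$ by assumption and $\wfct(V) \leq 2n^2$. The target running time $\Oh^*(5^\nlabels)$ thus follows if the table at every node can be filled in time proportional to its size (up to polynomial factors), using $|\livelabels{t}| \leq \nlabels$.

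For introduce, relabel, join, and dead nodes, the recurrence for a single entry $\dppoly{t}{\ctarget,\wtarget}(f)$ is a constant-size sum of products of previously computed entries (with sums of size at most $|\states|^2 = 25$ in the relabel case and $|\states| = 5$ in the dead case), so each entry is computed in polynomial time from the child(ren)'s tables. Hence the entire table at such a node is filled within $\Oh^*(5^{|\livelabels{t}|})$.

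The main obstacle is the union node, where a naive evaluation of the recurrence loops over all pairs $(g_1, g_2)$ with $\merge(g_1, g_2) = f_{t,12}$ and hence costs $\Oh^*(5^{|\splitlabels{t}{1}|+|\splitlabels{t}{2}|} \cdot 25^{|\splitlabels{t}{12}|})$, which is too slow. I would handle it exactly as sketched right after the recurrence: branch over the $n^{\Oh(1)}$ choices of $(\ctarget_1,\ctarget_2,\wtarget_1,\wtarget_2)$ and the $5^{|\splitlabels{t}{1}|+|\splitlabels{t}{2}|}$ choices of $h_1 = f_{t,1}$, $h_2 = f_{t,2}$; fix such a branch, define $B_i(g) = \dppoly{t_i}{\ctarget_i,\wtarget_i}(g \cup h_i)$ for $g\colon \splitlabels{t}{12} \rightarrow \states$, and invoke the forthcoming \cref{thm:fast_cds_product} to obtain the whole CDS-product $B_1 \otimes_{CDS} B_2$ in time $\Oh^*(5^{|\splitlabels{t}{12}|})$. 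Since \cref{thm:partition_union_split} gives $|\splitlabels{t}{1}|+|\splitlabels{t}{2}|+|\splitlabels{t}{12}| = |\livelabels{t}| \leq \nlabels$, the total work at the union node is $\Oh^*(5^{|\livelabels{t}|}) = \Oh^*(5^\nlabels)$.

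Correctness of each recurrence has essentially already been argued inline, with the only nontrivial case being the dead node whose modulo-$2$ cancellation of configurations having undominated vertices in the dying label was proved directly. Combining this with the polynomial size of $\augtree_\cexpr$ gives a total running time of $\Oh^*(5^\nlabels)$.
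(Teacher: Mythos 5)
Your proof is correct and follows essentially the same approach as the paper's: bottom-up dynamic programming along the augmented syntax tree, with polynomial-time evaluation of the recurrences at introduce/relabel/join/dead nodes, the branching plus CDS-product technique (via \cref{thm:fast_cds_product} and \cref{thm:partition_union_split}) at union nodes, and deferring the correctness of the dead-node recurrence to the modulo-2 cancellation argument already given inline. The only cosmetic difference is that the paper additionally highlights why $\feas$ and $\merge$ correctly encode the join/relabel semantics, whereas you fold this into ``correctness already argued inline,'' which is acceptable given how the recurrences were derived.
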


\begin{proof}  
  The algorithm proceeds by bottom-up dynamic programming along the augmented syntax tree $\augtree_\cexpr$ of the nice clique-expression $\cexpr$ and computes the quantities $\dppoly{t}{\ctarget,\wtarget}(f)$ via the given recurrences. For an introduce node, relabel node, join node, or dead node, the recurrence for $\dppoly{t}{\ctarget,\wtarget}(f)$ for fixed $f$, $\ctarget$, and $\wtarget$, can clearly be computed in polynomial time, since additions and multiplications in $\ZZ_2$ take constant time. For a union node $t$, we have argued how to compute the recurrences for all $f$, $\ctarget$, and $\wtarget$ simultaneously in time $\Oh^*(5^{|\livelabels{t}|})$. As $\cexpr$ is a $\nlabels$-expression, we have $|\livelabels{t}| \leq |\labels_t| \leq \nlabels$ for all $t \in V(\tree_\cexpr)$ and in particular at most $5^\nlabels$ $t$-signatures for any node $t \in V(\tree_\cexpr)$. Hence, the running time follows.
  
  It remains to prove the correctness of the recurrences. We have already proven the correctness of the recurrence for the dead nodes. For the other node types, the proofs are straightforward, we highlight some interesting parts. 
  
  For the join node, we highlight that the function $\feas \colon \states \times \states \rightarrow \{0,1\}$ satisfies $\feas(\stateset_1, \stateset_2) = 1 - [\stateset_1 \neq \emptyset][\stateset_2 \neq \emptyset][\stateset_1 = \stateset_2 = \sQ \vee \stateset_1 \neq \stateset_2]$ which characterizes the state pairs where adding a join between their underlying labels results in an edge between $X = X_L \cup X_R$ and $F$ or an edge across the cut $(X_L, X_R)$, hence yielding an infeasible solution.
  
  For relabel and union nodes, we highlight that for any two subsets $\stateset_1, \stateset_2 \subseteq \{\sF, \sL, \sR\}$ we have $\rho(\stateset_1 \cup \stateset_2) = \merge(\rho(\stateset_1), \rho(\stateset_2))$, where $\rho \colon \powerset{\{\sF, \sL, \sR\}} \rightarrow \states$ with $\rho(\stateset) = \stateset$ if $|\stateset| \leq 1$ and $\rho(\stateset) = \two_+$ if $|\stateset| \geq 2$. Hence, $\merge$ correctly updates the state for relabel and union nodes.
\end{proof}

\begin{thm}
  There is a randomized algorithm that given a nice $\nlabels$-expression $\cexpr$ for a graph $G = (V, E)$ can solve \CDS in time $\Oh^*(5^k)$. The algorithm does not return false positives and returns false negatives with probability at most $1/2$.
\end{thm}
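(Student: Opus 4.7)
The plan is to closely mirror the \CVC proof above, since \cref{thm:cds_count_part} already encapsulates essentially all of the algorithmic work. First, I would sample a weight function $\wfct\colon V \rightarrow [2n]$ uniformly at random for the isolation lemma. Since \CDS lacks an analogue of the ``pick an edge and branch'' trick used for \CVC, I would instead iterate over all $n$ choices of $\fixedvertex \in V$ in the outer loop, each contributing only a polynomial factor to the running time. For each $\fixedvertex$, I would invoke \cref{thm:cds_count_part} to compute $\dppoly{\rvertex}{\ctarget, \wtarget}(\emptyset)$ for every $\ctarget \in [0, \cfct(V)]$ and $\wtarget \in [0, 2n^2]$, where $\rvertex$ is the root of $\augtree_\cexpr$; note that $\livelabels{\rvertex} = \emptyset$ forces the empty signature. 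The algorithm returns true iff some branch produces a nonzero value at some $\ctarget \leq \budget$. The total running time is $n \cdot \Oh^*(5^k) = \Oh^*(5^k)$.

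For one-sided correctness, I would observe that at the root every label is dead, so by definition $\dpsols{\rvertex}{\ctarget, \wtarget}$ consists exactly of triples $(X_L, X_R, \emptyset)$ where $X = X_L \cup X_R$ is a dominating set of $G$ with $\fixedvertex \in X_L$, $(X_L, X_R)$ is a consistent cut of $G[X]$, $\cfct(X) = \ctarget$, and $\wfct(X) = \wtarget$. Applying \cref{thm:cons_cut_family} to this family, an odd value of $|\dpsols{\rvertex}{\ctarget, \wtarget}|$ witnesses the existence of a connected dominating set of cost $\ctarget \leq \budget$, so the algorithm has no false positives.

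For the success probability, suppose $G$ admits a connected dominating set of cost at most $\budget$. I would fix some vertex $\fixedvertex$ belonging to such a set and restrict attention to the corresponding outer iteration; the family $\sols$ of connected dominating sets of $G$ that contain $\fixedvertex$ and have cost at most $\budget$ is then nonempty. By \cref{thm:isolation} applied with $U = V$ and $\wmax = 2n$, $\wfct$ isolates $\sols$ with probability at least $1 - n/(2n) = 1/2$. Conditioned on isolation, let $X^* \in \sols$ be the unique minimum-weight element and set $\ctarget = \cfct(X^*) \leq \budget$ and $\wtarget = \wfct(X^*)$. By \cref{thm:cons_cut}, $X^*$ contributes $2^{\cc(G[X^*]) - 1} = 1$ to $|\dpsols{\rvertex}{\ctarget, \wtarget}|$, any disconnected dominating set counted there contributes an even number, and no other connected dominating set through $\fixedvertex$ with cost at most $\budget$ shares the weight $\wtarget$. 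Hence $\dppoly{\rvertex}{\ctarget, \wtarget}(\emptyset) \equiv_2 1$ and the algorithm returns true. The main thing to get right, and essentially the only nontrivial point in this wrapper, is to apply the isolation lemma to exactly the family of connected dominating sets containing $\fixedvertex$ with cost at most $\budget$, so that after mod-$2$ cancellation of disconnected candidates the surviving element is guaranteed to be a valid bounded-cost connected solution.
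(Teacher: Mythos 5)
Your proof is correct and follows essentially the same approach as the paper: sample weights for isolation, branch on a choice of $\fixedvertex$, invoke the count routine, and read off the answer at the root via \cref{thm:cons_cut_family} and \cref{thm:cons_cut}. The only difference is cosmetic: you iterate over all $n$ vertices as candidates for $\fixedvertex$, whereas the paper does have an analogue of the edge-endpoint trick — it branches over $N_G[v]$ for an arbitrary fixed $v$, since any dominating set must hit a closed neighborhood — but both incur only polynomial overhead, and your more careful specification of the family to which the isolation lemma is applied is entirely sound.
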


\begin{proof}
  We begin by sampling a weight function $\wfct\colon V \rightarrow [2n]$ uniformly at random. Then, we pick an arbitrary vertex $v$ and branch on its closed neighborhood $N_G[v]$, since every dominating set intersects $N_G[v]$ in at least one vertex; the chosen vertex takes the role of $\fixedvertex$ in the current branch. We then run the algorithm of \cref{thm:cds_count_part} to compute the quantities $\dppoly{t}{\ctarget, \wtarget}(f)$.
  At the root, we have that $\livelabels{\rvertex} = \emptyset$. The algorithm returns true if there is some branch and choice of $\ctarget \leq \budget$, $\wtarget \in [0, 2n^2]$, such that $\dppoly{\rvertex}{\ctarget, \wtarget}(\emptyset) \neq 0$, otherwise the algorithm returns false.
  
  The running time directly follows from \cref{thm:cds_count_part}. For the correctness, first note that at the root, we have $\dpsols{\rvertex}{\ctarget, \wtarget}(\emptyset) = \dpsols{\rvertex}{\ctarget, \wtarget}$. Since all labels are dead at $\rvertex$, we have for any $(X_L, X_R, F) \in \dpsols{\rvertex}{}$ that $(X, (X_L, X_R)) \in \cuts(G)$, $\fixedvertex \in X_L$, $V \subseteq N_G[X]$ and hence $F = \emptyset$, where $X = X_L \cup X_R$. So, $X$ is a dominating set of $G$ that contains $\fixedvertex$ and vice versa $\dpsols{\rvertex}{}$ contains all consistent cuts of such dominating sets. The algorithm only returns true, if there are some $\ctarget \in [0, \budget]$, $\wtarget \in [0, 2n^2]$ such that $\dpsols{\rvertex}{\ctarget, \wtarget}$ has odd cardinality. Defining $\rsols = \{X \subseteq V \sep V \subseteq N_G[X], \fixedvertex \in X\}$ and $\rsols^{\ctarget, \wtarget} = \{X \in \rsols \sep |X| = \ctarget, \wfct(X) = \wtarget\}$, this implies that there exists a connected dominating set $X$ of $G$ with $\cfct(X) = \ctarget \leq \budget$ by \cref{thm:cons_cut_family}, hence the algorithm does not return false positives.
  
  For the error probability, suppose that the weight function $\wfct$ isolates an optimum connected dominating set $X^*$ and that $\cfct(X^*) \leq \budget$; by \cref{thm:isolation}, this happens with probability $\geq 1/2$. In a branch with $\fixedvertex \in X^*$, we have that $X^* \in \rsols^{\ctarget, \wtarget}$, where $\ctarget = |X^*|$ and $\wtarget = \wfct(X^*)$. By \cref{thm:cons_cut}, the connected dominating set contributes an odd number to $|\dpsols{\rvertex}{\ctarget, \wtarget}|$ and all other contributing sets $Y \in \rsols^{\ctarget, \wtarget}$ cannot be connected due to isolation and therefore contribute an even number to $|\dpsols{\rvertex}{\ctarget, \wtarget}|$. Therefore $\dppoly{\rvertex}{\ctarget, \wtarget}(\emptyset) \equiv_2 |\dpsols{\rvertex}{\ctarget, \wtarget}| = 1 \neq 0$ and hence the algorithm returns true with probability at least $1/2$ given a positive instance.
\end{proof}

\section{Lower Bounds}
\label{sec:cw_lb}

\subparagraph*{Construction Principle.} In this section, we prove the tight lower bounds for \CVC and \CDS parameterized by linear clique-width. The high-level construction principle follows the style of Lokshtanov et al.~\cite{LokshtanovMS18}. That means the resulting graphs can be interpreted as a \emph{matrix of blocks}, where each block spans several rows and columns. Every row is a long \emph{path-like gadget} that simulates a constant number of variables of the \SAT instance and which contributes 1 unit of linear clique-width. The number of simulated variables is tied to the running time that we want to rule out. For technical reasons, we consider bundles of rows simulating a \emph{variable group} of appropriate size. Every column corresponds to a clause and consists of gadgets that \emph{decode} the states on the path gadgets and check whether the resulting assignment satisfies the clause; considering only one clause per column expedites building a graph of low width.

\subparagraph*{Path Gadgets.} In both lower bounds, the main technical contribution is the design of the \emph{path gadgets}. Whereas the design of the \emph{decoding gadgets} can be adapted from known constructions. There is one path gadget at the intersection of each row and column. The goal is to construct a path gadget admitting as many states as possible for the target problem, since the number of such states corresponds to the base of the running time for which we obtain a lower bound. Since every row should contribute one unit of linear clique-width, we must connect adjacent path gadgets in a row with a join, thus restricting the design space.

\subparagraph*{State Transitions.} When proving that a solution to the target problem yields a satisfying assignment for \SAT, we require that the state remains \emph{stable} along each \emph{row}. Otherwise, if the states were allowed to change, one could pick a satisfying assignment for each clause \emph{separately}, which does not necessarily lead to an assignment satisfying \emph{all} clauses simultaneously. Unfortunately, it is often not possible to construct path gadgets where the state remains stable. However, we can construct path gadgets where the states can only transition in a controlled way.

\subparagraph*{Controlling the State Transitions.} Suppose that there is some \emph{transition order} on the states, i.e., state $i$ can transition to state $j$ only if $i \leq j$; this implies that the state can change only a finite number of times along each row. By making the rows long enough and repeating clauses, the pigeonhole principle allows us to find a region of columns spanning all clauses where all states remain stable, hence obtaining the same assignment for every clause. Therefore, our goal is to find a large set of states and an appropriate path gadget admitting such a transition order.

\subparagraph*{Determining the Transition Order.} For sparse width-parameter such as pathwidth, determining an appropriate transition order is much simpler, because the number of possible states is very limited, e.g., there are at most four vertex states for the considered benchmark problems. The possible set of states for clique-width is much larger and usually we need to select a specific subset of states, as not all of them admit a transition order. Since adjacent path gadgets are connected by a join, some state transitions are \emph{forced} due to the joins. Hence, we begin by analyzing the possible state transitions across a join and obtain a \emph{transition/compatibility} matrix showing which pairs of states can lead to a globally feasible solution and which cannot. After possibly reordering the rows and columns of the compatibility matrix, a transition order must induce a \emph{triangular submatrix} with ones on the diagonal, which represent that states can remain stable. Hence, a large triangular submatrix of the compatibility matrix serves as our starting point for designing an appropriate path gadget.

\subparagraph*{Anatomy of a Path Gadget.} Our path gadgets consist of three parts: a \emph{central clique} that communicates with the decoding gadgets, and two \emph{boundary} parts, i.e., the \emph{left} and \emph{right} part that connect to the previous and following join, respectively. In the central clique, each solution will avoid exactly one vertex representing the state of the path gadget. To implement the transition order, the left and right part have to communicate appropriate states to the two adjacent path gadgets. By taking the triangular submatrix and \emph{pairing} states along the main diagonal, where the pairs might be asymmetric due to the reordering of rows and columns, we see which states must be communicated in each case.

\subparagraph*{Designing the Boundary of a Path Gadget.} The state of a vertex in a partial solution consists of several properties: whether it is contained in the partial solution, a connectivity property, and possibly whether it is dominated (for \CDS). Our strategy for designing the boundary parts is to \emph{isolate} these properties for the vertices incident to the joins and represent by pairs of \emph{indicator vertices} whether each property is satisfied or not. We ensure that a solution can only pick one vertex per indicator pair, thus enabling simple communication between the central and boundary parts. This concludes the description of the high-level ideas for constructing the path gadgets.

\subparagraph*{Root-Connectivity.} To capture the connectivity constraint of \CVC and \CDS, we create a distinguished vertex $\rvertex$ called the \emph{root} and by attaching a vertex of degree 1 to $\rvertex$ we ensure that every connected vertex cover or connected dominating set has to contain $\rvertex$. Given a vertex subset $X \subseteq V(G)$ with $\rvertex \in X$, we say that a vertex $v \in X$ is \emph{root-connected} in $X$ if there is a $v,\rvertex$-path in $G[X]$. We will just say \emph{root-connected} if $X$ is clear from the context. The graph $G[X]$ is connected if and only if all vertices of $X$ are root-connected in $X$. For the state of a partial solution $X$, it is important to consider which vertices are root-connected in $X$ and which are not. 

\subsection{Connected Vertex Cover}
\label{sec:cw_cvc_lb}

This subsection is devoted to proving that \CVC (with unit costs) cannot be solved in time $\Oh^*((6-\eps)^{\lcw(G)})$ for some $\eps > 0$ unless the \CNFSETH fails. We first design the path gadget, approaching the design as presented in the outline, and analyze it in isolation and afterwards we present the complete construction and correctness proofs. The decoding gadgets are directly adapted from the lower bound for \CVC parameterized by pathwidth given by Cygan et al.~\cite{CyganNPPRW11}. 

\subsubsection{Path Gadget}

To rule out the running time $\Oh^*((6-\eps)^{\lcw(G)})$ for any $\eps > 0$, we have to build a path gadget that admits 6 distinct states and narrows down to a single label, so that each row of the construction contributes one unit of linear clique-width. As discussed previously, we begin by analyzing the possible behaviors of a partial solution on a label. 

Each single vertex $v$ has one of 3 states with respect to a partial solution $X$: $v \notin X$ (state $\zero$), $v \in X$ and $v$ is root-connected (state $\one_1$) or not (state $\one_0$). First, we observe that it is irrelevant how often each vertex state appears inside a label, rather we only care whether a vertex state appears in a label or not. Hence, we can describe the state of a label as a subset of $\{\zero, \one_0, \one_1\}$, where the empty subset is excluded, as we do not consider empty labels.

\begin{figure}
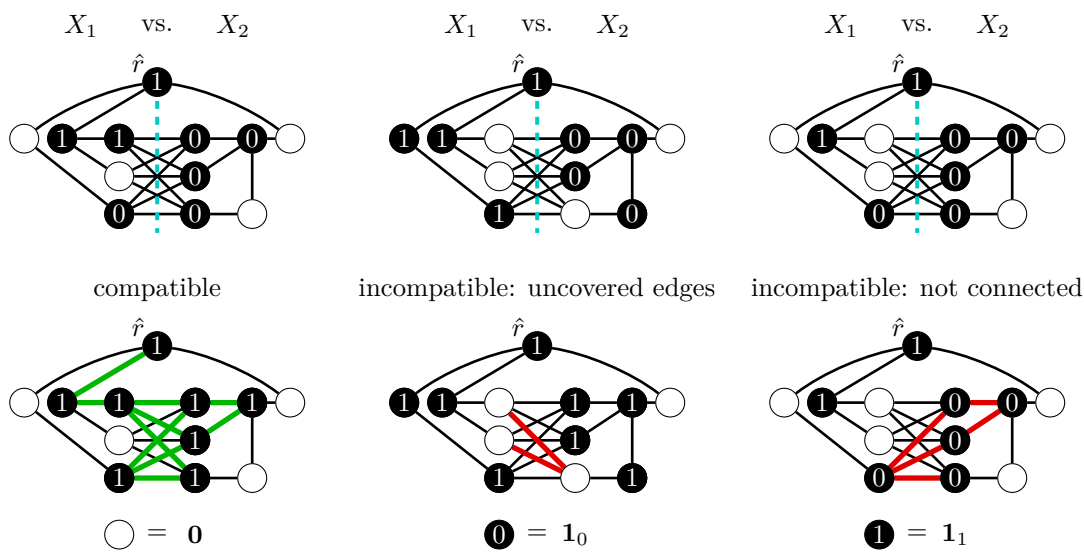

  \centering
  \tikzfig{pictures/cvc_cw_partial_compatibility}
  \caption{Several cases of partial solution compatibility across a join. The first row depicts the vertex states in $X_1$ and $X_2$, separated by the dashed line. The second row depicts the vertex states in $X_1 \cup X_2$ and highlights, from left to right, the induced edges, the uncovered edges, and a connected component not containing the root $\rvertex$.}
  \label{fig:cw_cvc_partial_comp}
\end{figure}

We proceed by studying the compatibility of theses label states across a join, but we will only give an informal description here. Essentially, we assume that the considered join is the final opportunity for two partial solutions $X_1, X_2 \subseteq V$ with $\rvertex \in X_i$, $i \in [2]$, living on separate sides of the join (with the exception of $\rvertex$) to connect. Hence, the partial solutions $X_1$ and $X_2$ are considered to be \emph{compatible} when in $X_1 \cup X_2$ every vertex incident to the considered join has state $\zero$ or $\one_1$ and every edge of the join is covered by $X_1 \cup X_2$; see \cref{fig:cw_cvc_partial_comp}. If $X_1$ and $X_2$ are compatible, then $X_1 \cup X_2$ \emph{can} be a \emph{global} solution, if outside of the considered join all constraints are satisfied. Since the interaction of $X_i$, $i \in [2]$ with the respective side of the join is captured by the aforementioned states, we obtain the compatibility matrix in \cref{table:comp_cvc}.

\begin{table}
\centering
\begin{tabular}{l|ccccccc}%
  $X_1$ vs.\ $X_2$ & $\{\zero\}$ & $\{\one_0\}$ & $\{\one_1\}$ & $\{\one_0, \zero\}$ & $\{\one_1, \zero\}$ & $\{\one_1, \one_0\}$ & $\{\one_1, \one_0, \zero\}$ \\%
  \hline%
  $\{\zero\}$ & 0 & 0 & 1 & 0 & 0 & 0 & 0 \\%
  $\{\one_0\}$ & 0 & 0 & 1 & 0 & 1 & 1 & 1 \\%
  $\{\one_1\}$ & 1 & 1 & 1 & 1 & 1 & 1 & 1 \\%
  $\{\one_0, \zero\}$ & 0 & 0 & 1 & 0 & 0 & 1 & 0 \\%
  $\{\one_1, \zero\}$ & 0 & 1 & 1 & 0 & 0 & 1 & 0 \\%
  $\{\one_1, \one_0\}$ & 0 & 1 & 1 & 1 & 1 & 1 & 1 \\%
  $\{\one_1, \one_0, \zero\}$ & 0 & 1 & 1 & 0 & 0 & 1 & 0%
\end{tabular}\caption{The compatibility matrix for \CVC. The rows describe the label state of the partial solution $X_1$ and the columns the label state of $X_2$. Ones correspond to compatible pairs of label states and zeroes to incompatible pairs.}\label{table:comp_cvc}
\vspace*{-0.8cm}
\end{table}

To determine a transition order and which states should be implemented by a path gadget, we find the triangular submatrix in \cref{table:triangular_cvc}, after reordering rows and columns.
\begin{table}
\centering
  \begin{tabular}{l|cccccc}%
    $X_1$ vs.\ $X_2$ & $\{\zero\}$ & $\{\one_0, \zero\}$ & $\{\one_0\}$ & $\{\one_1, \zero\}$ & $\{\one_1, \one_0\}$ & $\{\one_1\}$ \\%
    \hline%
    $\{\one_1\}$ & 1 & 1 & 1 & 1 & 1 & 1 \\%
    $\{\one_1, \one_0\}$ & 0 & 1 & 1 & 1 & 1 & 1 \\%
    $\{\one_1, \zero\}$ & 0 & 0 & 1 & 0 & 1 & 1 \\%
    $\{\one_0\}$ & 0 & 0 & 0 & 1 & 1 & 1  \\%
    $\{\one_0, \zero\}$ & 0 & 0 & 0 & 0 & 1 & 1 \\%
    $\{\zero\}$ & 0 & 0 & 0 & 0 & 0 & 1%
  \end{tabular}\caption{A large triangular submatrix in the compatibility matrix of \CVC. The rows and columns have been reordered.}\label{table:triangular_cvc}
  \vspace*{-0.8cm}
\end{table}
Note that the triangular submatrix only involves states consisting of at most two vertex states, hence labels consisting of two vertices should be sufficient to generate these states. Indeed, in the forthcoming construction, the labels incident to the join are independent sets of size two and the state sets will be represented by the following ordered pairs of vertex states: $(\zero, \zero)$, $(\one_0, \zero)$, $(\one_0, \one_0)$, $(\one_1, \zero)$, $(\one_1, \one_0)$, $(\one_1, \one_1)$. Pairing the states along the diagonal then tells us for each case which states the path gadget should communicate to the left and right boundary respectively. For example, for the third position of the diagonal, we pair $(\one_0, \one_0)$ with $(\one_1, \zero)$, meaning that for the third state of the transition order, the path gadget should communicate the states $(\one_0, \one_0)$ to the left boundary and the states $(\one_1, \zero)$ to the right boundary.

\subparagraph*{Formal Definition of States.} We define the three atomic states $\atoms = \{\zero, \one_0, \one_1\}$ and define the two predicates $\sol, \conn \colon \atoms \rightarrow \{0,1\}$ by $\sol(\bolda) = [\bolda \in \{\one_0, \one_1\}]$ and $\conn(\bolda) = [\bolda = \one_1]$. The atom $\zero$ means that a vertex is not inside the partial solution; $\one_1$ and $\one_0$ indicate that a vertex is inside the partial solution and the subscript indicates whether it is root-connected or not. Building on these atomic states, we define six (gadget) states consisting of four atomic states each: 
\newcommand{\zerophantom}{{\zero_{\phantom{0}}}}
\begin{align*}
  \state^1 & = (\zerophantom, \zerophantom, \one_1, \one_1), \\
  \state^2 & = (\one_0, \zerophantom, \one_1, \one_0), \\
  \state^3 & = (\one_0, \one_0, \one_1, \zerophantom), \\
  \state^4 & = (\one_1, \zerophantom, \one_0, \one_0), \\
  \state^5 & = (\one_1, \one_0, \one_0, \zerophantom), \\
  \state^6 & = (\one_1, \one_1, \zerophantom, \zerophantom).
\end{align*}
The gadget states are numbered in the transition order. We collect the six gadget states in the set $\states = \{\state^1, \ldots, \state^6\}$ and use the notation $\state^\ell_i \in \atoms$, $i \in [4]$, $\ell \in [6]$, to refer to the $i$-th atomic component of state $\state^\ell$. Observe that $\state^\ell$ can be obtained from $\state^{7-\ell}$ by swapping the first and second component with the third and fourth component, i.e., $\state_1^\ell = \state_3^{7-\ell}$ and $\state_2^\ell = \state_4^{7-\ell}$ for all $\ell \in [6]$.

Given a partial solution $Y \subseteq V(G)$, we formally associate to each vertex its state in $Y$ with the map $\statemap_Y \colon V(G) \setminus \{\rvertex\} \rightarrow \atoms$, which is defined by
\begin{equation*}
  \statemap_Y(v) = \begin{cases}
    \zero & \text{if } v \notin Y, \\
    \one_0 & \text{if } v \in Y \text{ and $v$ is not root-connected in $Y \cup \{\rvertex\}$}, \\
    \one_1 & \text{if } v \in Y \text{ and $v$ is root-connected in $Y \cup \{\rvertex\}$}. \\
  \end{cases}
\end{equation*}

\subparagraph*{Formal Construction.} We proceed by describing how to construct the path gadget $P$. We create 4 \emph{join} vertices $u_1, \ldots, u_4$, 12 \emph{auxiliary} vertices $a_{1,1}, a_{1,2}, a_{1,3}, a_{2,1}, \ldots, a_{4,3}$, 8 \emph{solution indicator} vertices $b_{1,0}, b_{1,1}, b_{2,0}, b_{2,1}, \ldots, b_{4,1}$, 8 \emph{connectivity indicator} vertices $c_{1,0}, c_{1,1}, c_{2,0}, c_{2,1}, \ldots, c_{4,1}$ and 6 \emph{clique} vertices $v_1, \ldots, v_6$. For every $i \in [4]$, we add the edges $\{u_i, a_{i,1}\}$, $\{u_i, a_{i,3}\}$, $\{u_i, b_{i,0}\}$, $\{a_{i,1},a_{i,2}\}$, $\{a_{i,1}, b_{i,0}\}$, $\{a_{i,1}, c_{i,1}\}$, $\{a_{i,3}, b_{i,1}\}$, and $\{b_{i,0}, b_{i,1}\}$ and $\{c_{i,0}, c_{i,1}\}$. Furthermore, we make $a_{i,3}$ for all $i \in [4]$, all solution indicator vertices, all connectivity indicator vertices, and all clique vertices adjacent to the root vertex $\rvertex$. We add all possible edges between the clique vertices $v_\ell$, $\ell \in [6]$, so that they induce a clique of size 6. 

Finally, we explain how to connect the indicator vertices to the clique vertices. The clique vertex $v_\ell$ corresponds to choosing state $\state^\ell$ on the join vertices $(u_1, u_2, u_3, u_4)$. The desired behavior of $P$ is that a partial solution $X$ of $P + \rvertex$ contains $b_{i,1}$ if and only if $X$ contains $u_i$ and for the connectivity indicators, that $X$ contains $c_{i,1}$ if and only if $X$ contains $u_i$ and $u_i$ is root-connected in $X$. Accordingly, for all $i \in [4]$ and $\ell \in [6]$, we add the edges $\{v_\ell, b_{i,\sol(\state^\ell_i)}\}$ and $\{v_\ell, c_{i,\conn(\state^\ell_i)}\}$. This concludes the construction of $P$, see \cref{fig:cw_cvc_path}.

\begin{figure}
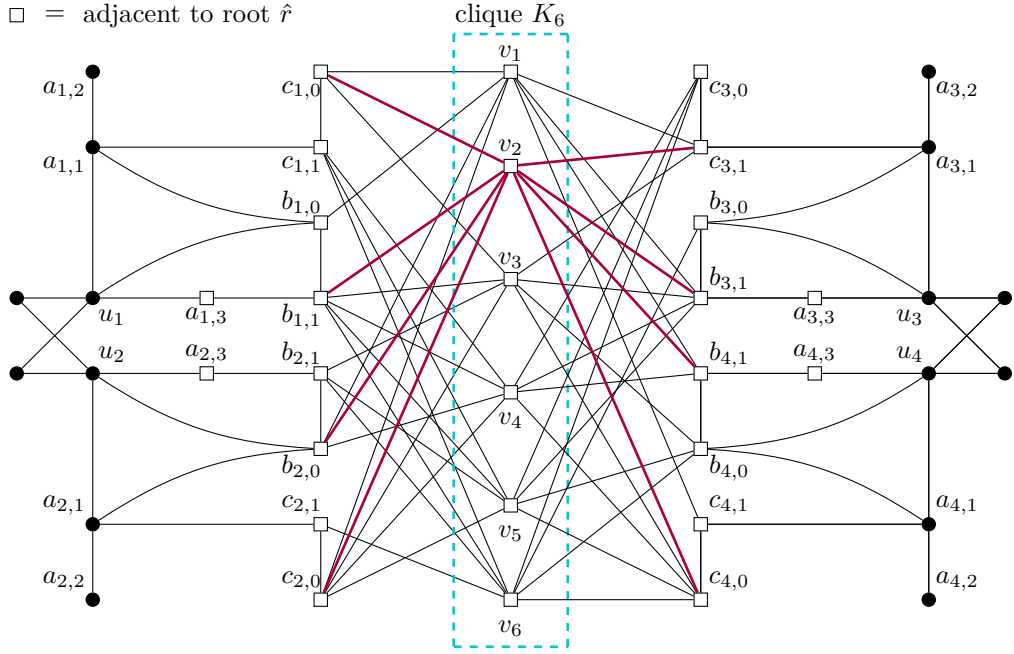

  \centering
  \tikzfig{pictures/cvc_cw_path_gadget}
  \caption{The path gadget $P$ with the join vertices $u_1, u_2$ and $u_3, u_4$ already joined to further vertices. All vertices that are depicted with a rectangle are adjacent to the root vertex $\rvertex$. The vertices inside the cyan dashed rectangle induce a clique. For sake of understanding, we have highlighted the edges incident to $v_2$ as one example.}
  \label{fig:cw_cvc_path}
\end{figure}

\subparagraph*{Behavior of a Single Path Gadget.} For the upcoming lemmas, we assume that $G$ is a graph that contains $P + \rvertex$ as an induced subgraph and that only the join vertices $u_i, i \in [4]$, and clique vertices $v_\ell, \ell \in [6]$, have neighbors outside this copy of $P + \rvertex$. Furthermore, let $X$ be a connected vertex cover of $G$ with $\rvertex \in X$. We study the behavior of such connected vertex covers on $P$; we will abuse notation and write $X \cap P$ instead of $X \cap V(P)$. The assumption on how $P$ connects to the remaining graph implies that any vertex $v \in V(P)$ with $\statemap_{X \cap P}(v) = \one_0$ has to be root-connected in $X$ through some path that leaves $P + \rvertex$ via one of the join vertices $u_i$, $i \in [4]$. Note that any path leaving $P + \rvertex$ through some clique vertex $v_\ell$, $\ell \in [6]$, immediately yields a path to $\rvertex$ in $P + \rvertex$ as $\{v_\ell \sep \ell \in [6]\} \subseteq N(\rvertex)$.

We begin by showing a lower bound for $|X \cap P|$ via a vertex-disjoint packing of subgraphs.

\begin{lem}\label{thm:cvc_cw_path_gadget_lb}
  We have that $|X \cap P| \geq 21 = 4 \cdot 4 + 5$ and more specifically $a_{i,1} \in X$, $|X \cap \{u_i, a_{i,3}, b_{i,1}, b_{i,0}\}| \geq 2$, $|X \cap \{c_{i,0}, c_{i,1}\}| \geq 1$ for all $i \in [4]$ and $|X \cap \{v_1, \ldots, v_6\}| \geq 5$. 
\end{lem}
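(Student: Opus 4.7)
The strategy is a packing argument: I will exhibit a collection of pairwise vertex-disjoint induced subgraphs of $P$, derive a lower bound on the intersection of $X$ with each subgraph using only the vertex cover property (plus a small connectivity observation for the pendant $a_{i,2}$), and then sum the bounds. The inspection of the construction shows that inside $P$ we have, for each $i \in [4]$, three structures of interest: the pendant edge $\{a_{i,1}, a_{i,2}\}$, the 4-cycle on $\{u_i, a_{i,3}, b_{i,1}, b_{i,0}\}$, and the edge $\{c_{i,0}, c_{i,1}\}$; in addition, the six clique vertices $\{v_1, \ldots, v_6\}$ induce a copy of $K_6$. Checking the adjacency list these four pieces use pairwise disjoint vertex sets, so the individual bounds combine additively.

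\textbf{Handling the $a_{i,1}$ vertices.} For each $i \in [4]$, the vertex $a_{i,2}$ has only $a_{i,1}$ as a neighbor in $G$ (no further neighbors are added outside $P$ by our assumption on $P$ in $G$). If $a_{i,2} \notin X$, then the edge $\{a_{i,1}, a_{i,2}\}$ has to be covered by $X$, hence $a_{i,1} \in X$. Otherwise $a_{i,2} \in X$, and since $G[X]$ is connected and contains $\rvertex \neq a_{i,2}$, the vertex $a_{i,2}$ must have a neighbor in $X$; its only neighbor being $a_{i,1}$, again $a_{i,1} \in X$. This yields the desired $a_{i,1} \in X$ for every $i \in [4]$.

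\textbf{Handling the other pieces.} The edges $\{u_i, a_{i,3}\}$, $\{a_{i,3}, b_{i,1}\}$, $\{b_{i,1}, b_{i,0}\}$, $\{b_{i,0}, u_i\}$ form an induced 4-cycle, so any vertex cover has at least two vertices among $\{u_i, a_{i,3}, b_{i,1}, b_{i,0}\}$, giving the second stated bound. The edge $\{c_{i,0}, c_{i,1}\}$ must be covered, giving $|X \cap \{c_{i,0}, c_{i,1}\}| \geq 1$. Finally, $\{v_1, \ldots, v_6\}$ induces $K_6$, and every vertex cover of $K_6$ contains at least five of its six vertices, so $|X \cap \{v_1, \ldots, v_6\}| \geq 5$.

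\textbf{Assembling.} The four types of vertex sets used above, indexed by $i \in [4]$, together with the clique, are pairwise disjoint by construction. Summing the bounds yields
\begin{equation*}
  |X \cap P| \geq \sum_{i=1}^{4} \bigl(1 + 2 + 1\bigr) + 5 = 16 + 5 = 21.
\end{equation*}
There is no real obstacle here: the only subtle point is that plain vertex-cover reasoning would give $|X \cap \{a_{i,1}, a_{i,2}\}| \geq 1$, whereas we need $a_{i,1} \in X$ specifically; this is supplied by invoking connectivity through the pendant $a_{i,2}$, which is the only place where the connected part of ``connected vertex cover'' is actually used in this lemma.
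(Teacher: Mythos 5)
Your proof is correct and follows essentially the same approach as the paper's: the same packing of pairwise-disjoint pieces (pendant edge at $a_{i,1}$, induced 4-cycle, edge $\{c_{i,0},c_{i,1}\}$, and the $K_6$), with the same case split using connectivity to force $a_{i,1} \in X$. No substantive difference.
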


\begin{proof}
  For all $i \in [4]$, the vertex $a_{i,2}$ has degree 1 in $G$ with unique neighbor $a_{i,1}$, hence we must have $a_{i,1} \in X$ to either connect $a_{i,2}$ to $\rvertex$ or to cover the edge $\{a_{i,1}, a_{i,2}\}$ if $a_{i,2} \notin X$. The vertices $u_i$, $a_{i,3}$, $b_{i,1}$, $b_{i,0}$ induce a cycle of length 4 for all $i \in [4]$ and any vertex cover has to contain at least 2 vertices of every such cycle. The edge $\{c_{i,0}, c_{i,1}\}$ has to be covered for all $i \in [4]$. Finally, the clique formed by the vertices $v_1, \ldots, v_6$ has size 6 and any vertex cover has to contain at least 5 vertices of such a clique. Since we only considered pairwise disjoint sets of vertices, these lower bounds simply add up and we obtain $|X \cap P| \geq 4 (1 + 2 + 1) + 5 = 21$. 
\end{proof}

Using \cref{thm:cvc_cw_path_gadget_lb}, we can precisely analyze the solutions that match the lower bound of 21 on $P$ and show that such solutions have the desired state behavior. We define for any $Y \subseteq V(P)$ the 4-tuple $\statemap(Y) = (\statemap_Y(u_1), \statemap_Y(u_2), \statemap_Y(u_3), \statemap_Y(u_4))$ and the following lemma shows that the states communicated to the boundary depend on the state of the central clique as desired.
 
 \begin{lem}\label{thm:cvc_cw_path_gadget_tight}
   If $|X \cap P| \leq 21$, then $|X \cap P| = 21$ and $a_{i,1} \in X$, $X \cap \{u_i, a_{i,3}, b_{i,1}, b_{i,0}\} \in \{\{u_i, b_{i,1}\}, \{a_{i,3}, b_{i,0}\}\}$, $|X \cap \{c_{i,0}, c_{i,1}\}| = 1$ for all $i \in [4]$ and $|X \cap \{v_1, \ldots, v_6\}| = 5$. Furthermore, we have $\statemap(X \cap P) = \state^\ell$ for the unique integer $\ell \in [6]$ with $v_\ell \notin X$.
 \end{lem}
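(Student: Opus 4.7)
The plan is to combine the lower bound of \cref{thm:cvc_cw_path_gadget_lb} with the assumption $|X \cap P| \leq 21$ to force equality in every individual bound. Since the sets used in \cref{thm:cvc_cw_path_gadget_lb} (the singletons $\{a_{i,1}\}$, the 4-cycles $\{u_i, a_{i,3}, b_{i,1}, b_{i,0}\}$, the pairs $\{c_{i,0}, c_{i,1}\}$, and the clique $\{v_1, \ldots, v_6\}$) are pairwise disjoint and their lower bounds sum to exactly $21$, each individual bound must be attained with equality. In particular $|X \cap \{u_i, a_{i,3}, b_{i,1}, b_{i,0}\}| = 2$; since these four vertices induce the 4-cycle $u_i - a_{i,3} - b_{i,1} - b_{i,0} - u_i$ and a vertex cover of $C_4$ of size $2$ must pick two non-adjacent vertices, we get $X \cap \{u_i, a_{i,3}, b_{i,1}, b_{i,0}\} \in \{\{u_i, b_{i,1}\}, \{a_{i,3}, b_{i,0}\}\}$. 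Moreover $|X \cap \{v_1, \ldots, v_6\}| = 5$, so there is a unique $\ell \in [6]$ with $v_\ell \notin X$.

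Next I would exploit that $v_\ell \notin X$ forces every neighbor of $v_\ell$ outside the clique to lie in $X$ (else that edge is uncovered). The only such neighbors are $\rvertex$ (already in $X$) and the indicator vertices $b_{i,\sol(\state^\ell_i)}$ and $c_{i,\conn(\state^\ell_i)}$ for $i \in [4]$. This immediately pins down which indicator is chosen in each pair: $b_{i,\sol(\state^\ell_i)} \in X$ and, since $|X \cap \{c_{i,0},c_{i,1}\}| = 1$, also $c_{i,\conn(\state^\ell_i)} \in X$ and $c_{i,1-\conn(\state^\ell_i)} \notin X$. Combining with the 4-cycle analysis: if $\sol(\state^\ell_i) = 1$ then $b_{i,1} \in X$, hence the 4-cycle intersection must be $\{u_i, b_{i,1}\}$ and $u_i \in X$; if $\sol(\state^\ell_i) = 0$ then $b_{i,0} \in X$ and the intersection is $\{a_{i,3}, b_{i,0}\}$, so $u_i \notin X$. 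Thus $u_i \in X$ iff $\state^\ell_i \in \{\one_0, \one_1\}$, which already gives $\statemap_{X\cap P}(u_i) = \zero = \state^\ell_i$ whenever $\state^\ell_i = \zero$.

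The main obstacle is the root-connectivity analysis when $u_i \in X$. Here I would look at the subgraph of $P + \rvertex$ induced by $(X \cap P) \cup \{\rvertex\}$ and enumerate the neighbors of $u_i$ in $P$: these are $a_{i,1}$, $a_{i,3}$, $b_{i,0}$. In the case $u_i \in X$ both $a_{i,3}$ and $b_{i,0}$ are excluded, so the only way out of $u_i$ goes through $a_{i,1} \in X$. The neighbors of $a_{i,1}$ in $P$ are $u_i$, $a_{i,2}$ (degree-$1$ dead end), $b_{i,0}$ (excluded), and $c_{i,1}$. Hence $u_i$ is root-connected in $(X \cap P) \cup \{\rvertex\}$ if and only if $c_{i,1} \in X$ (using the edge $\{c_{i,1}, \rvertex\}$ to close the path). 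Note that although $b_{i,1}$ lies in $X$ and is adjacent to $\rvertex$, it is not adjacent to $u_i$ or $a_{i,1}$ and therefore cannot connect $u_i$ to $\rvertex$; similarly $c_{i,0} \in N(\rvertex)$ is not adjacent to $a_{i,1}$, so having $c_{i,0} \in X$ does not help.

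Putting the pieces together: when $u_i \in X$, we have $c_{i,1} \in X$ precisely when $\conn(\state^\ell_i) = 1$, i.e., when $\state^\ell_i = \one_1$, and $c_{i,1} \notin X$ precisely when $\state^\ell_i = \one_0$. Therefore $\statemap_{X \cap P}(u_i) = \one_1$ if $\state^\ell_i = \one_1$ and $\statemap_{X \cap P}(u_i) = \one_0$ if $\state^\ell_i = \one_0$, which together with the $\zero$-case already treated establishes $\statemap(X \cap P) = \state^\ell$ as required.
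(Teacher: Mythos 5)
Your proposal is correct and follows essentially the same approach as the paper's proof: forcing equality in all bounds from the preceding lemma, using the antipodal-pair structure of the 4-cycle, exploiting that $v_\ell \notin X$ forces the correct indicator vertices into $X$, and then analyzing root-connectivity via the bottleneck at $a_{i,1}$. You give a somewhat more detailed justification of why the path $u_i, a_{i,1}, c_{i,1}, \rvertex$ is the only possible route to the root when $u_i \in X$ (the paper states this more tersely), but the reasoning is identical in substance.
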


\begin{proof}
  Due to $|X \cap P| \leq 21$, all the inequalities of \cref{thm:cvc_cw_path_gadget_lb} have to be tight. Since $|X \cap \{u_i, a_{i,3}, b_{i,1}, b_{i,0}\}| = 2$ and $\{u_i, a_{i,3}, b_{i,1}, b_{i,0}\}$ induces a cycle of length 4, $X$ has to contain a pair of antipodal vertices of this cycle, so either the pair $\{u_i, b_{i,1}\}$ or the pair $\{a_{i,3}, b_{i,0}\}$. It remains to prove the last property regarding the states of the join vertices.
  
  Consider any $i \in [4]$, we will show that $\statemap_{X \cap P}(u_i) = \state^\ell_i$. Since $X$ is a vertex cover and $v_\ell \notin X$, we must have that $N(v_\ell) \subseteq X$. By construction, this means that $X$ must in particular contain the vertices $b_{i,\sol(\state^\ell_i)}$ and $c_{i,\conn(\state^\ell_i)}$ and due to the previous equations $X$ cannot contain the other vertex of each of these indicator pairs. Due to the choice of an antipodal pair we have $u_i \in X$ if and only if $b_{i,1} \in X$ if and only if $\sol(\state^\ell_i) = 1$ if and only if $\state^\ell_i \in \{\one_0, \one_1\}$. If $u_i \in X$, then $a_{i,3} \notin X$ and $b_{i,0} \notin X$, so $u_i$ is root-connected in $X \cap (P + \rvertex)$ if and only if $c_{i,1} \in X$, i.e., via the path $u_i, a_{i,1}, c_{i,1}, \rvertex$. Furthermore, $c_{i,1} \in X \Leftrightarrow \conn(\state^\ell_i) = 1 \Leftrightarrow \state^\ell_i = \one_1$. This shows that $\statemap_{X \cap P}(u_i) = \state^\ell_i$ and since this argument applies to all $i \in [4]$, we obtain that $\statemap(X \cap Y) = \state^\ell$.
\end{proof}

Moving on, we will establish that for every state $\state^\ell \in \states$ a partial solution attaining this state actually exists. Furthermore, for these partial solutions it is sufficient to check for root-connectivity at the join vertices.

\begin{lem}\label{thm:cvc_cw_state_exists}  
  For every $\ell \in [6]$, there exists a vertex cover $X_P^\ell$ of $P$ such that $|X_P^\ell| = 21$, $X_P^\ell \cap \{v_1, \ldots, v_6\} = \{v_1, \ldots, v_6 \} \setminus \{v_\ell\}$, and $\statemap(X_P^\ell) = \state^\ell$. If $X$ is a vertex cover of $G$ with $\rvertex \in X$ and $X \cap P = X_P^\ell$ and for every $i \in [4]$ either $u_i \notin X$ or $u_i$ is root-connected in $X$, then every vertex of $X_P^\ell$ is root-connected in $X$.
\end{lem}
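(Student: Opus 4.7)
The plan is to define $X_P^\ell$ explicitly from the desired state $\state^\ell$, verify that it has the claimed size, is a vertex cover, and realizes the state, and then establish root-connectivity by showing that all vertices of $X_P^\ell$ either sit in $N_G(\rvertex)$ or can be reached from $\rvertex$ through an explicit short path inside $X_P^\ell \cup \{\rvertex\}$ (with the sole exception, handled by hypothesis, being the internal auxiliary vertex $a_{i,1}$ when $\state^\ell_i = \one_0$).

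Concretely, I would set
\[
  X_P^\ell = \{v_m \sep m \in [6] \setminus \{\ell\}\} \,\cup\, \bigcup_{i\in[4]} \left( \{a_{i,1}\} \cup S_i \cup C_i \right),
\]
where $S_i = \{u_i, b_{i,1}\}$ if $\sol(\state^\ell_i)=1$ and $S_i = \{a_{i,3}, b_{i,0}\}$ otherwise, and $C_i = \{c_{i,1}\}$ if $\conn(\state^\ell_i)=1$ and $C_i = \{c_{i,0}\}$ otherwise. A direct count gives $5+4+2\cdot 4+1\cdot 4 = 21$, matching the bound of \cref{thm:cvc_cw_path_gadget_lb}, and by construction $X_P^\ell \cap \{v_1,\ldots,v_6\} = \{v_m : m\neq \ell\}$. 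Checking that $X_P^\ell$ is a vertex cover of $P$ is routine: every edge of $P$ either has an endpoint in $\{a_{i,1}\}$ (covering all $a_{i,\cdot}$-incidences), has both endpoints $v_m, v_{m'}$ with $m,m' \neq \ell$ (clique edge), is incident to $v_m$ with $m\neq\ell$, or is an edge of the 4-cycle $u_i, a_{i,3}, b_{i,1}, b_{i,0}$, which by definition of $S_i$ has an antipodal pair in $X_P^\ell$; the edge $\{c_{i,0},c_{i,1}\}$ and the incidences $\{v_\ell, b_{i,\sol(\state^\ell_i)}\}, \{v_\ell, c_{i,\conn(\state^\ell_i)}\}$ are covered since we selected $b_{i,\sol(\state^\ell_i)}$ and $c_{i,\conn(\state^\ell_i)}$ into $X_P^\ell$.

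To verify $\statemap(X_P^\ell) = \state^\ell$, I would treat each $i \in [4]$ separately, viewing root-connectivity within $P+\rvertex$. If $\state^\ell_i = \zero$, then $u_i \notin X_P^\ell$ by construction. If $\state^\ell_i = \one_1$, then $u_i, a_{i,1}, c_{i,1} \in X_P^\ell$, so $u_i, a_{i,1}, c_{i,1}, \rvertex$ witnesses root-connectedness. If $\state^\ell_i = \one_0$, then $u_i, a_{i,1} \in X_P^\ell$, while $a_{i,3}, b_{i,0}, c_{i,1}, a_{i,2} \notin X_P^\ell$; thus the component of $u_i$ in $(P+\rvertex)[X_P^\ell\cup\{\rvertex\}]$ is exactly $\{u_i, a_{i,1}\}$, so $u_i$ is not root-connected, matching $\one_0$.

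The main remaining step is the second assertion: under the hypotheses on $X$, every vertex of $X_P^\ell$ is root-connected in $X$. Every vertex in $\{a_{i,3}, b_{i,0}, b_{i,1}, c_{i,0}, c_{i,1}\} \cup \{v_m : m\neq\ell\}$ is adjacent to $\rvertex \in X$, so it is root-connected immediately. Each $u_i \in X_P^\ell$ is root-connected by the hypothesis. The only non-trivial case is $a_{i,1}$: if $\conn(\state^\ell_i)=1$ then $c_{i,1} \in X_P^\ell$ and the path $a_{i,1}, c_{i,1}, \rvertex$ works; if $\sol(\state^\ell_i)=0$ then $b_{i,0} \in X_P^\ell$ and the path $a_{i,1}, b_{i,0}, \rvertex$ works; the only remaining case is $\state^\ell_i = \one_0$, in which case $u_i \in X$ is root-connected by hypothesis and $a_{i,1}$ inherits root-connectedness via its neighbor $u_i$. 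The expected main obstacle is simply bookkeeping in this case analysis, together with making sure the ``escape through $u_i$'' route in the $\one_0$-case is the one point where the external hypothesis is actually used; once isolated, the rest is a list of one-edge or two-edge paths to $\rvertex$.
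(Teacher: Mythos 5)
Your proposal is correct and matches the paper's proof essentially step for step: the explicit set $X_P^\ell$ you define is identical to the one in the paper (phrased via $S_i$ and $C_i$ instead of direct unions), the vertex-cover check proceeds by the same classification of edges, the state verification reads off $\statemap_{X_P^\ell}(u_i)$ case by case, and the root-connectivity argument isolates $a_{i,1}$ as the only vertex not directly adjacent to $\rvertex$ and handles it by exactly the same three-way case distinction on $\state^\ell_i$, invoking the external hypothesis only in the $\one_0$-case.
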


\begin{proof}
  We define 
  \begin{align*}
    X^\ell_P & = (\{v_1, \ldots, v_6\} \setminus \{v_\ell\}) \cup \{a_{i,1} \sep i \in [4]\} \\
    & \cup \{u_i \sep i \in [4] \text{ and } \sol(\state^\ell_i) = 1\} \cup \{a_{i,3} \sep i \in [4] \text{ and } \sol(\state^\ell_i) = 0\} \\
    & \cup \{b_{i, \sol(\state^\ell_i)} \sep i \in [4]\} \cup \{c_{i, \conn(\state^\ell_i)} \sep i \in [4]\}
  \end{align*} 
  and claim that $X^\ell_P$ is a vertex cover with the desired properties. We clearly have that $|X^\ell_P| = 21$ and $v_\ell \notin X^\ell_P$. We proceed by showing that $X^\ell_P$ is a vertex cover of $P$. For every $i \in [4]$, all four edges incident to $a_{i,1}$ are covered and $X^\ell_P$ contains at least one vertex from the edge $\{c_{i,0}, c_{i,1}\}$. For every $i \in [4]$, all edges of the $C_4$ induced by $\{u_i, a_{i,3}, b_{i,0}, b_{i,1}\}$ are covered by $X^\ell_P$, since $X^\ell_P$ picks one of the two antipodal pairs depending on $\sol(\state^\ell_i)$. The clique induced by $v_1, \ldots, v_6$ is fully covered by $X^\ell_P$, since it picks five out of six vertices. The edges between $v_\ell$ and the indicator vertices are covered, because by construction of $P$, we have that $N(v_\ell) = (\{v_1, \ldots, v_6\} \setminus \{v_\ell\}) \cup \{b_{i, \sol(\state^\ell_i)} \sep i \in [4]\} \cup \{c_{i, \conn(\state^\ell_i)} \sep i \in [4]\} \subseteq X^\ell_P$. This shows that $X^\ell_P$ is a vertex cover of $P$. Very similar to the proof of \cref{thm:cvc_cw_path_gadget_tight}, we see that $\statemap(X^\ell_P) = \state^\ell$.
  
  It remains to show the property regarding connectivity. By assumption, only the join vertices $u_1, \ldots, u_4$ and clique vertices $v_1, \ldots, v_6$ can be adjacent to vertices outside of $P + \rvertex$. However, as the clique vertices are neighbors of the root $\rvertex$, their other connections to the outside of $P$ cannot provide any new root-connectivity. If we have $\statemap_{X_P^\ell}(u_i) = \one_0$ for some $i \in [4]$, then $u_i$ is root-connected in $X$ via some path that leaves $P + \rvertex$. Note that all vertices in $X_P^\ell \setminus \{u_i, a_{i,1} \sep i \in [4]\}$ are directly adjacent to the root $\rvertex$, hence it just remains to handle the root-connectivity of $a_{i,1}$. If $\statemap_{X_P^\ell}(u_i) = \zero$, then $a_{i,1}$ is root-connected via the path $a_{i,1}, b_{i,0}, \rvertex$ in $X$. If $\statemap_{X_P^\ell}(u_i) = \one_0$, then we can extend the path that leaves $P$ and connects $u_i$ to $\rvertex$ by $a_{i,1}$. Finally, if $\statemap_{X_P^\ell}(u_i) = \one_1$, then the path $a_{i,1}, c_{i,1}, \rvertex$ exists in $X_P^\ell + \rvertex$. This concludes the proof.
\end{proof}

\subparagraph*{State Transitions.} In the lower bound construction, we will create long paths by repeatedly concatenating the path gadgets $P$. To study how the state can change between two consecutive path gadgets, suppose that we have two copies $P^1$ and $P^2$ of $P$ such that the vertices $u_3$ and $u_4$ in $P^1$ are joined to the vertices $u_1$ and $u_2$ in $P^2$. We denote the vertices of $P^1$ with a superscript $1$ and the vertices of $P^2$ with a superscript $2$, e.g., $u^1_3$ refers to the vertex $u_3$ of $P^1$. Again, suppose that $P^1$ and $P^2$ are embedded as induced subgraphs in a larger graph $G$ with a root vertex $\rvertex$ and that only the vertices $u^1_1, u^1_2, u^2_3, u^2_4$ and the clique vertices $v^1_\ell, v^2_\ell$, $\ell \in [6]$, have neighbors outside of $P^1 + P^2 + \rvertex$. Furthermore, $X$ denotes a connected vertex cover with $\rvertex \in X$. 

Using the previous lemmas, we now show that states can only transition from one path gadget to the next according to the transition order and that it is also feasible for the state to remain stable.

\begin{lem}\label{thm:cvc_cw_path_transition}
  Suppose that $|X \cap P^1| \leq 21$ and $|X \cap P^2| \leq 21$, then $\statemap(X \cap P^1) = \state^{\ell_1}$ and $\statemap(X \cap P^2) = \state^{\ell_2}$ with $\ell_1 \leq \ell_2$. 
  
  Additionally, for each $\ell \in [6]$, the set $X^\ell = X^\ell_{P^1} \cup X^\ell_{P^2}$ is a vertex cover of $P^1 + P^2$ with $\statemap_{X^\ell}(\{u^1_3, u^1_4, u^2_1, u^2_2\}) \subseteq \{\zero, \one_1\}$.
\end{lem}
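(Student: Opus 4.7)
The plan is to split the proof into the two assertions, both reducing to short case checks over the six gadget states.

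For the first assertion, I apply \cref{thm:cvc_cw_path_gadget_tight} to each of $P^1$ and $P^2$ to conclude that $|X \cap P^i|=21$ and that $\statemap(X\cap P^i) = \state^{\ell_i}$ for some unique $\ell_i \in [6]$. It remains to rule out $\ell_1>\ell_2$. Write $L_{\ell_1}=\{\state^{\ell_1}_3,\state^{\ell_1}_4\}$ and $R_{\ell_2}=\{\state^{\ell_2}_1,\state^{\ell_2}_2\}$ for the sets of atoms appearing on the two sides of the join. Two necessary compatibility conditions are imposed by $X$ being a connected vertex cover: \emph{(a)} every edge of the complete bipartite join $K_{2,2}$ must be covered, so $\zero$ cannot lie in both $L_{\ell_1}$ and $R_{\ell_2}$; \emph{(b)} every $\one_0$-labelled join vertex needs a root-path, and since its only escape from $P^i+\rvertex$ is the join, the path must cross to a vertex of $X$ on the opposite side and, following an alternating walk in the bipartite join, reach a $\one_1$-vertex. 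Chasing that walk shows that \emph{(b)} reduces to: whenever $\one_0$ appears in $L_{\ell_1}$ (resp.\ $R_{\ell_2}$), the opposite side contains a vertex of $X$ and a $\one_1$-atom is present somewhere in $L_{\ell_1}\cup R_{\ell_2}$. A direct enumeration of the fifteen pairs with $\ell_1>\ell_2$, which amounts to inspecting the entries strictly below the diagonal of the reordered compatibility matrix in \cref{table:triangular_cvc}, shows that \emph{(a)} or \emph{(b)} fails in each such pair.

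For the second assertion I fix $\ell\in[6]$ and let $X^\ell=X^\ell_{P^1}\cup X^\ell_{P^2}$. Edges inside $P^1$ or $P^2$ are already covered by the respective $X^\ell_{P^i}$ by \cref{thm:cvc_cw_state_exists}, so only the four join edges require attention; for the diagonal pair $(L_\ell,R_\ell)$ the atom $\zero$ does not appear on both sides (this is exactly the diagonal ones of \cref{table:triangular_cvc}), so each join edge has an endpoint in $X^\ell$. The state-map claim is established by the same six diagonal cases: either $\state^\ell$ places no $\one_0$-atom on the four join vertices, in which case there is nothing to check, or the $X^\ell$-vertices on the join form a nonempty subset on each side and include at least one $\one_1$-vertex. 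In the latter situation the subgraph of $K_{2,2}$ induced on these $X^\ell$-vertices is a connected bipartite graph, so every $\one_0$-vertex on the join steps to the $\one_1$-vertex and inherits its root-path, which lies entirely inside $X^\ell_{P^i}\cup\{\rvertex\}$ by \cref{thm:cvc_cw_state_exists}. Hence $\statemap_{X^\ell}(u)\in\{\zero,\one_1\}$ for every $u\in\{u^1_3,u^1_4,u^2_1,u^2_2\}$.

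The main obstacle is extracting the compatibility conditions \emph{(a)} and \emph{(b)} in a form that depends only on $L_{\ell_1}$ and $R_{\ell_2}$; once this is done, the remaining work is purely combinatorial bookkeeping that mirrors the triangular structure of \cref{table:triangular_cvc}.
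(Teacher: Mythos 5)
Your proposal is correct and takes essentially the same route as the paper: apply \cref{thm:cvc_cw_path_gadget_tight} to both gadgets, observe that the inner join vertices can only reach $\rvertex$ through the join (so root-connectivity of $X\cap U$ can be tested within $P^1+P^2+\rvertex$), and then rule out $\ell_1>\ell_2$ by a fifteen-case check against the two compatibility conditions. Your reformulation of the root-connectivity requirement as condition (b) is a clean, equivalent restatement of the condition the paper asserts directly (the paper phrases it as ``both sides lack $\one_1$, or one side is two $\zero$s and the other has a $\one_0$''), and your handling of the diagonal cases for the second assertion matches the paper's appeal to \cref{thm:cvc_cw_state_exists} plus the same case analysis.
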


\begin{proof}
  By \cref{thm:cvc_cw_path_gadget_tight}, we have that $\statemap(X \cap P^1) = \state^{\ell_1}$ and $\statemap(X \cap P^2) = \state^{\ell_2}$ for some $\ell_1 \in [6]$ and $\ell_2 \in [6]$. It remains to show that $\ell_1 \leq \ell_2$. 
  
  Define $U^1 = \{u^1_3, u^1_4\}$, $U^2 = \{u^2_1, u^2_2\}$, and $U = U^1 \cup U^2$.
  By assumption only the clique vertices of $P^1$ and $P^2$ and the vertices $u^1_1, u^1_2, u^2_3, u^2_4$ are allowed to have neighbors outside of $P^1 + P^2 + \rvertex$, hence $\{v^i_\ell \sep i \in [2], \ell \in [6]\} \subseteq N(\rvertex)$ separates the vertices in $U$ from the root $\rvertex$ in the whole graph $G$. Hence, we can see whether the vertices of $X \cap U$ are root-connected in $X$ by just considering the graph $P^1 + P^2 + \rvertex$.

  \newcommand{\statepair}{\bar{\state}}  
  \begin{figure}[h]
    \centering
    \tikzfig{pictures/cvc_cw_lb_transition_table}
    \caption{A matrix depicting the possible state transitions between two consecutive path gadgets. The rows are labeled with $\statepair^1 = (\state_3^{\ell_1}, \state_4^{\ell_1})$, $\ell_1 \in [6]$, and the columns are labeled with $\statepair^2 = (\state_1^{\ell_2}, \state_2^{\ell_2})$, $\ell_2 \in [6]$. An $\times$ marks possible state transitions in a connected vertex cover.}
    \label{fig:cvc_cw_transition_table}
  \end{figure}
  
  Define the atomic state pairs $\statepair^1 = (\state^{\ell_1}_3, \state^{\ell_1}_4) = (\statemap_{X \cap P^1}(u^1_3), \statemap_{X \cap P^1}(u^1_4))$ and $\statepair^2 = (\state^{\ell_2}_1, \state^{\ell_2}_2) = (\statemap_{X \cap P^2}(u^2_1), \statemap_{X \cap P^2}(u^2_2))$. We claim that $X$ does not cover some edge in $G[U]$ or some vertex in $X \cap U$ is not root-connected in $X$ whenever $\ell_1 > \ell_2$, see also \cref{fig:cvc_cw_transition_table}. Some edge in $G[U]$ is not covered by $X$ if and only if both $\statepair^1$ and $\statepair^2$ contain a $\zero$. Hence, $(\ell_1, \ell_2) \notin \{(3,1),(3,2),(5,1),(5,2),(5,4),(6,1),(6,2),(6,4)\}$. Some vertex in $X \cap U$ is not root-connected  in $X$ if and only if both $\statepair^1$ and $\statepair^2$ contain no $\one_1$s at all or one consists of two $\zero$s and the other one contains a $\one_0$. This additionally shows that $(\ell_1, \ell_2) \notin \{(2,1),(4,1),(4,2),(4,3),(5,3),(6,3),(6,5)\}$ and concludes the proof of the first part. 
  
  \begin{figure}[h]
    \centering
    \tikzfig{pictures/cvc_cw_equal_transition}
    \caption{Case distinction to show $\statemap_{X^\ell}(\{u^1_3, u^1_4, u^2_1, u^2_2\}) \subseteq \{\zero, \one_1\}$. The left side in each case depicts $\statepair^1$ and the right side depicts $\statepair^2$.}
    \label{fig:cvc_cw_equal_transition}
  \end{figure}
  
  For the second part, notice that $\statemap(X_{P^1}^\ell) = \statemap(X_{P^2}^\ell) = \state^\ell$ by \cref{thm:cvc_cw_state_exists} and by the same approach as in the last paragraph, we see that for $\ell = \ell_1 = \ell_2$ all edges in $G[U]$ are covered and all vertices in $X^\ell \cap U$ are root-connected in $X^\ell \cup \{\rvertex\}$, see \cref{fig:cvc_cw_equal_transition} for the different cases.
\end{proof}

We say that a \emph{cheat occurs} if $\ell_1 < \ell_2$. Creating arbitrarily long paths of the path gadgets $P$, \cref{thm:cvc_cw_path_transition} tells us that at most $|\states| - 1 = 5 = \Oh(1)$ cheats may occur on such a path which allows us to find a \emph{cheat-free region} as outlined previously.

\subsubsection{Complete Construction}

\newcommand{\nregions}{{5\ngrps\grpsize + 1}}
\newcommand{\ncolumns}{{\nclss(\nregions)}}
\newcommand{\sequence}{\mathbf{h}}

\subparagraph*{Setup.} 
Assume that \CVC can be solved in time $\Oh^*((6 - \eps)^{\lcw(G)})$ for some $\eps > 0$. Given a \SAT-instance $\formula$ with $\nvars$ variables and $\nclss$ clauses, we construct an equivalent \CVC instance with linear clique-width approximately $\nvars \log_6(2)$ so that the existence of such an algorithm for \CVC would imply that \SETH is false. 

We pick an integer $\vgrpsize$ only depending on $\eps$; the precise choice of $\vgrpsize$ will be discussed at a later point. The variables of $\formula$ are partitioned into groups of size at most $\vgrpsize$, resulting in $\ngrps = \lceil \nvars / \vgrpsize \rceil$ groups. Furthermore, we pick the smallest integer $\grpsize$ that satisfies $6^\grpsize \geq 2^\vgrpsize$, i.e., $\grpsize = \lceil \log_6(2^\vgrpsize) \rceil$. We now begin with the construction of the \CVC instance $(G = G(\formula, \vgrpsize), \budget)$.

We create the root vertex $\rvertex$ and attach a leaf $\rvertex'$ which forces $\rvertex$ into any connected vertex cover.
For every group $i \in [\ngrps]$, we create $\grpsize$ long path-like gadgets $P^{i, j}$, $j \in [\grpsize]$, where each $P^{i, j}$ consists of $\ncolumns$ copies $P^{i, j, \ell}$, $\ell \in [\ncolumns]$, of the path gadget $P$ and consecutive copies are connected by a join. More precisely, the vertices in some $P^{i, j, \ell}$ inherit their names from $P$ and the superscript of $P^{i, j, \ell}$ and for every $i \in [\ngrps]$, $j \in [\grpsize]$, $\ell \in [\ncolumns - 1]$, the vertices $\{u^{i,j,\ell}_3, u^{i,j,\ell}_4\}$ are joined to the vertices $\{u^{i,j,\ell + 1}_1, u^{i,j,\ell + 1}_2\}$. The ends of each path $P^{i, j}$, i.e.\ the vertices $u^{i, j, 1}_1, u^{i, j, 1}_2, u^{i, j, \ncolumns}_3, u^{i, j, \ncolumns}_4$, are made adjacent to the root $\rvertex$. 

\begin{figure}[h]
  \centering
  \tikzfig{pictures/cvc_cw_decoding_gadget}
  \caption{Decoding and clause gadget for \CVC.}
  \label{fig:cvc_cw_decoding_gadget}
\end{figure}

For every group $i \in [\ngrps]$ and column $\ell \in [\ncolumns]$, we create a \emph{decoding gadget} $D^{i,\ell}$ in the same style as Cygan et al.~\cite{CyganNPPRW11} for \CVC parameterized by pathwidth. Every variable group $i$ has at most $2^\vgrpsize$ possible truth assignments and by choice of $\grpsize$ we have that $6^\grpsize \geq 2^\vgrpsize$, so we can find an injective mapping $\embedding \colon \{0,1\}^\vgrpsize \rightarrow [6]^\grpsize$ which assigns to each truth assignment $\tassign \in \{0,1\}^\vgrpsize$ a sequence $\embedding(\tassign) \in [6]^\grpsize$. For each sequence $\sequence = (h_1, \ldots, h_\grpsize) \in [6]^\grpsize$, we create vertices $x^{i, \ell}_\sequence$, $\bar{x}^{i, \ell}_\sequence$, $y^{i, \ell}_\sequence$ and edges $\{x^{i, \ell}_\sequence, \bar{x}^{i, \ell}_\sequence\}$, $\{x^{i, \ell}_\sequence, y^{i, \ell}_\sequence\}$, $\{y^{i, \ell}_\sequence, \rvertex\}$. Furthermore, we add the edge $\{x^{i, \ell}_\sequence, v^{i, j, \ell}_{h_j}\}$ for all $\sequence = (h_1, \ldots, h_\grpsize) \in [6]^\grpsize$ and $j \in [\grpsize]$. Finally, we create two adjacent vertices $z^{i, \ell}$ and $\bar{z}^{i, \ell}$ and edges $\{z^{i, \ell}, y^{i, \ell}_\sequence\}$ for all $\sequence \in [6]^\grpsize$. Each decoding gadget $D^{i, \ell}$ together with the adjacent path gadgets $P^{i,j,\ell}$, $j \in [\grpsize]$, forms the \emph{block} $B^{i,\ell}$, see \cref{fig:cvc_cw_schematic} for a high-level depiction of the connections between different blocks.

Lastly, we construct the \emph{clause gadgets}.
We number the clauses of $\formula$ by $C_0, \ldots, C_{\nclss - 1}$. For every column $\ell \in [\ncolumns]$, we create an adjacent pair of vertices $o^{\ell}$ and $\bar{o}^{\ell}$. Let $\ell' \in [0, \nclss - 1]$ be the remainder of $(\ell - 1)$ modulo $\nclss$; for all $i \in [\ngrps]$, $\sequence \in \embedding(\{0,1\}^\vgrpsize)$ such that $\embedding^{-1}(\sequence)$ is a truth assignment for variable group $i$ satisfying clause $C_{\ell'}$, we add the edge $\{o^{\ell}, y^{i, \ell}_\sequence\}$. See \cref{fig:cvc_cw_decoding_gadget} for a depiction of the decoding and clause gadgets.

\begin{figure}[h]
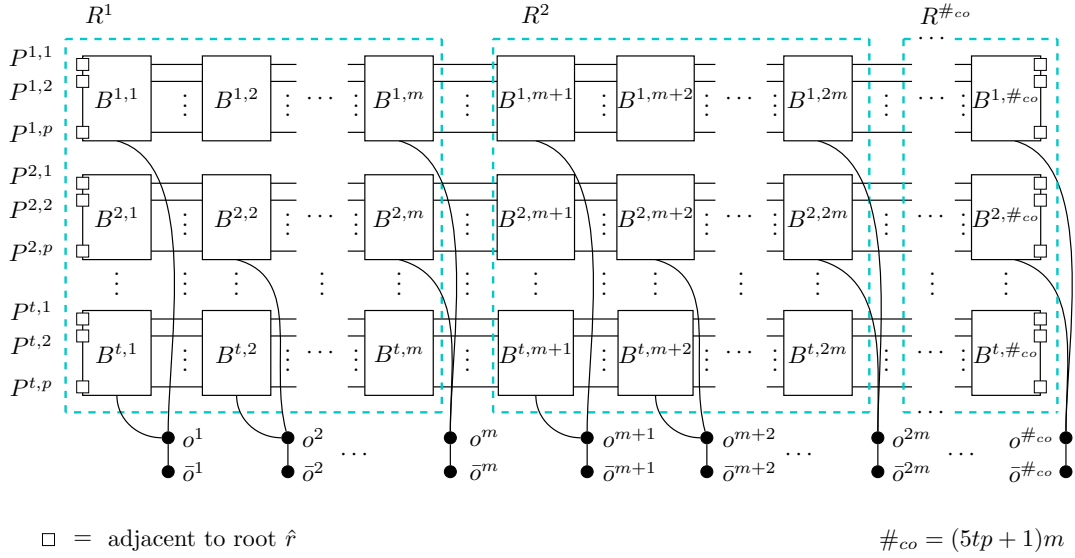

  \centering
  \scalebox{.9}{\tikzfig{pictures/cvc_cw_schematic}}
  \caption{High-level construction for \CVC. Every edge between two blocks $B^{i,\ell}$ and $B^{i,\ell+1}$ represents a join. The regions $R^\gamma$ are important for the proof of \cref{thm:cvc_cw_sol_to_sat}.}
  \label{fig:cvc_cw_schematic}
\end{figure}

\begin{lem}\label{thm:cvc_cw_sat_to_sol}
 If $\formula$ is satisfiable, then there exists a connected vertex cover $X$ of $G = G(\formula, \vgrpsize)$ of size $|X| \leq (21\ngrps\grpsize + (6^\grpsize + 2)\ngrps + 1)\ncolumns + 1 = \budget$.
\end{lem}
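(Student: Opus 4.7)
Given a satisfying assignment $\tassign^* \in \{0,1\}^\nvars$ of $\formula$, let $\tassign^*_i \in \{0,1\}^\vgrpsize$ be its restriction to variable group $i$ and set $\sequence^*_i = \embedding(\tassign^*_i) = (h^*_{i,1}, \ldots, h^*_{i,\grpsize}) \in [6]^\grpsize$. I will build $X$ by combining canonical partial solutions per path gadget with a uniform choice per decoding and clause gadget: put $\rvertex \in X$; for every triple $(i,j,\ell)$ take the vertex cover $X^{h^*_{i,j}}_{P^{i,j,\ell}}$ of $P^{i,j,\ell}$ given by \cref{thm:cvc_cw_state_exists}; for every decoding gadget $D^{i,\ell}$ put all vertices $x^{i,\ell}_\sequence$ (for all $\sequence \in [6]^\grpsize$), the single vertex $y^{i,\ell}_{\sequence^*_i}$, and $z^{i,\ell}$ into $X$; and for every column $\ell$ put $o^\ell \in X$. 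The cardinality is then exactly $21\ngrps\grpsize\ncolumns + (6^\grpsize+2)\ngrps\ncolumns + \ncolumns + 1 = \budget$.

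\textbf{Vertex cover check.} Within each $P^{i,j,\ell}$ all internal edges are covered by \cref{thm:cvc_cw_state_exists}. For the joins between consecutive gadgets $P^{i,j,\ell}$ and $P^{i,j,\ell+1}$ as well as for the extra edges to $\rvertex$ at the row ends, I will invoke \cref{thm:cvc_cw_path_transition} with $\ell_1=\ell_2=h^*_{i,j}$, which guarantees that $X_{P^{i,j,\ell}}^{h^*_{i,j}} \cup X_{P^{i,j,\ell+1}}^{h^*_{i,j}}$ is a vertex cover of the two joined gadgets. For the decoding gadget, each edge $\{x^{i,\ell}_\sequence, \bar x^{i,\ell}_\sequence\}$, $\{x^{i,\ell}_\sequence, y^{i,\ell}_\sequence\}$, and $\{x^{i,\ell}_\sequence, v^{i,j,\ell}_{h_j}\}$ is covered because $x^{i,\ell}_\sequence \in X$; the edges $\{y^{i,\ell}_\sequence, \rvertex\}$ are covered by $\rvertex$; and the edges incident to $z^{i,\ell}$ are covered by $z^{i,\ell} \in X$. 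For each column $\ell$, the edge $\{o^\ell, \bar o^\ell\}$ and every edge $\{o^\ell, y^{i,\ell}_\sequence\}$ is covered by $o^\ell \in X$.

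\textbf{Connectivity check.} This is the main obstacle and proceeds vertex by vertex. All clique vertices contained in $X$ and all $y^{i,\ell}_{\sequence^*_i}$ are directly adjacent to $\rvertex$. The ends of each row ($u^{i,j,1}_{1},u^{i,j,1}_{2},u^{i,j,\ncolumns}_{3},u^{i,j,\ncolumns}_{4}$) are adjacent to $\rvertex$, so if they belong to $X$ they are root-connected immediately. For the internal join vertices, I apply \cref{thm:cvc_cw_path_transition} to each pair $(P^{i,j,\ell},P^{i,j,\ell+1})$ with $\ell_1=\ell_2=h^*_{i,j}$ to conclude that on $\{u^{i,j,\ell}_3, u^{i,j,\ell}_4, u^{i,j,\ell+1}_1, u^{i,j,\ell+1}_2\}$ every vertex of $X$ is already root-connected within the union of these two gadgets. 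Thus for every single gadget $P^{i,j,\ell}$ the hypothesis of the second part of \cref{thm:cvc_cw_state_exists} is satisfied, so every vertex of $X^{h^*_{i,j}}_{P^{i,j,\ell}}$ is root-connected in $X$. For the decoding gadgets, $z^{i,\ell}$ reaches $\rvertex$ via $y^{i,\ell}_{\sequence^*_i}$; the vertex $x^{i,\ell}_{\sequence^*_i}$ reaches $\rvertex$ via $y^{i,\ell}_{\sequence^*_i}$; and for any $\sequence \neq \sequence^*_i$ there exists a coordinate $j$ with $h_j \neq h^*_{i,j}$, so the neighbor $v^{i,j,\ell}_{h_j}$ of $x^{i,\ell}_\sequence$ lies in $X$ and is adjacent to $\rvertex$. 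Finally, for every column $\ell$ with clause $C_{\ell'}$, since $\tassign^*$ satisfies $C_{\ell'}$, some literal of $C_{\ell'}$ from some group $i^*$ is satisfied, hence $\tassign^*_{i^*}$ satisfies $C_{\ell'}$ in the sense of the construction. Thus the edge $\{o^\ell, y^{i^*,\ell}_{\sequence^*_{i^*}}\}$ exists and $y^{i^*,\ell}_{\sequence^*_{i^*}} \in X$, which gives $o^\ell$ a route to $\rvertex$. Together these verifications show that $G[X]$ is connected, completing the proof.
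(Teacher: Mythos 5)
Your proposal constructs exactly the same solution set $X$ as the paper and verifies the size, vertex cover, and connectivity properties by invoking the same auxiliary lemmata (the canonical partial solutions from \cref{thm:cvc_cw_state_exists} and the second part of \cref{thm:cvc_cw_path_transition}). The argument is correct and is essentially the paper's own proof, presented with a bit more detail in the vertex-cover and connectivity bookkeeping.
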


\begin{proof}
  Let $\tassign$ be a satisfying truth assignment of $\formula$ and let $\tassign^i$ denote the restriction of $\tassign$ to the $i$-th variable group for every $i \in [\ngrps]$ and let $\embedding(\tassign^i) = \sequence^i = (h^i_1, \ldots, h^i_\grpsize) \in [6]^{\grpsize}$ be the corresponding sequence. 
 The connected vertex cover is given by 
 \begin{equation*}
   X = \{\rvertex\} \cup \bigcup_{\ell \in [\ncolumns]} \left(\{o^\ell\} \cup \bigcup_{i \in [\ngrps]} \left(\{y^{i, \ell}_{\sequence^i}, z^{i,\ell}\} \cup \bigcup_{\sequence \in [6]^\grpsize} \{x^{i,\ell}_\sequence\} \cup \bigcup_{j \in [\grpsize]} X_{P^{i,j,\ell}}^{h^i_j} \right) \right),
 \end{equation*}
 where $X_{P^{i,j,\ell}}^{h^i_j}$ refers to the sets given by \cref{thm:cvc_cw_state_exists}. 
 
 Clearly, $|X| = \budget$, so it remains to prove that $X$ is a connected vertex cover. By \cref{thm:cvc_cw_state_exists} and the second part of \cref{thm:cvc_cw_path_transition} all edges induced by the path gadgets are covered by $X$ and all vertices on the path gadgets that belong to $X$ are root-connected, except for possibly the vertices at the ends, i.e. $\bigcup_{i \in [\ngrps]} \bigcup_{j \in [\grpsize]} \{u^{i,j,1}_1, u^{i,j,1}_2, u^{i,j,\ncolumns}_3, u^{i,j,\ncolumns}_4\}$, but these are contained in the neighborhood of $\rvertex$ by construction of $G$.
 
 Fix $i \in [\ngrps]$, $\ell \in [\ncolumns]$, and consider the corresponding decoding gadget. Since $z^{i, \ell} \in X$ and $x^{i,\ell}_\sequence \in X$ for all $\sequence \in [6]^\grpsize$, all edges induced by the decoding gadget and all edges between the decoding gadget and the path gadgets are covered by $X$. Furthermore, since $o^{\ell} \in X$, all edges inside the clause gadget and all edges between the clause gadget and the decoding gadgets are covered by $X$. Hence, $X$ has to be a vertex cover of $G$.
 
 It remains to prove that the vertices in the decoding and clause gadgets that belong to $X$ are also root-connected. Again, fix $i \in [\ngrps]$, $\ell \in [\ncolumns]$, and $\sequence = (h_1, \ldots, h_\grpsize) \in [6]^\grpsize \setminus \{\sequence^i\}$. Since $\sequence \neq \sequence^i$, there is some $j \in [\grpsize]$ such that $h_j \neq h_j^i$ and hence $v^{i,j,\ell}_{h_j} \in X$ by \cref{thm:cvc_cw_state_exists} which connects $x_\sequence^{i, \ell}$ to the root $\rvertex$. The vertices $x_{\sequence^i}^{i, \ell}$ and $z^{i, \ell}$ are root-connected via $y^{i, \ell}_{\sequence^i} \in X$.
 
 We conclude by showing that $o^\ell$ is root-connected for all $\ell \in [\ncolumns]$. Since $\tassign$ is a satisfying truth assignment of $\formula$, there is some variable group $i \in [\ngrps]$ such that $\tassign^i$ already satisfies clause $C_{\ell'}$, where $\ell'$ is the remainder of $(\ell - 1)$ modulo $\nclss$. By construction of $G$ and $X$, the vertex $y^{i, \ell}_{\sequence^i} \in X$ is adjacent to $o^\ell$, since $\embedding(\tassign^i) = \sequence^i$, and connects $o^\ell$ to the root $\rvertex$. This shows that all vertices of $X$ are root-connected, so $G[X]$ has to be connected.
\end{proof}

\begin{lem}\label{thm:cvc_cw_sol_to_sat}
  If there exists a connected vertex cover $X$ of $G = G(\formula, \vgrpsize)$ of size $|X| \leq (21\ngrps\grpsize + (6^\grpsize + 2)\ngrps + 1)\ncolumns + 1 = \budget$, then $\formula$ is satisfiable.
\end{lem}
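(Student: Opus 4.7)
The plan is to show that any connected vertex cover of size at most $\budget$ must be \emph{tight on every gadget}, and then to read off a satisfying assignment from the forced states. For tightness, the pendant $\rvertex'$ forces $\rvertex \in X$; each $P^{i,j,\ell}$ contributes at least $21$ vertices by \cref{thm:cvc_cw_path_gadget_lb}; each pair $\{o^\ell, \bar o^\ell\}$ contributes at least $1$. For each decoding gadget $D^{i,\ell}$, the pendants $\bar x^{i,\ell}_\sequence$ and $\bar z^{i,\ell}$ have degree one, so connectedness of $G[X]$ (which has size at least two) forces $x^{i,\ell}_\sequence \in X$ for every $\sequence$ and $z^{i,\ell} \in X$, accounting for $6^\grpsize + 1$ vertices. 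Since $z^{i,\ell}$ is adjacent only to $\bar z^{i,\ell}$ (which is a pendant) and the $y^{i,\ell}_\sequence$, root-connectivity of $z^{i,\ell}$ forces at least one $y^{i,\ell}_\sequence \in X$, giving $6^\grpsize + 2$ vertices per decoding gadget. Summing these lower bounds matches $\budget$ exactly, so all inequalities are equalities; in particular, $\bar x^{i,\ell}_\sequence, \bar z^{i,\ell} \notin X$ and exactly one $y$-vertex per decoding gadget lies in $X$.

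Tightness on path gadgets lets me apply \cref{thm:cvc_cw_path_gadget_tight} to assign each $P^{i,j,\ell}$ a state $h(i,j,\ell) \in [6]$ via the unique $v^{i,j,\ell}_{h(i,j,\ell)} \notin X$. By \cref{thm:cvc_cw_path_transition}, $h(i,j,\cdot)$ is nondecreasing in $\ell$ along each row $(i,j)$, so each row admits at most $|\states| - 1 = 5$ state changes, bounding the total across all $\ngrps\grpsize$ rows by $5\ngrps\grpsize$. Partitioning the $\ncolumns = \nclss(5\ngrps\grpsize + 1)$ columns into $5\ngrps\grpsize + 1$ regions $R^\gamma$ of $\nclss$ consecutive columns, pigeonhole produces a \emph{cheat-free} region $R^\gamma$ in which $h(i,j,\ell)$ depends only on $(i,j)$. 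For $\ell \in R^\gamma$, define $\sequence^i = (h(i,1,\ell), \ldots, h(i,\grpsize,\ell)) \in [6]^\grpsize$.

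By decoding-gadget tightness, each $D^{i,\ell}$ contains a unique $y$-vertex in $X$, and I claim that in $R^\gamma$ it must be $y^{i,\ell}_{\sequence^i}$. Indeed, the external neighbors of $x^{i,\ell}_{\sequence^i} \in X$ are precisely the clique vertices $v^{i,j,\ell}_{h(i,j,\ell)}$ for $j \in [\grpsize]$, none of which lie in $X$ by choice of $\sequence^i$; together with $\bar x^{i,\ell}_{\sequence^i} \notin X$, the root-connectivity of $x^{i,\ell}_{\sequence^i}$ forces $y^{i,\ell}_{\sequence^i} \in X$. For every column $\ell \in R^\gamma$, the vertex $o^\ell \in X$ must likewise be root-connected; since $o^\ell$ is adjacent only to $\bar o^\ell \notin X$ and to those $y^{i,\ell}_\sequence$ with $\sequence = \embedding(\tassign^i)$ for some $\tassign^i \in \{0,1\}^\vgrpsize$ satisfying $C_{\ell'}$ (where $\ell' \equiv \ell - 1 \pmod{\nclss}$), some group $i$ must satisfy $\sequence^i = \embedding(\tassign^i)$ for such a $\tassign^i$. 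As $\ell$ ranges over $R^\gamma$ it hits every residue modulo $\nclss$, so each clause $C_{\ell'}$ is satisfied by the assignment induced on some group. Concatenating $\tassign^i = \embedding^{-1}(\sequence^i)$ over those groups with $\sequence^i \in \embedding(\{0,1\}^\vgrpsize)$ (and choosing arbitrarily on the remaining groups) yields an assignment satisfying $\formula$. The main technical obstacle is the joint vertex-cover/connectivity tightness analysis of the decoding gadgets together with the connectivity argument that pins the chosen $y$-vertex to $\sequence^i$ in the cheat-free region.
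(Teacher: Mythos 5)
Your proof is correct and follows essentially the same route as the paper: derive tightness of the budget from disjoint lower bounds, apply \cref{thm:cvc_cw_path_gadget_tight} and \cref{thm:cvc_cw_path_transition} to find a cheat-free region by pigeonhole, pin the chosen $y$-vertex in each decoding gadget to $\sequence^i$ via root-connectivity of $x^{i,\ell}_{\sequence^i}$, and then use root-connectivity of $o^\ell$ to read off a satisfying assignment. (The only cosmetic gap is that you assert ``$o^\ell \in X$'' after only noting that the pair $\{o^\ell, \bar{o}^\ell\}$ contributes at least one vertex; the degree-one neighbor $\bar{o}^\ell$ forces $o^\ell \in X$ directly, exactly as you argue for the $x$- and $z$-vertices.)
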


\begin{proof}
  We begin by arguing that $X$ has to satisfy $|X| = \budget$. First, we must have that $\rvertex \in X$, because $\rvertex$ has a neighbor of degree 1. By \cref{thm:cvc_cw_path_gadget_lb}, we have that $|X \cap P^{i,j,\ell}| \geq 21$ for all $i \in [\ngrps]$, $j \in [\grpsize]$, $\ell \in [\ncolumns]$. In every decoding gadget, i.e. one for every $i \in [\ngrps]$ and $\ell \in [\ncolumns]$, the set $\{z^{i,\ell}\} \cup \bigcup_{\sequence \in [6]^\grpsize} x_\sequence^{i, \ell}$ has to be contained in $X$, since every vertex in this set has a neighbor of degree 1. Furthermore, to connect $z^{i,\ell}$ to $\rvertex$, at least one of the vertices $y^{i,\ell}_\sequence$, $\sequence \in [6]^\grpsize$, has to be contained in $X$. Hence, $X$ must contain at least $6^\grpsize + 2$ vertices per decoding gadget. Lastly, $o^\ell \in X$ for all $\ell \in [\ncolumns]$, since $o^\ell$ has a neighbor of degree 1. Since we have only considered disjoint vertex sets, this shows that $|X| = \budget$ and all of the previous inequalities have to be tight.
  
  By \cref{thm:cvc_cw_path_gadget_tight}, we know that $X$ assumes one of the six possible states on each $P^{i,j,\ell}$. Fix some $P^{i,j} = \bigcup_{\ell \in [\ncolumns]} P^{i,j,\ell}$ and note that due to \cref{thm:cvc_cw_path_transition} the state can change at most five times along $P^{i,j}$. Such a state change is called a \emph{cheat}. Let $\gamma \in [0, 5 \ngrps \grpsize]$ and define the $\gamma$-th \emph{region} $R^\gamma = \bigcup_{i \in [\ngrps]} \bigcup_{j \in [\grpsize]} \bigcup_{\ell = \gamma \nclss + 1}^{(\gamma + 1) \nclss} P^{i,j,\ell}$. Since there are $\nregions$ regions, there is at least one region $R^\gamma$ such that no cheat occurs in $R^\gamma$. We will consider this region for the remainder of the proof and read off a satisfying truth assignment from this region.
  
  For $i \in [\ngrps]$, define $\sequence^{i} = (h^i_1, \ldots, h^i_\grpsize) \in [6]^\grpsize$ such that $v^{i,j,\gamma \nclss + 1}_{h^i_j} \notin X$ for all $j \in [\grpsize]$; this is well-defined by \cref{thm:cvc_cw_path_gadget_tight}. Since $R^\gamma$ does not contain any cheats, the definition of $\sequence^i$ is independent of which column $\ell \in [\gamma \nclss + 1, (\gamma + 1)\nclss]$ we consider. For every $i \in [\ngrps]$ and $\ell \in [\gamma \nclss + 1, (\gamma + 1)\nclss]$, we claim that $y^{i, \ell}_\sequence \in X$ if and only if $\sequence = \sequence^i$. We have already established that for every $i$ and $\ell$, there is exactly one $\sequence$ such that $y^{i, \ell}_\sequence \in X$. Consider the vertex $x^{i, \ell}_{\sequence^i} \in X$, its neighbors in $G$ are $v^{i, 1, \ell}_{h^i_1}, v^{i, 2, \ell}_{h^i_2}, \ldots, v^{i, \grpsize, \ell}_{h^i_\grpsize}$, $\bar{x}^{i,\ell}_{\sequence^i}$, and $y^{i,\ell}_{\sequence^i}$. By construction of $\sequence^i$ and the tight allocation of the budget, we have $(N_G(x^{i, \ell}_{\sequence^i}) \setminus \{y^{i, \ell}_{\sequence^i}\}) \cap X = \emptyset$. Therefore, $X$ has to include $y^{i, \ell}_{\sequence^i}$ to connect $x^{i, \ell}_{\sequence^i}$ to the root $\rvertex$. This shows the claim.
  
  For $i \in [\ngrps]$, we define the truth assignment $\tassign^i$ for variable group $i$ by taking an arbitrary truth assignment if $\sequence^i \notin \embedding(\{0,1\}^\vgrpsize)$ and setting $\tassign^i = \embedding^{-1}(\sequence^i)$ otherwise. By setting $\tassign = \bigcup_{i \in [\ngrps]} \tassign^i$ we obtain a truth assignment for all variables and we claim that $\tassign$ satisfies $\formula$. Consider some clause $C_{\ell'}$, $\ell' \in [0, \nclss - 1]$, and let $\ell = \gamma \nclss + \ell' + 1$. We have already argued that $o^\ell \in X$ and to connect $o^\ell$ to the root $\rvertex$, there has to be some $y^{i, \ell}_{\sequence} \in N_G(o^\ell) \cap X$. By the previous claim, $\sequence = \sequence^i$ and therefore $\tassign^i$, and also $\tassign$, satisfy clause $C_{\ell'}$ due to the construction of $G$. Because the choice of $C_{\ell'}$ was arbitrary, $\tassign$ has to be a satisfying assignment of $\formula$.
\end{proof}

\begin{lem}\label{thm:cvc_cw_bound}
  The constructed graph $G = G(\formula, \vgrpsize)$ has $\lcw(G) \leq \ngrps \grpsize + 3 \cdot 6^\grpsize + \Oh(1)$ and a linear clique-expression of this width can be constructed in polynomial time.
\end{lem}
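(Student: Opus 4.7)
The plan is to construct the linear clique-expression column-by-column, i.e., block column $B^{*,\ell}$ by block column (cf.\ \cref{fig:cvc_cw_schematic}). Throughout the expression, I maintain the following live labels: one dedicated label for $\rvertex$ (the leaf $\rvertex'$ is introduced, attached, and then relabeled to a dead label early on); for each row $(i,j) \in [\ngrps] \times [\grpsize]$ one \emph{row label} that stores the pair $\{u^{i,j,\ell}_3, u^{i,j,\ell}_4\}$ of the rightmost already-built copy of $P$ in that row, yielding $\ngrps\grpsize$ row labels; up to $3 \cdot 6^\grpsize$ \emph{sequence labels} that during the processing of the current decoding gadget $D^{i,\ell}$ each hold one of $x^{i,\ell}_\sequence, \bar{x}^{i,\ell}_\sequence, y^{i,\ell}_\sequence$ (one label per sequence and per vertex type); and $\Oh(1)$ \emph{scratch labels} for the clause vertices $o^\ell, \bar{o}^\ell$, the auxiliary decoding vertices $z^{i,\ell}, \bar{z}^{i,\ell}$, and for the internal construction of a single copy $P^{i,j,\ell}$, which has only constantly many vertices.

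To process column $\ell$ I first introduce $o^\ell$ and $\bar{o}^\ell$, join them, and relabel $\bar{o}^\ell$ to the dead label, keeping $o^\ell$ live for its later joins. Then, for each group $i \in [\ngrps]$ in turn, I introduce the $3 \cdot 6^\grpsize$ vertices of $D^{i,\ell}$ into the sequence labels and realize the internal edges of $D^{i,\ell}$ by one-by-one label-to-label joins: the $x$--$\bar{x}$ and $x$--$y$ joins per sequence, the $y$--$\rvertex$ joins, and the $y$--$z^{i,\ell}$ joins via a scratch label hosting $z^{i,\ell}$; next, I join $o^\ell$ to exactly those $y$-labels whose sequence $\sequence$ corresponds to an assignment of variable group $i$ satisfying clause $C_{(\ell - 1) \bmod \nclss}$. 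For each row $j \in [\grpsize]$ of group $i$ I then build $P^{i,j,\ell}$ in $\Oh(1)$ scratch labels; for every one of its six clique vertices $v^{i,j,\ell}_h$ I perform the $6^{\grpsize - 1}$ label-to-label joins that attach it to the $x$-labels of sequences $\sequence = (h_1, \ldots, h_\grpsize)$ with $h_j = h$; finally, I join the new $u_1, u_2$ (in a scratch label) with the row label of $(i,j)$ to realize the edges to the previous column's $u_3, u_4$, move the old $u_3, u_4$ to the dead label, and place the new $u_3, u_4$ into the now-vacant row label. Once all rows of group $i$ have been handled, I relabel every sequence label of $D^{i,\ell}$ to the dead label, freeing the $3 \cdot 6^\grpsize$ slots for group $i+1$.

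The number of simultaneously live labels is always bounded by the $\ngrps\grpsize$ row labels, the at most $3 \cdot 6^\grpsize$ sequence labels, and the $\Oh(1)$ scratch labels, which gives $\lcw(G) \leq \ngrps\grpsize + 3 \cdot 6^\grpsize + \Oh(1)$. The total number of operations is polynomial in $|V(G)|$: each of the $\Oh(\ncolumns \cdot \ngrps)$ decoding gadgets contributes $\Oh(\grpsize \cdot 6^\grpsize)$ operations (dominated by the clique-vertex to $x$-label joins across all rows), and each path gadget copy contributes $\Oh(6^\grpsize)$ further joins, so the expression can be emitted in polynomial time. The main obstacle is the careful sequencing of the clique-vertex joins: each clique vertex $v^{i,j,\ell}_h$ must be placed in a fresh scratch label, receive its $6^{\grpsize - 1}$ joins one by one, and be retired to a dead label before the next clique vertex is introduced, so that no spurious edges are created between the path gadget's internal vertices or between different sequences of $D^{i,\ell}$.
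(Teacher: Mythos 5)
Your proof takes essentially the same route as the paper: build the graph column by column, group by group, row by row, maintaining one persistent label per long path to carry forward the right-boundary join vertices, $\Oh(6^\grpsize)$ temporary labels for the current decoding gadget, and $\Oh(1)$ scratch labels for everything else. The label accounting (paper: $\ngrps\grpsize + (3\cdot 6^\grpsize + 2) + 38 + 2 + 1 + 1$; yours: $\ngrps\grpsize + 3\cdot 6^\grpsize + \Oh(1)$) is equivalent, and the sequencing of join, relabel-to-dead, and relabel-to-row-label for $u_1,u_2$ and $u_3,u_4$ matches Algorithm~\ref{algo:cvc_cw_expression}.

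One part of your write-up is off, though the fix is trivial: you describe the handling of the clique vertices $v^{i,j,\ell}_1,\ldots,v^{i,j,\ell}_6$ as the ``main obstacle'' and claim each one ``must be placed in a fresh scratch label, receive its $6^{\grpsize-1}$ joins one by one, and be retired to a dead label before the next clique vertex is introduced.'' Taken literally, this fails: the six clique vertices of a single $P^{i,j,\ell}$ induce a clique, so all six must be alive simultaneously to realize the $\binom{6}{2}$ internal edges; if $v_1$ is dead before $v_2$ is introduced, the edge $\{v_1,v_2\}$ can never be added. The correct and simpler handling (implicit in the paper's use of one gadget label per vertex of $P$) is to keep all six clique vertices in six distinct scratch labels for the whole time the block is processed: this both realizes the clique edges during the build of $P^{i,j,\ell}$ and lets you perform each $v_h$'s $6^{\grpsize-1}$ joins to the $x$-labels without any spurious edges, since each $v_h$ has its own label; retirement to a dead label is only for bookkeeping and costs nothing since six is $\Oh(1)$. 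You also omit the special case that the first column's $u_1,u_2$ (and the last column's $u_3,u_4$) are joined to the root label rather than to the row label, which the paper's Algorithm~\ref{algo:cvc_cw_expression} handles explicitly for $\ell=1$; it is a one-line fix.
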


\begin{proof}
  We will describe how to construct a linear clique-expression for $G$ of width $\ngrps \grpsize + 3 \cdot 6^\grpsize + \Oh(1)$. The clique-expression will use one path label $(i,j)$ for every long path $P^{i,j}$, $i \in [\ngrps]$, $j \in [\grpsize]$, temporary decoding labels for every vertex of a decoding gadget, i.e.\ $3 \cdot 6^\grpsize + 2$ many, temporary gadget labels for every vertex of a path gadget, i.e.\ 38 many, two temporary clause labels for the clause gadget, one label for the root vertex and a trash label.
  
  \begin{algorithm}
    Introduce $\rvertex$ with root label and $\rvertex'$ with trash label and add the edge $\{\rvertex, \rvertex'\}$\;
    \For{$\ell \in [\ncolumns]$}
    {
      Introduce $o^\ell$ and $\bar{o}^\ell$ and the edge $\{o^\ell, \bar{o}^\ell\}$ with the clause labels\;
      \For{$i \in [\ngrps]$}
      {
        Build $D^{i,\ell}$ using decoding labels and add edges to $o^\ell$\;
        \For{$j \in [\grpsize]$}
        {
          Build $P^{i,j,\ell}$ using gadget labels and add edges to $D^{i,j}$\;
          Join $u^{i,j,\ell}_1$ and $u^{i,j,\ell}_2$ to path label $(i,j)$ if $\ell > 1$ and to root label otherwise\;
          Relabel path label $(i,j)$ to trash label\;
          Relabel $u^{i,j,\ell}_3$ and $u^{i,j,\ell}_4$ to path label $(i,j)$\;
          Relabel all other gadget labels to the trash label\;
        }
        Relabel all decoding labels to the trash label\;
      }
      Relabel all clause labels to the trash label\;
    }
    \caption{Constructing a linear clique-expression for $G$.}
    \label{algo:cvc_cw_expression}
  \end{algorithm}
  The construction of the clique-expression is described in \cref{algo:cvc_cw_expression} and the central idea is to proceed column by column and group by group in each column. By reusing the temporary labels, we keep the total number of labels small. The maximum number of labels used simultaneously occurs in line 7 and is $\ngrps \grpsize + (3 \cdot 6^\grpsize + 2) + 38 + 2 + 1 + 1$. This concludes the proof.
\end{proof}

\begin{thm}
  There is no algorithm that solves \CVC, given a linear $k$-expression, in time $\Oh^*((6 - \eps)^k)$ for some $\eps > 0$, unless \CNFSETH fails.
\end{thm}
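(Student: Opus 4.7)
The plan is to combine the reduction of the preceding three lemmas with a standard \SETH argument. Assume, for contradiction, that \CVC can be solved in time $\Oh^*((6-\eps)^k)$ given a linear $k$-expression, for some fixed $\eps > 0$. Given a CNF-\SAT instance $\formula$ with $\nvars$ variables and $\nclss$ clauses, I would pick a sufficiently large constant $\vgrpsize = \vgrpsize(\eps)$ (whose precise value is deferred), apply the construction of $G = G(\formula, \vgrpsize)$ with budget $\budget$, compute a linear clique-expression of width $k \leq \ngrps \grpsize + 3 \cdot 6^\grpsize + \Oh(1)$ in polynomial time via \cref{thm:cvc_cw_bound}, and invoke the hypothetical algorithm on $(G, \budget)$. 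By \cref{thm:cvc_cw_sat_to_sol} and \cref{thm:cvc_cw_sol_to_sat}, this decides satisfiability of $\formula$, and since $\vgrpsize$ is fixed the whole reduction runs in polynomial time in $\nvars + \nclss$.

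The remaining task is a short running-time analysis. From $\ngrps \leq \lceil \nvars / \vgrpsize \rceil$ and $\grpsize = \lceil \log_6 2^\vgrpsize \rceil \leq \vgrpsize \log_6 2 + 1$, I would derive $\ngrps \grpsize \leq \nvars \log_6 2 + \Oh(\nvars/\vgrpsize) + \Oh(1)$. Since $\log_6(6-\eps) < 1$, there is some $\delta = \delta(\eps) > 0$ with $(6-\eps)^{\log_6 2} = 2^{1-\delta}$, hence $(6-\eps)^{\ngrps \grpsize}$ is bounded by $2^{(1-\delta)\nvars} \cdot (6-\eps)^{\Oh(\nvars/\vgrpsize) + \Oh(1)}$. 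Furthermore $(6-\eps)^{3 \cdot 6^\grpsize + \Oh(1)}$ is merely a (huge) constant depending only on $\eps$ and $\vgrpsize$. The main obstacle, and the point at which $\vgrpsize$ must be pinned down, is choosing $\vgrpsize = \vgrpsize(\eps)$ large enough so that $(6-\eps)^{c/\vgrpsize} \leq 2^{\delta/2}$ for the hidden constant $c$ of $\Oh(\nvars/\vgrpsize)$; this absorbs the slack $(6-\eps)^{\Oh(\nvars/\vgrpsize)}$ into $2^{\delta \nvars / 2}$ and yields an algorithm for \SAT with running time $\Oh^*(2^{(1 - \delta/2)\nvars})$, contradicting \CNFSETH.
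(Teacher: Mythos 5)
Your proposal is correct and mirrors the paper's proof almost exactly: both invoke \cref{thm:cvc_cw_bound}, \cref{thm:cvc_cw_sat_to_sol}, and \cref{thm:cvc_cw_sol_to_sat} to get a polynomial-time reduction to \CVC with $\lcw \leq \ngrps\grpsize + \Oh(1)$, and both bound $(6-\eps)^{\ngrps\grpsize}$ by splitting it into a $2^{(\log_6(6-\eps))\nvars}$ main term and a $(6-\eps)^{\Oh(\nvars/\vgrpsize)}$ slack term that is absorbed by choosing $\vgrpsize$ large enough. The only difference is notational (the paper writes $\delta_1+\delta_2<1$ where you write $(1-\delta)+\delta/2<1$).
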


\begin{proof}
  Assume that there exists an algorithm $\algo$ that solves \CVC in time $\Oh^*((6 - \eps)^k)$ for some $\eps > 0$ given a linear $k$-expression. Given $\vgrpsize$, we define $\delta_1 < 1$ such that $(6 - \eps)^{\log_6(2)} = 2^{\delta_1}$ and $\delta_2$ such that $(6 - \eps)^{1 / \vgrpsize} = 2^{\delta_2}$. By picking $\vgrpsize$ large enough, we can ensure that $\delta = \delta_1 + \delta_2 < 1$. We will show how to solve \SAT using $\algo$ in time $\Oh^*(2^{\delta \nvars})$, where $\nvars$ is the number of variables, thus contradicting \CNFSETH.
  
  Given a \SAT instance $\formula$, we construct $G = G(\formula, \vgrpsize)$ and the linear clique-expression from \cref{thm:cvc_cw_bound} in polynomial time, note that we have $\vgrpsize = \Oh(1)$ and hence $\grpsize = \Oh(1)$; recall $\grpsize = \lceil \log_6(2^\vgrpsize) \rceil$. We then run $\algo$ on $G$ and return its answer. This is correct by \cref{thm:cvc_cw_sat_to_sol} and \cref{thm:cvc_cw_sol_to_sat}. Due to \cref{thm:cvc_cw_bound}, the running time is
  \begin{alignat*}{6}
    \phantom{\leq} \quad & \Oh^*\left( (6 - \eps)^{\ngrps \grpsize + 3 \cdot 6^\grpsize + \Oh(1)} \right)
    & \,\,\leq\,\, & \Oh^*\left( (6 - \eps)^{\ngrps \grpsize} \right) 
    & \,\,\leq\,\, & \Oh^*\left( (6 - \eps)^{\lceil \frac{\nvars}{\vgrpsize} \rceil \grpsize} \right) \\
    \leq \quad & \Oh^*\left( (6 - \eps)^{\frac{\nvars}{\vgrpsize} \grpsize} \right) 
    & \,\,\leq\,\, & \Oh^*\left( (6 - \eps)^{\frac{\nvars}{\vgrpsize} \lceil \log_6(2^\vgrpsize) \rceil} \right) 
    & \,\,\leq\,\, & \Oh^*\left( (6 - \eps)^{\frac{\nvars}{\vgrpsize} \log_6(2^\vgrpsize)} (6 - \eps)^{\frac{\nvars}{\vgrpsize}} \right)\\
    \leq \quad & \Oh^*\left( 2^{\delta_1 \vgrpsize \frac{\nvars}{\vgrpsize}} 2^{\delta_2 \nvars} \right)
    & \,\,\leq\,\, & \Oh^*\left( 2^{(\delta_1 + \delta_2) \nvars} \right)
    & \,\,\leq\,\, & \Oh^*\left( 2^{\delta \nvars} \right),
    \end{alignat*}
    hence completing the proof.
\end{proof}

\subsection{Connected Dominating Set}
\label{sec:cw_cds_lb}

This subsection is devoted to proving that \CDS (with unit costs) cannot be solved in time $\Oh^*((5-\eps)^{\lcw(G)})$ for some $\eps > 0$ unless the \CNFSETH fails. We briefly outline the intuition behind the design of the path gadget, which largely follows the same approach as for \CVC. Afterwards, we present the construction of the path gadget and analyze it, then we move on to the complete construction and correctness proofs. The decoding gadgets are again directly adapted from the lower bound for \CVC parameterized by pathwidth given by Cygan et al.~\cite{CyganNPPRW11}. 

\subparagraph*{Root.} We create a distinguished vertex $\rvertex$ called the \emph{root} and by attaching a vertex of degree 1 to $\rvertex$ we ensure that every connected dominating set has to contain $\rvertex$.

\subsubsection{Path Gadget}

To rule out the running time $\Oh^*((5-\eps)^{\lcw(G)})$ for any $\eps > 0$, we have to build a path gadget that admits 5 distinct states and narrows down to a single label, so that each row of the construction contributes one unit of linear clique-width. We begin by analyzing the possible behaviors of a partial solution on a label. 

First, we consider the possible states of a single vertex $v$ with respect to a partial solution $X$. Compared to \CVC, there is one more state, as the state $\zero$ splits into the states $\zero_1$ and $\zero_0$, which denote that $v \notin X$ and whether $v$ is dominated by $X$ or not, and we keep the states $\one_1$ and $\one_0$. Hence, the state of a label can be represented by a subset of $\{\zero_0, \zero_1, \one_0, \one_1\}$. 

Similar to before, we study the compatibility of these label states across a join. The result is a matrix of size $15 \times 15$, as we can exclude the empty subset. However, many states lead to the same compatibility pattern, e.g.\ for any subset $\emptyset \neq \stateset \subseteq \{\zero_0, \zero_1, \one_0, \one_1\}$ the states $\stateset$ and $\stateset \cup \{\zero_1\}$ yield the same compatibility pattern, since the vertex state $\zero_1$ does not add any additional constraint. It turns out that the compatibility matrix contains only five distinct rows, which give rise to the triangular submatrix \cref{table:triangular_cds} of size $5 \times 5$ after reordering. Surprisingly, the number of redundancies is so large that, although \CDS has four vertex states compared to only three for \CVC, we end up with fewer label states than for \CVC.

\begin{table}
\begin{center}
  \begin{tabular}{l|ccccc}%
    & $\{\zero_1, \zero_0\}$ & $\{\zero_1\}$ & $\{\one_0, \zero_0\}$ & $\{\one_1, \zero_0\}$ & $\{\one_1\}$ \\%
    \hline%
    $\{\one_1\}$ & 1 & 1 & 1 & 1 & 1 \\%
    $\{\zero_1\}$ & 0 & 1 & 0 & 0 & 1 \\%
    $\{\one_1, \zero_0\}$ & 0 & 0 & 1 & 1 & 1 \\%
    $\{\one_0, \zero_0\}$ & 0 & 0 & 0 & 1 & 1 \\%
    $\{\zero_1, \zero_0\}$ & 0 & 0 & 0 & 0 & 1 
  \end{tabular}%
\end{center}\caption{A largest triangular submatrix, after reordering rows and columns, of the compatibility matrix for \CDS.}\label{table:triangular_cds}
\vspace*{-0.8cm}
\end{table}

Again, it is sufficient to take independent sets of size two as labels incident to the join. The relevant label states will be represented by the following ordered pairs of vertex states: $(\zero_1, \zero_0)$, $(\zero_1, \zero_1)$, $(\one_0, \zero_0)$, $(\one_1, \zero_0)$, $(\one_1, \one_1)$. By pairing these states along the diagonal of the triangular submatrix, we then obtain the desired states for the path gadget in the transition order.

\subparagraph*{Formal Definition of States.} We define the four atomic states $\atoms = \{\zero_0, \zero_1, \one_0, \one_1\}$ and define the three predicates $\sol, \conn, \dom \colon \atoms \rightarrow \{0,1\}$ by $\sol(\bolda) = [\bolda \in \{\one_0, \one_1\}]$, $\conn(\bolda) = [\bolda = \one_1]$, and $\dom(\bolda) = [\bolda = \zero_1]$. The atoms $\zero_1$ and $\zero_0$ mean that a vertex is not inside the partial solution and the subscript denotes whether the vertex is dominated by the partial solution or not; $\one_1$ and $\one_0$ indicate that a vertex is inside the partial solution and the subscript indicates whether it is root-connected or not. Building on these atomic states, we define five states consisting of four atomic states each: 
\begin{align*}
  \state^1 & = (\zero_1, \zero_0, \one_1, \zero_1) \\
  \state^2 & = (\zero_1, \zero_1, \zero_1, \zero_1) \\
  \state^3 & = (\one_0, \zero_0, \one_1, \zero_0) \\
  \state^4 & = (\one_1, \zero_0, \one_0, \zero_0) \\
  \state^5 & = (\one_1, \zero_1, \zero_1, \zero_0) 
\end{align*}
We collect these states into the set $\states = \{\state^1, \ldots, \state^5\}$ and use the notation $\state^\ell_i \in \atoms$, $\ell \in [5]$, $i \in [4]$, to refer to the $i$-th atomic component of state $\state^\ell$. Note that $\state^5$ can be obtained from $\state^1$ by swapping the first two components with the last two components; in the same way $\state^4$ can be obtained from $\state^3$.

Given a partial solution $Y \subseteq V(G)$, we associate to each vertex its state in $Y$ with the map $\statemap_Y \colon V(G) \setminus \{\rvertex\} \rightarrow \atoms$, which is defined by
\begin{equation*}
  \statemap_Y(v) = \begin{cases}
    \zero_0 & \text{if } v \notin N[Y \cup \{\rvertex\}], \\
    \zero_1 & \text{if } v \in N[Y \cup \{\rvertex\}] \setminus Y, \\
    \one_0 & \text{if } v \in Y \text{ and $v$ is not root-connected in $Y \cup \{\rvertex\}$}, \\
    \one_1 & \text{if } v \in Y \text{ and $v$ is root-connected in $Y \cup \{\rvertex\}$}. \\
  \end{cases}
\end{equation*}

\subparagraph*{Subdivided edges.} In the graph construction, we frequently need \emph{subdivided edges}. Given two vertices $u$ and $v$, adding a \emph{subdivided edge} between $u$ and $v$ means adding a new vertex $w_{\{u,v\}}$ and the edges $\{u, w_{\{u,v\}}\}$ and $\{w_{\{u,v\}}, v\}$. The crucial property of a subdivided edge between $u$ and $v$ is that any connected dominating set $X$ has to contain at least one of $u$ and $v$, since $u$ and $v$ remain the only neighbors of $w_{\{u,v\}}$ throughout the entire construction. In this sense, connected dominating sets behave in regards to subdivided edges as vertex covers do to normal edges, hence allowing us to adapt a substantial part of the construction from \CVC to \CDS.

\subparagraph*{Formal Construction.} We proceed by describing how to construct the path gadget $P$. We create 4 \emph{join} vertices $u_{1,1}, u_{1,2}, u_{2,1}, u_{2,2}$, 6 \emph{auxiliary} vertices $a_{1,1}, a_{1,2}, a_{1,3}, a_{2,1}, a_{2,2}, a_{2,3}$, 4 \emph{solution indicator} vertices $b_{1,0}, b_{1,1}, b_{2,0}, b_{2,1}$, 4 \emph{connectivity indicator} vertices $c_{1,0}, c_{1,1}, c_{2,0}, c_{2,1}$, 4 \emph{domination indicator} vertices $d_{1,0}, d_{1,1}, d_{2,0}, d_{2,1}$ and 5 \emph{clique} vertices $v_1, \ldots, v_5$. For every $i \in [2]$, we add the edges $\{u_{i,1}, a_{i,1}\}$, $\{a_{i,1},a_{i,2}\}$, $\{a_{i,1}, b_{i,0}\}$, $\{a_{i,1}, c_{i,1}\}$, and $\{u_{i,2}, d_{i,1}\}$. Moreover, for every $i \in [2]$, we add subdivided edges from $u_{i,1}$ to $a_{i,3}$ and $b_{i,0}$; from $a_{i,3}$ to $b_{i,1}$; from $b_{i,0}$ to $b_{i,1}$; from $c_{i,0}$ to $c_{i,1}$; and from $d_{i,0}$ to $d_{i,1}$.

Furthermore, we make $a_{i,3}$ for all $i \in [2]$, all solution indicator vertices, all connectivity indicator vertices, all domination indicator vertices, and all clique vertices adjacent to the root vertex $\rvertex$. We add all possible subdivided edges between the clique vertices $v_\ell$, $\ell \in [5]$, so that they induce a clique of size 5 where every edge is subdivided. 

Finally, we explain how to connect the indicator vertices to the clique vertices. The clique vertex $v_\ell$ corresponds to choosing state $\state^\ell$ on the join vertices $(u_{1,1}, u_{1,2}, u_{2,1}, u_{2,2})$. The desired behavior of a partial solution $X$ of $P + \rvertex$ is as follows:
\begin{itemize}
  \item $b_{i,1} \in X \iff u_{i,1} \in X$,
  \item $c_{i,1} \in X \iff u_{i,1} \in X$ and $u_{i,1}$ is root-connected in $X \cup \{\rvertex\}$,
  \item $d_{i,1} \in X \iff u_{i,2} \in N[X] \setminus X$, i.e., $u_{1,2}$ is dominated by $X$.
\end{itemize}
Accordingly, for all $i \in [2]$ and $\ell \in [5]$, we add subdivided edges from $v_\ell$ to $b_{i,\sol(\state^\ell_{2i-1})}$ and $c_{i,\conn(\state^\ell_{2i-1})}$ and $d_{i,\dom(\state^\ell_{2i})}$. This concludes the construction of $P$.

\begin{figure}[h]
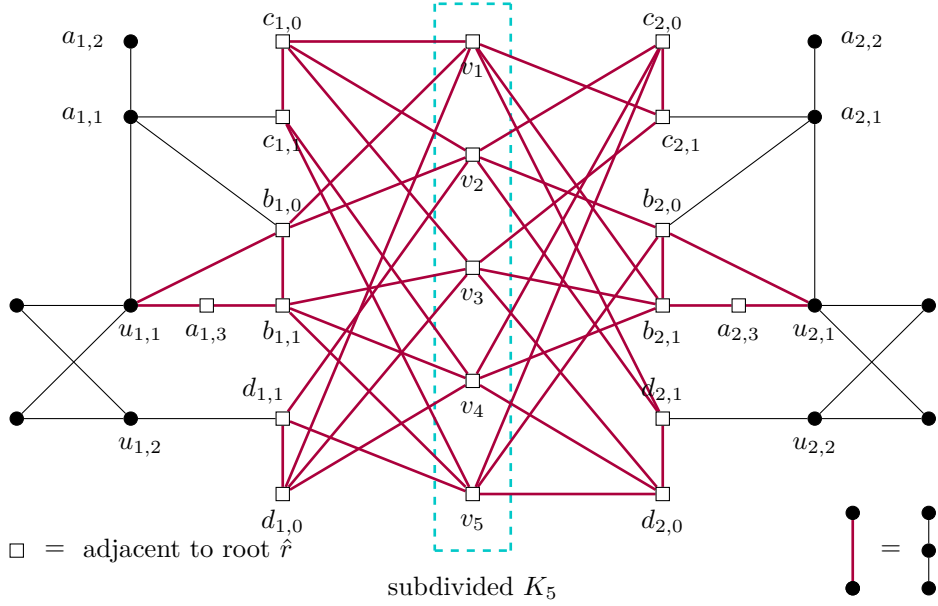

  \centering
  \tikzfig{pictures/cds_cw_path_gadget}
  \caption{The path gadget for \CDS parameterized by linear clique-width. Vertices denoted by squares are adjacent to the root vertex $\rvertex$ and bold red edges represent subdivided edges. The edges of the subdivided 5-clique formed by the clique vertices $v_\ell$, $\ell \in [5]$, are not depicted. Besides the connections to $\rvertex$, only the join vertices $u_{i,1}, u_{i,2}$, $i \in [2]$, and clique vertices $v_\ell$, $\ell \in [5]$, have connections to outside this path gadget.}
\end{figure}

\subparagraph*{Behavior of a Single Path Gadget.} For the upcoming lemmas, we assume that $G$ is a graph that contains $P + \rvertex$ as an induced subgraph and that only the join vertices $u_{i,1}, u_{i,2}, i \in [2]$, and clique vertices $v_\ell, \ell \in [5]$, have neighbors outside this copy of $P + \rvertex$. Furthermore, let $X$ be a connected dominating set of $G$ with $\rvertex \in X$. We study the behavior of such connected dominating sets on $P$; we will abuse notation and write $X \cap P$ instead of $X \cap V(P)$. The assumption on how $P$ connects to the remaining graph implies that any vertex $v \in V(P)$ with $\statemap_{X \cap P}(v) = \one_0$ has to be root-connected in $X$ through some path that leaves $P + \rvertex$ via one of the join vertices $u_{i,j}$, $i,j \in [2]$. Note that any path leaving $P + \rvertex$ through some clique vertex $v_\ell$, $\ell \in [5]$, immediately yields a path to $\rvertex$ in $P + \rvertex$ as $\{v_\ell \sep \ell \in [5]\} \subseteq N(\rvertex)$.

We begin by showing a lower bound for $|X \cap P|$ via a vertex-disjoint packing of subgraphs.

\begin{lem}\label{thm:cds_cw_path_gadget_lb}
  Any connected dominating set $X$ with $\rvertex \in X$ satisfies $|X \cap P| \geq 14 = 2 \cdot 5 + 4$ and more specifically, $a_{i,1} \in X$, $|X \cap \{b_{i,0}, b_{i,1}, a_{i,3}, u_{i,1}\}| \geq 2$, $X \cap \{c_{i,0}, c_{i,1}\} \neq \emptyset$, $X \cap \{d_{i,0}, d_{i,1}\} \neq \emptyset$ for all $i \in [2]$ and $|X \cap \{v_1, \ldots, v_5\}| \geq 4$.
\end{lem}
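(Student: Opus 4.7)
The plan is to mimic the structure of the \CVC lower bound in \cref{thm:cvc_cw_path_gadget_lb}: identify pairwise vertex-disjoint subgraphs of $P$ and argue a local lower bound on $|X|$ restricted to each of them, then sum the bounds. The key observation that makes the argument go through for \CDS is that every subdivided edge $\{u,v\}$ forces $|X \cap \{u,v\}| \geq 1$: the subdivider $w_{\{u,v\}}$ has exactly $u$ and $v$ as its neighbors (throughout the whole construction only the named vertices get further edges), so $w_{\{u,v\}}$ must either be dominated by $u$ or $v$, or lie in $X$ itself; in the latter case it must also be attached to the remainder of $G[X]$, which again forces $u \in X$ or $v \in X$. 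This makes subdivided edges behave, for the purpose of a lower bound, exactly as ordinary edges do for \VC.

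First I would establish $a_{i,1} \in X$ for $i \in [2]$. Since $a_{i,2}$ has degree $1$ in $G$ with unique neighbor $a_{i,1}$, either $a_{i,2} \notin X$, in which case $a_{i,1}$ is its only possible dominator and hence $a_{i,1} \in X$, or $a_{i,2} \in X$, in which case connectedness of $G[X]$ forces its only neighbor $a_{i,1}$ into $X$ as well. Next, for each $i \in [2]$ the four vertices $u_{i,1}, a_{i,3}, b_{i,1}, b_{i,0}$ are joined in a $4$-cycle of subdivided edges ($u_{i,1}a_{i,3}$, $a_{i,3}b_{i,1}$, $b_{i,1}b_{i,0}$, $b_{i,0}u_{i,1}$), and by the subdivided-edge observation above $X$ must contain a vertex cover of this $C_4$, giving $|X \cap \{u_{i,1}, a_{i,3}, b_{i,1}, b_{i,0}\}| \geq 2$.

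The indicator pairs yield the next contributions: the subdivided edge between $c_{i,0}$ and $c_{i,1}$ forces $X \cap \{c_{i,0}, c_{i,1}\} \neq \emptyset$, and analogously for $d_{i,0}, d_{i,1}$, for each $i \in [2]$. Finally, the vertices $v_1,\dots,v_5$ are pairwise connected by subdivided edges, so $X \cap \{v_1, \ldots, v_5\}$ must be a vertex cover of $K_5$, which has size at least $4$.

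To conclude, it suffices to note that the vertex sets $\{a_{i,1}\}$, $\{u_{i,1}, a_{i,3}, b_{i,1}, b_{i,0}\}$, $\{c_{i,0}, c_{i,1}\}$, $\{d_{i,0}, d_{i,1}\}$ for $i \in [2]$ together with $\{v_1, \ldots, v_5\}$ are pairwise disjoint, so the individual bounds add up to $2\cdot(1+2+1+1)+4 = 14$. I do not expect any genuine obstacle here; the only thing to be a bit careful about is spelling out why subdivided edges really do impose a vertex-cover constraint (using that each subdivider has degree exactly $2$ and is not adjacent to $\rvertex$, so domination plus connectedness on that single degree-$2$ vertex already forces an endpoint into $X$), after which the argument parallels \cref{thm:cvc_cw_path_gadget_lb} very closely.
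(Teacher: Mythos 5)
Your proposal is correct and follows essentially the same strategy as the paper's proof: exploit the fact that a subdivider vertex has degree 2 and is not adjacent to $\rvertex$, so by domination-or-connectivity of that vertex at least one endpoint of every subdivided edge must be in $X$, then sum lower bounds over a family of pairwise disjoint vertex sets (the $a_{i,1}$'s, the $C_4$'s, the indicator pairs, and the subdivided $K_5$). The only cosmetic difference is that the paper bounds $|X \cap \{u_{i,1},a_{i,3},b_{i,0},b_{i,1}\}|\ge 2$ via the two disjoint subdivided edges $\{b_{i,0},b_{i,1}\}$ and $\{a_{i,3},u_{i,1}\}$ rather than via the vertex cover of the whole subdivided $C_4$, but both give the same bound.
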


\begin{proof}
  First, observe that the closed neighborhoods of each of the following vertices are disjoint: $a_{i,2}$, $w_{\{b_{i,0}, b_{i,1}\}}$, $w_{\{a_{i,3}, u_{i,1}\}}$, $w_{\{c_{i,0}, c_{i,1}\}}$, $w_{\{d_{i,0}, d_{i,1}\}}$ for all $i \in [2]$. Since $X$ is a dominating set, $X$ contains at least one vertex of each of these neighborhoods. Since $X$ is connected, it follows that $a_{i,1} \in X$ and for each of the subdivided edges that one of the endpoints must be contained in $X$. 
  
  Next, we turn to the subdivided 5-clique. Again, $X$ has to contain at least one endpoint of each subdivided edge present in this clique. If there were two clique vertices $v_i, v_j \notin X$, $i \neq j$, then the subdivided edge between them is not resolved by $X$. Therefore, $X$ can avoid taking at most one of the clique vertices. This concludes the proof.
\end{proof}

For any partial solution $Y \subseteq V(P)$, we formalize the states at the boundary of $P$ with the 4-tuple $\statemap(Y) = (\statemap_Y(u_{1,1}), \statemap_Y(u_{1,2}), \statemap_Y(u_{2,1}), \statemap_Y(u_{2,2}))$. The following lemma shows that the states communicated to the boundary depend on the state of the central clique in the desired way.

\begin{lem}\label{thm:cds_cw_path_gadget_ub}
  Any connected dominating set $X$ with $\rvertex \in X$ and $|X \cap P| \leq 14$ satisfies $a_{i,1} \in X$, $X \cap \{b_{i,0}, b_{i,1}, a_{i,3}, u_{i,1}\} \in \{\{b_{i,0}, a_{i,3}\}, \{b_{i,1}, u_{i,1}\}\}$, $|X \cap \{c_{i,0}, c_{i,1}\}| = 1$, $|X \cap \{d_{i,0}, d_{i,1}\}| = 1$ for all $i \in [2]$ and there exists a unique $\ell \in [5]$ such that $v_\ell \notin X$. Furthermore, we have that $\statemap(X \cap P) = \state^\ell$.
\end{lem}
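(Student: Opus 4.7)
The plan is to mirror the structure of the proof of \cref{thm:cvc_cw_path_gadget_tight}. Because \cref{thm:cds_cw_path_gadget_lb} establishes $|X \cap P| \geq 14$ by summing contributions over pairwise disjoint closed neighborhoods (plus the subdivided $5$-clique), the assumption $|X \cap P| \leq 14$ forces every such contribution to be tight. I would first extract from this the following structural consequences: $|X \cap N[a_{i,2}]| = 1$, which together with the connectedness of $X$ and the fact that $a_{i,2}$ is a pendant of $a_{i,1}$ gives $a_{i,1} \in X$ and $a_{i,2} \notin X$; for each listed subdivided edge $\{x,y\}$ among $\{b_{i,0}, b_{i,1}\}$, $\{a_{i,3}, u_{i,1}\}$, $\{c_{i,0}, c_{i,1}\}$, $\{d_{i,0}, d_{i,1}\}$, exactly one vertex from $N[w_{\{x,y\}}]$ lies in $X$, and it cannot be $w_{\{x,y\}}$ itself, for otherwise $w_{\{x,y\}}$ would be isolated in $G[X]$; hence exactly one of $\{x,y\}$ lies in $X$. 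Analogously, $|X \cap \{v_1, \dots, v_5\}| = 4$, so a unique $\ell \in [5]$ satisfies $v_\ell \notin X$, and no further vertex of $P$ (in particular neither $u_{i,2}$ nor any other subdivision vertex) can belong to $X$, since that would push $|X \cap P|$ past $14$.

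For the four-cycle induced on $\{u_{i,1}, a_{i,3}, b_{i,1}, b_{i,0}\}$ by subdivided edges, tightness alone only forces one selection in $\{u_{i,1}, a_{i,3}\}$ and one in $\{b_{i,0}, b_{i,1}\}$. The two remaining subdivided edges $\{a_{i,3}, b_{i,1}\}$ and $\{u_{i,1}, b_{i,0}\}$ still require at least one endpoint in $X$, since their subdivision vertices must be dominated and cannot themselves lie in $X$ by the same isolation argument. A brief case check shows that only the antipodal selections $\{u_{i,1}, b_{i,1}\}$ and $\{a_{i,3}, b_{i,0}\}$ cover all four subdivided edges simultaneously, yielding the second structural claim.

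The state identification proceeds as follows. Because $v_\ell \notin X$, every subdivided edge incident to $v_\ell$ (within the clique and to the indicator vertices) forces its other endpoint into $X$; in particular $b_{i, \sol(\state^\ell_{2i-1})}$, $c_{i, \conn(\state^\ell_{2i-1})}$, and $d_{i, \dom(\state^\ell_{2i})}$ all lie in $X$. Combined with the already established $|X \cap \{b_{i,0}, b_{i,1}\}| = |X \cap \{c_{i,0}, c_{i,1}\}| = |X \cap \{d_{i,0}, d_{i,1}\}| = 1$, this completely determines the indicator pattern, and together with the antipodal rule for the $C_4$ gives $u_{i,1} \in X \iff b_{i,1} \in X \iff \sol(\state^\ell_{2i-1}) = 1$. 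Evaluating $\statemap_{X \cap P}$ then splits into a short case analysis: when $u_{i,1} \in X$, the only path from $u_{i,1}$ to $\rvertex$ inside $G[(X \cap P) \cup \{\rvertex\}]$ goes through $u_{i,1}, a_{i,1}, c_{i,1}, \rvertex$, which is present exactly when $c_{i,1} \in X$, i.e.\ when $\conn(\state^\ell_{2i-1}) = 1$, distinguishing $\one_1$ from $\one_0$; when $u_{i,1} \notin X$, the fact that $a_{i,1} \in X$ dominates $u_{i,1}$, yielding $\zero_1$, which agrees with $\state^\ell_{2i-1}$ since no state has $\state^\ell_{2i-1} = \zero_0$. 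For $u_{i,2}$, the argument is simpler: $u_{i,2} \notin X$, and its unique neighbor inside $P + \rvertex$ is $d_{i,1}$, so $u_{i,2}$ is dominated in $(X \cap P) \cup \{\rvertex\}$ iff $d_{i,1} \in X$ iff $\dom(\state^\ell_{2i}) = 1$ iff $\state^\ell_{2i} = \zero_1$.

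The main obstacle I anticipate is organizing the verification of all four coordinates of $\state^\ell$ uniformly across the five state vectors: the $u_{i,1}$ coordinate has nontrivial interaction between the antipodal $C_4$ and the connectivity indicator, whereas the $u_{i,2}$ coordinate interacts only with the domination indicator, and one must be careful to measure root-connectivity strictly inside $G[(X \cap P) \cup \{\rvertex\}]$ rather than in the ambient graph $G$, since $\statemap$ is being applied to the subset $X \cap P$.
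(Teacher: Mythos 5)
Your proposal is correct and follows essentially the same approach as the paper's proof: derive exact budgets per gadget piece from the tightness of \cref{thm:cds_cw_path_gadget_lb}, deduce the antipodal selection in the $C_4$ from unresolved subdivided edges, use $v_\ell \notin X$ to force the indicator endpoints, and then read off each coordinate of $\statemap(X\cap P)$ from the indicator pattern, being careful to measure root-connectivity only inside $G[(X\cap P)\cup\{\rvertex\}]$.
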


\begin{proof}
  All the inequalities of \cref{thm:cds_cw_path_gadget_lb} have to be tight in this case which proves everything of the first part except the part regarding $\{b_{i,0}, b_{i,1}, a_{i,3}, u_{i,1}\}$. We know that $X$ contains exactly two of these vertices, but if $X$ contains two that are connected by a subdivided edge, then the subdivided edge between the other two vertices is not resolved. Therefore, $X \cap \{b_{i,0}, b_{i,1}, a_{i,3}, u_{i,1}\} \in \{\{b_{i,0}, a_{i,3}\}, \{b_{i,1}, u_{i,1}\}\}$.
  
  It remains to prove the property regarding the states of the join vertices. Since $v_\ell \notin X$, the other endpoints of the incident subdivided edges have to be contained in $X$. By construction, these are the vertices $b_{i,\sol(\state^\ell_{2i-1})}$, $c_{i,\conn(\state^\ell_{2i-1})}$, and $d_{i,\dom(\state^\ell_{2i})}$ for $i \in [2]$. Fix $i \in [2]$, and note that by the budget allocation inside $P$, it follows that $u_{i,2} \notin X$. Therefore, $u_{i,2}$ can only be dominated by $d_{i,1}$ inside $P$ and $\statemap_{X \cap P}(u_{i,2}) = \zero_1 \iff d_{i,1} \in X \iff \dom(\state_{2i}^\ell) = 1 \iff \state_{2i}^\ell = \zero_1$. We have that $u_{i,1} \in X \iff b_{i,1} \in X \iff \sol(\state_{2i-1}^\ell) = 1 \iff \state_{2i-1}^\ell \in \{\one_0, \one_1\}$ and  $u_{i,1}$ is always  dominated by $a_{i,1}$, hence $\statemap_{X \cap P}(u_{i,1}) \neq \zero_0$ in all cases. If $u_{i,1} \in X$, then its only neighbor inside $X \cap P$ is $a_{i,1}$ which brings root-connectivity to $u_{i,1}$ if and only if $c_{i,1} \in X$, therefore $\statemap_{X \cap P}(u_{i,1}) = \one_1 \iff c_{i,1} \in X \iff \conn(\state^\ell_{2i-1}) = 1 \iff \state^\ell_{2i-1} = \one_1$. This concludes the proof.
\end{proof}

Next, we establish that for any $\state^\ell \in \states$, a partial solution attaining $\state^\ell$ actually exists and that these partial solutions are root-connected and dominate everything inside $P$, if this holds at the join vertices $u_{i,j}$, $i,j \in [2]$.

\begin{lem}\label{thm:cds_cw_state_exists}
  For every $\ell \in [5]$, there exists a set $X_P^\ell$ of $P$ such that $|X_P^\ell| = 14$, $X_P^\ell \cap \{v_1, \ldots, v_5\} = \{v_1, \ldots, v_5\} \setminus \{v_\ell\}$, $\statemap(X_P^\ell) = \state^\ell$, and 
  \begin{equation*}
    \statemap_{X_P^\ell}(V(P) \setminus \{a_{1,1}, a_{2,1}, u_{1,1}, u_{1,2}, u_{2,1}, u_{2,2}\}) \subseteq \{\zero_1, \one_1\}. 
  \end{equation*}
  If $X$ is a vertex subset of $G$ with $\rvertex \in X$ and $X \cap P = X_P^\ell$ and $\statemap_{X }(\{u_{i,1}, u_{1,2}\}) \subseteq \{\zero_1, \one_1\}$ for every $i \in [2]$, then $\statemap_{X}(V(P)) \subseteq \{\zero_1, \one_1\}$.
\end{lem}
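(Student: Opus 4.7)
The plan is to mirror the explicit construction used in Lemma~\ref{thm:cvc_cw_state_exists} for \CVC, adapted for the subdivided edges and domination requirements of \CDS. Concretely, I would define
\begin{equation*}
  X_P^\ell = (\{v_1,\ldots,v_5\}\setminus\{v_\ell\})\cup\{a_{1,1}, a_{2,1}\}\cup T^\ell,
\end{equation*}
where for each $i\in[2]$ the set $T^\ell$ contributes the antipodal pair $\{u_{i,1},b_{i,1}\}$ of the subdivided $4$-cycle on $\{u_{i,1}, a_{i,3}, b_{i,1}, b_{i,0}\}$ if $\sol(\state^{\ell}_{2i-1})=1$ and the pair $\{a_{i,3}, b_{i,0}\}$ otherwise, together with the indicators $c_{i,\conn(\state^{\ell}_{2i-1})}$ and $d_{i,\dom(\state^{\ell}_{2i})}$. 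A direct count gives $|X_P^\ell| = 4+2+4+2+2 = 14$, and the clique-intersection property is built in. The equality $\statemap(X_P^\ell)=\state^\ell$ can then be read off exactly as in Lemma~\ref{thm:cds_cw_path_gadget_ub}: the antipodal choice controls whether $u_{i,1}\in X_P^\ell$ matching $\sol$; the only in-$P$ root-path $u_{i,1}-a_{i,1}-c_{i,1}-\rvertex$ is available iff $c_{i,1}\in X_P^\ell$, matching $\conn$; and $u_{i,2}$ is dominated within $P$ iff its unique $P$-neighbor $d_{i,1}$ lies in $X_P^\ell$, matching $\dom$.

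For the state constraint on $V(P)$ outside the six excepted vertices, I would bucket the vertices. Every clique vertex, every indicator vertex, and each $a_{i,3}$ is adjacent to $\rvertex$, so each is either dominated (if outside $X_P^\ell$) or directly root-connected (if inside $X_P^\ell$); the leaf $a_{i,2}$ is dominated by $a_{i,1}\in X_P^\ell$. The remaining vertices are subdivision vertices, and a short case check shows that for every subdivided edge $\{x,y\}$ in $P$ at least one of $x,y$ lies in $X_P^\ell$: the four edges of the $4$-cycle and the indicator edges $\{c_{i,0},c_{i,1}\}$, $\{d_{i,0},d_{i,1}\}$ are resolved by our antipodal/indicator choices; the subdivisions inside the $5$-clique are resolved because four of the five clique vertices lie in $X_P^\ell$; and the subdivisions between $v_\ell$ and indicator vertices are resolved precisely because the chosen indicators $b_{i,\sol(\state^{\ell}_{2i-1})}$, $c_{i,\conn(\state^{\ell}_{2i-1})}$, $d_{i,\dom(\state^{\ell}_{2i})}$ are exactly the $v_\ell$-neighbors added by the construction.

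For the extension statement, I would take $X$ with $X\cap P = X_P^\ell$ and assume the four join vertices $u_{i,j}$ have state in $\{\zero_1,\one_1\}$ under $X$. Since $X_P^\ell\cup\{\rvertex\}\subseteq X\cup\{\rvertex\}$, every statement established in the first part persists in $X$; the $u_{i,j}$ are handled directly by the hypothesis, so only $a_{1,1}$ and $a_{2,1}$ remain. I would argue by a short case analysis on $\state^{\ell}_{2i-1}$: when $\state^{\ell}_{2i-1}\in\{\zero_0,\zero_1\}$, the antipodal pair puts $b_{i,0}\in X_P^\ell$, giving a path $a_{i,1}-b_{i,0}-\rvertex$; when $\state^{\ell}_{2i-1}=\one_1$, the path $a_{i,1}-c_{i,1}-\rvertex$ is available. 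The main obstacle is the case $\state^{\ell}_{2i-1}=\one_0$ (which does occur, e.g., at $\ell=3$ and $\ell=4$), where neither $b_{i,0}$ nor $c_{i,1}$ is in $X_P^\ell$; this is exactly what the extension hypothesis is designed to resolve, since it forces $u_{i,1}$ to be root-connected in $X$ through the boundary of $P$, yielding a root-path that can be prepended with $a_{i,1}-u_{i,1}$ to root-connect $a_{i,1}$ in $X\cup\{\rvertex\}$. This completes the verification $\statemap_X(V(P))\subseteq\{\zero_1,\one_1\}$.
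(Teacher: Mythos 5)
Your proposal is correct and follows essentially the same route as the paper: the same explicit construction of $X_P^\ell$, the same bucketing of $V(P)$ via the adjacency to $\rvertex$ (clique, indicator, and $a_{i,3}$ vertices), the same observation that every subdivided edge has an endpoint in $X_P^\ell$, and the same three-way case analysis for root-connecting $a_{i,1}$ (with the $\one_0$ case resolved by the extension hypothesis). The only difference is cosmetic—you index the case split by $\state^\ell_{2i-1}$ rather than by $\statemap_{X_P^\ell}(u_{i,1})$, which coincide by the first part of the lemma.
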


\begin{proof}
  We define 
  \begin{align*}
    X^\ell_P & = (\{v_1, \ldots, v_5\} \setminus \{v_\ell\}) \cup \{a_{i,1} \sep i \in [2]\} \\
    & \cup \{u_{i,1} \sep i \in [2] \text{ and } \sol(\state^\ell_{2i-1}) = 1\} \cup \{a_{i,3} \sep i \in [2] \text{ and } \sol(\state^\ell_{2i-1}) = 0\} \\
    & \cup \{b_{i, \sol(\state^\ell_{2i-1})} \sep i \in [2]\} \cup \{c_{i, \conn(\state^\ell_{2i-1})} \sep i \in [2]\} \cup \{d_{i, \dom(\state^\ell_{2i})} \sep i \in [2]\}
  \end{align*} 
  and claim that $X^\ell_P$ has the desired properties. We clearly have that $|X^\ell_P| = 14$ and $v_\ell \notin X^\ell_P$. We proceed by showing that every vertex of $P$ with the 6 exceptions $a_{1,1}, a_{2,1}, u_{1,1}, u_{1,2}, u_{2,1}, u_{2,2}$ is dominated by $X^\ell_P$ or root-connected in $X^\ell_P + \rvertex$. 
  
  First, observe that for any vertex $v \in N(\rvertex) \cap V(P)$, we have that $\statemap_{X^\ell_P}(v) \in \{\zero_1, \one_1\}$. Since $a_{i,1} \in X^\ell_P$ and $a_{i,2} \notin X^\ell_P$, we have that $\statemap_{X^\ell_P}(a_{i,2}) = \zero_1$ for every $i \in [2]$. It remains to handle the subdividing vertices $w_{\{x,y\}}$. Since $X^\ell_P$ contains one vertex from every pair of indicator vertices, every subdividing edge between such a pair is resolved. By construction $X^\ell_P$ either contains the pair $\{b_{i,0}, a_{i,3}\}$ or the pair $\{b_{i,1}, u_{i,1}\}$ for every $i \in [2]$ and in either case the subdivided edges incident to $a_{i,3}$ and $u_{i,1}$ are resolved. The subdivided edges between the clique vertices are resolved, because $X^\ell_P$ contains 4 out of 5 clique vertices. Finally, the subdivided edges between $v_\ell$ and the indicator vertices are resolved, since by construction of $P$, the subdivided edges incident to $v_\ell$ lead to $b_{i,\sol(\state^\ell_{2i-1})}$, $c_{i,\conn(\state^\ell_{2i-1})}$, and $d_{i,\dom(\state^\ell_{2i})}$ for all $i \in [2]$ which are precisely the indicator vertices contained in $X^\ell_P$. 
  
  The claim $\statemap(X^\ell_P) = \state^\ell$ follows by very similar arguments as in the proof of \cref{thm:cds_cw_path_gadget_ub}.
  
  We proceed with the second part. By the first part, it remains to handle the vertices $a_{1,1}, a_{2,1}, u_{1,1}, u_{1,2}, u_{2,1}, u_{2,2}$. By assumption, the join-vertices that are not contained in $X$ are dominated and those that are contained in $X$ are root-connected. By definition of $X^\ell_P$, we see that $a_{1,1}, a_{2,1} \in X$, hence it remains to establish the root-connectivity of $a_{1,1}$ and $a_{2,1}$.

  We show that $a_{i,1}$ is root-connected by considering the different cases for $\statemap_{X_P^\ell}(u_{i,1})$. Note that $\statemap_{X_P^\ell}(u_{i,1}) \neq \zero_0$, since $a_{i,1} \in X_P^\ell$. If $\statemap_{X_P^\ell}(u_{i,1}) = \zero_1$, then $a_{i,1}$ is root-connected via the path $a_{i,1}, b_{i,0}, \rvertex$ in $X$. If $\statemap_{X_P^\ell}(u_{i,1}) = \one_0$, then $u_{i,1}$ is root-connected in $X$ via some path that leaves $P + \rvertex$ which we can extend to $a_{i,1}$. Finally, if $\statemap_{X_P^\ell}(u_{i,1}) = \one_1$, then the path $a_{i,1}, c_{i,1}, \rvertex$ exists in $G[X]$. This concludes the proof.
\end{proof}

\subparagraph*{State Transitions.} In the lower bound construction, we again create long paths by repeatedly concatenating the path gadgets $P$. To study how the state can change between two consecutive path gadgets, suppose that we have two copies $P^1$ and $P^2$ of $P$ such that the vertices $u_{2,1}$ and $u_{2,2}$ in $P^1$ are joined to the vertices $u_{1,1}$ and $u_{1,2}$ in $P^2$. We denote the vertices of $P^1$ with a superscript $1$ and the vertices of $P^2$ with a superscript $2$, e.g., $u^1_{2,1}$ refers to the vertex $u_{2,1}$ of $P^1$. Again, suppose that $P^1$ and $P^2$ are embedded as induced subgraphs in a larger graph $G$ with a root vertex $\rvertex$ and that only the vertices $u^1_{1,1}, u^1_{1,2}, u^2_{2,1}, u^2_{2,2}$ and the clique vertices $v^1_\ell, v^2_\ell$, $\ell \in [5]$, have neighbors outside of $P^1 + P^2 + \rvertex$. Furthermore, $X$ denotes a connected dominating set of $G$ with $\rvertex \in X$. 

The previous lemmas will now allow us to conclude that the gadget state can indeed only transition according to the transition order and that the state can remain stable.

\begin{figure}[h]
  \centering
  \tikzfig{pictures/cds_cw_lb_transition_table}
  \caption{A matrix depicting the possible state transitions between two consecutive path gadgets. The rows are labeled with $(\state_3^{\ell_1}, \state_4^{\ell_1})$, $\ell_1 \in [5]$, and the columns are labeled with $(\state_1^{\ell_2}, \state_2^{\ell_2})$, $\ell_2 \in [5]$. An $\times$ marks possible state transitions in a connected dominating set.}
  \label{fig:cds_cw_transition_table}
\end{figure}

\begin{lem}\label{thm:cds_cw_path_transition}
  If $X$ satisfies $|X \cap P^1| \leq 14$ and $|X \cap P^2| \leq 14$, then $\statemap(X \cap P^1) = \state^{\ell_1}$ and $\statemap(X \cap P^2) = \state^{\ell_2}$ with $\ell_1 \leq \ell_2$. 
  
  Additionally, for each $\ell \in [5]$, the set $X^\ell = X_{P^1}^\ell \cup X_{P^2}^\ell$ dominates or root-connects all inner join vertices, i.e., $\statemap_{X^\ell}(\{u^1_{2,1}, u^1_{2,2}, u^2_{1,1}, u^2_{1,2}\}) \subseteq \{\zero_1, \one_1\}$.
\end{lem}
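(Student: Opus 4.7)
The plan is to mirror \cref{thm:cvc_cw_path_transition}. Applying \cref{thm:cds_cw_path_gadget_ub} to each of $P^1$ and $P^2$ individually gives $\statemap(X \cap P^i) = \state^{\ell_i}$ for some $\ell_i \in [5]$. It remains to establish $\ell_1 \leq \ell_2$ and, for the second claim, to check that the constructed unions $X^\ell = X^\ell_{P^1} \cup X^\ell_{P^2}$ behave correctly on the inner join vertices $U = \{u^1_{2,1}, u^1_{2,2}, u^2_{1,1}, u^2_{1,2}\}$.

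For the ordering direction, I would first observe that by the assumption on how $P^1 + P^2$ embeds into $G$, the only neighbors of $U$ outside $P^1 + P^2 + \rvertex$ are the clique vertices $v^i_\ell$, and each of these is adjacent to $\rvertex$. Consequently, the domination of any vertex in $U \setminus X$ and the root-connectivity of any vertex in $U \cap X$ must already be witnessable inside $P^1 + P^2 + \rvertex$. Then I would enumerate the $25$ pairs $(\ell_1, \ell_2)$, considering the atomic state pairs $\bar{\state}^1 = (\state^{\ell_1}_3, \state^{\ell_1}_4)$ and $\bar{\state}^2 = (\state^{\ell_2}_1, \state^{\ell_2}_2)$ on $(u^1_{2,1}, u^1_{2,2})$ and $(u^2_{1,1}, u^2_{1,2})$ respectively. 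Whenever $\ell_1 > \ell_2$, inspection exhibits an obstruction of one of three forms: an undominated $\zero_0$-vertex in $U$ with no $\one_\ast$-partner across the join, a $\one_0$-vertex in $U$ whose only potential $\rvertex$-path is blocked because all candidate join-neighbors sit in $\zero_\ast$-states, or a $\one_0$-vertex outright disconnected from $\rvertex$. The feasible pairs then coincide with the upper-triangular pattern depicted in \cref{fig:cds_cw_transition_table}.

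For the second claim, I would fix $\ell \in [5]$ and invoke \cref{thm:cds_cw_state_exists} to obtain $X^\ell_{P^i}$ with $\statemap(X^\ell_{P^i}) = \state^\ell$ for $i \in \{1,2\}$. It then remains to verify, for each of the five values of $\ell$, that introducing the join edges between $\{u^1_{2,1}, u^1_{2,2}\}$ and $\{u^2_{1,1}, u^2_{1,2}\}$ upgrades every $\zero_0$-vertex of $U$ to $\zero_1$ (via some $\one_\ast$-partner across the join) and every $\one_0$-vertex to $\one_1$ (via a $\one_1$-partner across the join). This amounts to reading off the five diagonal entries of \cref{fig:cds_cw_transition_table} and confirming the requisite upgrades case by case.

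The main obstacle is the bookkeeping of the case analysis: while each of the $25$ cells is decided by a constant-size check, one must track simultaneously the domination, root-connectivity, and in-solution status of every vertex of $U$ both before and after closing under the new join edges. Crucially, the five gadget states $\state^1, \ldots, \state^5$ were designed from the triangular submatrix in \cref{table:triangular_cds} precisely so that the required upgrades occur across the join in the diagonal cases, while at least one of the three obstructions must occur in each below-diagonal case.
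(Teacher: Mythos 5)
Your proposal follows the paper's proof essentially step by step: apply \cref{thm:cds_cw_path_gadget_ub} to get $\state^{\ell_1}, \state^{\ell_2}$, observe that the inner join vertices have no outside neighbors except via the root-adjacent clique vertices, rule out $\ell_1 > \ell_2$ by checking the atomic state pairs $(\state_3^{\ell_1},\state_4^{\ell_1})$ vs.\ $(\state_1^{\ell_2},\state_2^{\ell_2})$ for obstructions, and verify the diagonal cases using \cref{thm:cds_cw_state_exists}. One small imprecision: your second obstruction type ("all candidate join-neighbors sit in $\zero_\ast$-states") does not quite describe the paper's case $\ell_1 \in \{4,5\}, \ell_2 = 3$, where the opposite side carries a $\one_0$ (not $\zero_\ast$) but still fails to provide root-connectivity; this is instead subsumed by your third ("outright disconnected $\one_0$") obstruction, so the enumeration still goes through.
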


\begin{proof}
  By \cref{thm:cds_cw_path_gadget_ub}, there are $\ell_1, \ell_2 \in [5]$ such that $\statemap(X \cap P^1) = \state^{\ell_1}$ and $\statemap(X \cap P^2) = \state^{\ell_2}$. It remains to verify that $\ell_1 \leq \ell_2$. The main idea is to consider the states at the inner join vertices $u^1_{2,1}, u^1_{2,2}$, and $u^2_{1,1}$, $u^2_{1,2}$ and notice that at least one of these four vertices is not dominated or not root-connected whenever $\ell_1 > \ell_2$. Recall that these inner join vertices are only adjacent to vertices inside $P^1 + P^2$ and $N(\{u^1_{2,1}, u^1_{2,2}\}) \cap V(P^2) = \{u^2_{1,1}, u^2_{1,2}\}$ and $N(\{u^2_{1,1}, u^2_{1,2}\}) \cap V(P^1) = \{u^1_{2,1}, u^1_{2,2}\}$. \cref{fig:cds_cw_transition_table} depicts the possible state transitions.
  
  First, if $\ell_1 \in \{3,4,5\}$ and $\ell_2 \in [2]$, then $\statemap_{X \cap P^1}(u^1_{2,2}) = \zero_0$ and $\statemap_{X \cap P^2}(\{u^2_{1,1}, u^2_{1,2}\})\subseteq \{\zero_0, \zero_1\}$. Therefore, the vertex $u^1_{2,2}$ cannot be dominated by $X$ in this case.
  
  Secondly, if $\ell_1 \in \{4,5\}$ and $\ell_2 = 3$, then $\one_1 \notin \statemap_{X \cap P^1}(\{u^1_{2,1}, u^1_{2,2}\})$ and $\statemap_{X \cap P^2}(u^2_{1,1}) = \one_0$. Therefore, the vertex $u^2_{1,1}$ cannot be root-connected in $X$ in this case.
  
  Lastly, if $(\ell_1, \ell_2) \in \{(2,1),(5,4)\}$, then we have $\statemap_{X \cap P^1}(\{u^1_{2,1}, u^1_{2,2}\}) \subseteq \{\zero_0, \zero_1\}$ and $\statemap_{X \cap P^2}(u^2_{1,2}) = \zero_0$. Therefore, the vertex $u^2_{1,2}$ cannot be dominated by $X$ in this case.
  
  For the second part of the lemma, fix some $\ell \in [5]$. By \cref{thm:cds_cw_state_exists}, we have that $\statemap(X^\ell \cap P^1) = \statemap(X^\ell \cap P^2) = \state^{\ell}$. We distinguish based on $\ell \in [5]$; see also \cref{fig:cds_cw_equal_transition}:
  \begin{itemize}
    \item $\ell = 1$: $u^2_{1,2}$ is dominated by $u^1_{2,1}$.
    \item $\ell = 2$: all four vertices are already dominated by \cref{thm:cds_cw_state_exists}.
    \item $\ell = 3$: $u^1_{2,1}$ root-connects $u^2_{1,1}$ and dominates $u^2_{1,2}$; $u^2_{1,1}$ dominates $u^1_{2,2}$.
    \item $\ell = 4$: reverse situation of $\ell = 3$.
    \item $\ell = 5$: reverse situation of $\ell = 1$.
  \end{itemize}
  This finishes the proof of the second part.
\end{proof}

\begin{figure}[h]
  \centering
  \tikzfig{pictures/cds_cw_equal_transition}
  \caption{Case distinction to show $\statemap_{X^\ell}(\{u^1_{2,1}, u^1_{2,2}, u^2_{1,1}, u^2_{1,2}\}) \subseteq \{\zero_1, \one_1\}$.}
  \label{fig:cds_cw_equal_transition}
\end{figure}

We say that a \emph{cheat} occurs when $\ell_1 < \ell_2$. When creating arbitrary long concatenation of the path gadgets, \cref{thm:cds_cw_path_transition} tells us that at most $|\states| - 1 = 4 = \Oh(1)$ cheats may occur on such a path, which allows us to find a \emph{cheat-free region} as outlined before.

\subsubsection{Complete Construction}

\renewcommand{\nregions}{{4\ngrps\grpsize + 1}}
\renewcommand{\ncolumns}{{\nclss(\nregions)}}
\renewcommand{\sequence}{\mathbf{h}}

\subparagraph*{Setup.} 
Assume that \CDS can be solved in time $\Oh^*((5 - \eps)^{\nlabels})$ for some $\eps > 0$ when given a linear $\nlabels$-expression. Given a \SAT-instance $\formula$ with $\nvars$ variables and $\nclss$ clauses, we construct an equivalent \CDS instance with linear clique-width approximately $\nvars \log_5(2)$ so that the existence of such an algorithm for \CDS would imply that the \SETH is false. 

We pick an integer $\vgrpsize$ only depending on $\eps$; the precise choice of $\vgrpsize$ will be discussed at a later point. The variables of $\formula$ are partitioned into groups of size at most $\vgrpsize$, resulting in $\ngrps = \lceil \nvars / \vgrpsize \rceil$ groups. Furthermore, we pick the smallest integer $\grpsize$ that satisfies $5^\grpsize \geq 2^\vgrpsize$, i.e., $\grpsize = \lceil \log_5(2^\vgrpsize) \rceil$. We now begin with the construction of the \CDS instance $(G = G(\formula, \vgrpsize), \budget)$.

We create the root vertex $\rvertex$ and attach a leaf $\rvertex'$ which forces $\rvertex$ into any connected dominating set.
For every group $i \in [\ngrps]$, we create $\grpsize$ long path-like gadgets $P^{i, j}$, $j \in [\grpsize]$, where each $P^{i, j}$ consists of $\ncolumns$ copies $P^{i, j, \ell}$, $\ell \in [\ncolumns]$, of the path gadget $P$ and consecutive copies are connected by a join. More precisely, the vertices in some $P^{i, j, \ell}$ inherit their names from the generic path gadget $P$ and the superscript of $P^{i, j, \ell}$ and for every $i \in [\ngrps]$, $j \in [\grpsize]$, $\ell \in [\ncolumns - 1]$, the vertices $\{u^{i,j,\ell}_{2,1}, u^{i,j,\ell}_{2,2}\}$ are joined to the vertices $\{u^{i,j,\ell + 1}_{1,1}, u^{i,j,\ell + 1}_{1,2}\}$. The ends of each path $P^{i, j}$, namely the vertices $u^{i, j, 1}_{1,1}, u^{i, j, 1}_{1,2}, u^{i, j, \ncolumns}_{2,1}, u^{i, j, \ncolumns}_{2,2}$ are made adjacent to the root $\rvertex$. 

\begin{figure}[h]
  \centering
  \tikzfig{pictures/cds_cw_decoding_gadget}
  \caption{Decoding and clause gadget for \CDS.}
  \label{fig:cds_cw_decoding_gadget}
\end{figure}

For every group $i \in [\ngrps]$ and column $\ell \in [\ncolumns]$, we create a \emph{decoding gadget} $D^{i,\ell}$ in the same style as Cygan et al.~\cite{CyganNPPRW11} for \CVC parameterized by pathwidth. Every variable group $i$ has at most $2^\vgrpsize$ possible truth assignments and by choice of $\grpsize$ we have that $5^\grpsize \geq 2^\vgrpsize$, so we can find an injective mapping $\embedding \colon \{0,1\}^\vgrpsize \rightarrow [5]^\grpsize$ which assigns to each truth assignment $\tassign \in \{0,1\}^\vgrpsize$ a sequence $\embedding(\tassign) \in [5]^\grpsize$. For each sequence $\sequence = (h_1, \ldots, h_\grpsize) \in [5]^\grpsize$, we create vertices $x^{i, \ell}_\sequence$, $\bar{x}^{i, \ell}_\sequence$, $y^{i, \ell}_\sequence$ and edges $\{x^{i, \ell}_\sequence, \bar{x}^{i, \ell}_\sequence\}$, $\{x^{i, \ell}_\sequence, y^{i, \ell}_\sequence\}$, $\{y^{i, \ell}_\sequence, \rvertex\}$. Furthermore, we add the edge $\{x^{i, \ell}_\sequence, v^{i, j, \ell}_{h_j}\}$ for all $\sequence = (h_1, \ldots, h_\grpsize) \in [5]^\grpsize$ and $j \in [\grpsize]$. Finally, we create two adjacent vertices $z^{i, \ell}$ and $\bar{z}^{i, \ell}$ and edges $\{z^{i, \ell}, y^{i, \ell}_\sequence\}$ for all $\sequence \in [5]^\grpsize$. Each decoding gadget $D^{i, \ell}$ together with the adjacent path gadgets $P^{i,j,\ell}$, $j \in [\grpsize]$, forms the \emph{block} $B^{i,\ell}$, see \cref{fig:cds_cw_schematic} for a high-level depiction of the connections between different blocks.

Lastly, we construct the \emph{clause gadgets}.
We number the clauses of $\formula$ by $C_0, \ldots, C_{\nclss - 1}$. For every column $\ell \in [\ncolumns]$, we create an adjacent pair of vertices $o^{\ell}$ and $\bar{o}^{\ell}$. Let $\ell' \in [0, \nclss - 1]$ be the remainder of $(\ell - 1)$ modulo $\nclss$; for all $i \in [\ngrps]$, $\sequence \in \embedding(\{0,1\}^\vgrpsize)$ such that $\embedding^{-1}(\sequence)$ is a truth assignment for variable group $i$ satisfying clause $C_{\ell'}$, we add the edge $\{o^{\ell}, y^{i, \ell}_\sequence\}$. See \cref{fig:cds_cw_decoding_gadget} for a depiction of the decoding and clause gadget.

\begin{figure}[h]
  \centering
  \scalebox{.9}{\tikzfig{pictures/cvc_modtw_schematic}}
  \caption{High-level construction for \CDS. Every edge between two blocks $B^{i,\ell}$ and $B^{i,\ell+1}$ represents a join. The regions $R^\gamma$ are important for the proof of \cref{thm:cds_cw_sol_to_sat}.}
  \label{fig:cds_cw_schematic}
\end{figure}

\begin{lem}\label{thm:cds_cw_sat_to_sol}
  If $\formula$ is satisfiable, then there exists a connected dominating set $X$ of $G = G(\formula, \vgrpsize)$ of size $|X| \leq (14\ngrps\grpsize + (5^\grpsize + 2)\ngrps + 1)\ncolumns + 1 = \budget$.
\end{lem}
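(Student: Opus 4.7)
The plan is to mimic the construction used in \cref{thm:cvc_cw_sat_to_sol}, replacing the sets $X_P^\ell$ for \CVC by the analogous sets furnished by \cref{thm:cds_cw_state_exists} for \CDS. Let $\tassign$ be a satisfying truth assignment of $\formula$, and for every group $i \in [\ngrps]$ let $\tassign^i$ denote its restriction to the $i$-th variable group and set $\sequence^i = \embedding(\tassign^i) = (h^i_1, \ldots, h^i_\grpsize) \in [5]^\grpsize$. Define
\begin{equation*}
  X = \{\rvertex\} \cup \bigcup_{\ell \in [\ncolumns]} \left(\{o^\ell\} \cup \bigcup_{i \in [\ngrps]} \left(\{y^{i, \ell}_{\sequence^i}, z^{i,\ell}\} \cup \bigcup_{\sequence \in [5]^\grpsize} \{x^{i,\ell}_\sequence\} \cup \bigcup_{j \in [\grpsize]} X_{P^{i,j,\ell}}^{h^i_j} \right) \right),
\end{equation*}
where $X_{P^{i,j,\ell}}^{h^i_j}$ is the set provided by \cref{thm:cds_cw_state_exists} applied to the copy $P^{i,j,\ell}$ of the path gadget. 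First I would verify the cardinality: summing $1$ (for $\rvertex$), $\ncolumns$ (for the $o^\ell$'s), $2\ngrps\ncolumns$ (for the $y^{i,\ell}_{\sequence^i}, z^{i,\ell}$), $5^\grpsize \ngrps \ncolumns$ (for the $x^{i,\ell}_{\sequence}$), and $14\ngrps\grpsize\ncolumns$ (for the path gadgets, using $|X_P^\ell|=14$) yields exactly $\budget$.

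Next I would verify that $X$ is a dominating set. Inside each path gadget $P^{i,j,\ell}$, \cref{thm:cds_cw_state_exists} guarantees that every vertex outside the six exceptional join/auxiliary vertices is already dominated or root-connected, so only the join vertices need external attention; these are handled either by the matching states of adjacent path gadgets in the same row (via the second part of \cref{thm:cds_cw_path_transition}, since the chosen state $\state^{h^i_j}$ is identical in consecutive columns) or, at the row ends, by direct adjacency to $\rvertex \in X$. In each decoding block $D^{i,\ell}$, the vertices $\bar{x}^{i,\ell}_\sequence$ are dominated by $x^{i,\ell}_\sequence \in X$, the vertex $\bar{z}^{i,\ell}$ by $z^{i,\ell} \in X$, and every $y^{i,\ell}_\sequence$ is dominated by the adjacent $x^{i,\ell}_\sequence$ (or $z^{i,\ell}$); the clause vertex $\bar{o}^\ell$ is dominated by $o^\ell \in X$, and $\rvertex'$ by $\rvertex$.

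Finally I would verify connectivity, which is the main point requiring care. All vertices of $X \cap V(P^{i,j,\ell})$ are root-connected by the last clause of \cref{thm:cds_cw_state_exists} once we know the join vertices are dominated or root-connected in $X$; this hypothesis holds for exactly the same reason as in the domination argument. For each $\sequence \neq \sequence^i$ the vertex $x^{i,\ell}_\sequence$ is root-connected through $v^{i,j,\ell}_{h_j} \in X$ for any index $j \in [\grpsize]$ with $h_j \neq h^i_j$ (such a $j$ exists by $\sequence \neq \sequence^i$, and $v^{i,j,\ell}_{h_j} \in X^{h^i_j}_{P^{i,j,\ell}}$ since only $v^{i,j,\ell}_{h^i_j}$ is excluded, and $v^{i,j,\ell}_{h_j}$ is root-connected via its direct edge to $\rvertex$). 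The vertices $x^{i,\ell}_{\sequence^i}$ and $z^{i,\ell}$ are root-connected through $y^{i,\ell}_{\sequence^i} \in X$, which itself is adjacent to $\rvertex$. The main obstacle is the clause vertex $o^\ell$: for it to be root-connected I would use that $\tassign$ satisfies $\formula$, so for $\ell' := (\ell-1) \bmod \nclss$ some variable group $i$ already satisfies $C_{\ell'}$ under $\tassign^i$; by construction of the clause gadget the edge $\{o^\ell, y^{i,\ell}_{\sequence^i}\}$ exists, giving the desired connection to $\rvertex$ via $y^{i,\ell}_{\sequence^i} \in X$. Combining these observations, every vertex of $X$ is root-connected in $X$, so $G[X]$ is connected, completing the proof.
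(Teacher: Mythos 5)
Your proposal is correct and follows exactly the same route as the paper's own proof: you define the same candidate set $X$, verify the budget by the same term-by-term count, invoke \cref{thm:cds_cw_state_exists} together with the second part of \cref{thm:cds_cw_path_transition} to dominate/root-connect the path gadgets, and use the satisfying assignment to root-connect the clause vertices $o^\ell$ exactly as the paper does. The only slight wrinkle is the phrase ``so only the join vertices need external attention'' — strictly speaking, the six exceptional vertices in \cref{thm:cds_cw_state_exists} also include $a_{1,1}$ and $a_{2,1}$, which are handled not because they need no attention but because the second part of that lemma upgrades them automatically once the join-vertex hypothesis is established; you do apply that clause correctly in the connectivity paragraph, so this is a cosmetic rather than substantive issue.
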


\begin{proof}
  Let $\tassign$ be a satisfying truth assignment of $\formula$ and let $\tassign^i$ denote the restriction of $\tassign$ to the $i$-th variable group for every $i \in [\ngrps]$ and let $\embedding(\tassign^i) = \sequence^i = (h^i_1, \ldots, h^i_\grpsize) \in [5]^{\grpsize}$ be the corresponding sequence. 
  The connected dominating set is given by 
  \begin{equation*}
    X = \{\rvertex\} \cup \bigcup_{\ell \in [\ncolumns]} \left(\{o^\ell\} \cup \bigcup_{i \in [\ngrps]} \left(\{y^{i, \ell}_{\sequence^i}, z^{i,\ell}\} \cup \bigcup_{\sequence \in [5]^\grpsize} \{x^{i,\ell}_\sequence\} \cup \bigcup_{j \in [\grpsize]} X_{P^{i,j,\ell}}^{h^i_j} \right) \right),
  \end{equation*}
  where $X_{P^{i,j,\ell}}^{h^i_j}$ refers to the sets given by \cref{thm:cds_cw_state_exists}. 
  
  Clearly, $|X| = \budget$ as $|X_{P^{i,j,\ell}}^{h^i_j}| = 14$ for all $i,j, \ell$, so it remains to prove that $X$ is a connected dominating set. We begin by considering the path gadgets. First, notice that we have
  \begin{equation*}
    \bigcup_{i \in [\ngrps]} \bigcup_{j \in [\grpsize]} \{u^{i,j,1}_{1,1}, u^{i,j,1}_{1,2}, u^{i,j,\ncolumns}_{2,1}, u^{i,j,\ncolumns}_{2,2}\} \subseteq N(\rvertex),
  \end{equation*}
  hence the vertices at the ends of the paths $P^{i,j}$ are dominated or root-connected in $X$. Next, we invoke \cref{thm:cds_cw_state_exists} and the second part of \cref{thm:cds_cw_path_transition} to see that all vertices on the path gadgets are root-connected in $X$ or dominated.
  
  Fix $i \in [\ngrps]$, $\ell \in [\ncolumns]$, and consider the corresponding decoding gadget. Since $z^{i, \ell} \in X$ and $x^{i,\ell}_\sequence \in X$ for all $\sequence \in [5]^\grpsize$, all vertices in the decoding gadget are dominated by $X$. Furthermore, since $o^{\ell} \in X$, all vertices inside the clause gadget are dominated by $X$. Hence, $X$ has to be a dominating set of $G$.
  
  It remains to prove that the vertices in the decoding and clause gadgets that belong to $X$ are also root-connected. Again, fix $i \in [\ngrps]$, $\ell \in [\ncolumns]$, and $\sequence = (h_1, \ldots, h_\grpsize) \in [5]^\grpsize \setminus \{\sequence^i\}$. Since $\sequence \neq \sequence^i$, there is some $j \in [\grpsize]$ such that $h_j \neq h_j^i$ and hence $v^{i,j,\ell}_{h_j} \in X$ by \cref{thm:cds_cw_state_exists} which connects $x_\sequence^{i, \ell}$ to the root $\rvertex$. The vertices $x_{\sequence^i}^{i, \ell}$ and $z^{i, \ell}$ are root-connected via $y^{i, \ell}_{\sequence^i} \in X$.
  
  We conclude by showing that $o^\ell$ is root-connected for all $\ell \in [\ncolumns]$. Since $\tassign$ is a satisfying truth assignment of $\formula$, there is some variable group $i \in [\ngrps]$ such that $\tassign^i$ already satisfies clause $C_{\ell'}$, where $\ell'$ is the remainder of $(\ell - 1)$ modulo $\nclss$. By construction of $G$ and $X$, the vertex $y^{i, \ell}_{\sequence^i} \in X$ is adjacent to $o^\ell$, since $\embedding(\tassign^i) = \sequence^i$, and connects $o^\ell$ to the root $\rvertex$. This shows that all vertices of $X$ are root-connected, so $G[X]$ has to be connected.
\end{proof}

\begin{lem}\label{thm:cds_cw_sol_to_sat}
  If there exists a connected dominating set $X$ of $G = G(\formula, \vgrpsize)$ of size $|X| \leq (14\ngrps\grpsize + (5^\grpsize + 2)\ngrps + 1)\ncolumns + 1 = \budget$, then $\formula$ is satisfiable.
\end{lem}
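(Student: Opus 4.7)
The plan is to mirror the proof of \cref{thm:cvc_cw_sol_to_sat}, adapting the vertex-cover style budget and connectivity arguments to the domination-based setting of \CDS. First, I will show that the budget $\budget$ is in fact attained with equality by $X$ and that all the constituent lower bounds are tight. The leaf $\rvertex'$ forces $\rvertex \in X$, the leaves $\bar{o}^\ell$ force $o^\ell \in X$ for each $\ell \in [\ncolumns]$, and in every decoding gadget the leaves $\bar{z}^{i,\ell}$ and $\bar{x}^{i,\ell}_\sequence$ force $z^{i,\ell} \in X$ and $x^{i,\ell}_\sequence \in X$ for every $\sequence \in [5]^\grpsize$. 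Since $z^{i,\ell}$ needs to be root-connected and its only non-forced neighbors are the vertices $y^{i,\ell}_\sequence$, we need at least one such $y$-vertex per decoding gadget, contributing $5^\grpsize + 2$ in total. Combined with the lower bound of $14$ per path gadget from \cref{thm:cds_cw_path_gadget_lb}, the sum equals $\budget$, so every inequality is tight.

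Having established tightness, \cref{thm:cds_cw_path_gadget_ub} yields that for every triple $(i,j,\ell)$ there is a unique index $h^{i,j,\ell} \in [5]$ with $\statemap(X \cap P^{i,j,\ell}) = \state^{h^{i,j,\ell}}$. By \cref{thm:cds_cw_path_transition}, along each fixed path $P^{i,j}$ the sequence $(h^{i,j,\ell})_\ell$ is nondecreasing in $\ell$, so it can change at most $4$ times. Summing over all $\ngrps\grpsize$ paths gives at most $4\ngrps\grpsize$ cheats overall. Since the columns are partitioned into $\nregions = 4\ngrps\grpsize + 1$ consecutive regions $R^\gamma$ of width $\nclss$, the pigeonhole principle yields some region $R^{\gamma^*}$ that is entirely cheat-free. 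For each group $i \in [\ngrps]$, define $\sequence^i = (h^{i,1,\ell_0}, \ldots, h^{i,\grpsize,\ell_0}) \in [5]^\grpsize$ for any fixed $\ell_0$ in the cheat-free region; the value is independent of the choice of $\ell_0$.

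Next, I show that inside the cheat-free region the only $y$-vertex selected in each decoding gadget is the one corresponding to $\sequence^i$. For every $\ell$ in the region and every $\sequence \neq \sequence^i$, there is some coordinate $j$ where $\sequence$ differs from $\sequence^i$, so the corresponding clique vertex $v^{i,j,\ell}_{h_j}$ is present in $X$ (since the excluded clique vertex at position $j$ is $v^{i,j,\ell}_{h^i_j}$), and this root-connects $x^{i,\ell}_\sequence$ without needing $y^{i,\ell}_\sequence$. For $\sequence = \sequence^i$, however, \emph{all} coordinates excluded by the state $\state^{h^i_j}$ leave no neighbor of $x^{i,\ell}_{\sequence^i}$ among the clique vertices inside $X$, so root-connectivity of $x^{i,\ell}_{\sequence^i}$ forces $y^{i,\ell}_{\sequence^i} \in X$. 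Because the budget allows only one $y$-vertex per decoding gadget, $y^{i,\ell}_\sequence \in X$ holds if and only if $\sequence = \sequence^i$.

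Finally, extract the assignment $\tassign$ by setting $\tassign^i = \embedding^{-1}(\sequence^i)$ whenever $\sequence^i$ lies in the image of $\embedding$ (and arbitrarily otherwise), and let $\tassign = \bigcup_i \tassign^i$. For each clause $C_{\ell'}$, consider the unique column $\ell = \gamma^*\nclss + \ell' + 1$ in the cheat-free region. The vertex $o^\ell$ is in $X$ and must be root-connected; its only neighbors outside $\{\bar{o}^\ell\}$ are certain $y^{i,\ell}_\sequence$-vertices, so at least one of them lies in $X$, which by the preceding paragraph must be of the form $y^{i,\ell}_{\sequence^i}$. By construction of the clause gadget, this edge exists only if $\embedding^{-1}(\sequence^i) = \tassign^i$ satisfies $C_{\ell'}$, so in particular $\sequence^i \in \embedding(\{0,1\}^\vgrpsize)$ and $\tassign$ satisfies $C_{\ell'}$. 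As $\ell'$ was arbitrary, $\tassign$ satisfies $\formula$. The main obstacle is the transfer of the root-connectivity argument inside the decoding gadget from covering to domination, which requires verifying that tight budget really forces exactly one $y$-vertex per decoding gadget in the cheat-free region.
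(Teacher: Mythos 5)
Your proof is correct and follows essentially the same approach as the paper: force tight budget allocation via degree-one leaves, path-gadget lower bounds, and the decoding gadget structure; apply the pigeonhole principle to the $4\ngrps\grpsize + 1$ regions to find a cheat-free region; use tightness to show $y^{i,\ell}_{\sequence^i}$ is the unique $y$-vertex selected in each decoding gadget via the root-connectivity of $x^{i,\ell}_{\sequence^i}$; and read off a satisfying assignment from the root-connectivity of $o^\ell$. The extra observation that $x^{i,\ell}_\sequence$ for $\sequence \neq \sequence^i$ is root-connected through a clique vertex is true but redundant once you know the budget permits exactly one $y$-vertex per decoding gadget and that $y^{i,\ell}_{\sequence^i}$ is forced.
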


\begin{proof}
  We begin by arguing that $X$ has to satisfy $|X| = \budget$. First, we must have that $\rvertex \in X$, because $\rvertex$ has a neighbor of degree 1. By \cref{thm:cds_cw_path_gadget_lb}, we have that $|X \cap P^{i,j,\ell}| \geq 14$ for all $i \in [\ngrps]$, $j \in [\grpsize]$, $\ell \in [\ncolumns]$. In every decoding gadget, i.e., one for every $i \in [\ngrps]$ and $\ell \in [\ncolumns]$, the set $\{z^{i,\ell}\} \cup \bigcup_{\sequence \in [5]^\grpsize} x_\sequence^{i, \ell}$ has to be contained in $X$, since every vertex in this set has a neighbor of degree 1. Furthermore, to connect $z^{i,\ell}$ to $\rvertex$, at least one of the vertices $y^{i,\ell}_\sequence$, $\sequence \in [5]^\grpsize$, has to be contained in $X$. Hence, $X$ must contain at least $5^\grpsize + 2$ vertices per decoding gadget. Lastly, $o^\ell \in X$ for all $\ell \in [\ncolumns]$, since $o^\ell$ has a neighbor of degree 1. Since we have only considered disjoint vertex sets, this shows that $|X| = \budget$ and all of the previous inequalities have to be tight.
  
  By \cref{thm:cds_cw_path_gadget_ub}, we know that $X$ assumes one of the five possible states on each $P^{i,j,\ell}$. Fix some $P^{i,j} = \bigcup_{\ell \in [\ncolumns]} P^{i,j,\ell}$ and note that due to \cref{thm:cds_cw_path_transition} the state can change at most four times along $P^{i,j}$. Such a state change is called a \emph{cheat}. Let $\gamma \in [0, 4 \ngrps \grpsize]$ and define the $\gamma$-th \emph{region} $R^\gamma = \bigcup_{i \in [\ngrps]} \bigcup_{j \in [\grpsize]} \bigcup_{\ell = \gamma \nclss + 1}^{(\gamma + 1) \nclss} P^{i,j,\ell}$. Since there are $\nregions$ regions, there is at least one region $R^\gamma$ such that no cheat occurs in $R^\gamma$. We will consider this region for the remainder of the proof and read off a satisfying truth assignment from this region.
  
  For $i \in [\ngrps]$, define $\sequence^{i} = (h^i_1, \ldots, h^i_\grpsize) \in [5]^\grpsize$ such that $v^{i,j,\gamma \nclss + 1}_{h^i_j} \notin X$ for all $j \in [\grpsize]$; this is well-defined by \cref{thm:cds_cw_path_gadget_ub}. Since $R^\gamma$ does not contain any cheats, the definition of $\sequence^i$ is independent of which column $\ell \in [\gamma \nclss + 1, (\gamma + 1)\nclss]$ we consider. For every $i \in [\ngrps]$ and $\ell \in [\gamma \nclss + 1, (\gamma + 1)\nclss]$, we claim that $y^{i, \ell}_\sequence \in X$ if and only if $\sequence = \sequence^i$. We have already established that for every $i$ and $\ell$, there is exactly one $\sequence$ such that $y^{i, \ell}_\sequence \in X$. Consider the vertex $x^{i, \ell}_{\sequence^i} \in X$, its neighbors in $G$ are $v^{i, 1, \ell}_{h^i_1}, v^{i, 2, \ell}_{h^i_2}, \ldots, v^{i, \grpsize, \ell}_{h^i_\grpsize}$, $\bar{x}^{i,\ell}_{\sequence^i}$, and $y^{i,\ell}_{\sequence^i}$. By construction of $\sequence^i$ and the tight allocation of the budget, we have $(N_G(x^{i, \ell}_{\sequence^i}) \setminus \{y^{i, \ell}_{\sequence^i}\}) \cap X = \emptyset$. Therefore, $X$ has to include $y^{i, \ell}_{\sequence^i}$ to connect $x^{i, \ell}_{\sequence^i}$ to the root $\rvertex$. This shows the claim.
  
  For $i \in [\ngrps]$, we define the truth assignment $\tassign^i$ for variable group $i$ by taking an arbitrary truth assignment if $\sequence^i \notin \embedding(\{0,1\}^\vgrpsize)$ and setting $\tassign^i = \embedding^{-1}(\sequence^i)$ otherwise. By setting $\tassign = \bigcup_{i \in [\ngrps]} \tassign^i$ we obtain a truth assignment for all variables and we claim that $\tassign$ satisfies $\formula$. Consider some clause $C_{\ell'}$, $\ell' \in [0, \nclss - 1]$, and let $\ell = \gamma \nclss + \ell' + 1$. We have already argued that $o^\ell \in X$ and to connect $o^\ell$ to the root $\rvertex$, there has to be some $y^{i, \ell}_{\sequence} \in N_G(o^\ell) \cap X$. By the previous claim, $\sequence = \sequence^i$ for some $i \in [\ngrps]$ and therefore $\tassign^i$, and also $\tassign$, satisfy clause $C_{\ell'}$ due to the construction of $G$. Because the choice of $C_{\ell'}$ was arbitrary, $\tassign$ has to be a satisfying assignment of $\formula$.
\end{proof}

\begin{lem}\label{thm:cds_cw_bound}
  The constructed graph $G = G(\formula, \vgrpsize)$ has $\lcw(G) \leq \ngrps \grpsize + 3 \cdot 5^\grpsize + \Oh(1)$ and a linear clique-expression of this width can be constructed in polynomial time.
\end{lem}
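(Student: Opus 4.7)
The plan is to mirror the construction of Lemma~\ref{thm:cvc_cw_bound} essentially verbatim, changing only the problem-specific vertex counts. I would use the following global label budget: $\ngrps\grpsize$ persistent \emph{path labels}, one label $(i,j)$ per long path $P^{i,j}$; $3 \cdot 5^\grpsize + 2$ temporary \emph{decoding labels}, one for each vertex of a single decoding gadget $D^{i,\ell}$ (namely the $3 \cdot 5^\grpsize$ vertices $x^{i,\ell}_\sequence, \bar x^{i,\ell}_\sequence, y^{i,\ell}_\sequence$ for $\sequence \in [5]^\grpsize$ plus the pair $z^{i,\ell}, \bar z^{i,\ell}$); a constant number of \emph{gadget labels} for one copy of the CDS path gadget $P$ (including subdivision vertices, as argued below); two \emph{clause labels} for $o^\ell, \bar o^\ell$; one \emph{root label} for $\rvertex$; and one \emph{trash label}.

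The clique-expression then walks through the columns $\ell = 1, \dots, \ncolumns$ left to right and, within each column, through groups $i$ and rows $j$ in order. For each new column I introduce $o^\ell, \bar o^\ell$ with the clause labels. For each block, I first build $D^{i,\ell}$ on the decoding labels, adding its edges to $o^\ell$ via single joins to the clause label. For each $j$ I then build $P^{i,j,\ell}$ on the gadget labels, attach its edges to $D^{i,\ell}$, and join the left join-vertices $u^{i,j,\ell}_{1,1}, u^{i,j,\ell}_{1,2}$ either to the root label (when $\ell = 1$) or to the path label $(i,j)$, which at that moment holds precisely the right join-vertices of $P^{i,j,\ell-1}$; I then relabel $(i,j)$ to trash, relabel $u^{i,j,\ell}_{2,1}, u^{i,j,\ell}_{2,2}$ into the path label $(i,j)$, and collapse the remaining gadget labels into trash. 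Decoding labels are recycled into trash after each group, and clause labels after each column. At the very end, a join of each path label with the root label realizes the incidences of $u^{i,j,\ncolumns}_{2,*}$ to $\rvertex$.

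The only point of divergence from the CVC proof, and the step that needs a moment of care, is that the CDS path gadget is built almost entirely from subdivided edges rather than plain edges, so a naive construction might try to keep a label per subdivision vertex and blow up the label count. This is easily avoided: each subdivision vertex $w_{\{x,y\}}$ has degree exactly two and both of its neighbors are known at the moment it is introduced, so I process subdivisions one at a time by introducing $w_{\{x,y\}}$ on a single reserved subdivision label, adding $\{w_{\{x,y\}}, x\}$ and $\{w_{\{x,y\}}, y\}$ via two single-vertex joins, and then relabeling $w_{\{x,y\}}$ into the trash label before the next subdivision vertex is introduced. Hence only $\Oh(1)$ gadget labels are alive simultaneously, and the peak label count over the whole expression is $\ngrps\grpsize + (3 \cdot 5^\grpsize + 2) + \Oh(1) + 2 + 1 + 1 = \ngrps\grpsize + 3 \cdot 5^\grpsize + \Oh(1)$, as required, with the full expression constructed in time polynomial in $|V(G)|$.
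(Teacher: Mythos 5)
Your proposal is correct and follows essentially the same column-by-column, group-by-group label-recycling scheme as the paper's own proof. The only divergence is your extra care with subdivision vertices, which is in fact unnecessary: a single path gadget contains only $\Oh(1)$ subdivision vertices, so even the naive approach of giving every vertex of one path gadget (subdivision vertices included) its own temporary gadget label still contributes only $\Oh(1)$ to the label count, which is what the paper does.
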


\begin{proof}
  We will describe how to construct a linear clique-expression for $G$ of width $\ngrps \grpsize + 3 \cdot 5^\grpsize + \Oh(1)$. The clique-expression will use one path label $(i,j)$ for every long path $P^{i,j}$, $i \in [\ngrps]$, $j \in [\grpsize]$, temporary decoding labels for every vertex of a decoding gadget, i.e.\ $3 \cdot 5^\grpsize + 2$ many, temporary gadget labels for every vertex of a path gadget, two temporary clause labels for the clause gadget, one label for the root vertex and a trash label.
  
  \begin{algorithm}
    Introduce $\rvertex$ with root label and $\rvertex'$ with trash label and add the edge $\{\rvertex, \rvertex'\}$\;
    \For{$\ell \in [\ncolumns]$}
    {
      Introduce $o^\ell$ and $\bar{o}^\ell$ and the edge $\{o^\ell, \bar{o}^\ell\}$ with the clause labels\;
      \For{$i \in [\ngrps]$}
      {
        Build $D^{i,\ell}$ using decoding labels and add edges to $o^\ell$\;
        \For{$j \in [\grpsize]$}
        {
          Build $P^{i,j,\ell}$ using gadget labels and add edges to $D^{i,j}$\;
          Join $u^{i,j,\ell}_{1,1}$ and $u^{i,j,\ell}_{1,2}$ to path label $(i,j)$ if $\ell > 1$ and to root label otherwise\;
          Relabel path label $(i,j)$ to trash label\;
          Relabel $u^{i,j,\ell}_{2,1}$ and $u^{i,j,\ell}_{2,2}$ to path label $(i,j)$\;
          Relabel all other gadget labels to the trash label\;
        }
        Relabel all decoding labels to the trash label\;
      }
      Relabel all clause labels to the trash label\;
    }
    \caption{Constructing a linear clique-expression for $G$.}
    \label{algo:cds_cw_expression}
  \end{algorithm}
  The construction of the clique-expression is described in \cref{algo:cds_cw_expression} and the idea is to proceed column by column and group by group in each column. By reusing the temporary labels, we keep the total number of labels small. The maximum number of labels used simultaneously occurs in line 7 and is $\ngrps \grpsize + (3 \cdot 5^\grpsize + 2) + \Oh(1)$. This concludes the proof.
\end{proof}

\begin{thm}
  There is no algorithm that solves \CDS, given a linear $k$-expression, in time $\Oh^*((5 - \eps)^k)$ for some $\eps > 0$, unless \CNFSETH fails.
\end{thm}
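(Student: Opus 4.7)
The plan is to derive a sub-$2^{\nvars}$ algorithm for \SAT from any hypothetical algorithm $\algo$ solving \CDS in time $\Oh^*((5-\eps)^\nlabels)$ on a given linear $\nlabels$-expression, contradicting \CNFSETH. Structurally, the proof is a direct analogue of the \CVC lower bound theorem with base $6$ replaced by $5$, and it just assembles \cref{thm:cds_cw_sat_to_sol}, \cref{thm:cds_cw_sol_to_sat}, and \cref{thm:cds_cw_bound}.

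First, I would fix the constant $\vgrpsize$ in terms of $\eps$. Define $\delta_1$ by $(5-\eps)^{\log_5 2} = 2^{\delta_1}$, so $\delta_1 < 1$ because $5-\eps < 5$, and define $\delta_2$ by $(5-\eps)^{1/\vgrpsize} = 2^{\delta_2}$, so $\delta_2 \to 0$ as $\vgrpsize \to \infty$. Pick $\vgrpsize$ large enough that $\delta := \delta_1 + \delta_2 < 1$; then $\vgrpsize$, and hence $\grpsize = \lceil \log_5(2^\vgrpsize) \rceil$, are constants depending only on $\eps$.

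Given a \SAT instance $\formula$ with $\nvars$ variables, the plan is to construct in polynomial time the graph $G = G(\formula, \vgrpsize)$ together with a linear clique-expression of width at most $\ngrps \grpsize + 3 \cdot 5^\grpsize + \Oh(1)$, using \cref{thm:cds_cw_bound}, where $\ngrps = \lceil \nvars/\vgrpsize \rceil$. Then run $\algo$ on $(G, \budget)$ with $\budget = (14\ngrps\grpsize + (5^\grpsize + 2)\ngrps + 1)\ncolumns + 1$ and return its answer. Correctness is immediate from \cref{thm:cds_cw_sat_to_sol} and \cref{thm:cds_cw_sol_to_sat}, which together show that $\formula$ is satisfiable if and only if $G$ admits a connected dominating set of size at most $\budget$.

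For the running time, since $\grpsize$ and $5^\grpsize$ are constants, the contribution of the $3 \cdot 5^\grpsize + \Oh(1)$ additive term to the base $(5-\eps)$ is only a polynomial factor hidden in $\Oh^*$. The dominating contribution is bounded by
\begin{align*}
  (5-\eps)^{\ngrps \grpsize} & \leq (5-\eps)^{\lceil \nvars/\vgrpsize \rceil \lceil \log_5(2^\vgrpsize) \rceil} \\
  & \leq (5-\eps)^{(\nvars/\vgrpsize)\log_5(2^\vgrpsize)} \cdot (5-\eps)^{\nvars/\vgrpsize} \\
  & = 2^{\delta_1 \nvars} \cdot 2^{\delta_2 \nvars} = 2^{\delta \nvars},
\end{align*}
so the total running time is $\Oh^*(2^{\delta \nvars})$ with $\delta < 1$, contradicting \CNFSETH. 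There is no genuine obstacle here: all the work resides in the construction and the path-gadget analysis of the preceding subsection, and this final theorem is a short bookkeeping argument combining the three lemmas with the asymptotic estimate above.
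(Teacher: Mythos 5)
Your proposal is correct and follows the paper's own proof essentially verbatim: fix $\vgrpsize$ so that $\delta_1 + \delta_2 < 1$, construct $G(\formula,\vgrpsize)$ with its linear clique-expression from \cref{thm:cds_cw_bound}, invoke the hypothetical algorithm, and appeal to \cref{thm:cds_cw_sat_to_sol} and \cref{thm:cds_cw_sol_to_sat} for equivalence. The running-time estimate matches the paper's chain (modulo the standard $\Oh^*$-absorption of the bounded ceiling and additive $\Oh(1)$ terms, which the paper also handles implicitly).
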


\begin{proof}
  Assume that there exists an algorithm $\algo$ that solves \CDS in time $\Oh^*((5 - \eps)^k)$ for some $\eps > 0$ given a linear $k$-expression. Given $\vgrpsize$, we define $\delta_1 < 1$ such that $(5 - \eps)^{\log_5(2)} = 2^{\delta_1}$ and $\delta_2$ such that $(5 - \eps)^{1 / \vgrpsize} = 2^{\delta_2}$. By picking $\vgrpsize$ large enough, we can ensure that $\delta = \delta_1 + \delta_2 < 1$. We will show how to solve \SAT  using $\algo$ in time $\Oh^*(2^{\delta \nvars})$, where $\nvars$ is the number of variables, thus contradicting \CNFSETH.
  
  Given a \SAT instance $\formula$, we construct $G = G(\formula, \vgrpsize)$ and the linear clique-expression from \cref{thm:cds_cw_bound} in polynomial time, note that we have $\vgrpsize = \Oh(1)$ and hence $\grpsize = \Oh(1)$; recall $\grpsize = \lceil \log_5(2^\vgrpsize) \rceil$. We then run $\algo$ on $G$ and return its answer. This is correct by \cref{thm:cds_cw_sat_to_sol} and \cref{thm:cds_cw_sol_to_sat}. Due to \cref{thm:cds_cw_bound}, the running time is
  \begin{alignat*}{6}
    \phantom{\leq} \quad & \Oh^*\left( (5 - \eps)^{\ngrps \grpsize + 3 \cdot 5^\grpsize + \Oh(1)} \right)
    & \,\,\leq\,\, & \Oh^*\left( (5 - \eps)^{\ngrps \grpsize} \right) 
    & \,\,\leq\,\, & \Oh^*\left( (5 - \eps)^{\lceil \frac{\nvars}{\vgrpsize} \rceil \grpsize} \right) \\
    \leq \quad & \Oh^*\left( (5 - \eps)^{\frac{\nvars}{\vgrpsize} \grpsize} \right) 
    & \,\,\leq\,\, & \Oh^*\left( (5 - \eps)^{\frac{\nvars}{\vgrpsize} \lceil \log_5(2^\vgrpsize) \rceil} \right) 
    & \,\,\leq\,\, & \Oh^*\left( (5 - \eps)^{\frac{\nvars}{\vgrpsize} \log_5(2^\vgrpsize)} (5 - \eps)^{\frac{\nvars}{\vgrpsize}} \right)\\
    \leq \quad & \Oh^*\left( 2^{\delta_1 \vgrpsize \frac{\nvars}{\vgrpsize}} 2^{\delta_2 \nvars} \right)
    & \,\,\leq\,\, & \Oh^*\left( 2^{(\delta_1 + \delta_2) \nvars} \right)
    & \,\,\leq\,\, & \Oh^*\left( 2^{\delta \nvars} \right),
  \end{alignat*}
  hence completing the proof.  
\end{proof}

\section{Conclusion and Open Problems}

We have provided the first tight results under SETH for connectivity problems parameterized by clique-width, namely the problems \CVC and \CDS. For several important benchmark problems such as \ST, \COCT, and \FVS, we are not able to achieve tight results with the current techniques. For \ST, our algorithmic techniques readily yield an $\Oh^*(4^{\cw})$-time algorithm, but the compatibility matrix for the lower bound only contains a triangular submatrix of size $3 \times 3$, hence we are not able to prove a larger lower bound than for treewidth. Similarly for \COCT, the techniques for \CVC yield an $\Oh^*(14^{\cw})$-time algorithm and a larger lower bound can be proven by adapting the gadgets for \CVC and adding a gadget to detect the used color at join-vertices, however there is again no large enough triangular submatrix that would allow us to show that $\Oh^*(14^{\cw})$ is optimal. For \FVS, a problem with a negative connectivity constraint in the form of acyclicness, the usual cut-and-count approach involves counting the edges induced by a partial solution, but this immediately leads to an XP-algorithm parameterized by clique-width as already noted by Bergougnoux and Kant\'e~\cite{BergougnouxK19a, BergougnouxK21}. Hence, a different approach is required to obtain plausible running times for tight results. 

Beyond benchmark problems, it is crucial for our understanding of the impact of low clique-width on problem complexity to also study large problem classes. Currently, the only such fine-grained result is the study of homomorphism problems by Ganian et al.~\cite{GanianHKOS22}. Very recently, Focke et al.~\cite{FockeMMNSSW23} have shown tight complexity results for a large subset of \textsc{$(\sigma, \rho)$-Dominating Set} problems parameterized by treewidth. By results of Bui-Xuan et al.~\cite{Bui-XuanTV13}, we know that these problems can also be solved in single-exponential time parameterized by clique-width. However, besides a few exceptions such as \IS and \DS the optimal running times parameterized by clique-width are not known. Obtaining results similar to Focke et al.~\cite{FockeMMNSSW23} for parameterization by clique-width would hence be quite interesting.

A big caveat in applying algorithms parameterized by clique-width is that we are lacking good algorithms for computing clique-expressions. The currently best algorithms rely on approximating clique-width via rankwidth, see Oum and Seymour~\cite{OumS06} for the first such algorithm and Fomin and Korhonen~\cite{FominK22} for the most recent one. However, the approximation via rankwidth introduces an exponential error, therefore all single-exponential algorithms parameterized by clique-width become double-exponential algorithms unless we are given a clique-expression by other means. A first step towards better approximation algorithms for clique-width could be a fixed-parameter tractable algorithm with subexponential error.

\bibliography{cliquewidth}

\begin{thebibliography}{10}

\bibitem{AlmanW21}
Josh Alman and Virginia~Vassilevska Williams.
\newblock A refined laser method and faster matrix multiplication.
\newblock In D{\'{a}}niel Marx, editor, {\em Proceedings of the 2021 {ACM-SIAM}
  Symposium on Discrete Algorithms, {SODA} 2021, Virtual Conference, January 10
  - 13, 2021}, pages 522--539. {SIAM}, 2021.
\newblock \href {https://doi.org/10.1137/1.9781611976465.32}
  {\path{doi:10.1137/1.9781611976465.32}}.

\bibitem{Bergougnoux19}
Benjamin Bergougnoux.
\newblock {\em Matrix decompositions and algorithmic applications to
  (hyper)graphs.}
\newblock PhD thesis, University of Clermont Auvergne, Clermont-Ferrand,
  France, 2019.
\newblock URL: \url{https://tel.archives-ouvertes.fr/tel-02388683}.

\bibitem{BergougnouxDJ23}
Benjamin Bergougnoux, Jan Dreier, and Lars Jaffke.
\newblock {\em A logic-based algorithmic meta-theorem for mim-width}, pages
  3282--3304.
\newblock URL:
  \url{https://epubs.siam.org/doi/abs/10.1137/1.9781611977554.ch125}, \href
  {http://arxiv.org/abs/https://epubs.siam.org/doi/pdf/10.1137/1.9781611977554.ch125}
  {\path{arXiv:https://epubs.siam.org/doi/pdf/10.1137/1.9781611977554.ch125}},
  \href {https://doi.org/10.1137/1.9781611977554.ch125}
  {\path{doi:10.1137/1.9781611977554.ch125}}.

\bibitem{BergougnouxK19a}
Benjamin Bergougnoux and Mamadou~Moustapha Kant{\'{e}}.
\newblock Fast exact algorithms for some connectivity problems parameterized by
  clique-width.
\newblock {\em Theor. Comput. Sci.}, 782:30--53, 2019.
\newblock \href {https://doi.org/10.1016/j.tcs.2019.02.030}
  {\path{doi:10.1016/j.tcs.2019.02.030}}.

\bibitem{BergougnouxK21}
Benjamin Bergougnoux and Mamadou~Moustapha Kant{\'{e}}.
\newblock More applications of the d-neighbor equivalence: Acyclicity and
  connectivity constraints.
\newblock {\em {SIAM} J. Discret. Math.}, 35(3):1881--1926, 2021.
\newblock \href {https://doi.org/10.1137/20M1350571}
  {\path{doi:10.1137/20M1350571}}.

\bibitem{BergougnouxKK20}
Benjamin Bergougnoux, Mamadou~Moustapha Kant{\'{e}}, and O{-}joung Kwon.
\newblock An optimal {XP} algorithm for hamiltonian cycle on graphs of bounded
  clique-width.
\newblock {\em Algorithmica}, 82(6):1654--1674, 2020.
\newblock \href {https://doi.org/10.1007/s00453-019-00663-9}
  {\path{doi:10.1007/s00453-019-00663-9}}.

\bibitem{BjorklundHKK10}
Andreas Bj{\"{o}}rklund, Thore Husfeldt, Petteri Kaski, and Mikko Koivisto.
\newblock Trimmed moebius inversion and graphs of bounded degree.
\newblock {\em Theory Comput. Syst.}, 47(3):637--654, 2010.
\newblock \href {https://doi.org/10.1007/s00224-009-9185-7}
  {\path{doi:10.1007/s00224-009-9185-7}}.

\bibitem{BjorklundHKKNP16}
Andreas Bj{\"{o}}rklund, Thore Husfeldt, Petteri Kaski, Mikko Koivisto, Jesper
  Nederlof, and Pekka Parviainen.
\newblock Fast zeta transforms for lattices with few irreducibles.
\newblock {\em {ACM} Trans. Algorithms}, 12(1):4:1--4:19, 2016.
\newblock \href {https://doi.org/10.1145/2629429} {\path{doi:10.1145/2629429}}.

\bibitem{BodlaenderCKN15}
Hans~L. Bodlaender, Marek Cygan, Stefan Kratsch, and Jesper Nederlof.
\newblock Deterministic single exponential time algorithms for connectivity
  problems parameterized by treewidth.
\newblock {\em Inf. Comput.}, 243:86--111, 2015.
\newblock \href {https://doi.org/10.1016/j.ic.2014.12.008}
  {\path{doi:10.1016/j.ic.2014.12.008}}.

\bibitem{BojikianCHK23}
Narek Bojikian, Vera Chekan, Falko Hegerfeld, and Stefan Kratsch.
\newblock Tight bounds for connectivity problems parameterized by cutwidth.
\newblock In {\em 40th International Symposium on Theoretical Aspects of
  Computer Science, {STACS} 2023, Hamburg, Germany, March 7-9, 2023}.
\newblock To appear.

\bibitem{Bui-XuanTV13}
Binh{-}Minh Bui{-}Xuan, Jan~Arne Telle, and Martin Vatshelle.
\newblock Fast dynamic programming for locally checkable vertex subset and
  vertex partitioning problems.
\newblock {\em Theor. Comput. Sci.}, 511:66--76, 2013.
\newblock \href {https://doi.org/10.1016/j.tcs.2013.01.009}
  {\path{doi:10.1016/j.tcs.2013.01.009}}.

\bibitem{CalabroIP09}
Chris Calabro, Russell Impagliazzo, and Ramamohan Paturi.
\newblock The complexity of satisfiability of small depth circuits.
\newblock In Jianer Chen and Fedor~V. Fomin, editors, {\em Parameterized and
  Exact Computation, 4th International Workshop, {IWPEC} 2009, Copenhagen,
  Denmark, September 10-11, 2009, Revised Selected Papers}, volume 5917 of {\em
  Lecture Notes in Computer Science}, pages 75--85. Springer, 2009.
\newblock \href {https://doi.org/10.1007/978-3-642-11269-0\_6}
  {\path{doi:10.1007/978-3-642-11269-0\_6}}.

\bibitem{CourcelleO00}
Bruno Courcelle and Stephan Olariu.
\newblock Upper bounds to the clique width of graphs.
\newblock {\em Discret. Appl. Math.}, 101(1-3):77--114, 2000.
\newblock \href {https://doi.org/10.1016/S0166-218X(99)00184-5}
  {\path{doi:10.1016/S0166-218X(99)00184-5}}.

\bibitem{CurticapeanLN18}
Radu Curticapean, Nathan Lindzey, and Jesper Nederlof.
\newblock A tight lower bound for counting hamiltonian cycles via matrix rank.
\newblock In Artur Czumaj, editor, {\em Proceedings of the Twenty-Ninth Annual
  {ACM-SIAM} Symposium on Discrete Algorithms, {SODA} 2018, New Orleans, LA,
  USA, January 7-10, 2018}, pages 1080--1099. {SIAM}, 2018.
\newblock \href {https://doi.org/10.1137/1.9781611975031.70}
  {\path{doi:10.1137/1.9781611975031.70}}.

\bibitem{CyganFKLMPPS15}
Marek Cygan, Fedor~V. Fomin, Lukasz Kowalik, Daniel Lokshtanov, D{\'{a}}niel
  Marx, Marcin Pilipczuk, Micha{\l} Pilipczuk, and Saket Saurabh.
\newblock {\em Parameterized Algorithms}.
\newblock Springer, 2015.
\newblock \href {https://doi.org/10.1007/978-3-319-21275-3}
  {\path{doi:10.1007/978-3-319-21275-3}}.

\bibitem{CyganKN18}
Marek Cygan, Stefan Kratsch, and Jesper Nederlof.
\newblock Fast hamiltonicity checking via bases of perfect matchings.
\newblock {\em J. {ACM}}, 65(3):12:1--12:46, 2018.
\newblock \href {https://doi.org/10.1145/3148227} {\path{doi:10.1145/3148227}}.

\bibitem{CyganNPPRW11}
Marek Cygan, Jesper Nederlof, Marcin Pilipczuk, Micha{\l} Pilipczuk, Johan
  M.~M. van Rooij, and Jakub~Onufry Wojtaszczyk.
\newblock Solving connectivity problems parameterized by treewidth in single
  exponential time.
\newblock In Rafail Ostrovsky, editor, {\em {IEEE} 52nd Annual Symposium on
  Foundations of Computer Science, {FOCS} 2011, Palm Springs, CA, USA, October
  22-25, 2011}, pages 150--159. {IEEE} Computer Society, 2011.
\newblock \href {https://doi.org/10.1109/FOCS.2011.23}
  {\path{doi:10.1109/FOCS.2011.23}}.

\bibitem{CyganNPPRW11arxiv}
Marek Cygan, Jesper Nederlof, Marcin Pilipczuk, Michal Pilipczuk, Johan M.~M.
  van Rooij, and Jakub~Onufry Wojtaszczyk.
\newblock Solving connectivity problems parameterized by treewidth in single
  exponential time.
\newblock {\em CoRR}, abs/1103.0534, 2011.
\newblock URL: \url{http://arxiv.org/abs/1103.0534}, \href
  {http://arxiv.org/abs/1103.0534} {\path{arXiv:1103.0534}}.

\bibitem{CyganNPPRW22}
Marek Cygan, Jesper Nederlof, Marcin Pilipczuk, Michal Pilipczuk, Johan M.~M.
  van Rooij, and Jakub~Onufry Wojtaszczyk.
\newblock Solving connectivity problems parameterized by treewidth in single
  exponential time.
\newblock {\em {ACM} Trans. Algorithms}, 18(2):17:1--17:31, 2022.
\newblock \href {https://doi.org/10.1145/3506707} {\path{doi:10.1145/3506707}}.

\bibitem{FockeMMNSSW23}
Jacob Focke, D{\'a}niel Marx, Fionn~Mc Inerney, Daniel Neuen, Govind~S Sankar,
  Philipp Schepper, and Philip Wellnitz.
\newblock Tight complexity bounds for counting generalized dominating sets in
  bounded-treewidth graphs.
\newblock In {\em Proceedings of the 2023 Annual ACM-SIAM Symposium on Discrete
  Algorithms (SODA)}, pages 3664--3683. SIAM, 2023.

\bibitem{FominK22}
Fedor~V. Fomin and Tuukka Korhonen.
\newblock Fast fpt-approximation of branchwidth.
\newblock In Stefano Leonardi and Anupam Gupta, editors, {\em {STOC} '22: 54th
  Annual {ACM} {SIGACT} Symposium on Theory of Computing, Rome, Italy, June 20
  - 24, 2022}, pages 886--899. {ACM}, 2022.
\newblock \href {https://doi.org/10.1145/3519935.3519996}
  {\path{doi:10.1145/3519935.3519996}}.

\bibitem{FominLPS16}
Fedor~V. Fomin, Daniel Lokshtanov, Fahad Panolan, and Saket Saurabh.
\newblock Efficient computation of representative families with applications in
  parameterized and exact algorithms.
\newblock {\em J. {ACM}}, 63(4):29:1--29:60, 2016.
\newblock \href {https://doi.org/10.1145/2886094} {\path{doi:10.1145/2886094}}.

\bibitem{FominLPS17}
Fedor~V. Fomin, Daniel Lokshtanov, Fahad Panolan, and Saket Saurabh.
\newblock Representative families of product families.
\newblock {\em {ACM} Trans. Algorithms}, 13(3):36:1--36:29, 2017.
\newblock \href {https://doi.org/10.1145/3039243} {\path{doi:10.1145/3039243}}.

\bibitem{GanianHKOS22}
Robert Ganian, Thekla Hamm, Viktoriia Korchemna, Karolina Okrasa, and Kirill
  Simonov.
\newblock The fine-grained complexity of graph homomorphism parameterized by
  clique-width.
\newblock In Mikolaj Bojanczyk, Emanuela Merelli, and David~P. Woodruff,
  editors, {\em 49th International Colloquium on Automata, Languages, and
  Programming, {ICALP} 2022, July 4-8, 2022, Paris, France}, volume 229 of {\em
  LIPIcs}, pages 66:1--66:20. Schloss Dagstuhl - Leibniz-Zentrum f{\"{u}}r
  Informatik, 2022.
\newblock \href {https://doi.org/10.4230/LIPIcs.ICALP.2022.66}
  {\path{doi:10.4230/LIPIcs.ICALP.2022.66}}.

\bibitem{GroenlandMNS22}
Carla Groenland, Isja Mannens, Jesper Nederlof, and Krisztina Szil{\'{a}}gyi.
\newblock Tight bounds for counting colorings and connected edge sets
  parameterized by cutwidth.
\newblock In Petra Berenbrink and Benjamin Monmege, editors, {\em 39th
  International Symposium on Theoretical Aspects of Computer Science, {STACS}
  2022, March 15-18, 2022, Marseille, France (Virtual Conference)}, volume 219
  of {\em LIPIcs}, pages 36:1--36:20. Schloss Dagstuhl - Leibniz-Zentrum
  f{\"{u}}r Informatik, 2022.
\newblock \href {https://doi.org/10.4230/LIPIcs.STACS.2022.36}
  {\path{doi:10.4230/LIPIcs.STACS.2022.36}}.

\bibitem{HegerfeldK20}
Falko Hegerfeld and Stefan Kratsch.
\newblock Solving connectivity problems parameterized by treedepth in
  single-exponential time and polynomial space.
\newblock In Christophe Paul and Markus Bl{\"{a}}ser, editors, {\em 37th
  International Symposium on Theoretical Aspects of Computer Science, {STACS}
  2020, March 10-13, 2020, Montpellier, France}, volume 154 of {\em LIPIcs},
  pages 29:1--29:16. Schloss Dagstuhl - Leibniz-Zentrum f{\"{u}}r Informatik,
  2020.
\newblock \href {https://doi.org/10.4230/LIPIcs.STACS.2020.29}
  {\path{doi:10.4230/LIPIcs.STACS.2020.29}}.

\bibitem{HegerfeldK22}
Falko Hegerfeld and Stefan Kratsch.
\newblock Towards exact structural thresholds for parameterized complexity.
\newblock In Holger Dell and Jesper Nederlof, editors, {\em 17th International
  Symposium on Parameterized and Exact Computation, {IPEC} 2022, September 7-9,
  2022, Potsdam, Germany}, volume 249 of {\em LIPIcs}, pages 17:1--17:20.
  Schloss Dagstuhl - Leibniz-Zentrum f{\"{u}}r Informatik, 2022.
\newblock \href {https://doi.org/10.4230/LIPIcs.IPEC.2022.17}
  {\path{doi:10.4230/LIPIcs.IPEC.2022.17}}.

\bibitem{HegerfeldK23mtw}
Falko Hegerfeld and Stefan Kratsch.
\newblock Tight algorithms for connectivity problems parameterized by
  modular-treewidth.
\newblock Preprint to appear soon., 2023.

\bibitem{ImpagliazzoP01}
Russell Impagliazzo and Ramamohan Paturi.
\newblock On the complexity of k-sat.
\newblock {\em J. Comput. Syst. Sci.}, 62(2):367--375, 2001.
\newblock \href {https://doi.org/10.1006/jcss.2000.1727}
  {\path{doi:10.1006/jcss.2000.1727}}.

\bibitem{ImpagliazzoPZ01}
Russell Impagliazzo, Ramamohan Paturi, and Francis Zane.
\newblock Which problems have strongly exponential complexity?
\newblock {\em J. Comput. Syst. Sci.}, 63(4):512--530, 2001.
\newblock \href {https://doi.org/10.1006/jcss.2001.1774}
  {\path{doi:10.1006/jcss.2001.1774}}.

\bibitem{IwataY15}
Yoichi Iwata and Yuichi Yoshida.
\newblock On the equivalence among problems of bounded width.
\newblock In Nikhil Bansal and Irene Finocchi, editors, {\em Algorithms - {ESA}
  2015 - 23rd Annual European Symposium, Patras, Greece, September 14-16, 2015,
  Proceedings}, volume 9294 of {\em Lecture Notes in Computer Science}, pages
  754--765. Springer, 2015.
\newblock \href {https://doi.org/10.1007/978-3-662-48350-3\_63}
  {\path{doi:10.1007/978-3-662-48350-3\_63}}.

\bibitem{JacobBDP21}
Hugo Jacob, Thomas Bellitto, Oscar Defrain, and Marcin Pilipczuk.
\newblock Close relatives (of feedback vertex set), revisited.
\newblock In Petr~A. Golovach and Meirav Zehavi, editors, {\em 16th
  International Symposium on Parameterized and Exact Computation, {IPEC} 2021,
  September 8-10, 2021, Lisbon, Portugal}, volume 214 of {\em LIPIcs}, pages
  21:1--21:15. Schloss Dagstuhl - Leibniz-Zentrum f{\"{u}}r Informatik, 2021.
\newblock \href {https://doi.org/10.4230/LIPIcs.IPEC.2021.21}
  {\path{doi:10.4230/LIPIcs.IPEC.2021.21}}.

\bibitem{JansenN19}
Bart M.~P. Jansen and Jesper Nederlof.
\newblock Computing the chromatic number using graph decompositions via matrix
  rank.
\newblock {\em Theor. Comput. Sci.}, 795:520--539, 2019.
\newblock \href {https://doi.org/10.1016/j.tcs.2019.08.006}
  {\path{doi:10.1016/j.tcs.2019.08.006}}.

\bibitem{KatsikarelisLP19}
Ioannis Katsikarelis, Michael Lampis, and Vangelis~Th. Paschos.
\newblock Structural parameters, tight bounds, and approximation for (k,
  r)-center.
\newblock {\em Discrete Applied Mathematics}, 264:90--117, 2019.
\newblock \href {https://doi.org/10.1016/j.dam.2018.11.002}
  {\path{doi:10.1016/j.dam.2018.11.002}}.

\bibitem{Kloks94}
Ton Kloks.
\newblock {\em Treewidth, Computations and Approximations}, volume 842 of {\em
  Lecture Notes in Computer Science}.
\newblock Springer, 1994.
\newblock \href {https://doi.org/10.1007/BFb0045375}
  {\path{doi:10.1007/BFb0045375}}.

\bibitem{Lampis20}
Michael Lampis.
\newblock Finer tight bounds for coloring on clique-width.
\newblock {\em {SIAM} J. Discret. Math.}, 34(3):1538--1558, 2020.
\newblock \href {https://doi.org/10.1137/19M1280326}
  {\path{doi:10.1137/19M1280326}}.

\bibitem{LokshtanovMS18}
Daniel Lokshtanov, D{\'{a}}niel Marx, and Saket Saurabh.
\newblock Known algorithms on graphs of bounded treewidth are probably optimal.
\newblock {\em {ACM} Trans. Algorithms}, 14(2):13:1--13:30, 2018.
\newblock \href {https://doi.org/10.1145/3170442} {\path{doi:10.1145/3170442}}.

\bibitem{MulmuleyVV87}
Ketan Mulmuley, Umesh~V. Vazirani, and Vijay~V. Vazirani.
\newblock Matching is as easy as matrix inversion.
\newblock {\em Combinatorica}, 7(1):105--113, 1987.
\newblock \href {https://doi.org/10.1007/BF02579206}
  {\path{doi:10.1007/BF02579206}}.

\bibitem{Nederlof20}
Jesper Nederlof.
\newblock Algorithms for np-hard problems via rank-related parameters of
  matrices.
\newblock In Fedor~V. Fomin, Stefan Kratsch, and Erik~Jan van Leeuwen, editors,
  {\em Treewidth, Kernels, and Algorithms - Essays Dedicated to Hans L.
  Bodlaender on the Occasion of His 60th Birthday}, volume 12160 of {\em
  Lecture Notes in Computer Science}, pages 145--164. Springer, 2020.
\newblock \href {https://doi.org/10.1007/978-3-030-42071-0\_11}
  {\path{doi:10.1007/978-3-030-42071-0\_11}}.

\bibitem{NederlofPSW20}
Jesper Nederlof, Michal Pilipczuk, C{\'{e}}line M.~F. Swennenhuis, and Karol
  Wegrzycki.
\newblock Hamiltonian cycle parameterized by treedepth in single exponential
  time and polynomial space.
\newblock In Isolde Adler and Haiko M{\"{u}}ller, editors, {\em Graph-Theoretic
  Concepts in Computer Science - 46th International Workshop, {WG} 2020, Leeds,
  UK, June 24-26, 2020, Revised Selected Papers}, volume 12301 of {\em Lecture
  Notes in Computer Science}, pages 27--39. Springer, 2020.
\newblock \href {https://doi.org/10.1007/978-3-030-60440-0\_3}
  {\path{doi:10.1007/978-3-030-60440-0\_3}}.

\bibitem{NederlofRD14}
Jesper Nederlof, Johan M.~M. van Rooij, and Thomas~C. van Dijk.
\newblock Inclusion/exclusion meets measure and conquer.
\newblock {\em Algorithmica}, 69(3):685--740, 2014.
\newblock \href {https://doi.org/10.1007/s00453-013-9759-2}
  {\path{doi:10.1007/s00453-013-9759-2}}.

\bibitem{OumS06}
Sang{-}il Oum and Paul~D. Seymour.
\newblock Approximating clique-width and branch-width.
\newblock {\em J. Comb. Theory, Ser. {B}}, 96(4):514--528, 2006.
\newblock \href {https://doi.org/10.1016/j.jctb.2005.10.006}
  {\path{doi:10.1016/j.jctb.2005.10.006}}.

\bibitem{PilipczukW18}
Michal Pilipczuk and Marcin Wrochna.
\newblock On space efficiency of algorithms working on structural
  decompositions of graphs.
\newblock {\em {ACM} Trans. Comput. Theory}, 9(4):18:1--18:36, 2018.
\newblock \href {https://doi.org/10.1145/3154856} {\path{doi:10.1145/3154856}}.

\bibitem{PinoBR16}
Willem J.~A. Pino, Hans~L. Bodlaender, and Johan M.~M. van Rooij.
\newblock Cut and count and representative sets on branch decompositions.
\newblock In Jiong Guo and Danny Hermelin, editors, {\em 11th International
  Symposium on Parameterized and Exact Computation, {IPEC} 2016, August 24-26,
  2016, Aarhus, Denmark}, volume~63 of {\em LIPIcs}, pages 27:1--27:12. Schloss
  Dagstuhl - Leibniz-Zentrum fuer Informatik, 2016.
\newblock \href {https://doi.org/10.4230/LIPIcs.IPEC.2016.27}
  {\path{doi:10.4230/LIPIcs.IPEC.2016.27}}.

\bibitem{RooijBR09}
Johan M.~M. van Rooij, Hans~L. Bodlaender, and Peter Rossmanith.
\newblock Dynamic programming on tree decompositions using generalised fast
  subset convolution.
\newblock In Amos Fiat and Peter Sanders, editors, {\em Algorithms - {ESA}
  2009, 17th Annual European Symposium, Copenhagen, Denmark, September 7-9,
  2009. Proceedings}, volume 5757 of {\em Lecture Notes in Computer Science},
  pages 566--577. Springer, 2009.
\newblock \href {https://doi.org/10.1007/978-3-642-04128-0\_51}
  {\path{doi:10.1007/978-3-642-04128-0\_51}}.

\end{thebibliography}

\appendix
\section{Fast Convolution Algorithms}
\label{sec:fast_convolutions}

\subsection{Trimmed Subset Convolution}

\newcommand{\upclos}[1]{{\uparrow\!\!(#1)}}

In this section we describe the details necessary to quickly compute the \emph{cover product} which occurs at the union node for the \CVC algorithm. The states of the \CVC algorithm are given by $\states = \powerset{\bstates} \setminus \{\emptyset, \bstates\}$, where $\bstates = \{\zero, \one_L, \one_R\}$. Essentially, we are given two tables $A, B \colon \states^{[k]} \rightarrow \ZZ_2$ and want to compute the following \emph{cover product} $A \otimes_c B \colon \states^{[k]} \rightarrow \ZZ_2$ in time $\Oh^*(6^k) = \Oh^*(|\states|^k)$:
\begin{equation*}
	(A \otimes_c B)(f) = \sum_{\substack{f_1, f_2 \in \states^{[k]} \colon \\ f_1 \cup f_2 = f }} A(f_1) B(f_2),
\end{equation*}
where the union is \emph{componentwise}.
If not for the exclusion of $\emptyset$ and $\bstates$, a standard application of the fast Zeta and Möbius transform~\cite{CyganFKLMPPS15} would be sufficient. However, this results only in a running time of $\Oh^*(8^k)$. To obtain the improved running time, we trim the computations of the fast Zeta and Möbius transform from above and below. By ordering the subsets of $\bstates$ by their size and performing the computation along this ordering, we can start only with the relevant subsets, i.e., exclude $\emptyset$, and can stop the computation before we reach $\bstates$ itself. All these ideas were used by Björklund et al.~\cite{BjorklundHKK10}, but their presentation is not suited to our setting, instead we provide a suitable self-contained presentation.

We first switch to a more standard notation and consider tables defined over some set family $\family \subseteq \powerset{\universe}$ and finite universe $U$ instead of over functions from $[k]$ to $\states$. A given function $f \colon [k] \rightarrow \states$ is transformed to the set $S_f = \{(i, \state) \sep i \in [k], \state \in f(i)\} = \bigcup_{i \in [k]} \{i\} \times f(i) \subseteq \bigcup_{i \in [k]} \{i\} \times \bstates =: \universe$ and the resulting set family is given by $\family = \{S \subseteq \universe \sep 1 \leq |S \cap (\{i\} \times \bstates)| \leq 2 \,\,\forall i \in [k]\}$. We will see that this set family can be written in a special form that allows us to provide trimmed algorithms for the Zeta and Möbius transform. 

For now, we return to the general setting. Let $\universe$ be some fixed universe and let $\family \subseteq \powerset{\universe}$ be some set family over $\universe$. The \emph{upward closure} $\upclos{\family} \subseteq \powerset{\universe}$ of $\family$ is given by $\upclos{\family} = \{S \subseteq \universe \sep \text{there is a $T \in \family$ s.t. } T \subseteq S\}$; a set family $\family$ is closed under supersets if and only if $\family = \upclos{\family}$. We say that a set family $\family$ is a \emph{closure difference} if there are set families $\family_+$ and $\family_-$ such that $\family = \upclos{\family_+} \setminus \upclos{\family_-}$. Observe that $\upclos{\emptyset} = \emptyset$ and hence $\upclos{\family_+}$ is also a closure difference. 
Closure differences are precisely the set families that contain no holes in the sense of the following lemma, which will be important to prove properties of the various transforms.

\begin{lem}\label{thm:interval_property}
	A set family $\family \subseteq \powerset{\universe}$ is a closure difference if and only if it satisfies the following \emph{interval property}: for all sets $W, T, S \subseteq \universe$ with $W \subseteq T \subseteq S$ and $W, S \in \family$, we must also have $T \in \family$.
\end{lem}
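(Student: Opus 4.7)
The plan is to prove the two directions separately, with the forward direction being essentially bookkeeping and the backward direction requiring us to exhibit an explicit decomposition $\family = \upclos{\family_+} \setminus \upclos{\family_-}$ witnessing that $\family$ is a closure difference.

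For the forward direction, assume $\family = \upclos{\family_+} \setminus \upclos{\family_-}$ and take arbitrary $W \subseteq T \subseteq S$ with $W, S \in \family$. From $W \in \upclos{\family_+}$ there is some $W' \in \family_+$ with $W' \subseteq W \subseteq T$, hence $T \in \upclos{\family_+}$. If we had $T \in \upclos{\family_-}$, then some $T' \in \family_-$ would satisfy $T' \subseteq T \subseteq S$, forcing $S \in \upclos{\family_-}$ and contradicting $S \in \family$. Thus $T \in \upclos{\family_+} \setminus \upclos{\family_-} = \family$.

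For the backward direction, the natural candidate decomposition is to set $\family_+ = \family$ and $\family_- = \upclos{\family} \setminus \family$, so that $\upclos{\family_+} = \upclos{\family}$ and the target identity becomes $\family = \upclos{\family} \setminus \upclos{\upclos{\family} \setminus \family}$. The containment $\upclos{\family} \setminus \family \subseteq \upclos{\upclos{\family} \setminus \family}$ is immediate. The interesting direction, and the one place where the interval property is actually used, is to show that no new sets enter the upward closure of $\upclos{\family} \setminus \family$: if $T$ contains some $T' \in \upclos{\family} \setminus \family$, then $T \in \upclos{\family}$ is clear, and assuming $T \in \family$ for contradiction we pick $T'' \in \family$ with $T'' \subseteq T'$ (which exists since $T' \in \upclos{\family}$), apply the interval property to $T'' \subseteq T' \subseteq T$ to conclude $T' \in \family$, contradicting $T' \in \upclos{\family} \setminus \family$. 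Hence $T \notin \family$, so $T \in \upclos{\family} \setminus \family$, and the two sides agree.

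The main obstacle, if any, is choosing the right $\family_+$ and $\family_-$ in the backward direction; once the symmetric choice $\family_+ = \family$, $\family_- = \upclos{\family} \setminus \family$ is made, the verification reduces to a single application of the interval property, so the whole argument is short and self-contained.
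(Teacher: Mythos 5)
Your proof is correct and follows essentially the same route as the paper: the forward direction is word-for-word the same argument, and in the backward direction you choose $\family_+ = \family$ and $\family_- = \upclos{\family} \setminus \family$, then verify that $\upclos{\family} \setminus \family$ is closed under taking supersets via a single application of the interval property — which is exactly the paper's argument (the paper merely phrases the conclusion abstractly as ``$\upclos{\family}\setminus\family = \upclos{\family_-}$ for some $\family_-$'' rather than taking $\family_-$ to be $\upclos{\family}\setminus\family$ itself). No gaps.
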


\begin{proof}
  We first show that the interval property holds when $\family$ is a closure difference. So, let $\family = \upclos{\family_+} \setminus \upclos{\family_-}$ and $W \subseteq T \subseteq S$ with $W \in \family$ and $S \in \family$. There exists a set $X \in \family_+$ with $X \subseteq W$, since $W \in \upclos{\family_+}$, and for all sets $Y \in \family_-$ we have $Y \not\subseteq S$, since $S \notin \upclos{\family_-}$. Hence, we have $T \in \upclos{\family_+}$ due to $X \subseteq W \subseteq T$ and no $Y \in \family_-$ can satisfy $Y \subseteq T$, else we would also have $Y \subseteq S$. Therefore, $T \in \upclos{\family_+} \setminus \upclos{\family_-} = \family$.
  
  For the other direction, we first show that $\upclos{\family} \setminus \family$ is an upward closure, i.e., closed under taking supersets. Suppose, for sake of contradiction, that $T \in \upclos{\family} \setminus \family$ and $T \subseteq S$, but $S \notin \upclos{\family} \setminus \family$. Since $T \in \upclos{\family}$, there exists some $W \in \family$ with $W \subseteq T \subseteq S$, hence also $S \in \upclos{\family}$. From $S \notin \upclos{\family} \setminus \family$, it then follows that $S \in \family$. Finally, since $\family$ satisfies the interval property, we must then have $T \in \family$, thus contradicting that $T \in \upclos{\family} \setminus \family$. So, $\upclos{\family} \setminus \family = \upclos{\family_-}$ for some family $\family_-$ and we obtain $\family = \upclos{\family} \setminus (\upclos{\family} \setminus \family) = \upclos{\family} \setminus \upclos{\family_-}$.
\end{proof}

\newcommand{\ring}{R}

Given a table $A \colon \family \rightarrow \ring$, where $\ring$ is some commutative ring with unit, the \emph{Zeta transform} $\zeta A \colon \family \rightarrow \ring$ is given by 
\begin{equation*}
	(\zeta A) (S) = \sum_{\mathclap{\substack{T \in \family \colon \\ T \subseteq S}}} A(T),
\end{equation*}
the \emph{Möbius transform} $\mu A \colon \family \rightarrow \ring$ is given by 
\begin{equation*}
	(\mu A) (S) = \sum_{\mathclap{\substack{T \in \family \colon \\ T \subseteq S}}} (-1)^{|S \setminus T|} A(T),
\end{equation*}
and the \emph{odd-negation transform} $\sigma A \colon \family \rightarrow \ring$ is given by 
\begin{equation*}
	(\sigma A) (S) = (-1)^{|S|} A(S).
\end{equation*}
All of these transforms can be viewed as operators on the space of functions from $\family$ to $\ring$. Given two tables $A, B \colon \family \rightarrow \ring$, their \emph{cover product} $A \otimes_c B \colon \family \rightarrow \ring$ is given by 
\begin{equation*}
	(A \otimes_c B) (S) = \sum_{\mathclap{\substack{T_1, T_2 \in \family \colon \\ T_1 \cup T_2 = S}}} A(T_1) B(T_2).
\end{equation*}
Moreover, we let $A \cdot B \colon \family \rightarrow \ring$ denote the pointwise multiplication of the two tables $A$ and $B$, i.e., $(A \cdot B) (S) = A(S)B(S)$.
We will now prove several properties of these transforms that allow us to design a fast algorithm for computing the cover product.
\begin{lem}\label{thm:transform_properties}
	Let $\family$ be a closure difference and $\ring$ a commutative ring with unit. The following statements are true:
	\begin{enumerate}
		\item $\sigma \zeta \sigma = \mu$, $\sigma \mu \sigma = \zeta$,
		\item $\mu \zeta = \zeta \mu = \mathrm{id}$, where $\mathrm{id}$ is the identity transform,
		\item $\zeta (A \otimes_c B) = (\zeta A) \cdot (\zeta B)$, for any two tables $A,B \colon \family \rightarrow \ring$.
	\end{enumerate}
\end{lem}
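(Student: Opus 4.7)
\medskip

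The plan is to verify each of the three identities by direct expansion of the defining sums and reducing them to standard subset-sum identities; the interval property from \cref{thm:interval_property} will be the key combinatorial ingredient that lets these standard identities go through even though we are restricted to a family $\family$ that need not be downward closed.

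For statement~1, I would simply unfold definitions pointwise: $(\sigma \zeta \sigma A)(S) = (-1)^{|S|} \sum_{T \in \family, T \subseteq S} (-1)^{|T|} A(T)$, which equals $\sum_{T \in \family, T \subseteq S} (-1)^{|S \setminus T|} A(T) = (\mu A)(S)$ since $|S|+|T| \equiv |S \setminus T| \pmod 2$ for $T \subseteq S$. The identity $\sigma \mu \sigma = \zeta$ is analogous. No use of the interval property is needed here.

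For statement~2, I would expand $(\mu \zeta A)(S)$ and swap the order of summation to obtain $\sum_{W \in \family, W \subseteq S} A(W) \sum_{T \in \family, W \subseteq T \subseteq S} (-1)^{|S \setminus T|}$. Here is where the interval property enters: since $W, S \in \family$, \cref{thm:interval_property} guarantees that \emph{every} $T$ with $W \subseteq T \subseteq S$ automatically lies in $\family$, so the inner sum is unrestricted and parametrized by $U = T \setminus W \subseteq S \setminus W$. A standard computation gives $\sum_{U \subseteq S \setminus W} (-1)^{|S \setminus W| - |U|} = [W = S]$, collapsing the outer sum to $A(S)$. The proof of $\zeta \mu = \mathrm{id}$ is essentially symmetric.

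For statement~3, I would expand both sides and again rely on the interval property to identify the indexing sets. The left side becomes $\sum_{T_1, T_2 \in \family,\ T_1 \cup T_2 \in \family,\ T_1 \cup T_2 \subseteq S} A(T_1) B(T_2)$, while the right side is $\sum_{T_1, T_2 \in \family,\ T_1, T_2 \subseteq S} A(T_1) B(T_2)$. These two index sets coincide: for any $T_1, T_2 \in \family$ with $T_1 \cup T_2 \subseteq S$ and $S \in \family$, applying the interval property to the chain $T_1 \subseteq T_1 \cup T_2 \subseteq S$ (and using $T_1, S \in \family$) yields $T_1 \cup T_2 \in \family$, so no $\family$-condition on $T_1 \cup T_2$ is lost. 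Conversely the reverse inclusion is trivial. The main subtle point throughout, and the step to be careful with, is not the algebra itself but ensuring that whenever a subset-sum identity is invoked, the interval property legitimately removes the $\family$-restriction on intermediate sets; once this is observed, all three claims reduce to routine computations.
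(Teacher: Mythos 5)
Your proof is correct and takes essentially the same approach as the paper: part~1 is the same direct computation, and parts~2 and~3 invoke \cref{thm:interval_property} at exactly the same spots to remove the $\family$-restriction on intermediate sets. The only cosmetic difference is in part~2, where the paper routes through $\mu\zeta = \sigma\zeta\sigma\zeta$ and kills the inner sum via a sign-reversing involution $T \mapsto T \symdiff \{x\}$, while you expand $\mu\zeta$ directly and evaluate the same unrestricted alternating sum as $\sum_{U \subseteq S\setminus W}(-1)^{|S\setminus W|-|U|} = [W=S]$.
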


\begin{proof}
  We adapt the proofs of Cygan et al.~\cite{CyganFKLMPPS15} to our setting and make note of the proof steps where we use that $\family$ is a closure difference.
	\begin{enumerate}
		\item We compute for every table $A \colon \family \rightarrow \ring$ and set $S \in \family$:
		\begin{align*}
			(\sigma \zeta \sigma A) (S) & = (-1)^{|S|} \sum_{\mathclap{\substack{T \in \family \colon \\ T \subseteq S}}} (-1)^{|T|} A(T) = \sum_{\mathclap{\substack{T \in \family \colon \\ T \subseteq S}}} (-1)^{|S| + |T|} A(T) = \sum_{\mathclap{\substack{T \in \family \colon \\ T \subseteq S}}} (-1)^{|S \setminus T|} A(T) \\
			& = (\mu A) (S),
		\end{align*}
		where the third equality follows from $|S| + |T| = 2|S \cap T| + |S \setminus T|$.
		Since $\sigma \sigma = \mathrm{id}$, it also follows that $\sigma \mu \sigma = \sigma \sigma \zeta \sigma \sigma = \zeta$.
		\item From the previous statement it follows that $\mu \zeta = \sigma \zeta \sigma \zeta$. Again, we compute for every table $A \colon \family \rightarrow \ring$ and set $S \in \family$:
		\begin{align*}
		 (\sigma \zeta \sigma \zeta A) (S) & = (-1)^{|S|} \sum_{\mathclap{\substack{T \in \family \colon \\ T \subseteq S}}} (\sigma \zeta A)(T) = (-1)^{|S|} \sum_{\mathclap{\substack{T \in \family \colon \\ T \subseteq S}}} (-1)^{|T|} \sum_{\mathclap{\substack{W \in \family \colon \\ W \subseteq T}}} A(W) \\
		  & = (-1)^{|S|} \sum_{\mathclap{\substack{W \in \family \colon \\ W \subseteq S}}} A(W) \sum_{\mathclap{\substack{T \in \family \colon \\ W \subseteq T \subseteq S}}} (-1)^{|T|} = A(S) + (-1)^{|S|} \sum_{\mathclap{\substack{W \in \family \colon \\ W \subsetneq S}}} A(W) \sum_{\mathclap{\substack{T \in \family \colon \\ W \subseteq T \subseteq S}}} (-1)^{|T|} \\
		  & = A(S),
		\end{align*}
		where the last equality follows from the fact that $\sum_{T \in \family \colon W \subseteq T \subseteq S} (-1)^{|T|} = 0$ whenever $W \subsetneq S$, since we can pick some $x \in S \setminus W$ and pair every $T$ with $T \symdiff \{x\}$ which yields a fixpoint-free sign-reversing involution. Note that this step relies on \cref{thm:interval_property} (applied to $W \subseteq T \symdiff \{x\} \subseteq S$). For an arbitrary set family $\family'$, it does not necessarily hold that $T \symdiff \{x\} \in \family'$.
		
		The remaining equality follows from $\zeta \mu = \zeta \sigma \zeta \sigma = \sigma \mu \sigma \sigma \zeta \sigma = \sigma \mu \zeta \sigma = \sigma \sigma = \mathrm{id}$.
		\item We compute for every two tables $A, B \colon \family \rightarrow \ring$ and set $S \in \family$:
		\begin{align*}
			(\zeta (A \otimes_c B))(S) & = \sum_{\substack{W \in \family \colon \\ W \subseteq S}} \sum_{\substack{T_1, T_2 \in \family \colon \\ T_1 \cup T_2 = W}} A(T_1) B(T_2)  = \sum_{\mathclap{\substack{T_1, T_2 \in \family \colon \\ T_1 \cup T_2 \subseteq S}}} A(T_1) B(T_2) \\
			& = \sum_{\substack{T_1 \in \family \colon \\ T_1 \subseteq S}} \sum_{\substack{T_2 \in \family \colon \\ T_2 \subseteq S}} A(T_1) B(T_2) = \left( \sum_{\substack{T \in \family \colon \\ T \subseteq S}} A(T) \right) \left( \sum_{\substack{T \in \family \colon \\ T \subseteq S}} B(T) \right) \\
			& = ((\zeta A) \cdot (\zeta B))(S),
		\end{align*} 
		where the second equality again relies on \cref{thm:interval_property} (applied to $T_1 \subseteq T_1 \cup T_2 \subseteq S$). For an arbitrary set family $\family'$, we might have $T_1 \cup T_2 \notin \family'$ which would therefore not be summed over on the left-hand side. \qedhere
	\end{enumerate}
\end{proof}

Equipped with the relations between the various transforms from \cref{thm:transform_properties}, we can now proceed with the algorithmic part. However, we need that the set family $\family$ satisfies some algorithmic requirements. We say that $\family \subseteq \powerset{\universe}$ is \emph{efficiently listable} if there is an algorithm that outputs all members of $\family$ in time $\Oh^*(|\family|) = |\family| |\universe|^{\Oh(1)}$. Each member of $\family$ is represented as a bitstring and we can assume that the output of the listing algorithm is in sorted order. Hence, after listing $\family$, we can decide for a set $S \subseteq \universe$ whether $S \in \family$ or not by binary search in time $\Oh(\log |\family|) = \Oh(|\universe|)$. A table $A \colon \family \rightarrow \ring$ is given by listing the values $A(S)$, $S \in \family$, in the same order as the listing of $\family$. 

In the following running time analysis, we separate the impact of the ring operations from the rest of the algorithms, i.e., we only count the number of performed ring operations and bound the time spent on the remainder of the algorithm. 

\begin{cor}\label{thm:transform_implies_product}
 Let $\family$ be an efficiently listable closure difference and $\ring$ be a commutative ring with unit. If the Zeta transform $\zeta A$ and the Möbius transform $\mu A$ of a table $A \colon \family \rightarrow \ring$ can be computed in $\Oh(|\family||\universe|)$ ring operations and additional time $\Oh^*(|\family|)$, then the cover product $A \otimes_c B$ can be computed in $\Oh(|\family||\universe|)$ ring operations and additional time $\Oh^*(|\family|)$.
\end{cor}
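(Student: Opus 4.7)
The plan is to reduce the cover product to three ingredients that are already available: the Zeta transform, the Möbius transform, and pointwise multiplication. The guiding identity is part 3 of \cref{thm:transform_properties}, which states $\zeta(A \otimes_c B) = (\zeta A) \cdot (\zeta B)$. Combined with part 2, which says that $\mu$ inverts $\zeta$, this immediately yields the closed form
\begin{equation*}
  A \otimes_c B = \mu\bigl( (\zeta A) \cdot (\zeta B) \bigr),
\end{equation*}
and this identity is what the algorithm will compute.

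Concretely, the algorithm first invokes the assumed subroutine on $A$ and on $B$ to compute $\zeta A$ and $\zeta B$; each of these takes $\Oh(|\family||\universe|)$ ring operations and additional time $\Oh^*(|\family|)$, contributing the same overall bound. Next, it computes the pointwise product $C := (\zeta A) \cdot (\zeta B) \colon \family \to \ring$ by walking through the shared listing of $\family$ once and multiplying corresponding entries. Since a table is stored in the sorted order used by the efficient listing, locating the matching entries of $\zeta A$ and $\zeta B$ requires no further searching, so this step uses only $\Oh(|\family|)$ ring operations and time $\Oh^*(|\family|)$. Finally, applying the assumed Möbius subroutine to $C$ produces $\mu C = A \otimes_c B$ within the claimed budget.

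Adding up the three phases gives $\Oh(|\family||\universe|)$ ring operations and additional time $\Oh^*(|\family|)$, as required. The main conceptual point to check is the legitimacy of the identity $A \otimes_c B = \mu((\zeta A)(\zeta B))$ on the restricted domain $\family$, but this is exactly what \cref{thm:transform_properties} guarantees under the assumption that $\family$ is a closure difference; no separate argument is needed. The only implementation-level subtlety is ensuring that the pointwise product phase respects the compact storage format induced by the efficient listability of $\family$, and because both $\zeta A$ and $\zeta B$ are indexed by the same listing this is immediate.
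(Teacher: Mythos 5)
Your proposal is correct and follows exactly the same route as the paper: compute $\zeta A$ and $\zeta B$, take the pointwise product, then apply $\mu$, justified by parts 2 and 3 of \cref{thm:transform_properties} and the assumed subroutine bounds. The complexity accounting also matches.
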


\begin{proof}
  We make use of \cref{thm:transform_properties}. Given tables $A$ and $B$, we first compute $\zeta A$ and $\zeta B$ and then the pointwise multiplication $(\zeta A) \cdot (\zeta B)$. By definition, the pointwise multiplication can be computed in $\Oh(|\family|)$ ring operations and additional time $\Oh^*(|\family|)$ for listing all members of $\family$. Finally, we compute the Möbius transform of the pointwise multiplication. By \cref{thm:transform_properties}, we have $\mu ((\zeta A) \cdot (\zeta B)) = \mu \zeta (A \otimes_c B) = A \otimes_c B$. By assumption, every step takes $\Oh(|\family||\universe|)$ ring operations and additional time $\Oh^*(|\family|)$, hence the statement follows.
\end{proof}

\newcommand{\setlist}{{\mathcal{L}}}

\begin{thm}\label{thm:fast_transform}
  Let $\family = \upclos{\family_+} \setminus \upclos{\family_-}$ be an efficiently listable closure difference and $\ring$ be a commutative ring with unit. Given a table $A \colon \family \rightarrow \ring$, the Zeta transform $\zeta A$ and the Möbius transform $\mu A$ can be computed in $\Oh(|\family||\universe|)$ ring operations and additional time $\Oh^*(|\family|)$.
\end{thm}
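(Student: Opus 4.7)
The plan is to run a trimmed variant of the classical Yates dynamic program for the Zeta and Möbius transforms, maintaining a table only over indices in $\family$ while justifying the missing entries via the interval property from \cref{thm:interval_property}.

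First I would fix an ordering $1, 2, \ldots, |\universe|$ of the universe and invoke the standard recurrence that starts from $A^{(0)} = A$, extended by zero outside $\family$, and sets $A^{(u)}(S) = A^{(u-1)}(S) + [u \in S] \cdot A^{(u-1)}(S \setminus \{u\})$ in the Zeta case, respectively with a minus sign in the Möbius case. A direct induction on $u$ shows that $A^{(|\universe|)}(S)$ equals $(\zeta A)(S)$, respectively $(\mu A)(S)$, for every $S \subseteq \universe$. The obstacle is that the update references $A^{(u-1)}(S \setminus \{u\})$, and $S \setminus \{u\}$ may lie outside $\family$, so its value is not explicitly stored.

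The heart of the proof is a single observation: whenever $S \in \family$, $u \in S$, and $S \setminus \{u\} \notin \family$, the intermediate value $A^{(u-1)}(S \setminus \{u\})$ is automatically zero. Unfolding the recurrence, $A^{(u-1)}(S \setminus \{u\})$ is a signed sum of $A(T)$ over $T \subseteq S \setminus \{u\}$ with $T \setminus [u-1] = (S \setminus \{u\}) \setminus [u-1]$. Any such $T$ satisfies $T \subseteq S \setminus \{u\} \subseteq S$; were $T \in \family$ to hold, the interval property of \cref{thm:interval_property} applied to $T \subseteq S \setminus \{u\} \subseteq S$ with $T, S \in \family$ would force $S \setminus \{u\} \in \family$, contradicting the case assumption. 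Hence every $T$ in the sum lies outside $\family$, so $A(T) = 0$ and the whole expression vanishes. Consequently the trimmed DP may skip the update $A[S] \gets A[S] \pm A[S \setminus \{u\}]$ whenever $S \setminus \{u\} \notin \family$, without changing any final value on $\family$.

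Turning to the implementation, I would first list $\family$ as a sorted sequence of bitstrings in time $\Oh^*(|\family|)$ using efficient listability, which then enables $\Oh(|\universe|)$-time membership queries via binary search. The algorithm loops over $u \in \universe$; for each $S \in \family$ containing $u$ it performs one membership test for $S \setminus \{u\}$ and at most one ring operation. Because $S \setminus \{u\}$ does not contain $u$, the entry $A[S \setminus \{u\}]$ is not modified during the pass for element $u$, so in-place updates remain correct. Summed over the $|\universe|$ passes, this yields $\Oh(|\family| \cdot |\universe|)$ ring operations and $\Oh^*(|\family|)$ additional time, matching the claim for both $\zeta A$ and $\mu A$.

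The main conceptual hurdle, and the only place the closure-difference hypothesis is genuinely used, is the interval-property argument above; for a generic $\family$ the skipped updates would silently drop nonzero contributions from sets $T \subsetneq S \setminus \{u\}$ with $T \in \family$, so the structural assumption is essential rather than cosmetic.
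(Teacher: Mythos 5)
Your proof is correct and rests on the same key fact as the paper's proof: the interval property of \cref{thm:interval_property} is exactly what licenses a trimmed Yates-style dynamic program over $\family$. The algorithmic realization differs, though. The paper's Algorithm~Z processes sets $S \in \family$ in order of increasing cardinality (bucketing them into $\setlist_0, \ldots, \setlist_n$) and, for each $S$, computes the full sequence of intermediate values $A_0(S), A_1(S), \ldots, A_n(S)$, looking up the previously computed $A_{j-1}(S\setminus\{j\})$ for the smaller set $S\setminus\{j\}$; the skipped update is justified by noting that when $j \in S$ but $S\setminus\{j\}\notin\family$, every $T\in\family$ with $T\subseteq S$ and $T$ agreeing with $S$ above $j$ must contain $j$, so $A_j(S)=A_{j-1}(S)$. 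You instead iterate element-major, doing in-place parallel updates over $\family$ during the pass for each $u$, and justify the skip by showing that the referenced value $A^{(u-1)}(S\setminus\{u\})$ is itself identically zero, because every $T\subseteq S\setminus\{u\}$ contributing to it would, via the interval applied to $T\subseteq S\setminus\{u\}\subseteq S$, force $S\setminus\{u\}\in\family$. These are two bundlings of the same observation; your version avoids the cardinality bucketing and the need to retain all intermediate $A_j(\cdot)$ values (storing $\Oh(|\family|)$ ring elements instead of $\Oh(|\family||\universe|)$), and you also obtain the Möbius transform directly by flipping the sign rather than conjugating with $\sigma$ as in the paper's $\mu = \sigma\zeta\sigma$. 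The claimed operation count and additional time coincide in both.
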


\begin{proof}
  We modify the algorithm of Björklund et al.~\cite[Algorithm Z]{BjorklundHKK10} for the Zeta transform to make it suitable for our setting. Let $n = |\universe|$ be the size of the universe and without loss of generality assume $\universe = [n]$. The algorithm maintains $n+1$ set families $\setlist_0, \ldots, \setlist_n \subseteq \family$, where each $\setlist_i$ only contains sets of size $i$. For all $j \in [0, n]$ and $S \in \family$, we will compute auxiliary values
  \begin{equation*}
    A_j(S) = \sum_{\substack{T \in \family \colon T \cap [j] \subseteq S \cap [j], \\ T \cap [j+1,n] = S \cap [j+1,n]}} A(T).
  \end{equation*}
  Clearly, $A_0(S) = A(S)$ and $A_n(S) = \zeta A(S)$. Also, one can verify that the recurrence $A_j(S) = [(j \in S) \wedge (S\setminus\{j\} \in \family)]A_{j-1}(S\setminus\{j\}) + A_{j-1}(S)$ holds for all $j \in [n]$ and $S \in \family$. We highlight the case that $j \in S$ but $S \setminus \{j\} \notin \family$; here every $T \in \family$ with $T \subseteq S$ must satisfy $j \in T$, otherwise we would obtain $S \setminus \{j\} \in \family$ by \cref{thm:interval_property}, hence $A_j(S) = A_{j-1}(S)$.

  \begin{algorithm}[h]
    \lFor{$S \in \family$}{insert $S$ into $\setlist_{|S|}$}
    \For{$r=0,1,\ldots,n$}{
      \While{$\setlist_r$ is not empty}{
        Select any $S \in \setlist_r$ and remove it from $\setlist_r$\;
        $A_0(S) := A(S)$\;
        \For{$j=1,\ldots,n$}{
          $A_j(S) := [(j \in S) \wedge (S\setminus\{j\} \in \family)]A_{j-1}(S\setminus\{j\}) + A_{j-1}(S)$\;
        }
      }
    }
    \Return $A_n$\;
    \caption{Fast Zeta transform}
    \label{algo:fastzeta}
  \end{algorithm}
  
  Clearly, \cref{algo:fastzeta} considers every set $S \in \family$ exactly once and due to the ordering by cardinality only accesses already computed values. \cref{algo:fastzeta} correctly computes the Zeta transform $\zeta A = A_n$ by the previous considerations. Ring operations are only performed in line 7, namely at most two per execution of line 7. Line 7 is executed exactly $n$ times for every element of $\family$, hence \cref{algo:fastzeta} performs in total $\Oh(|\family|n)$ ring operations. The additional time is dominated by the listing of $\family$ in line 1 and the $\Oh(|\family|n)$ membership queries incurred by line 7, thus leading to additional time $\Oh^*(|\family|)$.
  
  To compute the Möbius transform $\mu A$, we use $\mu = \sigma \zeta \sigma$ from \cref{thm:transform_properties}. So, we first compute $\sigma A$, then apply \cref{algo:fastzeta} to $\sigma A$ to obtain $\zeta \sigma A$, and finally we apply $\sigma$ again. Since $(-1)^k = 1$ if $k$ is even and $(-1)^k = -1$ if $k$ is odd, we can apply $\sigma$ in $\Oh(|\family|)$ ring operations and additional time $\Oh^*(|\family|)$ due to listing, therefore the statement follows.
\end{proof}

\newcommand{\proj}{\pi}

\subparagraph*{Componentwise Union to Set Union.} Fix a natural number $k$ and universe $\universe$. We define the \emph{projection} $\proj_\universe \colon [k] \times \universe \rightarrow U$, $(i,u) \mapsto u$, and for any $i \in [k]$ and set $S \subseteq \universe$, we define $\proj_\universe^i(S) := \proj_\universe(S \cap (\{i\} \times \universe))$. Given a set family $\family \subseteq \powerset{\universe}$, we construct the set family $k\family := \{S \subseteq [k] \times \universe \sep \proj_\universe^i(S) \in \family \text{ for all } i \in [k]\}$. Given a function $f \colon [k] \rightarrow \family$, we construct the set $k\family \ni S_f := \bigcup_{i \in [k]} \{i\} \times f(i) = \{(i,x) \in [k] \times \universe \sep x \in f(i)\}$. This is a bijection and the construction turns componentwise union of functions $[k] \rightarrow \family$ into union of sets in $k\family$, i.e., for functions $f, f_1, f_2 \colon [k] \rightarrow \family$, we have $f(i) = f_1(i) \cup f_2(i)$ for all $i \in [k]$ if and only if $S_f = S_{f_1} \cup S_{f_2}$. Hence, given two tables $A,B\colon \family^{[k]} \rightarrow \ring$, the \emph{componentwise cover product} $A \otimes_{comp} B \colon \family^{[k]} \rightarrow \ring$ can be reduced to a standard cover product, i.e.,
\begin{equation*}
	(A \otimes_{comp} B)(f) := \sum_{\mathclap{\substack{f_1, f_2 \in \family^{[k]} \colon \\ f_1(i) \cup f_2(i) = f(i) \,\forall i \in [k]}}} A(f_1) B(f_2) = \sum_{\substack{S_{f_1}, S_{f_2} \in k\family \colon \\ S_{f_1} \cup S_{f_2} = S_f }} A(f_1) B(f_2) = (A' \otimes_c B')(S_f),
\end{equation*}
where $A', B'\colon k\family \rightarrow \ring$ with $A'(S_f) := A(f)$ and $B'(S_f) := B(f)$, which is well-defined since $f \mapsto S_f$ is a bijection. Furthermore, the following lemma shows that we can apply the fast convolution algorithms to $k\family$ if $\family$ is a closure difference.

\begin{lem}\label{thm:disjoint_union_closure_difference}
	Let $\family \subseteq \powerset{\universe}$ be a set family and $k$ be some natural number. If $\family$ is a closure difference, then also the set family $k\family \subseteq \powerset{[k] \times \universe}$ is a closure difference.
\end{lem}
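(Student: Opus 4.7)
The plan is to use the interval-property characterization of closure differences given by \cref{thm:interval_property}: a family is a closure difference if and only if whenever $W \subseteq T \subseteq S$ with $W$ and $S$ in the family, also $T$ lies in the family. Since $\family$ is a closure difference, $\family$ has this interval property, and it suffices to verify the same for $k\family$.

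First I would observe that each componentwise projection $\proj_\universe^i$ is monotone under inclusion: if $A \subseteq B \subseteq [k] \times \universe$, then $A \cap (\{i\} \times \universe) \subseteq B \cap (\{i\} \times \universe)$, and applying $\proj_\universe$ preserves this inclusion, so $\proj_\universe^i(A) \subseteq \proj_\universe^i(B)$. This is the only ``moving part'' of the argument.

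Now consider any $W \subseteq T \subseteq S \subseteq [k] \times \universe$ with $W, S \in k\family$. Fix an arbitrary $i \in [k]$. By monotonicity of $\proj_\universe^i$ we get $\proj_\universe^i(W) \subseteq \proj_\universe^i(T) \subseteq \proj_\universe^i(S)$, and by the definition of $k\family$ both $\proj_\universe^i(W)$ and $\proj_\universe^i(S)$ belong to $\family$. Invoking the interval property of $\family$ yields $\proj_\universe^i(T) \in \family$. Since $i$ was arbitrary, $T \in k\family$, so $k\family$ satisfies the interval property and is a closure difference by \cref{thm:interval_property}.

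There is no real obstacle here; the only thing to be careful about is that the interval property is the right notion to transport, rather than attempting to describe $k\family$ explicitly as $\upclos{\family'_+} \setminus \upclos{\family'_-}$ for some concrete $\family'_+$ and $\family'_-$, which would be clumsier.
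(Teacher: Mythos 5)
Your proof is correct, but it takes a genuinely different route from the paper's. The paper proceeds by explicitly writing $k\family$ in the form $\upclos{\family'_+} \setminus \upclos{\family'_-}$: it first establishes the identity $k\upclos{\family} = \upclos{k\family}$, then uses this to rewrite $k\family = \upclos{k\family_+} \setminus \{S \subseteq [k]\times\universe \sep \exists i \in [k] \text{ with } \proj_\universe^i(S) \in \upclos{\family_-}\}$, and finally notes that the subtracted family is closed under supersets. You instead transport the \emph{interval property} of \cref{thm:interval_property} through the componentwise projections, exploiting only their monotonicity under inclusion. Both arguments are sound; yours is shorter and more conceptual, since it avoids manufacturing concrete witnessing families $\family'_+, \family'_-$ and leans entirely on a characterization the paper has already proved. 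The paper's approach has the mild virtue of being self-contained in the sense that it exhibits the decomposition directly, but since \cref{thm:interval_property} is stated as an equivalence and is already in place, there is no real cost to using it, and your observation that the explicit route is ``clumsier'' is a fair assessment. One small presentational point: it is worth stating explicitly (as you implicitly use) that $T \in k\family$ requires checking membership for \emph{every} $i \in [k]$ simultaneously, which is exactly why fixing an arbitrary $i$ and then quantifying over all $i$ at the end closes the argument.
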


\begin{proof}
	We first argue that $k \upclos{\family} = \upclos{k\family}$ for all set families $\family$ and natural numbers $k$. We have $S \in k \upclos{\family}$ if and only if there exist $T_i \in \family$ such that $\{i\} \times T_i \subseteq S\cap(\{i\} \times \universe)$ for all $i \in [k]$. Setting $T = \bigcup_{i \in [k]} \{i\} \times T_i \in k\family$, we see that this is equivalent to the existence of some $T \subseteq S$ which holds if and only if $S \in \upclos{k\family}$.
	
	Let $\family = \upclos{\family_+} \setminus \upclos{\family_-}$ be a closure difference. We compute
	\begin{align*}
		k\family & = \{S \subseteq [k] \times \universe \sep \proj_\universe^i(S) \in \family \tfa i \in [k]\} \\
		& = \{S \subseteq [k] \times \universe \sep \proj_\universe^i(S) \in (\upclos{\family_+} \setminus \upclos{\family_-}) \tfa i \in [k]\} \\
		& = k\upclos{\family_+} \setminus \{S \subseteq [k] \times \universe \sep \text{ there is an $i \in [k]$ s.t. } \proj_\universe^i(S) \in \upclos{\family_-}\} \\
		& = \upclos{k\family_+} \setminus \{S \subseteq [k] \times \universe \sep \text{ there is an $i \in [k]$ s.t. } \proj_\universe^i(S) \in \upclos{\family_-}\}
	\end{align*}
	and notice that the second set in the last line is clearly closed under taking supersets and hence $k\family$ is a closure difference, too.
\end{proof}

\begin{thm}\label{thm:fast_compwise_cover_product}
	Let $\family \subseteq \powerset{\universe}$ be a fixed efficiently listable closure difference and $\ring$ be a commutative ring with unit. Given tables $A, B \colon \family^{[k]} \rightarrow \ring$, their componentwise cover product $A \otimes_{comp} B$ can be computed in $\Oh(k|\family|^k |\universe|)$ ring operations and additional time $\Oh^*(|\family|^k)$.
\end{thm}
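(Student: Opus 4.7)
The plan is to reduce the componentwise cover product over $\family^{[k]}$ to a standard cover product over the set family $k\family \subseteq \powerset{[k] \times \universe}$ and then invoke the already established machinery. As observed in the paragraph preceding the theorem, the bijection $f \mapsto S_f = \bigcup_{i \in [k]} \{i\} \times f(i)$ identifies $\family^{[k]}$ with $k\family$ and translates componentwise union of functions into set union inside $k\family$. Consequently, if we define $A', B' \colon k\family \rightarrow \ring$ by $A'(S_f) = A(f)$ and $B'(S_f) = B(f)$, then the identity $(A \otimes_{comp} B)(f) = (A' \otimes_c B')(S_f)$ holds and the problem is reduced to computing a standard cover product on $k\family$.

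Next, I would check that $k\family$ fits the hypotheses of \cref{thm:transform_implies_product}. \cref{thm:disjoint_union_closure_difference} already guarantees that $k\family$ is a closure difference whenever $\family$ is. For efficient listability, I would list $k\family$ by iterating over all $k$-tuples of members of $\family$: since $\family$ is efficiently listable, we can enumerate it once in time $\Oh^*(|\family|)$, store the resulting list, and then produce all $|\family|^k$ tuples in time $\Oh^*(|\family|^k)$; encoding each tuple as a subset of $[k] \times \universe$ only costs a polynomial factor in $k$ and $|\universe|$. This confirms $|k\family| = |\family|^k$ and that $k\family$ is efficiently listable over the universe $[k] \times \universe$ of size $k|\universe|$. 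The tables $A'$ and $B'$ can be written down in the induced order in additional time $\Oh^*(|\family|^k)$.

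Finally, I would apply \cref{thm:transform_implies_product} to $A', B' \colon k\family \rightarrow \ring$. That corollary (which in turn rests on \cref{thm:fast_transform}) yields the cover product $A' \otimes_c B'$ using $\Oh(|k\family| \cdot |[k] \times \universe|) = \Oh(k |\family|^k |\universe|)$ ring operations and additional time $\Oh^*(|k\family|) = \Oh^*(|\family|^k)$. Translating back via $f \mapsto S_f$ gives the desired bound on computing $A \otimes_{comp} B$.

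I do not expect a genuine obstacle here; the theorem is essentially a packaging result that lifts the single-family cover product to the $k$-fold componentwise setting. The one point to handle carefully is merely bookkeeping: one must ensure that the bijection $f \leftrightarrow S_f$ is computed consistently with the enumeration order of $k\family$ used by the fast Zeta/Möbius transform, so that table lookups during \cref{algo:fastzeta} (membership tests of the form ``$S \setminus \{j\} \in k\family$?'') and the final read-out both refer to the same encoding; this is ensured by listing $k\family$ once up front and reusing the same sorted representation throughout, incurring only the polynomial overhead already absorbed in the $\Oh^*(\cdot)$ notation.
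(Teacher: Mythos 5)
Your proposal matches the paper's proof essentially step for step: reduce via the bijection $f \mapsto S_f$ to a standard cover product over $k\family$, apply \cref{thm:disjoint_union_closure_difference} to see $k\family$ is a closure difference, note efficient listability, and invoke \cref{thm:fast_transform} and \cref{thm:transform_implies_product} over the universe $[k]\times\universe$ to get the stated bounds. The extra attention you pay to the listing order and bijection bookkeeping is a sensible elaboration but does not change the argument.
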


\begin{proof}
	We construct the set family $k \family \subseteq \powerset{[k] \times \universe}$ which is a closure difference by \cref{thm:disjoint_union_closure_difference}. Since $\family$ is efficiently listable, also $k \family$ can be efficiently listed in lexicographic order. We invoke \cref{thm:fast_transform} and \cref{thm:transform_implies_product} with $k \family$ to compute the cover product over $k \family$ in $\Oh(k|\family|^k |U|)$ ring operations and additional time $\Oh^*(|\family|^k)$. By the preceding discussion, the cover product over $k \family$ yields the componentwise cover product $A \otimes_{comp} B$ over $\family$.
\end{proof}


\subsection{Lattice-based Convolution}

A \emph{poset} is a pair $(\lattice, \preceq)$ consisting of a set $\lattice$ and a binary relation $\preceq$ on $\lattice$ that is reflexive, transitive, and anti-symmetric. A \emph{lattice} is a poset $(\lattice, \preceq)$ such that every pair $a, b \in \lattice$ has a greatest lower bound (\emph{meet}) $a \wedge b \in \lattice$ and a least upper bound (\emph{join}) $a \vee b \in \lattice$. Any finite lattice $(\lattice, \preceq)$ contains a $\preceq$-minimum element $\hat{0} \in \lattice$, which is obtained by taking the meet of all elements in $\lattice$ and satisfies $\hat{0} \vee a = a \vee \hat{0} = a$ for all $a \in \lattice$.

In the algorithm for \CDS, the set of possible states forms a lattice and at union-nodes in the clique-expression, we must compute a convolution-like product. To obtain an efficient algorithm, we will observe some lattice-theoretic properties. 

The product that we are interested in can be formulated in the lattice setting as follows. Given a lattice $(\lattice, \preceq)$ and tables $A, B\colon \lattice \rightarrow \FF$, where $\FF$ is some field, the \emph{$\vee$-product} $A \otimes_{\lattice} B\colon \lattice \rightarrow \FF$ is given by
$(A \otimes_{\lattice} B)(x) = \sum_{y, z \in \lattice \colon x = y \vee z} A(y)B(z)$
for every $x \in \lattice$.

Björklund et al.~\cite{BjorklundHKKNP16} develop an efficient algorithm for the $\vee$-product for specific lattices by designing small arithmetic circuits for the Zeta and Möbius transform, whose precise definitions we do not need here. The relevant concept is as follows; we say that an element $x \in \lattice$ of a lattice $(\lattice, \preceq)$ is \emph{join-irreducible} if $x = a \vee b$ implies $x = a$ or $x = b$ for all $a, b \in \lattice$, otherwise $x$ is called \emph{join-reducible}. We denote the set of join-irreducible elements in $\lattice$ by $\lattice_\vee$. Observe that $\hat{0}$ is always join-irreducible, as otherwise $\hat{0}$ would not be the $\preceq$-minimum. We even have the stronger property that $\hat{0} = a \vee b$ implies $a = b = \hat{0}$.

We assume that a finite lattice $(\lattice, \preceq)$ is algorithmically given to us in the \emph{join representation}~\cite{BjorklundHKKNP16}; we are given the set $\lattice$, where the elements of $\lattice$ represented as $\Oh(\log |\lattice|)$-bit strings, the set of join-irreducible elements $\lattice_\vee \subseteq \lattice$, and an algorithm $\algo_\lattice$ that computes the join $a \vee x$ given an element $a \in \lattice$ and a join-irreducible element $x \in \lattice_\vee$. 

\begin{thm}[\cite{BjorklundHKKNP16}]\label{thm:few_irreducibles_fast}
	Let $(\lattice, \preceq)$ be a finite lattice given in join-representation and $A, B\colon \lattice \rightarrow \FF$ be two tables, where $\FF$ is some field. The $\vee$-product $A \otimes_\lattice B$ can be computed in $\Oh(|\lattice| |\lattice_\vee|)$ field operations and calls to algorithm $\algo_\lattice$ and further time $\Oh(|\lattice| |\lattice_\vee|^2)$.
\end{thm}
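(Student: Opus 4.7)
The plan is to reduce the $\vee$-product to Zeta and Möbius transforms on the lattice, then build fast algorithms for these transforms that exploit the small number of join-irreducibles. First, I would verify the convolution identity: defining the Zeta transform by $(\zeta A)(x) = \sum_{y \preceq x} A(y)$ and using $y \vee z \preceq x \Leftrightarrow y \preceq x \wedge z \preceq x$, one computes
\[
\zeta(A \otimes_{\lattice} B)(x) \;=\; \sum_{w \preceq x} \sum_{\substack{y,z \in \lattice \colon \\ y \vee z = w}} A(y) B(z) \;=\; \sum_{\substack{y,z \in \lattice \colon \\ y \vee z \preceq x}} A(y) B(z) \;=\; (\zeta A)(x) \cdot (\zeta B)(x).
\]
Letting $\mu = \zeta^{-1}$ (which exists over any field, since $\zeta$ is triangular with $1$s on the diagonal with respect to any linear extension of $\preceq$), we obtain $A \otimes_{\lattice} B = \mu\bigl((\zeta A)\cdot (\zeta B)\bigr)$. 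The $\vee$-product thus reduces, at the cost of $\Oh(|\lattice|)$ field multiplications for the pointwise product, to computing $\zeta A$, $\zeta B$, and one Möbius transform.

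The main step is a Yates-style algorithm computing $\zeta A$ in $\Oh(|\lattice|\cdot|\lattice_\vee|)$ field operations and calls to $\algo_{\lattice}$. Enumerate $\lattice_\vee = \{p_1, \ldots, p_k\}$ and maintain a table $T\colon \lattice \to \FF$ initialized to $A$. In pass $i$, for every $x \in \lattice$ with $p_i \preceq x$, update $T[x] \mathrel{+}= T[x']$ for an appropriate predecessor $x'$ satisfying $x' \vee p_i = x$. The inductive invariant is that after pass $i$, the value $T[x]$ equals the partial Zeta sum of $A$ over exactly those $y \preceq x$ whose canonical join-irreducible expansion agrees with that of $x$ outside of $\{p_1, \ldots, p_i\}$; at $i = k$ this yields $(\zeta A)(x)$. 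Each pass visits each lattice element at most once and performs $\Oh(1)$ field additions and $\algo_{\lattice}$ calls per visit, giving the desired counts. The Möbius transform $\mu$ is then computed by running the same sequence of passes in reverse, replacing every addition by its inverse subtraction, which is well-defined over a field.

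The principal obstacle is defining and managing the predecessor $x'$ in a general, possibly non-distributive lattice, where ``remove $p_i$ from $x$'' has no canonical meaning. A workable choice is to take $x'$ as the join of a carefully selected subset of $\{p_j \in \lattice_\vee \colon p_j \preceq x\} \setminus \{p_i\}$ chosen so that $x' \vee p_i = x$ and the inductive invariant is maintained across passes. These predecessors can be precomputed from the join representation using $\Oh(|\lattice_\vee|)$ work per element of $\lattice$, which accounts for the extra $\Oh(|\lattice|\cdot|\lattice_\vee|^2)$ bookkeeping time beyond the field-operation and $\algo_{\lattice}$-call budgets. Pinning down the predecessor definition, proving the inductive invariant across all $k$ passes, and arguing that the total number of $\algo_{\lattice}$ calls stays within $\Oh(|\lattice|\cdot|\lattice_\vee|)$ is the technical heart of the argument.
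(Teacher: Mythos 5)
The paper does not prove this theorem; it is cited as a black box from Björklund et al.~\cite{BjorklundHKKNP16} (Theorem~\ref{thm:few_irreducibles_fast} carries the citation), so there is no in-paper proof to compare against. Your outline does capture the right high-level plan, which is also theirs: reduce the $\vee$-product to Zeta and Möbius transforms via the identity $\zeta(A \otimes_\lattice B) = (\zeta A) \cdot (\zeta B)$ and the invertibility of $\zeta$, then compute each transform by a Yates-style sweep that processes one join-irreducible per pass, doing $\Oh(1)$ field operations and $\algo_\lattice$ calls per lattice element per pass.

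The genuine gap is the one you yourself flag as the ``technical heart'': the predecessor choice and the claimed invariant. As stated, your invariant invokes ``the canonical join-irreducible expansion'' of a lattice element, which exists (via Birkhoff's theorem) only for distributive lattices, and the lattice the paper actually passes to this theorem, namely $\lattice = \{\emptyset, \{\sF\}, \{\sL\}, \{\sR\}, \{\sF,\sL,\sR\}\}$ in \cref{thm:fast_cds_product}, is precisely $M_3$, the smallest non-distributive lattice. Concretely, in $M_3$ with atoms $a, b, c$ and top $\hat{1}$, after the pass over $p_1 = a$ your invariant would force $T[\hat{1}] = A(\hat{1})$ (no $y \neq \hat{1}$ has both $b \preceq y$ and $c \preceq y$), yet the pass must add some $T[x']$ with $x' \vee a = \hat{1}$ and $x' \in \{b,c\}$, so the invariant cannot hold for any predecessor choice. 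It is not that the overall scheme is doomed --- for instance, choosing predecessors $b, c, a$ for $\hat{1}$ across the three passes does produce $\zeta A(\hat{1})$ correctly --- but correctness then rests on a global counting argument, not on the local per-pass invariant you propose, and exhibiting a correct predecessor assignment for an arbitrary finite lattice within the stated budgets is exactly the nontrivial circuit construction of Björklund et al. In short: correct reduction, correct algorithmic shape, but the correctness argument and predecessor construction that make it work for non-distributive lattices are missing, and a substitute ``canonical expansion'' invariant provably fails on the very lattice this paper needs.
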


Next, we analyze the lattice that occurs in the algorithm for \CDS and derive a bound on the number of join-irreducible elements in this lattice, so that we can apply \cref{thm:few_irreducibles_fast}. The relevant lattice can be written as a power of a smaller lattice and we give a general bound for such lattices. We proceed with the relevant definitions. 

Given finitely many lattices $(\lattice_i, \preceq_i)$, $i \in \idxset$, their \emph{direct product} $(\prod_{i \in \idxset} \lattice_i, \preceq)$, with $(a_i)_{i \in \idxset} \preceq (b_i)_{i \in \idxset}$ if and only if $a_i \preceq_i b_i$ for all $i \in \idxset$, is again a lattice; the join- and meet-operations in the direct product lattice are given by componentwise application of the corresponding operation in the constituent lattices. Given a lattice $(\lattice, \preceq)$ and a natural number $k$, the \emph{$k$th-power} $(\lattice^k, \preceq^k)$ of $\lattice$ is the direct product of $k$ copies of $\lattice$.


\begin{lem}\label{thm:power_irreducibles}
	Let $(\lattice, \preceq)$ be a finite lattice and $k$ be a natural number. In the $k$-th power $(\lattice^k, \preceq^k)$ of $\lattice$, an element $x \in \lattice^k$ is join-irreducible if and only if $x = (\hat{0}, \ldots, \hat{0})$ or there is exactly one component that is not $\hat{0}$ and this component is join-irreducible in $\lattice$. In particular, there are exactly $1 + k(|\lattice_\vee| - 1) \leq k |\lattice|$ join-irreducible elements in $\lattice^k$.
\end{lem}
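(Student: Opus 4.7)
The plan is to directly characterize join-irreducibility in the product lattice via componentwise analysis, using that joins in $\lattice^k$ are componentwise and that the $\preceq$-minimum $\hat{0} \in \lattice$ satisfies $\hat{0} = a \vee b \Rightarrow a = b = \hat{0}$ (noted just before the lemma).

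First I would prove the ``only if'' direction by contrapositive. Suppose $x = (x_1, \ldots, x_k)\in \lattice^k$ is not of the stated form. There are two failure cases. If $x$ has at least two nonzero components, say $x_i \neq \hat{0}$ and $x_j \neq \hat{0}$ with $i \neq j$, define $a$ to agree with $x$ on every component except the $j$-th, where $a_j = \hat{0}$, and symmetrically $b$ to agree with $x$ except $b_i = \hat{0}$. A componentwise check gives $a \vee b = x$ while $a \neq x$ (differing in coordinate $j$) and $b \neq x$ (differing in coordinate $i$), so $x$ is join-reducible. Otherwise, $x$ has exactly one nonzero component $x_i$ but $x_i$ is join-reducible in $\lattice$, say $x_i = a' \vee b'$ with $a', b' \neq x_i$. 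Lift this by setting $\alpha, \beta \in \lattice^k$ equal to $\hat{0}$ everywhere except in coordinate $i$, where they take values $a'$ and $b'$ respectively. Then $\alpha \vee \beta = x$, $\alpha \neq x$, and $\beta \neq x$, contradicting join-irreducibility.

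Next I would prove the ``if'' direction. If $x = (\hat{0}, \ldots, \hat{0})$ and $x = a \vee b$, then each component satisfies $\hat{0} = a_j \vee b_j$, which forces $a_j = b_j = \hat{0}$ by the property of $\hat{0}$; hence $a = b = x$. If instead $x$ has a unique nonzero coordinate $i$ with $x_i \in \lattice_\vee$ and $x = a \vee b$, then for every $j \neq i$ we get $a_j = b_j = \hat{0}$ as before, while $x_i = a_i \vee b_i$ together with join-irreducibility of $x_i$ forces $a_i = x_i$ or $b_i = x_i$; in either case $a = x$ or $b = x$.

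Finally I would count: the element $(\hat{0}, \ldots, \hat{0})$ contributes $1$, and for each position $i \in [k]$ and each element of $\lattice_\vee \setminus \{\hat{0}\}$ placed in position $i$ we obtain a distinct join-irreducible, giving $1 + k(|\lattice_\vee| - 1)$ in total. Since $k \geq 1$, the bound $1 + k(|\lattice_\vee| - 1) = k|\lattice_\vee| - (k-1) \leq k|\lattice_\vee| \leq k|\lattice|$ follows immediately. No step here looks genuinely hard; the only subtle point is making sure the two explicit witnesses $a,b$ in the reducibility argument actually lie in $\lattice^k$ and satisfy $a \vee b = x$ componentwise, which is routine once one is careful about which coordinates are zeroed out.
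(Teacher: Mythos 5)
Your proof is correct and follows essentially the same approach as the paper: both directions proceed by the identical componentwise analysis of joins in $\lattice^k$, using the property $\hat{0} = a \vee b \Rightarrow a = b = \hat{0}$ and exhibiting the same explicit witnesses of join-reducibility in the contrapositive step. The only cosmetic difference is that you argue the $(\hat{0}, \ldots, \hat{0})$ case componentwise rather than by simply noting it is the minimum, and you spell out the counting bound in slightly more detail.
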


\begin{proof}
	First, we prove that every join-irreducible element $x$ of $\lattice^k$ must have the stated form. Suppose to the contrary that $x = (x_1, \ldots, x_k)$ has at least two components that are not $\hat{0}$; without loss of generality we can assume that $x_1 \neq \hat{0}$ and $x_2 \neq \hat{0}$. We compute that $x = (x_1, \ldots, x_k) = (\hat{0}, x_2, x_3, \ldots, x_k) \vee (x_1, \hat{0}, x_3, \ldots, x_k)$ and note that $x \neq (\hat{0}, x_2, x_3, \ldots, x_k)$ and $x \neq (x_1, \hat{0}, x_3, \ldots, x_k)$ by assumption. Therefore, $x$ is join-reducible in this case. Furthermore, if $x$ contains a join-reducible component, say $x_1 = a_1 \vee b_1$ with $a_1 \neq x_1 \neq b_1$, then $x$ is join-reducible, since $x = (a_1, x_2, \ldots, x_k) \vee (b_1, x_2, \ldots, x_k)$.
	
	For the other direction, we have $x = (\hat{0}, \ldots, \hat{0})$ which is the $\preceq^k$-minimum element in $\lattice^k$ and hence join-irreducible. Now, let $x = (x_1, \ldots, x_k) \in \lattice^k$ have exactly one component that is not $\hat{0}$ and let this component be join-irreducible, say $x_1 \in \lattice_\vee \setminus \{\hat{0}\}$ and $x_i = \hat{0}$ for all $i \in [2,k]$. Suppose that $x = y \vee z$ with $y,z \in \lattice^k$, then we have $\hat{0} = x_i = y_i \vee z_i$ for all $i \in [2,k]$ which implies $x_i = y_i = z_i = \hat{0}$ for all $i \in [2,k]$. For $i = 1$, we see that $x_1 = y_1 \vee z_1$ implies $x_1 = y_1$ or $x_1 = z_1$ by irreducibility of $x_1$, and hence $x = y$ or $x = z$.
\end{proof}

\begin{cor}\label{thm:power_join_rep}
	Let $(\lattice, \preceq)$ be a finite lattice given in join-representation and $k$ be a natural number. There is an algorithm $\algo_{\lattice^k}$ that computes the join $a \vee x$, where $a \in \lattice^k$ and $x \in (\lattice^k)_\vee$, using one call to $\algo_\lattice$ and further time $\Oh(k \log |\lattice|)$.
\end{cor}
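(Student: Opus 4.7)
The plan is to use the characterization of join-irreducible elements in $\lattice^k$ given by \cref{thm:power_irreducibles} to reduce a join in $\lattice^k$ to at most a single join in $\lattice$. By \cref{thm:power_irreducibles}, the join-irreducible element $x \in (\lattice^k)_\vee$ is either the minimum $(\hat{0}, \ldots, \hat{0})$ or has a unique coordinate $i^* \in [k]$ with $x_{i^*} \in \lattice_\vee \setminus \{\hat{0}\}$ and $x_j = \hat{0}$ for all $j \neq i^*$.

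Concretely, I would have $\algo_{\lattice^k}$ proceed as follows. Given $a = (a_1, \ldots, a_k) \in \lattice^k$ and $x \in (\lattice^k)_\vee$, the algorithm first inspects the $k$ coordinates of $x$ to locate the (at most one) coordinate $i^*$ where $x_{i^*} \neq \hat{0}$. If no such coordinate exists, then $x = (\hat{0}, \ldots, \hat{0})$ and the algorithm returns $a$ itself, since the componentwise join satisfies $a_j \vee \hat{0} = a_j$ for every $j$. Otherwise, the algorithm invokes $\algo_\lattice$ once to compute $c := a_{i^*} \vee x_{i^*}$, and returns the tuple $(a_1, \ldots, a_{i^*-1}, c, a_{i^*+1}, \ldots, a_k)$. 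Correctness follows immediately from the fact that the join in the direct product is componentwise and $a_j \vee \hat{0} = a_j$ for $j \neq i^*$.

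For the running time, each element of $\lattice$ is encoded in $\Oh(\log |\lattice|)$ bits, and an element of $\lattice^k$ is encoded in $\Oh(k \log |\lattice|)$ bits. Scanning the coordinates of $x$ to find $i^*$ (or certify that $x$ is the minimum) and copying the output both take $\Oh(k \log |\lattice|)$ time. The only call to $\algo_\lattice$ is the single invocation used to produce $c$. This gives the claimed bound of one call to $\algo_\lattice$ and additional time $\Oh(k \log |\lattice|)$. There is no real obstacle here: the whole statement is essentially a direct algorithmic rephrasing of \cref{thm:power_irreducibles}, and the only minor point to check is that we can recognize $\hat{0}$ in constant time per coordinate given the bit-string representation, which is standard.
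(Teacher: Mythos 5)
Your proposal is correct and takes essentially the same approach as the paper: both use the characterization of join-irreducibles in $\lattice^k$ from \cref{thm:power_irreducibles} to locate the single non-$\hat{0}$ coordinate, perform one call to $\algo_\lattice$ there, and leave the remaining coordinates untouched since $a_j \vee \hat{0} = a_j$. Your handling of the edge case $x = (\hat{0}, \ldots, \hat{0})$ is slightly more explicit than the paper's, but there is no substantive difference.
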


\begin{proof}
	Every element of $\lattice$ is represented by a $\Oh(\log |\lattice|)$-bit string, hence every element of $\lattice^k$ can be represented by a $\Oh(k \log |\lattice|)$-bit string. By \cref{thm:power_irreducibles}, we know that the join-irreducible element $x \in \lattice^k$ has at most one component that is not $\hat{0}$. We search for this component and the corresponding component in $a$ in time $\Oh(k \log |\lattice|)$, afterwards we call $\algo_\vee$ to obtain the result for this component and all other components of $a$ remain unchanged due to $a_i \vee \hat{0} = a_i$ for all $i \in [k]$.
\end{proof}

\begin{cor}\label{thm:power_lattice_fast_product}
	Let $(\lattice, \preceq)$ be a finite lattice given in join-representation and $k$ be a natural number. Given two tables $A, B\colon \lattice^k \rightarrow \ZZ_2$, the $\vee$-product $A \otimes_{\lattice^k} B$ in $\lattice^k$ can be computed in time $\Oh(k^2 |\lattice|^{k+2})$ and $\Oh(k|\lattice|^{k+1})$ calls to algorithm $\algo_\lattice$.
\end{cor}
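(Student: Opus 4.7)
The plan is to simply assemble the pieces already established: apply the general lattice convolution algorithm of \cref{thm:few_irreducibles_fast} to the lattice $\lattice^k$, using \cref{thm:power_irreducibles} to bound its number of join-irreducible elements and \cref{thm:power_join_rep} to provide the required join-representation algorithm $\algo_{\lattice^k}$.

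First, I would observe that $|\lattice^k| = |\lattice|^k$ and, by \cref{thm:power_irreducibles}, $|(\lattice^k)_\vee| = 1 + k(|\lattice_\vee| - 1) \leq k|\lattice|$. Plugging these bounds into \cref{thm:few_irreducibles_fast} shows that the $\vee$-product $A \otimes_{\lattice^k} B$ can be computed using
\[
  \Oh(|\lattice^k| \cdot |(\lattice^k)_\vee|) = \Oh(|\lattice|^k \cdot k|\lattice|) = \Oh(k|\lattice|^{k+1})
\]
field operations in $\ZZ_2$ (each taking constant time) and calls to $\algo_{\lattice^k}$, together with further overhead $\Oh(|\lattice^k| \cdot |(\lattice^k)_\vee|^2) = \Oh(k^2 |\lattice|^{k+2})$.

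Next, by \cref{thm:power_join_rep}, each call to $\algo_{\lattice^k}$ can be answered with one call to $\algo_\lattice$ and $\Oh(k \log |\lattice|)$ additional time. Hence the total number of calls to $\algo_\lattice$ is $\Oh(k|\lattice|^{k+1})$, and the cumulative extra time spent inside these calls is $\Oh(k|\lattice|^{k+1} \cdot k \log |\lattice|) = \Oh(k^2 |\lattice|^{k+1} \log |\lattice|)$, which is absorbed into the $\Oh(k^2 |\lattice|^{k+2})$ term. Adding everything up yields the claimed running time of $\Oh(k^2 |\lattice|^{k+2})$ with $\Oh(k|\lattice|^{k+1})$ calls to $\algo_\lattice$.

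There is no real obstacle here; the statement is essentially a corollary obtained by composing the previously proven lemmas. The only bookkeeping to double-check is that the field operations in $\ZZ_2$ take constant time, so the bound on field operations does not dominate, and that the $\Oh(k \log |\lattice|)$ overhead per join computation is indeed swallowed by the $\Oh(k^2 |\lattice|^{k+2})$ term.
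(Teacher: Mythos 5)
Your proposal is correct and follows essentially the same route as the paper: apply \cref{thm:few_irreducibles_fast} to $\lattice^k$, bound $|(\lattice^k)_\vee| \leq k|\lattice|$ via \cref{thm:power_irreducibles}, simulate each call to $\algo_{\lattice^k}$ by one call to $\algo_\lattice$ plus $\Oh(k\log|\lattice|)$ overhead via \cref{thm:power_join_rep}, and observe that the $\Oh(k^2|\lattice|^{k+2})$ term dominates. The bookkeeping matches the paper's proof in every essential respect.
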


\begin{proof}
	We pipeline \cref{thm:few_irreducibles_fast} with \cref{thm:power_irreducibles} and \cref{thm:power_join_rep}. First, observe that the field operations in $\ZZ_2$ can be performed in constant time. Secondly, every call to $\algo_{\lattice^k}$ can be simulated by one call to $\algo_\lattice$ and further time $\Oh(k \log |\lattice|)$ and hence these calls lead to in total $\Oh(|\lattice^k| |(\lattice^k)_\vee| (k \log |\lattice|)) = \Oh(k^2 |\lattice|^{k+1} \log |\lattice|)$ further running time, which will be dominated by the rest of the algorithm. Finally, the algorithm of \cref{thm:few_irreducibles_fast} needs further time $\Oh(|\lattice^k| |(\lattice^k)_\vee|^2) = \Oh(|\lattice|^k (k |\lattice|)^2) = \Oh(k^2 |\lattice|^{k+2})$ which dominates all other computations.
\end{proof}

\end{document}